\documentclass[sigconf,nonacm]{acmart}

\setcopyright{acmcopyright}
\copyrightyear{2018}
\acmYear{2018}
\acmDOI{XXXXXXX.XXXXXXX}
\acmConference[Conference acronym 'XX]{Make sure to enter the correct conference title from your rights confirmation email}{June 03--05, 2018}{Woodstock, NY}
\acmISBN{978-1-4503-XXXX-X/2018/06}

\usepackage{graphicx}
\usepackage{subcaption}
\usepackage{amsmath}
\usepackage{amsfonts}
\usepackage{mathtools}
\usepackage{tabularx}
\usepackage{booktabs}
\usepackage[inline]{enumitem}
\usepackage{amsthm}
\usepackage{multirow}
\usepackage{multicol}
\usepackage{makecell}
\usepackage{algorithm}
\usepackage[noend]{algpseudocode}
\usepackage[capitalize,noabbrev]{cleveref}
\usepackage{bm}
\usepackage{xspace}
\usepackage{boxedminipage}
\usepackage{colortbl}
\usepackage{thmtools}
\usepackage{thm-restate}

\newcommand{\name}{Clownfish\xspace}

\newcommand\StateX{\Statex\hspace{\algorithmicindent}}
\newcommand\StateXX{\StateX\hspace{\algorithmicindent}}
\newcommand\StateXXX{\StateXX\hspace{\algorithmicindent}}
\algrenewcommand\textproc{}

\definecolor{yescolor}{HTML}{026378}
\def\yes{\textcolor{yescolor}{\boldsymbol{\checkmark}}}
\def\no{\textcolor{red}{\boldsymbol{\times}}}

\newcommand{\rmnum}[1]{\romannumeral #1}

\newcommand{\N}{\mathbb{N}}
\newcommand{\M}{\mathcal{M}}
\newcommand{\ML}{\mathcal{ML}}
\newcommand{\MLV}{\mathcal{MLV}}
\newcommand{\bigS}{\mathcal{S}}
\newcommand{\CLS}{\mathcal{CLS}}
\newcommand{\CMV}{\mathcal{CMV}}

\newcommand{\True}{\mathsf{true}}
\newcommand{\False}{\mathsf{false}}

\newcommand{\node}[1]{\ensuremath{p_{#1}}\xspace}
\newcommand{\tuple}[1]{\langle #1 \rangle}

\newcommand{\NoVote}{\mathsf{no\mbox{-}vote}}
\newcommand{\Timeout}{\mathsf{timeout}}
\newcommand{\FastVote}{\mathsf{fast\mbox{-}vote}}
\newcommand{\NewRound}{\mathsf{new\mbox{-}round}}

\newtheorem{lemma}{Lemma}
\newtheorem{theorem}{Theorem}
\newtheorem{property}{Property}

\newtheorem{fact}{Fact}
\theoremstyle{definition}
\newtheorem{definition}{Definition}



\algdef{SE}[EVENT]{Upon}{EndUpon}[1]{\textbf{upon} #1 \textbf{do}}{\textbf{end upon}}
\algtext*{EndUpon}

\title[\name: Scaling DAG-based BFT Consensus via Sparse Edges]{\name: Scaling DAG-based BFT Consensus via Sparse Edges}

\author{Feifan Wang}
\authornote{Both authors contributed equally to this research.}
\affiliation{%
  \institution{Tsinghua University}
  \city{Beijing}
  \country{China}}
\email{wff25@mails.tsinghua.edu.cn}

\author{Jingfan Yu}
\authornotemark[1]
\affiliation{%
  \institution{Tsinghua University}
  \city{Beijing}
  \country{China}}
\email{yujf20@mails.tsinghua.edu.cn}

\author{Zixi Cai}
\affiliation{%
  \institution{Tsinghua University}
  \city{Beijing}
  \country{China}}
\email{caizx22@mails.tsinghua.edu.cn}

\author{Zhixuan Fang}
\authornote{Corresponding author.}
\affiliation{%
  \institution{Tsinghua University}
  \city{Beijing}
  \country{China}}
\email{zfang@mail.tsinghua.edu.cn}

\begin{document}

\begin{abstract}

Directed Acyclic Graph (DAG) based BFT protocols have demonstrated the capability to achieve significantly high throughput in practice. Recent advancements focused on minimizing the good-case latency of these protocols, approaching the theoretical lower bound. However, the high communication complexity inherent in existing DAG-based protocols limits their scalability. This primarily arises because each vertex in the DAG must include a linear number of edges (references) to vertices from previous rounds.

We present \name, a partially synchronous DAG-based BFT protocol designed to address the scalability bottleneck. \name achieves lower communication complexity by selectively reducing the number of edges in DAG vertices. When using a communication-optimal consistent broadcast, \name attains quadratic total communication complexity per round, outperforming prior DAG-based protocols. \name also reduces the additional latency in failure cases by optimizing the round advancement rule. Additionally, \name supports multiple leaders per round to reduce average latency while maintaining its lower communication complexity. Our experimental evaluation demonstrates that \name provides significantly better scalability than existing DAG-based protocols.

\end{abstract}



\keywords{Byzantine Fault Tolerance, DAG-based Consensus, Communication Complexity, Scalability}

\maketitle

\section{Introduction}

\begin{table*}[t]
    \footnotesize
    \centering
    \caption{Theoretical performance of partially synchronous DAG-based BFT protocols (after GST)}
    \label{table:comparison}
    \setlength\tabcolsep{12pt}
    \def\arraystretch{1}
    \begin{tabularx}{\textwidth}{l c c c c c c}
    \toprule
    \makecell{\textbf{Protocol}} &
    \makecell{\textbf{Broadcast} \\ \textbf{Primitive} \\ \textbf{Used}} &
    \makecell{\textbf{LV} \\ \textbf{Commit} \\ \textbf{Latency}} &
    \makecell{\textbf{NLV}$^{(1)}$ \\ \textbf{Commit} \\ \textbf{Latency}} &
    \makecell{\textbf{Communication}$^{(2)}$ \\ \textbf{Complexity} \\ \textbf{(Good / Bad Case)}} &
    \makecell{\textbf{NLV Latency}$^{(3)}$ \\ \textbf{Under Failure} \\ \textbf{(Single / Consecutive)}} &
    \makecell{\textbf{Multiple} \\ \textbf{Leaders}}
    \\
    \midrule

    \multirow{4}{*}{Bullshark{~\cite{spiegelman2022bullshark}}} & 
    {Bracha's RBC\cite{bracha1987asynchronous}} &
    {$6\delta$} &
    {$+6\delta$} &
    {$O(n^4)$}&
    {$+(5\Delta+3\delta)$} &
    \multirow{4}{*}{$\no$}
    \\
    &
    {CCBRB\cite{alhaddad2022balanced}} &
    {$6\delta$} &
    {$+6\delta$} &
    {$O(\lambda n^3)^*$}&
    {$+(5\Delta+3\delta)$} &
    \\
    &
    {Latency-optimal RBC\cite{abraham2021good}} &
    {$4\delta$} &
    {$+4\delta$} &
    {$O(n^4)$}&
    {$+(3\Delta+2\delta)$} &
    \\
    &
    {Narwhal's CBC\cite{danezis2022narwhal}} &
    {$6\delta$} &
    {$+6\delta$} &
    {$O(\lambda n^3) / O(\lambda n^4)$}&
    {$+(4\Delta+3\delta)$} &
    \\

    \arrayrulecolor{black!30}
    \midrule
    \arrayrulecolor{black}

    \multirow{4}{*}{Shoal++{~\cite{arun2025shoal++}}} & 
    {Bracha's RBC\cite{bracha1987asynchronous}} &
    {$4\delta$} &
    {$+3\delta$} &
    {$O(n^4)$}&
    {$+(5\Delta+3\delta)$} &
    \multirow{4}{*}{$\yes$}
    \\
    &
    {CCBRB\cite{alhaddad2022balanced}} &
    {$4\delta$} &
    {$+3\delta$} &
    {$O(\lambda n^3)^*$}&
    {$+(5\Delta+3\delta)$} &
    \\
    &
    {Latency-optimal RBC\cite{abraham2021good}} &
    {$3\delta$} &
    {$+2\delta$} &
    {$O(n^4)$}&
    {$+(3\Delta+2\delta)$} &
    \\
    &
    {Narwhal's CBC\cite{danezis2022narwhal}} &
    {$4\delta$} &
    {$+3\delta$} &
    {$O(\lambda n^3)$ / $O(\lambda n^4)$}&
    {$+(4\Delta+3\delta)$} &
    \\

    \arrayrulecolor{black!30}
    \midrule
    \arrayrulecolor{black}
    
    \multirow{4}{*}{Sailfish{~\cite{shrestha2025sailfish}}} & 
    {Bracha's RBC\cite{bracha1987asynchronous}} &
    {$4\delta$} &
    {$+3\delta$} &
    {$O(n^4)$}&
    {$+(5\Delta+2\delta)$ / $+(7\Delta+2\delta)$} &
    \multirow{4}{*}{$\yes$}
    \\
    &
    {CCBRB\cite{alhaddad2022balanced}} &
    {$4\delta$} &
    {$+3\delta$} &
    {$O(\lambda n^3)^*$}&
    {$+(5\Delta+2\delta)$ / $+(7\Delta+2\delta)$} &
    \\
    &
    {Latency-optimal RBC\cite{abraham2021good}} &
    {$3\delta$} &
    {$+2\delta$} &
    {$O(n^4)$}&
    {$+(3\Delta+2\delta)$ / $+(4\Delta+2\delta)$} &
    \\
    &
    {Narwhal's CBC\cite{danezis2022narwhal}} &
    {$4\delta$} &
    {$+3\delta$} &
    {$O(\lambda n^3)$ / $O(\lambda n^4)$}&
    {$+(4\Delta+2\delta)$ / $+(5\Delta+2\delta)$} &
    \\

    \arrayrulecolor{black!30}
    \midrule
    \arrayrulecolor{black}

    {Sparse Bullshark{~\cite{anoprenko2025dags}}} &
    {Narwhal's CBC\cite{danezis2022narwhal}} &
    {$6\delta$} &
    {$+6\delta$} &
    {$O(\lambda^2 n^2)$ / $O(\lambda^2 n^3)$} &
    {$+(4\Delta+3\delta)$} &
    {$\no$}
    \\

    {Cordial Miners{~\cite{keidar2023cordial}}} &
    {Best-effort Broadcast} &
    {$3\delta$} &
    {$+3\delta$} &
    {$O(\lambda n^3)$ / $O(\lambda n^4)$} &
    {$+6\Delta$} &
    {$\no$}
    \\

    {Mysticeti{~\cite{babel2025mysticeti}}} &
    {Best-effort Broadcast} &
    {$3\delta$} &
    {$+3\delta$} &
    {$O(\lambda n^3)$ / $O(\lambda n^4)$} &
    {$+(4\Delta+2\delta)$} &
    {$\yes$}
    \\

    \arrayrulecolor{black!30}
    \midrule
    \arrayrulecolor{black}

    \multirow{4}{*}{\textbf{\name}} & 
    {Bracha's RBC\cite{bracha1987asynchronous}} &
    {$4\delta$} &
    {$+3\delta$} &
    {\color{blue}$\bm{O(n^3\log n)}$}&
    {\color{blue}$\bm{+(5\Delta+\delta)}$} &
    \multirow{4}{*}{$\yes$}
    \\
    &
    {CCBRB\cite{alhaddad2022balanced}} &
    {$4\delta$} &
    {$+3\delta$} &
    {$O(\lambda n^3)^*$}&
    {\color{blue}$\bm{+(5\Delta+\delta)}$} &
    \\
    &
    {Latency-optimal RBC\cite{abraham2021good}} &
    {$3\delta$} &
    {$+2\delta$} &
    {\color{blue}$\bm{O(\lambda n^3)}$}&
    {\color{blue}$\bm{+(3\Delta+\delta)}$} &
    \\
    &
    {Narwhal's CBC\cite{danezis2022narwhal}} &
    {$4\delta$} &
    {$+3\delta$} &
    {\color{blue}$\bm{O(\lambda n^2)/O(\lambda n^3)}$} &
    {\color{blue}$\bm{+(4\Delta+\delta)}$} &
    \\

    \bottomrule
    \end{tabularx}
    \begin{flushleft}
    \textbf{LV} and \textbf{NLV} denote the leader vertex and non-leader vertex, respectively. (1)~This column represents the additional commit latency required for an NLV relative to the LV of the same round. (2)~This column represents the total metadata communication complexity per round. The terms ``good case'' and ``bad case'' indicate whether fetching missing data incurs additional communication overhead (relevant only when using Narwhal's CBC or best-effort broadcast). (3)~This column represents the additional latency imposed on an NLV by a Byzantine leader (relative to column (1)). The terms ``single'' and ``consecutive'' refer to scenarios involving a single or multiple consecutive Byzantine leaders, respectively (relevant only to Sailfish). $*$~Due to the use of erasure codes, the broadcast protocol incurs additional computational cost and larger vertex size. \textcolor{blue}{\textbf{Blue text highlights where \name demonstrates an advantage over other protocols (under specific broadcast primitives).}}
    \end{flushleft}
\end{table*}

Byzantine Fault Tolerant (BFT) consensus enables a set of replicas to consistently commit a sequence of values even under adversarial conditions. With the proliferation of decentralized systems such as blockchains, BFT consensus protocols—serving as their fundamental building blocks—have garnered extensive research attention~\cite{castro1999practical, yin2019hotstuff, spiegelman2022bullshark}. Under specific network and adversarial assumptions, the key metrics used to evaluate BFT consensus protocols include latency, throughput, and communication complexity. Among these, latency and throughput directly reflect the execution efficiency of a protocol, whereas communication complexity serves as a crucial indicator of system scalability~\cite{alqahtani2021bottlenecks}.

Following the seminal work of PBFT~\cite{castro1999practical}, traditional BFT consensus protocols have widely adopted the leader-based paradigm~\cite{buchman2018latest, yin2019hotstuff, gelashvili2022jolteon}. These protocols rely on a single, rotating leader to propose blocks containing transactions (data), while the remaining replicas participate in multi-phase voting to commit these blocks. In partial synchrony networks~\cite{dwork1988consensus}, protocols following this paradigm can achieve optimal latency~\cite{abraham2021good} (3 message delays in PBFT) or optimal communication complexity (linear complexity in HotStuff~\cite{yin2019hotstuff}) in the good case. However, the single leader becomes a system bottleneck that significantly constrains the achievable throughput~\cite{danezis2022narwhal}.

The key to enhancing throughput lies in fully utilizing the bandwidth of all replicas for data transmission. Effective approaches to achieve this include running multiple leader-based BFT instances concurrently with distinct leaders~\cite{stathakopoulou2019mir, stathakopoulou2022state}, as well as decoupling data dissemination from consensus logic~\cite{yang2022dispersedledger, giridharan2024autobahn}. A recently emerging paradigm is Directed Acyclic Graph (DAG)-based BFT~\cite{gkagol2019aleph, keidar2021all, danezis2022narwhal, spiegelman2022bullshark}. This paradigm naturally integrates the aforementioned ideas and has been successfully deployed in modern blockchains~\cite{blackshear2024sui} due to its simplicity and practical efficiency.

In DAG-based BFT protocols, replicas concurrently construct a DAG in a structured manner. Typically, these protocols proceed in \textit{rounds}. In each round, every replica disseminates a \textit{vertex} packed with transactions. Each vertex contains references to at least $n-f$ vertices from the previous round via \textit{edges}, where $n$ is the total number of replicas and $f$ is the maximum number of Byzantine replicas. Collectively, these vertices and edges form a DAG. The consensus logic is intrinsic to the DAG structure, allowing all replicas to commit a consistent DAG prefix through local interpretation.

DAG-based BFT protocols have demonstrated significantly high throughput in practice~\cite{danezis2022narwhal, spiegelman2022bullshark}. To achieve better efficiency, many recent DAG protocols have adopted the partial synchrony network assumption~\cite{spiegelman2024shoal, spiegelman2022bullshark_p, shrestha2025sailfish}. By pre-designating leader vertices (also referred to as \textit{anchors}) and introducing timeouts into the rounds, these protocols facilitate a total ordering of the DAG through simple commit rules. State-of-the-art protocols leverage this approach to achieve near-optimal latency in the good case~\cite{arun2025shoal++, shrestha2025sailfish, babel2025mysticeti}.

However, these performance gains often come at the cost of high communication complexity, which significantly constrains the scalability of DAG-based protocols. The high communication complexity in existing protocols stems from the structural requirement that each vertex must carry at least $n-f$ references (counted as \textit{metadata}). Given that $O(n)$ vertices are disseminated per round---typically via Reliable Broadcast (RBC)~\cite{bracha1987asynchronous} or Consistent Broadcast (CBC)~\cite{danezis2022narwhal}---the per-round metadata communication overhead amounts to at least $\Omega(n^3)$ times the size of a single reference. When references are represented by signatures or hashes, this overhead scales to $\Omega(\lambda n^3)$, where $\lambda$ denotes the security parameter.

Existing DAG-based protocols usually amortize this high communication overhead by batching $\Omega(n)$ transactions within each vertex. Leveraging erasure codes~\cite{das2021asynchronous}, this achieves amortized linear complexity per transaction. However, this approach faces two critical limitations. First, in many modern blockchain systems~\cite{aptoscan,suiscan}, a lot of blocks (vertices) only contain few or even zero transactions. As $n$ increases, waiting to accumulate $\Omega(n)$ transactions imposes prohibitive queueing latency, which is impractical. Second, even assuming sufficient transactions, since block sizes cannot increase indefinitely, this method remains unsustainable as $n$ scales~\cite{anoprenko2025dags}. 

Consequently, reducing the metadata communication complexity of DAG-based protocols is crucial for system scalability. This leads to a key research question: \textit{Can we reduce the metadata communication complexity of DAG-based protocols while maintaining the desired latency and throughput?}

\noindentparagraph{\textbf{Our solution.}}
To achieve this goal, we design \name, a partially synchronous DAG-based protocol built upon the state-of-the-art Sailfish~\cite{shrestha2025sailfish} protocol. \name's key insight is that in DAG-based protocols, only leader vertices are directly committed and responsible for establishing paths to the ordered history, whereas non-leader vertices serve primarily to reference the leader vertex to enable its commitment. Based on this observation, \name introduces a novel reference format termed the \textit{leader edge}. This allows a non-leader vertex to reference only a single leader vertex via a leader edge, while the standard $n-f$ references is maintained exclusively by leader vertices. Since each round contains one leader vertex and $O(n)$ non-leader vertices, this approach effectively reduces the overall communication complexity.

Leveraging this core design, \name further incorporates several enhancements. (\rmnum{1}) \name requires replicas to immediately broadcast a ``no-vote'' message upon a timeout. Coupled with an optimized round advancement rule, this reduces additional latency under failure cases. (\rmnum{2}) We extend \name to Multi-leader \name, which supports multiple leaders per round. By redesigning leader edges, Multi-leader \name further reduces average latency while maintaining lower communication complexity. (\rmnum{3}) In addition to the standard RBC-based protocol, we also provide a CBC-based variant of \name\footnote{More precisely, this variant is built on the Narwhal's CBC~\cite{danezis2022narwhal}, which combines CBC with a random pulling mechanism.}. This addresses a theoretical gap in prior DAG-based protocols, whose correctness under this weaker broadcast primitive is often not formally proved.

In summary, \name offers the following advantages:
\begin{itemize}[noitemsep]
    \item\textbf{Lower communication complexity.} When using a latency-optimal RBC~\cite{abraham2021good}, \name achieves a per-round communication complexity of $O(\lambda n^3)$, yielding a reduction factor of $\frac{n}{\lambda}$ over state-of-the-art protocols. When using Narwhal's CBC~\cite{danezis2022narwhal}, it achieves $O(\lambda n^2)$ communication per round, reducing the complexity by a factor of $n$.
    \item\textbf{Reduced additional latency under failures.} By optimizing round advancement rule, \name effectively minimizes the additional commit latency induced by faulty leaders or network asynchrony.
    \item\textbf{Broader compatibility.} We design \name and Multi-leader \name based on both RBC and CBC. This compatibility stems from the fact that \name's core design is generically applicable to various broadcast primitives.
\end{itemize}

\Cref{table:comparison} presents a detailed comparison of the theoretical performance of \name against other partially synchronous DAG-based protocols. Under the same broadcast primitive, \name demonstrates advantages in both communication complexity and latency over existing protocols. 

Our empirical evaluation consists of both simulation and deployment studies. The simulation results show that \name provides better scalability than existing DAG-based protocols at large system sizes. The deployment results under medium-scale systems and bandwidth-limited networks demonstrate that \name achieves lower latency by reducing metadata communication.

\noindentparagraph{\textbf{Organization.}}
The rest of this paper is organized as follows. \Cref{s:preliminary} introduces the model and background information on DAG-based BFT. \Cref{s:overview} provides a technical overview of \name. \Cref{s:protocol} and \Cref{s:protocol_ml} detail and analyze \name and Multi-leader \name, respectively. \Cref{s:eval} presents the results of our evaluation. We summarize the paper and provide further discussion in \Cref{s:discussion}. Finally, we review related work in \Cref{s:related_work}.
\section{Preliminaries}\label{s:preliminary}

\subsection{Model}\label{ss:model}

We consider a system consisting of a fixed set of $n=3f+1$ replicas. Let \node{i} denote a specific replica, where $i \in \{1, 2, \dots, n\}$. At most $f$ replicas are \textit{Byzantine} and can act arbitrarily. The remaining replicas are referred to as \textit{honest}. We assume the existence of an adversary capable of controlling all Byzantine replicas.

We consider the standard partial synchrony model~\cite{dwork1988consensus}. Specifically, there exists an unknown \textit{Global Stabilization Time} (GST) and a known upper bound $\Delta$ on network delay, such that any message sent by an honest replica at time $t$ is guaranteed to arrive at its recipient by time $\max(GST, t) + \Delta$. We further assume that after GST, the actual network delay $\delta$ satisfies $\delta \leq \Delta$. 

Regarding cryptographic primitives, we assume the availability of a Public Key Infrastructure (PKI), cryptographic hash functions, and threshold/aggregate signatures~\cite{boneh2004short}. We denote a message $m$ signed by \node{i} as $\tuple{m}_i$. 
We assume a computationally bounded adversary and let $\lambda$ denote the security parameter for these primitives. Throughout this paper, we consider the setting where $\log n < \lambda < n$.

\subsection{DAG-based BFT Consensus}\label{ss:dag_bft}

We focus exclusively on certified DAG protocols under the partial synchrony model. The term ``certified'' implies that a vertex in the DAG must be delivered via Reliable Broadcast (RBC) or Consistent Broadcast (CBC), these primitives are introduced subsequently. 

Typically, DAG-based BFT protocols operate in \textit{rounds}~\cite{danezis2022narwhal}. In each round, each replica can create a \textit{vertex} containing a batch of transactions and a set of \textit{edges}. To be deemed valid, a vertex in round $r$ is required to reference at least $2f+1$ delivered vertices from round $r-1$. Upon the completion of RBC or CBC, the corresponding vertex is added to the DAG. Each replica maintains a local DAG view, which may differ from that of other replicas. However, both RBC and CBC guarantee non-equivocation, implying that vertices appearing at the same position within the DAG are identical. The edges in the DAG are utilized to commit and order vertices. We define the \textit{causal history} of a vertex as the subgraph originating from it, encompassing all predecessor vertices reachable via a path.

Under the partial synchrony model, a pre-defined \textit{leader vertex} is designated every few rounds (e.g., every round in Sailfish~\cite{shrestha2025sailfish} and every two rounds in Bullshark~\cite{spiegelman2022bullshark_p}). Only leader vertices can be directly committed. The remaining non-leader vertices are ordered as part of the causal history of the committed leader vertices. 

To \textit{directly commit} a leader vertex, a replica must observe sufficient ``votes'' for it. In the context of a DAG, edges from vertices in round $r+1$ to a leader vertex in round $r$ are interpreted as votes for it. Distinct protocols impose varying requirements regarding the quantity and format of these votes. For instance, Sailfish requires $2f+1$ \textit{first messages}\footnote{Here, a ``first message'' denotes the message broadcast in the first stage of the underlying RBC or CBC. At this stage, the vertex is not yet considered delivered.} for $2f+1$ vertices, whereas Bullshark requires $f+1$ delivered vertices. Upon committing a leader vertex, a replica traverses its causal history based on the local DAG view and recursively checks for the existence of any uncommitted leader vertices reachable via a path. If such vertices exist, the replica must first \textit{indirectly commit} the corresponding leader vertices. This ensures that all replicas derive the same committed-leader sequence and thus the same transaction order.

\subsection{Problem Definition}

In the context of DAG-based BFT, we focus on Byzantine Atomic Broadcast (BAB) problem. We use $a\_bcast(m,r)$ to denote the event of a replica broadcasting a message $m$ with sequence number $r$. We use $a\_deliver(m,r,\node{i})$ to denote the event of a replica delivering a message $m$ with sequence number $r$ originating from \node{i}.

\begin{definition}[Byzantine atomic broadcast~\cite{keidar2021all}]
\label{def:BAB}
Each honest replica $\node{i}$ can call $a\_bcast_i(m,r)$ and output $a\_deliver_i(m,r,\node{k})$. A Byzantine atomic broadcast protocol satisfies the following properties:
\begin{itemize}[noitemsep,leftmargin=*]
    \item[-] \textbf{Agreement.} If an honest replica $\node{i}$ outputs $a\_deliver_i(m, r, \node{k})$, then every honest replica $\node{j}$ eventually outputs $a\_deliver_j(m, r, \node{k})$.
    
    \item[-] \textbf{Integrity.} For every $r\in\mathbb{N}$ and replica $\node{k}$, an honest replica $\node{i}$ outputs $a\_deliver_i(m,r,\node{k})$ at most once regardless of $m$.

    \item[-] \textbf{Validity.} If an honest replica $\node{k}$ calls $a\_bcast_k(m,r)$, then every honest replica eventually outputs $a\_deliver(m,r,\node{k})$.
    
    \item[-] \textbf{Total order.} If an honest replica $\node{i}$ outputs $a\_deliver_i(m, r,\node{k})$ before $a\_deliver_i(m', r', \node{k'})$, then no honest replica $\node{j}$ outputs $a\_deliver_j(m', r', \node{k'})$ before $a\_deliver_j(m, r, \node{k})$.
\end{itemize}
\end{definition}

We also introduce the two fundamental broadcast primitives employed in DAG construction:

\noindentparagraph{\textbf{Reliable broadcast (RBC).}}
Let $r\_bcast(m,r)$ and $r\_deliver(m,r,\node{k})$ denote the events of broadcasting and delivering a message $m$ with round number $r$, respectively. The RBC primitive satisfies the \textit{Agreement}, \textit{Integrity}, and \textit{Validity} properties defined in \Cref{def:BAB}.

We outline the properties satisfied by an RBC protocol after GST in \Cref{property:RBC}. Note that $k_1$ and $k_2$ are RBC-related parameters that vary across different RBC protocols~\cite{bracha1987asynchronous, abraham2021good, alhaddad2022balanced, das2021asynchronous}.

\begin{property}\label{property:RBC}
Let $t$ and $t'$ be times after GST. (\rmnum{1}) If an honest replica reliably broadcasts a vertex $v$ at time $t$, then all honest replicas will deliver $v$ by time $t + k_1\Delta$. (\rmnum{2}) If an honest replica delivers a vertex $v'$ at time $t'$, then all honest replicas will deliver $v'$ by time $t' + k_2\Delta$.
\end{property}

\noindentparagraph{\textbf{Consistent broadcast (CBC).}}
CBC is a broadcast primitive weaker than RBC~\cite{cachin2011introduction}. It satisfies the \textit{Validity} and \textit{Integrity} properties defined in \Cref{def:BAB}, but substitutes the \textit{Agreement} property with the following \textit{Consistency} property:
\begin{itemize}[noitemsep,leftmargin=*]
    \item[-] \textbf{Consistency.} If an honest replica \node{i} outputs $c\_deliver_i(m, r, \node{k})$ and another honest replica \node{j} outputs $c\_deliver_j(m',r,\node{k})$, then $m=m'$.
\end{itemize}

A CBC protocol satisfies only the part (\rmnum{1}) of \Cref{property:RBC}.

\subsection{Efficiency Measure}

\begin{figure*}[t]
	\centering
	\includegraphics[width=0.86\textwidth]{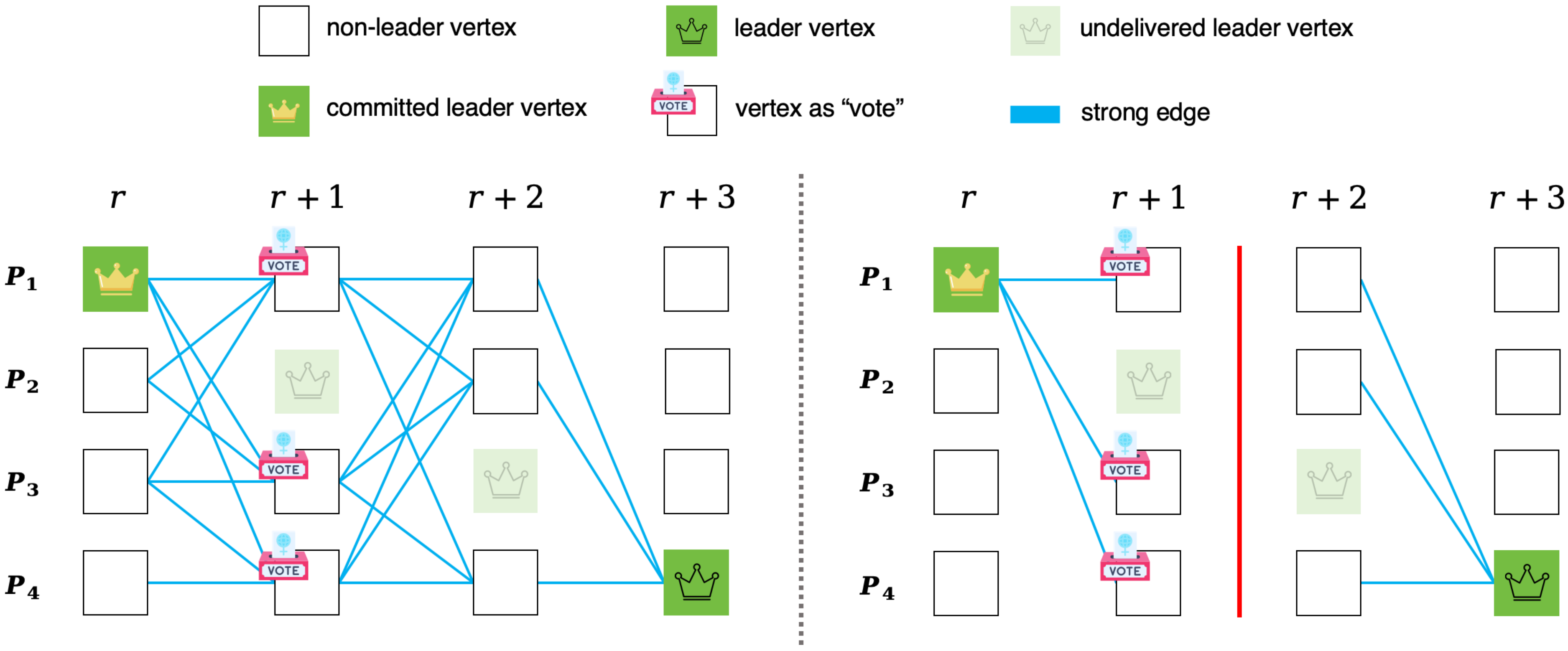}
    \vspace{-1em}
	\caption{Illustration of the challenge. Here, $n=4$ and $f=1$, with a pre-designated leader assigned for each round. A strong edge denotes a reference to a vertex in the previous round. According to the commit rule in Sailfish~\cite{shrestha2025sailfish}, $LV_r$ is committed as it receives $2f+1$ votes. (1) The left side depicts an execution in Sailfish. Since all vertices reference at least $2f+1$ vertices from the previous round, a path exists from $LV_{r+3}$ to $LV_r$. (2) The right side depicts an execution where non-leader vertices are restricted to reference \textit{only} the leader vertex of the previous round. In this case, $LV_{r+3}$ fails to establish a path to $LV_r$.}
\label{fig:main_challenge}
\end{figure*}

\noindentparagraph{\textbf{Communication complexity.}}
We focus on \textit{metadata} communication complexity, where metadata consists of DAG edges and other consensus messages, excluding the transaction payload. The reasons are twofold. First, metadata communication complexity asymptotically dominates the overall communication complexity and thus reflects the protocol's scalability. Second, since blocks in many practical systems contain only a small transaction payload~\cite{aptoscan, suiscan} or merely lightweight transaction availability certificates or batch digests~\cite{danezis2022narwhal,giridharan2024autobahn}, metadata still accounts for a substantial portion of communication and computation overhead. Therefore, we evaluate the total number of metadata bits transmitted by all honest replicas per round, which more fundamentally reflects the protocol's scalability.

\noindentparagraph{\textbf{Latency.}}
We focus on the latency measured from the time a vertex is broadcast until it is committed by an honest replica. We exclude the queuing latency as it depends on whether using a worker layer and assumptions regarding transaction batch sizes and arrival rates.
\section{Technical Overview of \name}\label{s:overview}

Building upon the foundation of Sailfish~\cite{shrestha2025sailfish}, \name employs several techniques to improve protocol performance. We outline the key idea, the challenges, and our solutions in this section.

\noindentparagraph{\textbf{The key idea.}}
The high communication complexity of DAG-based BFT protocols stems from the requirement that every vertex---whether a leader vertex or not---needs to carry at least $n-f$ references to vertices from the previous round. However, a key observation is that the commit and ordering process in DAG-based BFT protocols is driven primarily by leader vertices. This implies that the edges in the vast majority of non-leader vertices are redundant. Leveraging this insight, \name decreases the number of edges in non-leader vertices from $O(n)$ to $O(1)$. Since there is at most one leader vertex among $n$ vertices in each round, this strategy effectively reduces the communication complexity.

\noindentparagraph{\textbf{The main technical challenge.}}
The safety of DAG-based BFT protocols hinges on the guarantee that a path exists between any committed leader vertex $LV_r$ in round $r$ and a subsequent leader vertex $LV_{r'}$ in round $r'>r$. In the safety proofs of Sailfish~\cite{shrestha2025sailfish}, once the base cases where $r' \leq r+2$ are established, the existence of paths for cases where $r' > r+2$ follows straightforwardly from transitivity. However, this property ceases to hold when we naively restrict non-leader vertices to reference \textit{only} the leader vertex of the previous round. To illustrate this, consider a possible execution depicted in \Cref{fig:main_challenge}. Suppose that due to network asynchrony or Byzantine behavior, the leader vertices $LV_{r+1}$ and $LV_{r+2}$ are not delivered in the view of all replicas. This implies that the vertices created by all non-leader replicas in round $r+2$ will not contain any edges\footnote{This arises because we require vertices to reference exclusively the previous leader.
}. This causes the DAG to become structurally ``disconnected'' (indicated by the red line in the figure). Consequently, the leader vertices in round $r+3$ and subsequent rounds are unable to establish a path to $LV_r$ (except potentially via non-essential weak edges).

\noindentparagraph{\textbf{Our main solution: Ensuring connectivity via leader edge.}}
To address the above challenge, \name introduces the concept of \textit{leader edge}. Specifically, a leader edge allows a replica to reference the leader vertex that was most recently delivered locally before the creation of its current vertex, rather than being restricted to the leader vertex of the previous round. With each non-leader vertex referencing a single leader via a leader edge, \name efficiently guarantee paths between leader vertices (see \Cref{fig:protocol} for an illustration). Intuitively, the leader edge plays a role analogous to the ``highest lock'' in leader-based BFT protocols~\cite{yin2019hotstuff, malkhi2023hotstuff}.


\noindentparagraph{\textbf{Reducing additional latency under failure cases.}}
In Sailfish~\cite{shrestha2025sailfish}, replicas are required to sequentially transmit $\Timeout$ and $\NoVote$ messages upon a timeout, incurring two additional message delays. To overcome this limitation, \name consolidates the $\Timeout$ and $\NoVote$ messages found in Sailfish. In \name, upon a timeout, a replica can immediately broadcast a $\NoVote$ message to signal that it will not reference the corresponding leader vertex via a leader edge. However, such a direct merger poses a threat to liveness (see \Cref{ss:protocol} for further discussion). \name addresses this issue through the \textit{fast-vote} message. Intuitively, fast-vote permits replicas to cast lightweight votes for leader vertices in skipped rounds to help ensure they are committed. \name also provides more efficient round-skipping rules that enable lagging replicas to leverage messages from fast replicas to advance to the latest round.

\noindentparagraph{\textbf{Utilizing different broadcast primitives for protocol design.}}
We first design the \name prsotocol built upon standard RBC. It is intuitive and allows for flexible tradeoffs between latency and communication depending on the specific RBC implementation used. We further design a CBC-based \name, which fills the theoretical gap in existing DAG-based protocols that lack a corresponding correctness proof. It utilizes an explicit round synchronization mechanism to ensure that the critical path of consensus is not obstructed by data fetching. In conjunction with a delayed timer design, it maintains a short timeout duration. It can achieve amortized linear complexity without batching $\Omega(n)$ transactions in vertices. 

\noindentparagraph{\textbf{Supporting multiple leaders per round.}}
Consistent with recent protocols~\cite{shrestha2025sailfish, arun2025shoal++, yu2025angelfish}, we extend \name to a multi-leader variant (Multi-leader \name) to reduce average latency. We adopt a strategy that differentiates between \textit{main leader} and \textit{secondary leader} to balance waiting times against leader quantity. Under this strategy, only a main leader is responsible for establishing paths to previous leader vertices, while secondary leaders share the same structure as non-leader vertices. This allows for a seamless adaptation of the leader edge design. By interpreting a single leader edge as a vote for multiple leader vertices, we modify the commit rules to enable committing multiple leader vertices within the same round. Consequently, Multi-leader \name improves average latency while preserving low communication complexity.
\section{The \name Protocol}\label{s:protocol}

In this section, we present the \name protocol. We provide a detailed description of \name in \Cref{ss:protocol}, analyze its theoretical properties in \Cref{ss:analysis}, and finally discuss the CBC-based variant of \name in \Cref{ss:protocol_CBC}.

\subsection{Protocol Description}\label{ss:protocol}

In \name, the DAG is partitioned into a sequence of numbered rounds. Each round is assigned a designated leader, which is selected via a deterministic mechanism. We denote the leader of round $r$ as $L_r$, and the vertex created by $L_r$ as $LV_r$. 

\begin{algorithm*}[!t]
    \caption{\name's data structures and basic utilities for replica \node{i}}
    \label{alg:data_structure}
    \begin{algorithmic}[1]
    \footnotesize
    \Statex \textbf{Local variables:}
    \StateX struct vertex $v$:
    \Comment{The struct of a vertex in the DAG}
    \StateXX $v.round$ - the round of $v$ in the DAG
    \StateXX $v.source$ - the replica that broadcasts $v$
    \StateXX $v.block$ - a block of transactions
    \StateXX $v.strongEdges$ - a set of vertices in $v.round-1$ that represent strong edges
    {\color{magenta}\Comment{Only leader vertices need to contain}}
    \StateXX $v.weakEdges$ - a set of vertices in rounds $<$ $v.round-1$ that represent weak edges
    {\color{magenta}\Comment{Only leader vertices need to contain}}
    \color{magenta}
    \StateXX $v.leaderEdge$ - a leader vertex in round $\le$ $v.round-1$ that represents a leader edge
    \StateXX $v.selfEdges$ - a set of vertices in rounds $\le$ $v.round-1$ that represent self edges (created by \node{i})
    \StateXX $v.nvc$ - a $\NoVote$ certificate for $v.round-1$ (if any)
    \Comment{Only leader vertices need to contain}
    \StateX struct fast-vote $fv$: 
    \Comment{The struct of a fast-vote for a previous leader vertex}
    \StateXX $fv.round$ - the round of $fv$
    \StateXX $fv.source$ - the replica that broadcasts $fv$
    \StateXX $fv.leader$ - leader vertex that the replica votes for
    \normalcolor
    \StateX $DAG_i[]$ - An array of sets of vertices (indexed by rounds)
    \StateX {\color{magenta}$FV_i[]$ - An array of sets of fast-votes (indexed by rounds)}
    \StateX $blocksToPropose$ - A queue, initially empty, $\node{i}$ enqueues valid blocks of transactions from clients
    \color{magenta}
    \StateX $lastLeader$ - The most recent delivered leader vertex for which a no-vote has not been sent
    \StateX $oldVertices$ - The own vertices (created by \node{i}) that have been delivered but not yet referenced
    \normalcolor

    \Procedure{path}{$v,u$}
    \label{line:path}
    \Comment{Check if exists a path consisting of \textcolor{magenta}{all kinds of} edges in the DAG}
    \State \Return exists a sequence of $k \in \N$, vertices $v_1,\ldots, v_k$ s.t. 
    \StateXX $v_1 = v$, $v_k = u$, and $\forall j \in [2,..,k]$:
    $v_j \in \bigcup_{r\ge 1} DAG_i[r] \land (v_j \in v_{j-1}.weakEdges \cup v_{j-1}.strongEdges  \cup  \textcolor{magenta}{v_{j-1}.leaderEdge \cup v_{j-1}.selfEdges})$
    \EndProcedure

    \color{magenta}
    \Procedure{leader\_path}{$v,u$}
    \label{line:leader_path}
    \Comment{Check if exists a path consisting of leader edges and strong edges from leader vertex $v$ to leader vertex $u$}
	\State \Return exists a sequence of $k \in \N$, vertices $v_1,\ldots, v_k$ s.t. \StateXX $v_1 = v$, $v_k = u$, and $\forall j \in [2,..,k]$:
    $v_j \in \bigcup_{r\ge 1} DAG_i[r] \land (v_j \in  v_{j-1}.leaderEdge \cup v_{j-1}.strongEdges)$ 
	\EndProcedure
    \normalcolor

    \vspace{-1em}
    \setlength{\columnsep}{25pt}
    \begin{multicols}{2}
    \Procedure{set\_weak\_edges}{$v, r$}
    \State $v.weakEdges \gets \{\}$
    \For{$r' = r-2 \text{ down to } 1$}
        \For{\textbf{every} $u \in DAG_i[r']$ s.t.  $\neg$\Call{path}{$v,u$}}
            \State $v.weakEdges \gets v.weakEdges \cup \{u\}$
        \EndFor
    \EndFor
    \EndProcedure

    {\color{magenta}
	\Procedure{set\_self\_edges}{$v$}
    \State $v.selfEdges \gets \{\}$
	\For{every $u \in oldVertices$ s.t. $\neg$\Call{path}{$v,u$}}
        \State $v.selfEdges \gets v.selfEdges \cup \{u\}$
        \State $oldVertices \gets oldVertices \backslash \{u\}$
    \EndFor
	\EndProcedure
    }

    \columnbreak

    {\color{magenta}
    \Procedure{set\_leader\_edge}{$v$}
    \State $v.leaderEdge \gets lastLeader$
    \EndProcedure
    }

    \Procedure{get\_vertex}{$p, r$}
    \If{$\exists v \in DAG_i[r]$ s.t. $v.source = p$}
        \State \Return $v$
    \EndIf
    \State \Return $\bot$
    \EndProcedure

    \Procedure{get\_leader\_vertex}{$r$}
    \State \Return \Call{get\_vertex}{$L_r, r$} 
    \EndProcedure

    \Procedure{a\_bcast$_i$}{$b$}
    \State $blocksToPropose.$enqueue($b$)
    \EndProcedure
    \end{multicols}
    \vspace{-2em}

    \algstore{break_R}
    \end{algorithmic}
\end{algorithm*}

\noindentparagraph{\textbf{DAG components.}}
The data structures and basic utilities of \name are presented in \Cref{alg:data_structure}, with our modifications over Sailfish~\cite{shrestha2025sailfish} highlighted in magenta. In each round, each replica is permitted to propose one vertex containing a block of transactions (which may be empty). Each vertex must also include a set of edges to integrate into the DAG. All vertices are disseminated via RBC. Crucially, \textit{only} leader vertices are required to reference at least $2f+1$ vertices from the previous round via \textit{strong edges}. A leader vertex may also references up to $f$ vertices from earlier rounds to which a path has not yet been established via \textit{weak edges}. All vertices must reference a leader vertex via a \textit{leader edge}. Optionally, a vertex may also reference vertices previously created by its creator via \textit{self edges}. In particular, a leader vertex may additionally be required to carry a $\NoVote$ certificate ($NVC$) regarding the leader of the previous round via $v.nvc$. We elaborate on these elements in the DAG construction subsection. 

\begin{figure}[h]
	\centering
	\includegraphics[width=0.4\textwidth]{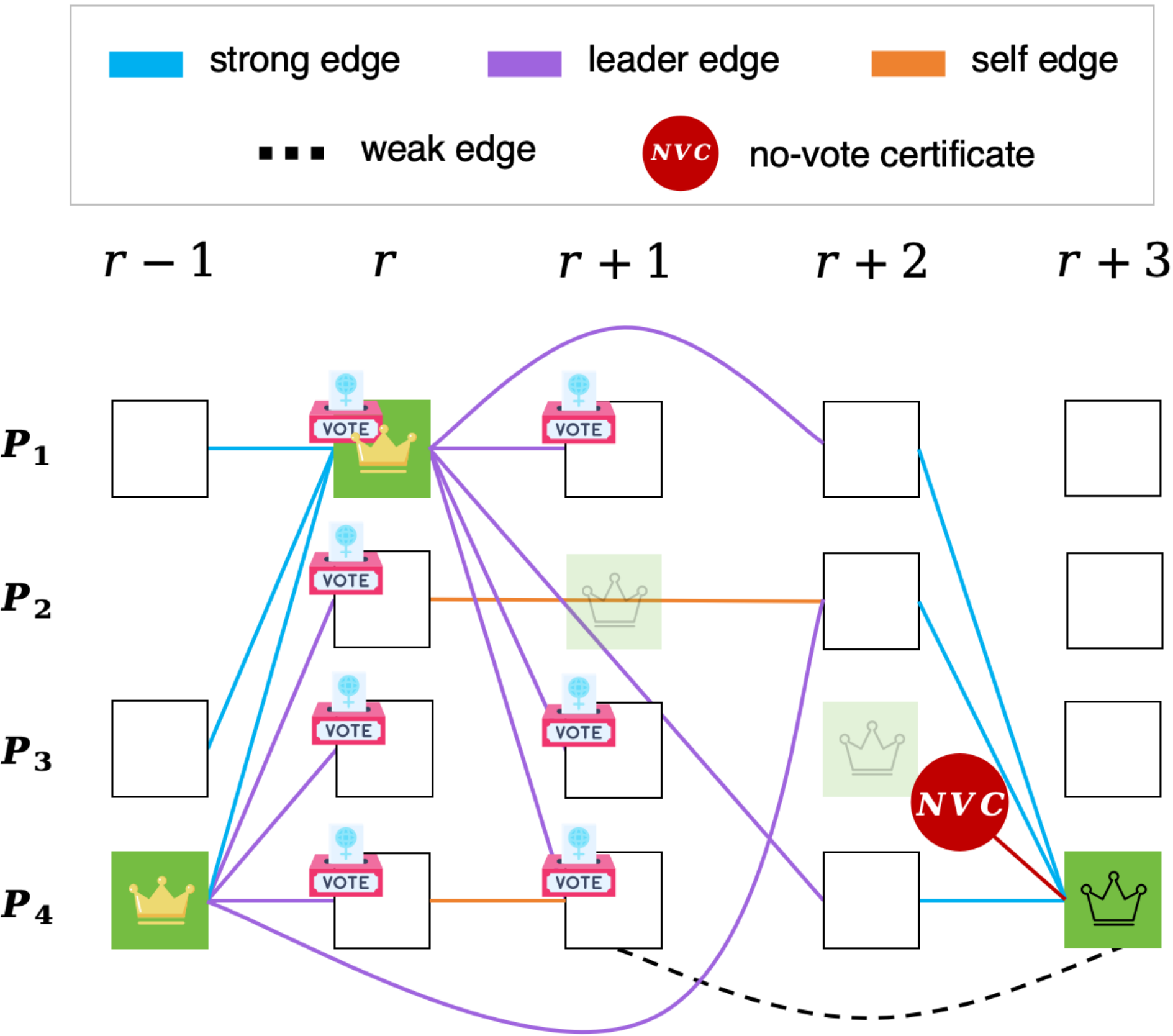}
	\caption{Illustration of \name. $LV_{r-1}$ and $LV_r$ are directly committed. The vertices created by replicas \node{1} and \node{4} in round $r+2$ reference $LV_r$ via leader edges. In addition to $2f+1$ strong edges, $LV_{r+3}$ provides $NVC_{r+2}$ to prove that $LV_{r+2}$ cannot be directly committed. Together, strong edges and leader edges constitute a \textit{leader path} between $LV_{r+3}$ and $LV_r$.}
\label{fig:protocol}
\end{figure}

We define a \textit{path} as a connection between two vertices established via any of the aforementioned four types of edges (Line~\ref{line:path}). A \textit{leader path} specifically denotes a path between two leader vertices consisting of leader edges and strong edges (Line~\ref{line:leader_path}).

Each replica maintains a local view of the DAG. Upon the completion of the RBC, the replica adds the vertex and its edges to its local DAG based on the corresponding source and round. Although the DAG views of all replicas may differ at any specific point in time, the RBC primitive guarantees their eventual consistency.

In addition to vertices, \name introduces a struct called \textit{fast-vote}. A fast-vote specifies the round, the source, and the target leader vertex for which the vote is intended. As detailed in the subsequent subsection, this type of message is created exclusively during round-skipping scenarios. Their objective is to help committing the leader vertices from preceding rounds. 

\begin{algorithm*}[!t]
    \caption{\name's DAG construction protocol for replica \node{i}}
    \label{alg:dag_construction}
    \begin{algorithmic}[1]
    \footnotesize
    \algrestore{break_R}

    \vspace{-2em}
    \setlength{\columnsep}{25pt}
    \begin{multicols}{2}
    \Statex \textbf{Local variables:}
    \StateX $round \gets 1$; $\textit{buffer} \gets \{\}$
    
    \Upon {\Call{r\_deliver$_i$}{$v, r, p$}} \label{line:r_deliver}
    \If {{\color{magenta}$v.source = p \land v.round = r \land$ is\_valid$(v)$}}
    \If {$\neg$\Call{try\_add\_to\_dag}{$v$}} \label{line:add_to_dag}
    \State $\textit{buffer} \gets \textit{buffer} \cup \{v\}$
    \Else
    \For {$v' \in \textit{buffer} : v'.round \geq r$}
    \State \Call{try\_add\_to\_dag}{$v'$}
    \EndFor
    \EndIf
    \EndIf
    \EndUpon

    {\color{magenta}
    \Upon {receiving $\tuple{fv, r,}_p$}
    \If {$fv.round=r \land$ is\_valid$(fv)$} 
    \State $FV_i[r]\gets FV_i[r]\cup \{fv\}$
    \EndIf
    \EndUpon
    }

    \Upon {$|DAG_i[r]|\geq 2f+1 \land (\exists v' \in DAG_i[r]$: $ v'.source=L_r$ $\lor$
    \StateX ${\color{magenta}NVC_r \text{ is received }})\text{ for } r \geq round$}
    \label{line:advance}
    {\color{magenta}
    \If {$r>round$}
    \Comment{Skip to a higher round (rule 1)}
    \State \Call{send\_fast\_vote}{$round+1,r$}
    \label{line:fast_vote_1}
    \EndIf
    }
    \State \Call{advance\_round}{$r+1$}
    \EndUpon

    {\color{magenta}
    \Upon {receiving a set $\M$ of $\geq f+1$ first messages for round $r+1$ vertices $\land$ 
    \StateX ($\exists v' \in DAG_i[r]: v'.source=L_r$ $\lor$ $NVC_r$ is received) for $r \geq round$}
    \label{line:jumping}
    \If {$r > round$}
    \Comment{Skip to a higher round (rule 2)}
    \State \Call{send\_fast\_vote}{$round+1,r$}
    \label{line:fast_vote_2}
    \EndIf
    \State \Call{advance\_round}{$r+1$}
    \EndUpon
    }

    {\color{magenta}
    \Upon {$\Timeout$ in $round$}
    \label{line:timeout}
    \If {$\nexists v'\in DAG_i[round]:v'.source=L_{round}$}
    \State broadcast $\tuple{\NoVote, round}_i$
    \EndIf
    \EndUpon

    \Upon {receiving $NVC_r$ for $r \geq round$}
    \State broadcast $NVC_r$
    \label{line:NVC_sync}
    \EndUpon

    \Upon {\Call{r\_deliver$_i$}{$LV_r,r,L_r$} for $r>lastLeader.round$} \label{line:receive_leader}
    \If {haven't sent $\tuple{\NoVote,r}_i$ before}
    \label{line:novote}
    \State $lastLeader \gets LV_r$
    \label{line:last_leader}
    \EndIf
    \EndUpon
    }

    \Procedure{broadcast\_vertex}{$r$}
    \State $v \gets$ \Call{create\_new\_vertex}{$r$}
    \State \Call{try\_add\_to\_dag}{$v$}
    \State \Call{r\_bcast$_i$}{$v, r$}
    \EndProcedure

    \columnbreak

    \Procedure{create\_new\_vertex}{$r$}
    \State $v.round \gets r$
    \State $v.source \gets \node{i}$
    \State $v.block \gets blocksToPropose$.dequeue()
    \color{magenta}
    \State \Call{set\_leader\_edge}{$v$}
    \label{line:leader_edge}
    \If {$\node{i} = L_r$}
    \State $v.strongEdges \gets DAG_i[r-1]$
    \label{line:strong_edge}
    \If {$\not\exists v' \in DAG_i[r-1]: v'.source=L_{r-1}$}
    \State $v.nvc \gets NVC_{r-1}$
    \label{line:no_vote_cert}
    \EndIf
    \State \Call{set\_weak\_edges}{$v, r$}
    \label{line:weak_edge}
    \EndIf
    \State \Call{set\_self\_edges}{$v$}
    \label{line:self_edge}
    \normalcolor
    \State \Return $v$
    \EndProcedure

    {\color{magenta}
    \Procedure{create\_fast\_vote}{$r$}
    \State $fv.round \gets r$
    \State $fv.source \gets \node{i}$
    \State $fv.leader \gets L_{r-1}$
    \State \Return $fv$
    \EndProcedure

    \Procedure{send\_fast\_vote}{$start, end$}
    \State start $timer$ for $\FastVote$
    \label{line:tau'}
    \Comment{Start fast-vote timer $\tau'$}
    \For {$r'=start$ up to $end$ simultaneously}
    \If {\Call{r\_deliver$_i$}{$LV_{r'-1},r'-1,L_{r'-1}$} before $\Timeout$ $\land$ $NVC_{r'-1}$ is not
    \StateXX received $\land$ haven't sent $\tuple{\NoVote,r}_i$ before} \label{line:fv_condition}
    \State $fv \gets$ \Call{create\_fast\_vote}{$r'$}
    \State broadcast $\tuple{fv, r'}_i$
    \label{line:broadcast_fv}
    \EndIf
    \EndFor
    \EndProcedure    
    }

    \Procedure {try\_add\_to\_dag}{$v$}
    \If {$\forall v' \in v.srongEdges \cup v.weakEdge$ $\textcolor{magenta}{\cup~v.leaderEdge}$
    \StateX $\textcolor{magenta}{\cup~v.selfEdges}: v' \in \bigcup_{k \geq 1} DAG_i[k]$}
    \State $DAG_i[v.round] \gets DAG_i[v.round] \cup \{v\}$
    \State $\textit{buffer}\gets \textit{buffer}\ \backslash\ \{v\}$
    \State \Return $\True$
    \EndIf
    \State \Return $\False$
    \EndProcedure

    \Procedure {advance\_round}{$r$}
    \If {\node{i}$=L_r$}
    \color{magenta}
    \State \textbf{wait until} $|DAG_i[r-1]| \geq 2f+1$ 
    \normalcolor
    \EndIf
    \State $round\gets r$; start $timer$ for $round$
    \label{line:tau}
    \Comment{Start round timer $\tau$}
    \State \Call{broadcast\_vertex}{$round$}
    \EndProcedure
    \end{multicols}
    \vspace{-2em}

    \algstore{break_R}
    \end{algorithmic}
\end{algorithm*}

\noindentparagraph{\textbf{DAG construction protocol.}}
The DAG construction protocol of \name is presented in \Cref{alg:dag_construction}. Upon delivering $2f+1$ vertices in round $r$ (Line~\ref{line:advance}), replica \node{i} advances to round $r+1$ and creates a new vertex if it has delivered $LV_r$ or received a $\NoVote$ certificate for round $r$ (denoted as $NVC_r$). A $\NoVote$ certificate is composed of $2f+1$ $\NoVote$ messages (which can be aggregated using threshold signature). Upon entering a new round, \node{i} starts a timer of duration $\tau$ (Line~\ref{line:tau}). If \node{i} times out while waiting for $LV_r$ in round $r$, it broadcasts a $\tuple{\NoVote, r}_i$ message (Line~\ref{line:timeout}). When \node{i} receives $NVC_r$, it forwards the certificate to all replicas (Line~\ref{line:NVC_sync}). 
Any replica that has broadcast $\tuple{\NoVote, r}$ is prohibited from referencing $LV_r$ via a leader edge (Line~\ref{line:novote}), as a $\NoVote$ represents a commitment to withhold its vote for $LV_r$.

To create a vertex $v$ for round $r$, \node{i} first include a block of transactions. Crucially, $v$ must reference the most recently delivered leader vertex via a \textit{leader edge} for which \node{i} has not previously broadcast a $\NoVote$ message (Line~\ref{line:leader_edge}). This state is tracked using the $lastLeader$ variable (Line~\ref{line:last_leader}). Specifically, if $v$ references $LV_{r-1}$ via leader edge, it constitutes a ``vote" for $LV_{r-1}$. The vote can be used to commit $LV_{r-1}$, as we will elaborate shortly. If $v$ references a leader vertex from a round prior to $r-1$, the leader edge serves solely to establish the leader path to subsequent leader vertices. 

The leader vertex $LV_r$ is required to include at least $2f+1$ \textit{strong edges} referencing vertices from round $r-1$ (Line~\ref{line:strong_edge}). To ensure safety, $LV_r$ must either reference $LV_{r-1}$ via a strong edge or provide $NVC_{r-1}$ to prove that $LV_{r-1}$ failed to obtain sufficient votes for being committed (Line~\ref{line:no_vote_cert}). In addition, $LV_r$ may reference vertices from rounds prior to $r-1$ via \textit{weak edges} to ensure their eventual inclusion in the DAG (Line~\ref{line:weak_edge}). We prohibit non-leader vertices from setting weak edges to maintain their size. This results in a sparser DAG, implying that straggling vertices may experience increased latency before being included in the causal history of a leader vertex. To mitigate this issue, we permit a replica to reference a constant number of delivered vertices created by itself via \textit{self edges} (Line~\ref{line:self_edge}). This ensures that once the new vertex is referenced, these historical vertices are included in the causal history. These vertices are tracked in the $oldVertices$ set and are removed upon being referenced. Note that \textit{self edges} serve solely as a practical optimization and are not required for the protocol's correctness.

\noindentparagraph{\textbf{Round synchronization.}}
To ensure liveness, partially synchronous DAG-based protocols require that, all honest replicas enter the same round within a bounded time~\cite{malkhi2024bbca, shrestha2025sailfish, babel2025mysticeti} after GST. We refer to this property as \textit{round synchronization}. To achieve this, we must enable straggling replicas to advance directly to the latest round, a mechanism we term \textit{round-skipping}.

In \Cref{alg:dag_construction}, we present two distinct rules for round-skipping. The first rule is the same as Sailfish~\cite{shrestha2025sailfish}. When a replica \node{i} delivers $2f+1$ vertices from round $r$, and has delivered $LV_r$ or received $NVC_r$, \node{i} can directly enter round $r+1$ (Line~\ref{line:advance}). The second rule is slightly different. When \node{i} receives the \textit{first message} of the RBC for round $r+1$ from at least $f+1$ distinct replicas, and either delivered $LV_r$ or reveived $NVC_r$, \node{i} can skip to round $r+1$ (Line~\ref{line:jumping}). The intuition is that among the earliest set of $f+1$ first messages, at least one must be sent by an honest replica. This honest replica must have necessarily delivered $2f+1$ vertices from round $r$ and possesses either $LV_r$ or $NVC_r$. Given that non-leader vertices in \name are only required to provide a leader edge, they can proceed to round $r+1$ immediately. This rule allows slow replicas to catch up faster, and in particular, it significantly reduces the timeout duration in the CBC-based version (see \Cref{ss:protocol_CBC}).

Fast-votes are triggered after round-skipping (Lines~\ref{line:fast_vote_1}, \ref{line:fast_vote_2}). Although a replica skipping from round $r'$ to $r$ does not propose vertices for the intermediate rounds, it needs to create and broadcast fast-votes for these rounds when necessary. Specifically, after skipping rounds, the replica starts a timer of duration $\tau'$ (Line~\ref{line:tau'}). For any leader vertex of a skipped round, the replica must await its delivery prior to the timeout, unless the corresponding $NVC$ has already been received or $\NoVote$ has been sent. If the replica delivers a $LV$ within this duration, it broadcasts a fast-vote for that leader (Lines~\ref{line:fv_condition}--\ref{line:broadcast_fv}). A fast-vote is also treated as a vote for the leader vertex and is utilized in the commit process. This mechanism is pivotal for guaranteeing liveness, as discussed later.

\noindentparagraph{\textbf{DAG commit rule.}}
The commit rule of \name is presented in \Cref{alg:commit_rule}. Only leader vertices are explicitly committed. Non-leader vertices appearing in the causal history of a committed leader are ordered according to a specified deterministic rule (Line~\ref{line:order}).

\begin{algorithm}[h]
    \caption{\name's commit rule for replica \node{i}}
    \label{alg:commit_rule}
    \begin{algorithmic}[1]
    \footnotesize
    \algrestore{break_R}
    \Statex \textbf{Local variables:}
    \StateX $committedRound \gets 0$
    \StateX $deliveredVertices \gets \{\}$
    \StateX $leaderStack \gets $ initialize empty stack
    
    \color{magenta}
    \Upon{receiving a set $\M$ of first messages for round $r+1$ vertices s.t.
    \Statex $\forall v'\in\M$: $(\exists v\in (v'.leaderEdge \cup v'.strongEdges) \land$ $v.round=r$
    \Statex $\land~v.source=L_r)$ $\land$ $(|\M|+|FV_i[r+1]|\ge 2f+1)$}
    \label{line:commit}
    \If {$committedRound < r$}
    \State $v \gets$ \Call{get\_leader\_vertex}{$r$}
    \State \Call{commit\_leader}{$v$}
    \EndIf
    \EndUpon
    \normalcolor

    \Procedure{commit\_leader}{$v$}
    \State $leaderStack.$\Call{push}{$v$}
    \State $r \gets v.round - 1$
    \State $v' \gets v$
    \While{$r > committedRound$}
    \label{line:indirect_start}
    \State $v_s \gets$ \Call{get\_leader\_vertex}{$r$}
    \If{{\color{magenta}\Call{leader\_path}{$v', v_s$}}}
    \State $leaderStack.$\Call{push}{$v_s$}
    \State $v' \gets v_s$
    \EndIf
    \State $r \gets r -1$
    \EndWhile
    \State $committedRound \gets v.round$
    \label{line:indirect_end}
    \State \Call{order\_vertices}{$ $}
    \EndProcedure

    \Procedure{order\_vertices}{$ $}
    \label{line:order}
    \While {$\neg leaderStack.$\Call{isEmpty}{$ $}}
    \State $v \gets leaderStack.$\Call{pop}{$ $}
    \State $toDeliver \gets \{v' \in \bigcup_{r>0}DAG_i[r]\, | \, path(v, v') \land$ 
    \StateXXX $v' \notin deliveredVertices\}$
    \label{line:start_traversal}
    \For {\textbf{every} $v'\in toDeliver$ in some deterministic order}
    \State \textbf{output} \Call{a\_deliver$_i$}{$v'.block, v'.round, v'.source$}
    \State $deliveredVertices \gets deliveredVertices \cup \{v'\}$
    \label{line:end_traversal}
    \EndFor
    \EndWhile
    \EndProcedure

    \algstore{break_R}
    \end{algorithmic}
\end{algorithm}

When \node{i} receives at least $2f+1$ ``votes'' for the leader vertex of round $r$, it directly commits $LV_r$. A valid vote comprises either the \textit{first message} (of the RBC) for a round $r+1$ vertex that references $LV_r$, or a \textit{fast-vote} from round $r+1$ (Line~\ref{line:commit}). Since \node{i} may deliver a vertex prior to its corresponding first message, we also recognize the delivered vertex as a valid vote. The intuition behind the threshold of $2f+1$ votes is that at least $f+1$ of these votes must originate from honest replicas. This sufficiently precludes the existence of $NVC_r$. Together with the constraints imposed on leader vertices, this guarantees that any subsequent leader vertex will establish a \textit{leader path} to the committed $LV_r$.

Prior to directly committing a leader vertex, \node{i} recursively commits all uncommitted leader vertices from earlier rounds that are connected via a leader path (Lines~\ref{line:indirect_start}--\ref{line:indirect_end}). This indirect committing rule is essential for preserving total order property.

\noindentparagraph{\textbf{The Necessity of fast-votes.}}
We illustrate how fast-votes resolve the liveness issue through the following example (\Cref{fig:fast_vote} in Appendix~\ref{a:illustration}). Consider an execution where $f+1$ honest replicas and $f$ Byzantine replicas have advanced to round $r+1$, while the remaining $f$ honest replicas lag behind at round $r' \leq r$. In \name, Byzantine replicas can deliberately omit references to $LV_r$ in their round $r+1$ vertices. Once the $f$ Byzantine replicas and the $f+1$ fast honest replicas complete the RBC for round $r+1$, they proceed to round $r+2$. At this time, if the Byzantine replicas help deliver these vertices to the lagging replicas, these stragglers will immediately skip to round $r+2$. Without fast-votes, $LV_r$ would fail to attain the required $2f+1$ votes for being committed. This process could repeat forever, preventing any leader vertex from being directly committed even after GST, thereby violating liveness. The fast-vote mechanism in \name allows the $f$ lagging honest replicas to cast votes for $LV_r$ in such scenarios. Together with our commit rule, this ensures that $LV_r$ receives sufficient votes to be committed.

\subsection{Correctness and Efficiency Analysis}\label{ss:analysis}

\noindentparagraph{\textbf{Correctness.}}
Due to space limitations, the full correctness proof of \name is deferred to Appendix~\ref{a:correctness_R}.

\noindentparagraph{\textbf{Communication complexity.}}
We analyze the per-round metadata communication complexity. \textit{Metadata} refers to the edges (references) in vertices and consensus messages such as $\NoVote$s and $\FastVote$s. The edges may take the form of indices, cryptographic hashes, or signatures. Under the assumption of threshold signatures, the broadcasting of consensus messages incurs at most $O(\lambda n^2)$ communication per round. In \name, only the leader vertex contains $O(n)$ edges and potentially a $NVC$, while each non-leader vertex contains only $O(1)$ edges. Consequently, the total number of edges per round is $O(n)$. By contrast, Sailfish have $O(n^2)$ edges per round. In both protocols, disseminating these edges through RBC or CBC dominates the overall communication complexity.

With Bracha's RBC~\cite{bracha1987asynchronous}, \name incurs $O(n^3\log n)$ communication complexity\footnote{The $\log n$ factor arises because each reference can be encoded as a replica index.} compared with $O(n^4)$ for Sailfish.
When using CCBRB~\cite{alhaddad2022balanced}, the communication complexity for both \name and Sailfish is $O(\lambda n^3)$. Although \name cannot utilize erasure coding to reduce the asymptotic complexity, the reduced number of references still saves both computation and communication overhead. When using latency-optimal RBC~\cite{abraham2021good}, \name incurs $O(\lambda n^3)$ communication complexity, compared with $O(n^4)$ for Sailfish. When using Narwhal's CBC~\cite{danezis2022narwhal}, \name incurs $O(\lambda n^2)$ communication complexity in the good case and $O(\lambda n^3)$ in the bad case, whereas the corresponding communication complexity of Sailfish is $O(\lambda n^3)$ and $O(\lambda n^4)$, respectively. 

\noindentparagraph{\textbf{Latency.}}
In \name and Sailfish, the commit latency for an honest leader vertex (after GST) is the latency of a single RBC instance plus $1\delta$. In good case, the commit latency for non-leader vertices incurs an additional RBC latency, as they need to be referenced by a subsequent leader vertex. Consequently, for a generic RBC protocol, these two latencies equal to $(k_1+1)\delta$ and $(2k_1+1)\delta$, respectively. 

We now evaluate the additional latency incurred by a Byzantine leader. In \name, the timeout parameter is configured as $\tau=(k_1+k_2)\Delta$. Consider a scenario where Byzantine $L_r$ remains silent and the first honest replica enters round $r$ at time $t$. By \cref{lemma:synchronization_R} and the responsiveness of the RBC protocol, all honest replicas are guaranteed to enter round $r$ within $k_2\delta$. Consequently, by time $t+\tau+(k_2+1)\delta$, all honest replicas will receive the $NVC_r$ and advance to round $r+1$. If $L_{r+1}$ is honest, all non-leader vertices from round $r-1$ referenced by $L_{r+1}$ will be committed within the subsequent $(k_1+1)\delta$. Therefore, the presence of a single Byzantine leader results in an increase in the commit latency for non-leader vertices of $\tau+(k_2+1)\delta+(k_1+1)\delta - [k_2\delta+(k_1+1)\delta] = \tau+\delta$\footnote{We subtract $k_2\delta$ here because the commit latency calculation in the good case does not account for the additional latency caused by round synchronization.
}. In comparison, Sailfish incurs an additional latency of $\tau+2\delta$ (for a single failure) or $\tau+k_2\Delta+2\delta$ (for consecutive failures), as it requires an extra round of $\Timeout$ message dissemination and collection.



\subsection{\name with Consistent Broadcast}\label{ss:protocol_CBC}

In this section, we present the CBC-based version of \name protocol. Specifically, we build on Narwhal's CBC~\cite{danezis2022narwhal}, which combines Signed Echo CBC~\cite{cachin2011introduction} with a randomized pull mechanism. Narwhal's CBC optimizes communication complexity in the good case by eliminating the all-to-all broadcast. The feasibility of this approach hinges on the existence of availability certificates, which moves data synchronization off the critical path~\cite{arun2025shoal++, danezis2022narwhal}.

To the best of our knowledge, existing proofs for Narwhal-based protocols often directly invoke RBC's property (~\Cref{property:RBC})~\cite{arun2025shoal++, yu2025angelfish, spiegelman2024shoal}, which do not align with the weaker properties provided by Narwhal's CBC. Our goal is to directly design a protocol whose security relies exclusively on the properties of Narwhal's CBC. 

The main challenge in designing the protocol is to guarantee round synchronization. We address this challenge through an explicit round synchronization mechanism. Specifically, upon entering a new round, a replica sends a set of certificates to the corresponding leader (which we call a $\NewRound$ message) to ensure that the leader can enter that round. The round-skipping rules must also be modified to avoid blocking round advancement. To prevent the timeout from becoming too long, we further optimize the protocol by delaying the start of the round timer. 

The details of protocol are presented in Appendix~\ref{a:protocol_CBC}.

\section{Multi-leader \name}\label{s:protocol_ml}

\begin{algorithm*}[!t]
    \caption{Multi-leader \name's pseudocode for replica \node{i}}
    \label{alg:protocol_ML}
    \begin{algorithmic}[1]
    \footnotesize
    \algrestore{break_R}

    \Statex \textbf{Local variables:}
    \StateX struct vertex $v$:
    \Comment{The struct of a vertex in the DAG}
    {\color{magenta}
    \StateXX $v.leaderEdge$ - a leader vertex in round $\leq v.round-1$ that represents a leader edge with highest index
    \StateXX $v.nvc$ - a $\NoVote$ certificate of a leader vertex in $v.round-1$
    \Comment{Only main leader vertices need to contain}
    \StateXX $v.auxEdges$ - a set of leader edges in round $< v.round-1$ that represent auxiliary leader edges
    \Comment{Only main leader vertices need to contain}
    \StateX $lastLeader$ - The most recent delivered leader vertex with the highest round and index
    }

    \vspace{-1em}
    \setlength{\columnsep}{25pt}
    \begin{multicols}{2}

    {\color{magenta}
    \Procedure{get\_leader}{$r, x$}
    \State \Return \Call{get\_vertex}{$\ML_r[x], r$} 
    \EndProcedure
    }

    \Procedure{leader\_path}{$v,u$}
    \State \Return exists a sequence of $k \in \N$, vertices $v_1,\ldots, v_k$ s.t.
    \StateXX $v_1 = v$, $v_k = u$, and $\forall j \in [2,..,k]$: $v_j \in \bigcup_{r\ge 1} DAG_i[r]~\land$
    \StateXX $(v_j \in  v_{j-1}.leaderEdge$ $\cup$ $v_{j-1}.strongEdges$ ${\color{magenta}\cup~v_{j-1}.auxEdges})$  
    \EndProcedure

    \Upon {$\Timeout$ in $round$}
    \If {$\nexists v'\in DAG_i[round]:v'.source=L_{round}$}
    \State broadcast $\tuple{\NoVote, {\color{magenta}L_{round}}, round}_i$
    \EndIf
    \EndUpon

    \Upon {\Call{r\_deliver$_i$}{$LV_r,r,L_r$} for $r>{\color{magenta}lastLeader.round}$} \label{line:receive_main}
    \If {haven't sent $\tuple{\NoVote,{\color{magenta}L_r},r}_i$ before}
    \label{line:novote_main}
    \State $lastLeader \gets LV_r$
    \label{line:last_leader_1}
    \EndIf
    \EndUpon

    \Procedure{create\_new\_vertex}{$r$}
    \State $v.round \gets r$
    \State $v.source \gets \node{i}$
    \State $v.block \gets blocksToPropose.\text{dequeue}$
    \State \Call{set\_leader\_edge}{$v$}
    \If {$\node{i} = L_r$}
    \State $v.strongEdges \gets DAG_i[r-1]$
    \If {$\exists v' \in DAG_i[r-1]: v'.source=L_{r-1}$}
    \label{line:start_exist_create}
    \For {$\ell \in \ML_{r-1}$}
    \If {$\nexists v' \in DAG_i[r-1]:v'.source=\ell$}
    \State $v.nvc\gets NVC_{r-1}^{\ell}$
    \State \textbf{break}
    \label{line:end_exist_create}
    \EndIf
    \EndFor
    \Else
    \label{line:start_nexist_create}
    \State $v.nvc \gets NVC_{r-1}$
    \color{magenta}
    \State $v^* \gets \arg\max_{v'\in DAG_i[r-1]}\{v'.leaderEdge\}$
    \label{line:start_highest}
    \State $(r^*, x^*) \gets (v^*.round,v^*.index)$
    \label{line:end_highest}
    \State $v.auxEdges \gets \{$\Call{get\_leader}{$r^*,x$} for $x\leq x^*\}$ \label{line:end_nexist_create}
    \EndIf
    \normalcolor
    \State \Call{set\_weak\_edges}{$v, r$}
    \EndIf
    \State \Call{set\_self\_edges}{$v$}
    \State \Return $v$
    \EndProcedure

    \Procedure{order\_vertices}{$ $}
    \label{line:order_ML}
    \While {$\neg leaderStack.$\Call{isEmpty}{$ $}}
    \State $\CMV \gets leaderStack.$\Call{pop}{$ $}
    \For {$v \in \CMV$}
    \Comment{iterate over $\CMV$ in order}
    \State $toDeliver \gets \{v' \in \bigcup_{r>0}DAG_i[r]\, | \, path(v, v')$ 
    \StateXXX $\land$ $v' \notin deliveredVertices\}$
    \For {\textbf{every} $v'\in toDeliver$ in some order}
    \State \textbf{output} \Call{a\_deliver$_i$}{$v'.block, v'.round, v'.source$}
    \State $deliveredVertices \gets$ $deliveredVertices \cup \{v'\}$ 
    \EndFor
    \EndFor
    \EndWhile
    \EndProcedure

    \columnbreak

    \Procedure {advance\_round}{$r$}
    \color{magenta}
    \For {$\ell\in \ML_{r-1}$}
    \Comment{iterative over $\ML_{r-1}$ in order}
    \If {$\nexists v' \in DAG_i[r-1]:v'.source=\ell$}
    \State send $\tuple{\NoVote, \ell, r-1}_i$ to $L_r$
    \label{line:novote_ML}
    \If {$lastLeader=LV_{r-1} \land \ell.index>2$}
    \State $lastLeader \gets$ \Call{get\_leader}{$r-1, \ell.index-1$}
    \label{line:last_leader_2}
    \EndIf
    \EndIf
    \EndFor
    \If {$\node{i} = L_{r}$  and $\exists v' \in DAG_i[r-1]: v'.source=L_{r-1}$}
    \label{line:start_exist}
    \State \textbf{wait until} $((\exists v' \in DAG_i[r-1]: v'.source=\ell)$ for $\forall \ell \in$ 
    \StateXX $\ML_{r-1}[:x])$ $\land$ $(NVC_{r-1}^{\ell'}$ is received for $\ell'=\ML_{r-1}[x+1])$ \label{line:end_exist}
    \EndIf
    \If {$\node{i} = L_r$ and $NVC_{r-1}$ is received}
    \label{line:start_nexist}
    \State $v^* \gets \arg\max_{v'\in DAG_i[r-1]}\{v'.leaderEdge\}$
    \State $(r^*,x^*)\gets (v^*.round, v^*.index)$
    \State \textbf{wait until} \Call{r\_deliver$_i$}{$*,r^*,\ell$} for $\forall \ell \in \ML_{r^*}[:x^*]$
    \label{line:end_nexist}
    \EndIf
    \normalcolor
    \State $round\gets r$; start $timer$ for $round$
    \State \Call{broadcast\_vertex}{$round$}
    \EndProcedure

    \Upon{receiving a set $\M$ of first messages for round $r+1$ vertices s.t.
    \StateX $\forall v'\in\M: (\exists v\in v'.leaderEdge$ $\cup$ $v'.strongEdges \land v.round=r)$
    \StateX $\land$ $(|\M|+|FV_i[r+1]|\ge 2f+1)$}
    \label{line:commit_main}
    \If {$committedRound < r$}
    \color{magenta}
    \State \Call{commit\_leaders}{$r$}
    \EndIf
    \EndUpon

    \Procedure{commit\_leaders}{$r$}
    \State $\CLS \gets [LV_r]$
    \label{line:CLS}
    \For {$x = 2$ to $|\ML_r|$}
    \State $v \gets$ \Call{get\_leader\_vertex}{$r,x$}
    \If {have received a set $\bigS$ of first messages for round $r+1$ vertices 
    \StateXX s.t. $(\forall v' \in \bigS: v'.leaderEdge.round=r$ $\land$ 
    \StateXX $v'.leaderEdge.index\ge x) \land (|\bigS|\ge 2f+1)$}
    \label{line:commit_secondary}
    \State $\CLS \gets \CLS \parallel v$
    \Else ~\textbf{break}
    \EndIf
    \EndFor
    \normalcolor
    \State $leaderStack.$\Call{push}{$\CLS$}; $v'\gets\CLS[0]$; $r'\gets v'.round-1$
    \label{line:start_indirect_commit}
    \While {$r'>committedRound$}
    \State $\CMV\gets[]$
    \label{line:CMV}
    \For {$x = 1$ to $|\ML_{r'}|$}
    \State $v \gets$ \Call{get\_leader}{$r',x$}
    \If {{\color{magenta}\Call{leader\_path}{$v',v$}}}
    \State $\CMV \gets \CMV \parallel v$
    \Else ~\textbf{break}
    \EndIf
    \EndFor
    \If {$\CMV \neq []$}
    \State $v'\gets \CMV[0]$
    \Comment{main leader vertex for round $r'$}
    \EndIf
    \State $leaderStack.$\Call{push}{$\CMV$}; $r' \gets r'-1$
    \label{line:end_indirect_commit}
    \EndWhile
    \State $committedRound \gets \CLS[0].round$
    \State \Call{order\_vertices}{$ $}
    \label{line:causal_history}
    \EndProcedure
    \end{multicols}
    \vspace{-2em}

    \end{algorithmic}
\end{algorithm*}

In \name, non-leader vertices incur at least one additional RBC latency over leader vertices. To further minimize the average latency, we extend \name to support multiple leaders per round. Our objective is to ensure that, in the good-case scenario, all leader vertices are committed within a latency of one RBC plus $1\delta$. This aligns with other state-of-the-art protocols~\cite{shrestha2025sailfish, yu2025angelfish, arun2025shoal++, babel2025mysticeti}. We designate this extended protocol as Multi-leader \name.

\noindentparagraph{\textbf{Multiple leaders with different types.}}
Drawing inspiration from Multi-leader Sailfish~\cite{shrestha2025sailfish}, we classify the leaders in each round into one \textit{main leader} and a set of \textit{secondary leaders}. The main leader is analogous to the leader in \name. It is responsible for establishing paths to potentially committed leader vertices or providing $NVC$s to certify that some are not committed. Secondary leaders are largely identical to non-leaders, with the exception that the new commit rule enables their vertices to be committed earlier.

We assume that the sequence of leaders for a given round is selected by a deterministic mechanism. Let $ML_r$ denote the leader sequence for round $r$. Specifically, $\ML_r[x]$ represents the $x$-th leader (main leader if $x=1$, secondary leader if $x>1$). Additionally, we define $\ML_r[:x]$ and $\ML_r[x+1:]$ to denote the first $x$ leaders and the set of all subsequent leaders in round $r$, respectively. For notational convenience, let $L_r = \ML_r[1]$ denote the main leader of round $r$ and $LV_r$ denote the main leader vertex. Let $\MLV_r[x]$ denote the delivered vertex corresponding to $\ML_r[x]$. We further define $\MLV_r[x].index = x$ (used in the pseudocode).

\noindentparagraph{\textbf{DAG construction protocol.}}
We begin by outlining the modifications in the DAG construction of Multi-leader \name relative to \name. The pseudocode for Multi-leader \name is presented in \Cref{alg:protocol_ML}. Text highlighted in magenta explicitly denotes the differences relative to both \name and multi-leader Sailfish~\cite{shrestha2025sailfish}.

Sharing the fundamental design philosophy of \name, Multi-leader \name aims to reduce the number of references in a non-main leader vertex to $O(1)$. However, the multi-leader setting necessitates that vertices cast ``votes'' for secondary leaders to directly commit them. To accomplish this, we extend the meaning of the \textit{leader edge}. Specifically, a leader edge references the ``highest'' leader vertex in the corresponding round. Here, the term ``highest'' implies that all leader vertices with lower indices have already been delivered (Lines~\ref{line:last_leader_1} and \ref{line:last_leader_2}). In essence, a leader edge referencing the $k$-th leader serves as a vote for all leaders with indices $1$ through $k$ in that round. This design circumvents the need for non-main leader vertices to include $O(n)$ references for voting, a requirement that would otherwise arise given the presence of $O(n)$ leaders.

Since a leader edge references only the highest leader vertex, the main leader needs to establish direct paths to other leader vertices from preceding rounds to guarantee safety. To facilitate this, we introduce \textit{auxiliary edges} within the main leader vertex (denoted as $auxEdges$ in \Cref{alg:protocol_ML}). We explain this design by considering the main leader $L_r$ of round $r$.

When $L_r$ enters round $r$ via the delivery of $LV_{r-1}$ (Lines~\ref{line:start_exist}--\ref{line:end_exist}), $auxEdges$ are omitted. In this scenario, we adopt the same constraints as in multi-leader Sailfish~\cite{shrestha2025sailfish}. Specifically, let $NVC_r^{\ell}$ denote the $\NoVote$ certificate for leader $\ell$ in round $r$. Additionally, we use $NVC_r$ to denote $NVC_r^{L_r}$. $LV_r$ is required to establish strong edges to all leader vertices corresponding to $\ML_{r-1}[:x]$ (for some $x>0$), and include the $NVC$ of $ML_{r-1}[x+1]$ (Lines~\ref{line:start_exist_create}--\ref{line:end_exist_create}). If all vertices of $ML_{r-1}$ are referenced, the inclusion of $NVC$ is omitted.

When $L_r$ enters round $r$ via $NVC_{r-1}$ (Lines~\ref{line:start_nexist}--\ref{line:end_nexist}), $L_r$ is responsible for establishing paths to leader vertices preceding round $r-1$. To achieve this, $L_r$ scans the leader edges of all vertices referenced by its strong edges and identifies the one with the highest round $r^*$ and subsequently the highest index $x^*$ (Lines~\ref{line:start_highest}--\ref{line:end_highest}). In the pseudocode, the $\arg\max$ notation succinctly captures this search process. Subsequently, $L_r$ references the vertices of $\ML_{r^*}[:x^*]$ via $auxEdges$ (Line~\ref{line:end_nexist_create}). By properties of RBC, the eventual delivery of these vertices is guaranteed.

Consistent with \name, a replica broadcasts a $\NoVote$ for $L_r$ if it fails to deliver $LV_r$ before timeout. Upon entering a new round, a replica sends $\NoVote$ messages for all undelivered leader vertices from the previous round to the main leader (Line~\ref{line:novote_ML}). This ensures that the main leader of the new round can successfully include an $NVC$ (if necessary) and create a valid vertex. We denote the $\NoVote$ message for leader $\ell$ in round $r$ as $\tuple{\NoVote, \ell, r}$.

\noindentparagraph{\textbf{DAG commit rule.}}
The primary distinction between the DAG commit rules of Multi-leader \name and \name stems from the interpretation of the leader edge. Specifically, a vertex in round $r+1$ referencing $\MLV_r[x]$ via its leader edge constitutes a vote for the entire set $\MLV_r[:x]$. For the main leader $LV_r$, any round $r+1$ vertices containing a leader edge to round $r$ is regarded as a vote (Line~\ref{line:commit_main}). After directly committing the main leader, replicas invoke commit\_leaders to attempt to directly commit secondary leaders within the same round. Given that fast-votes are issued exclusively for the main leader, we consider only first messages as votes for secondary leaders. $\MLV_r[x]$ is directly committed upon receiving at least $2f+1$ first messages from round $r+1$ with leader edge indices $\geq x$ (Line~\ref{line:commit_secondary}).

Upon completion of the aforementioned procedure, replicas recursively and indirectly commit all leader vertices $\MLV_{r'}[:x']$ (for some $x'>0$) in rounds $r' < r$ for which a leader path exists from $LV_r$ (Lines~\ref{line:start_indirect_commit}--\ref{line:end_indirect_commit}). Finally, replicas invoke order\_vertices to totally order the entire causal history in accordance with the sequence of committed leaders (Line~\ref{line:causal_history}). This indirect commit process leverages the existence of leader paths to ensure safety.

We present an illustration of Multi-leader \name in Appendix~\ref{a:illustration} and provide its correctness proof in Appendix~\ref{a:correctness_ML_R}.

\noindentparagraph{\textbf{Communication complexity.}}
In Multi-leader \name, the number of edges in non-main leader vertices remains unchanged. Since the number of auxiliary edges is at most $O(n)$, the main leader vertex still contain $O(n)$ edges. Within each round, all replicas are required to send up to $O(n)$ additional $\NoVote$ messages to the main leader, which results in a total communication overhead of $O(\lambda n^2)$. Consequently, across the various RBC implementations considered, Multi-leader \name maintains the same metadata communication complexity as \name.

\noindentparagraph{\textbf{Latency.}}
In the best-case scenario, all leader vertices are directly committed. Consequently, the commit latency for each leader vertex is the one $RBC$ plus $1\delta$. In a non-optimal case, the main leader vertex may require an additional $\delta$ to collect the $NVC$ of a previous secondary leader. Thus, the direct commit latency for leader vertices becomes one $RBC$ plus $2\delta$. Assuming there are $n-f$ delivered vertices per round, the average latency of Multi-leader \name outperforms that of \name as long as $x > \frac{n-f+k_1}{k_1}$, where $x$ denotes the number of directly committed leader vertices.

The CBC-based version of \name also supports a multi-leader extension. See Appendix~\ref{ss:multi_leader_C} for details.
\section{Evaluation}\label{s:eval}

We evaluate the performance of \name under varying numbers of replicas and different failure scenarios, and compare it against Sailfish~\cite{shrestha2025sailfish} and Sparse Bullshark~\cite{anoprenko2025dags}. Sailfish is the state-of-the-art certified-DAG protocol and serves as the basis of Clownfish. Sparse Bullshark is a scalable DAG protocol, which reduces the metadata communication overhead of Bullshark~\cite{spiegelman2022bullshark}. Our evaluation aims to demonstrate that (\rmnum{1}) \name provides better scalability and maintains strong performance at large numbers of replicas. (\rmnum{2}) \name achieves lower latency under failure case. (\rmnum{3}) Multi-leader \name can further reduce average latency while preserving scalability.

Our empirical evaluation is based on two complementary implementations. First, we use a Rust-based simulator to study protocol scalability at large system sizes. The simulator abstracts away implementation-specific overheads, allowing us to focus on metadata communication and capture the scalability gap between different protocols. Second, we build a Go-based prototype and deploy it on geo-distributed servers to measure real throughput-latency performance. The deployment results demonstrate the practical benefits of reducing metadata communication under medium-scale systems and bandwidth-limited settings.

\begin{figure*}[t]
    \centering
    \begin{subfigure}[t]{0.33\textwidth}
        \centering
        \includegraphics[width=\linewidth]{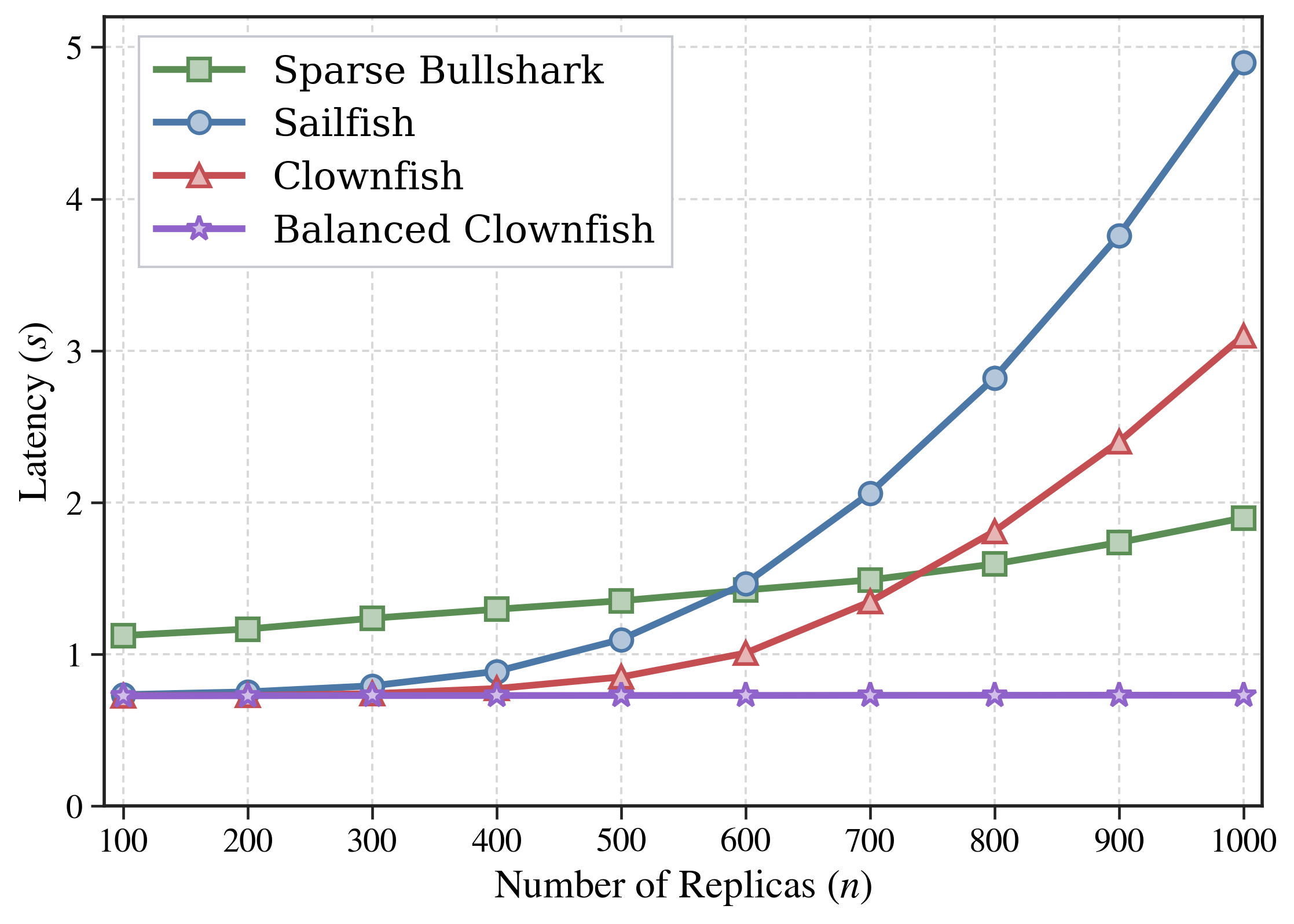}
        \caption{Latency}
        \label{fig:single_latency}
    \end{subfigure}
    \hfill
    \begin{subfigure}[t]{0.33\textwidth}
        \centering
        \includegraphics[width=\linewidth]{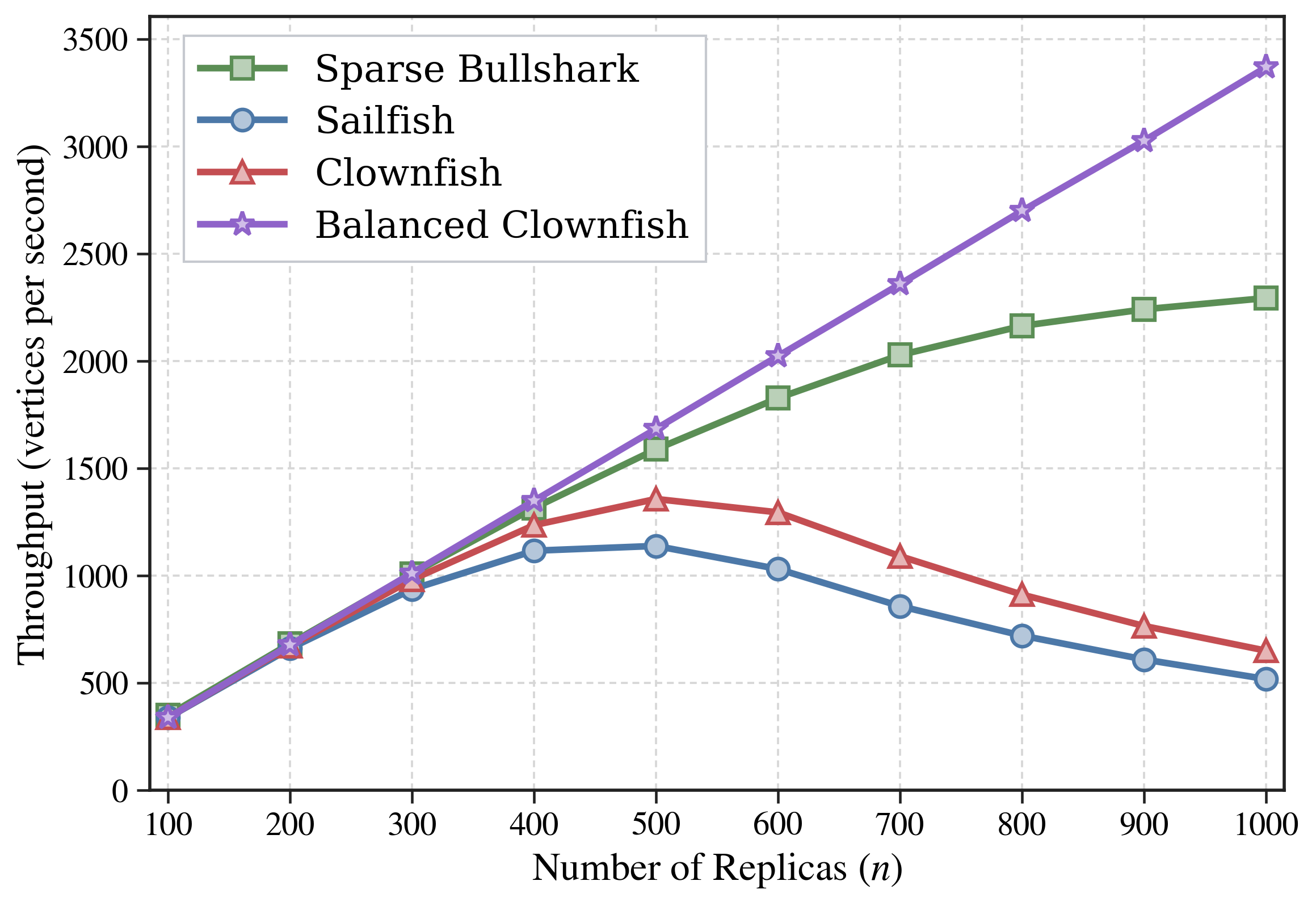}
        \caption{Throughput}
        \label{fig:single_throughput}
    \end{subfigure}
    \hfill
    \begin{subfigure}[t]{0.33\textwidth}
        \centering
        \includegraphics[width=\linewidth]{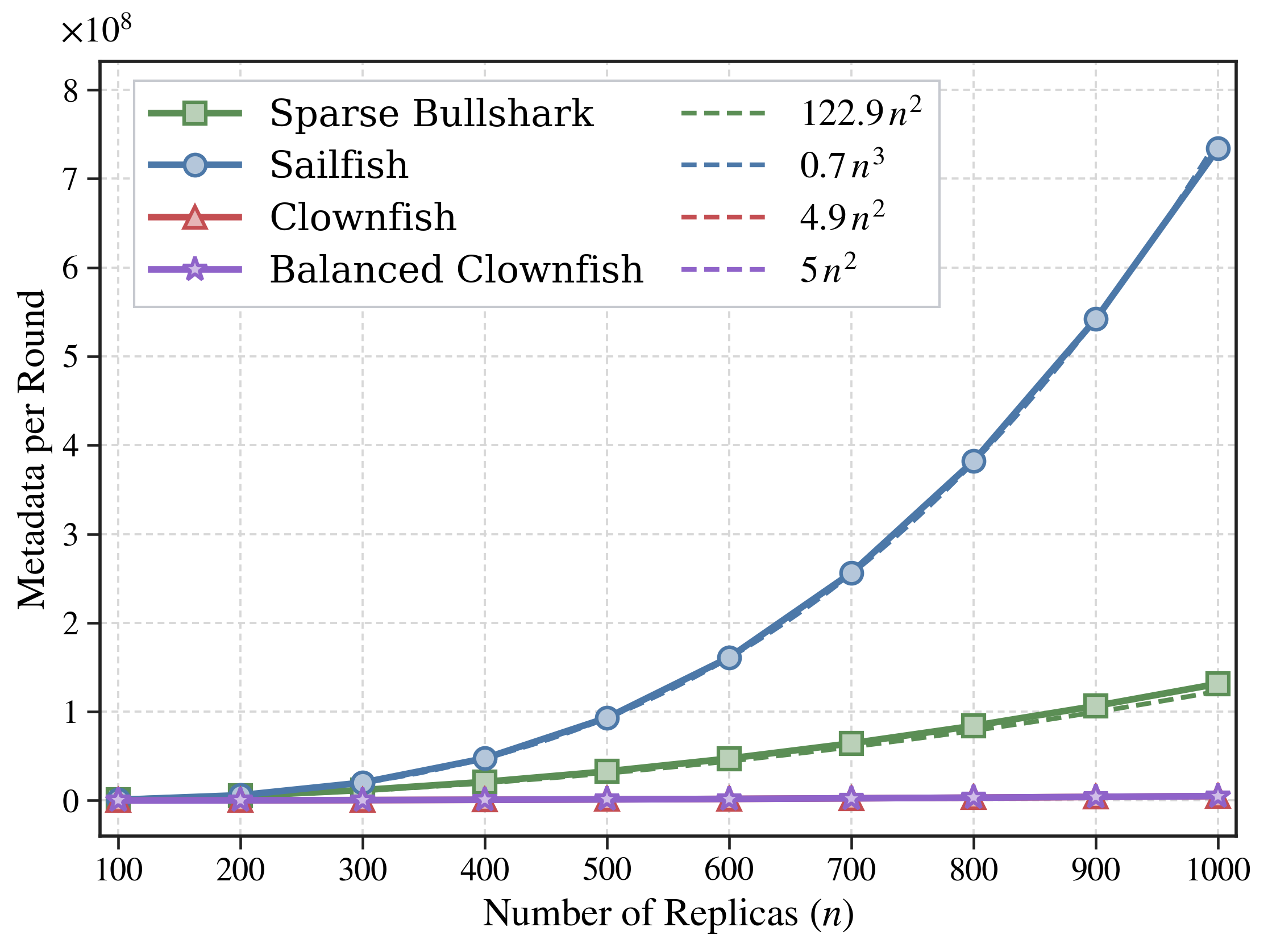}
        \caption{Metadata communication}
        \label{fig:single_metadata}
    \end{subfigure}

    \caption{Performance comparison under failure-free case with varying numbers of replicas.}
    \label{fig:scalability}
\end{figure*}

\noindentparagraph{\textbf{Implementation details.}}
For simulation, we modify the Rust-based discrete-event simulator used by Sparse Bullshark~\cite{anoprenko2025dags} to implement \name and Multi-leader \name. Consistent with the original implementations of Sailfish and Sparse Bullshark, we use Narwhal's CBC~\cite{danezis2022narwhal} to minimize communication overhead. We focus on metadata communication and only consider consensus over empty blocks without transactions. To ensure a fair comparison, we also re-implement Sailfish, Multi-leader Sailfish, and Sparse Bullshark in the simulator according to their protocol descriptions\footnote{We set the sample size (number of strong edges) of Sparse Bullshark to 128 to achieve safety comparable to that of Sailfish and \name, while preserving sparsity.}.

For deployment, we implement the above protocols in Go\footnote{Since our deployment considers medium-scale systems with $n<\lambda$, we use Bullshark instead of Sparse Bullshark, as Sparse Bullshark cannot be sparsified in this regime without affecting safety.}. We implement the primary-worker architecture of Narwhal~\cite{danezis2022narwhal}. Specifically, workers are responsible for broadcasting transactions from clients in batches. After receiving $n-f$ acknowledgments from other workers, a worker provides the digest of the corresponding batch to its primary. The primary then uses these digests as payloads in the consensus protocol.



\noindentparagraph{\textbf{Experimental setup.}}
For simulation, we set the maximum number of replicas to 1000. To capture the bandwidth bottlenecks that arise from large volumes of metadata transmission, we limit the network bandwidth of each replica to 1Gbps. To make latency more realistic, we simulate replicas evenly distributed across five geo-distributed regions. We use historical RTT measurements from Google Cloud~\cite{GoogleCloudPerfDashboard} and model the latency of each directed region pair with a normal distribution. We measure latency as the average time from vertex creation to its commitment. Since transactions are not modeled in the simulator, throughput is measured as the average number of committed vertices per second, which should ideally scale linearly with the number of replicas. Each simulation runs for 180 simulated seconds. 

We conduct the deployment experiments on geo-distributed servers across five AWS regions: us-east-1, us-west-1, eu-west-2, ap-northeast-1, and ap-southeast-2. Replicas are evenly distributed across these regions. The round-trip time between different regions ranges from 59ms to 264ms. Each replica runs on an AWS EC2 \texttt{c5a.2xlarge} instance with 8 vCPUs and 16 GB of memory, running Ubuntu 22.04. We limit the bandwidth of each machine to 100Mbps to emulate settings where communication is affected by metadata transmission. Each replica consists of one primary and one worker, which share the same bandwidth limit. The worker generates dummy transactions of 256 random bytes according to the specified rate. Each experiment runs for 30 seconds after several warm-up rounds with 50 replicas. The timeout parameter is set to 2.5 seconds. End-to-end latency is measured as the average time from transaction creation to commitment. Throughput is measured as the number of committed transactions per second.

\noindentparagraph{\textbf{Performance of \name under failure-free case.}}
We first compare the scalability of \name against other protocols under the common case where there are no failures. Since the leader in vanilla \name can still become a bottleneck due to the large size of leader vertices, we optimize it using Balanced Multicast from~\cite{alhaddad2022balanced}. Specifically, we replace the first broadcast of each leader vertex with Balanced Multicast, which uses error-correcting codes to distribute the leader's sending load evenly across all replicas. Notably, this optimization does not reduce the total communication overhead and may introduce additional message delay. It is therefore ineffective for Sparse Bullshark and Sailfish, where the sending load is already balanced because all vertices have the same metadata size.

In simulation, we vary the number of replicas from 100 to 1000 and measure the latency, throughput, and metadata volume per round of each protocol. The results are shown in \Cref{fig:scalability}, where \name and Balanced \name denote vanilla \name and \name optimized with Balanced Multicast, respectively.

In Figure~\ref{fig:scalability}\subref{fig:single_latency}, all protocols experience higher latency as the number of replicas increases. When $n\le300$, \name and Sailfish have nearly identical latency and both outperform Sparse Bullshark. However, Sailfish's latency increases rapidly once $n>300$, as its cubic metadata communication causes the bandwidth bottleneck to appear earlier. Vanilla \name delays this bottleneck until $n>500$ and suffers a smaller increase, since only the leader carries a large vertex. Nevertheless, this leader bottleneck eventually makes vanilla \name slower than Sparse Bullshark when $n>700$, despite its lower metadata complexity. Balanced \name removes this bottleneck by distributing the leader's sending load, and therefore achieves the best scalability: its latency remains almost unchanged up to $n=1000$ (from 724ms to 729ms). 

The throughput results in Figure~\ref{fig:scalability}\subref{fig:single_throughput} show the same trend. Sailfish and vanilla \name first benefit from having more vertices per round, but their throughput eventually drops as metadata transmission saturates the network. Sparse Bullshark also begins to reach a turning point at $n=1000$. In contrast, Balanced \name continues to scale almost linearly with the number of replicas. 

\begin{figure}[h]
    \centering
    \includegraphics[width=0.85\linewidth]{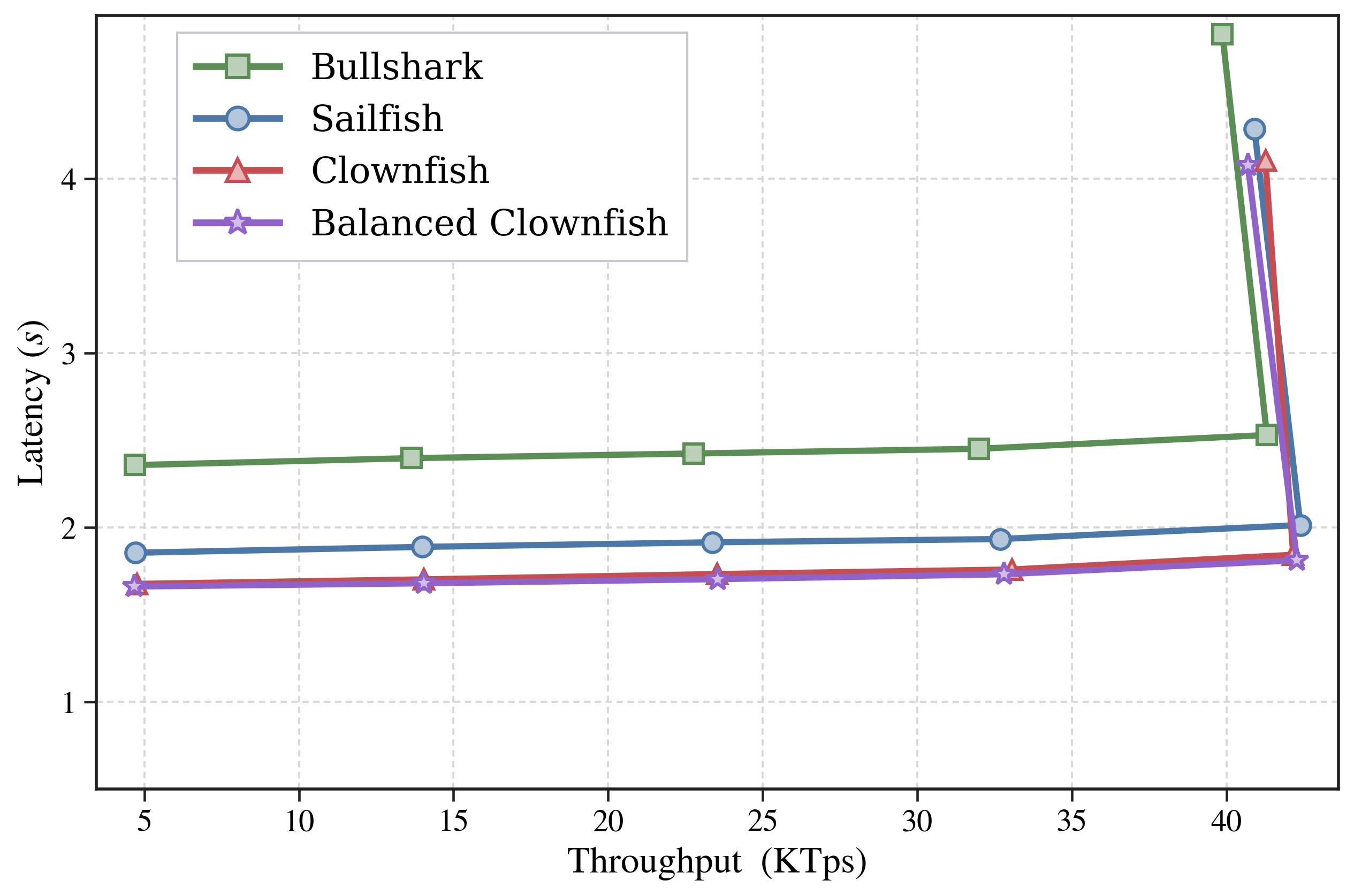}
    \caption{Throughput / latency at $n=50$ without failures.}
    \label{fig:single_l}
\end{figure}

\begin{figure}[h]
    \centering
    \includegraphics[width=0.85\linewidth]{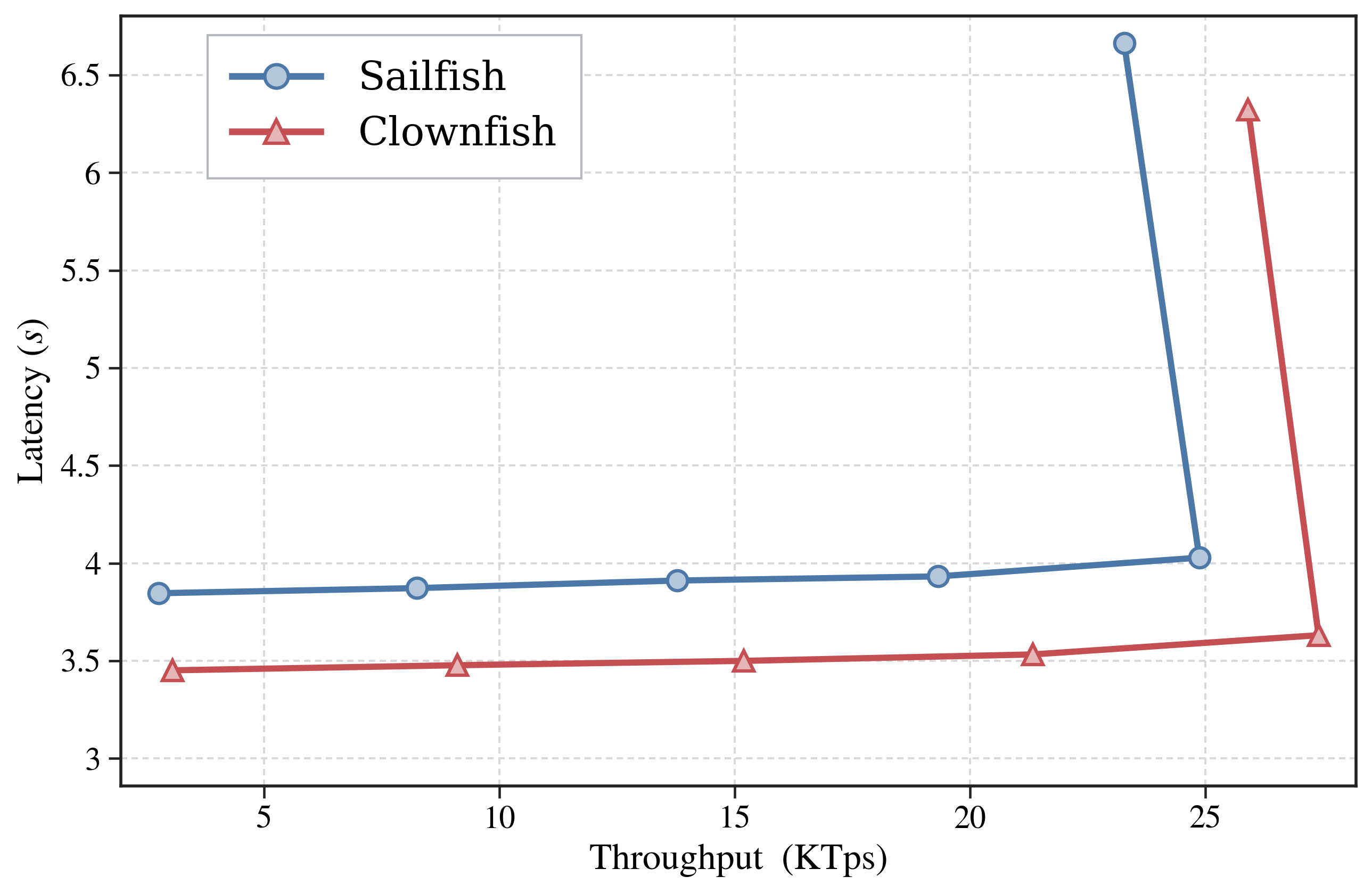}
    \caption{Throughput / latency at $n=50$ with 16 crash failures.}
    \label{fig:crash_l}
\end{figure}

Figure~\ref{fig:scalability}\subref{fig:single_metadata}  summarizes the per-round metadata communication of different protocols, where each signature is counted as one unit of metadata. The dashed lines show polynomial fits of appropriate degree. Consistent with the theoretical analysis, Sailfish exhibits cubic communication complexity, while Sparse Bullshark and \name both exhibit quadratic communication complexity, with Sparse Bullshark incurring an additional $\Theta(\lambda)$ factor (here $\lambda=128$). Due to the use of error-correcting codes, Balanced \name incurs slightly higher communication overhead than vanilla \name.

We further evaluate the throughput-latency performance of the protocols in deployment, with the results shown in \Cref{fig:single_l}. Under the given bandwidth limit, all protocols achieve comparable throughput in the tens of thousands, while \name and Balanced \name achieve similar latency, both lower than Sailfish and Bullshark. The latency gap of about 10\% (180ms) between \name and Sailfish comes from the saved metadata processing and communication overhead. These results show that metadata reduction provides practical latency benefits even at medium system sizes. This advantage is expected to become more pronounced as the ratio between metadata communication volume and available bandwidth increases, and can extend to larger-scale systems and higher-bandwidth settings.

\noindentparagraph{\textbf{Performance of \name under crash failures.}}
We evaluate the throughput-latency performance of \name and Sailfish in the presence of $f=\left\lfloor \frac{n-1}{3} \right\rfloor$ crash failures. We distribute the crashed replicas evenly across the five regions. We use round-robin leader rotation, where a crash leader appears every three rounds.

As depicted in Figure~\ref{fig:crash_l}, the average latency of both protocols is substantially higher than in the failure-free case, due to the need of waiting for all honest replicas as well as the timeout caused by crash leaders. Before saturation, Sailfish's latency stays around 3850ms to 4050ms, whereas \name maintains a latency of around 3450ms to 3650ms. The enlarged latency gap comes from \name's optimized round-advancement rule: replicas in \name only need one round of message dissemination after a timeout, whereas Sailfish requires two sequential rounds. As a result, non-leader vertices in \name incur one less additional message delay.

\begin{figure}[h]
    \centering

    \begin{subfigure}[t]{0.8\linewidth}
        \centering
        \includegraphics[width=\linewidth]{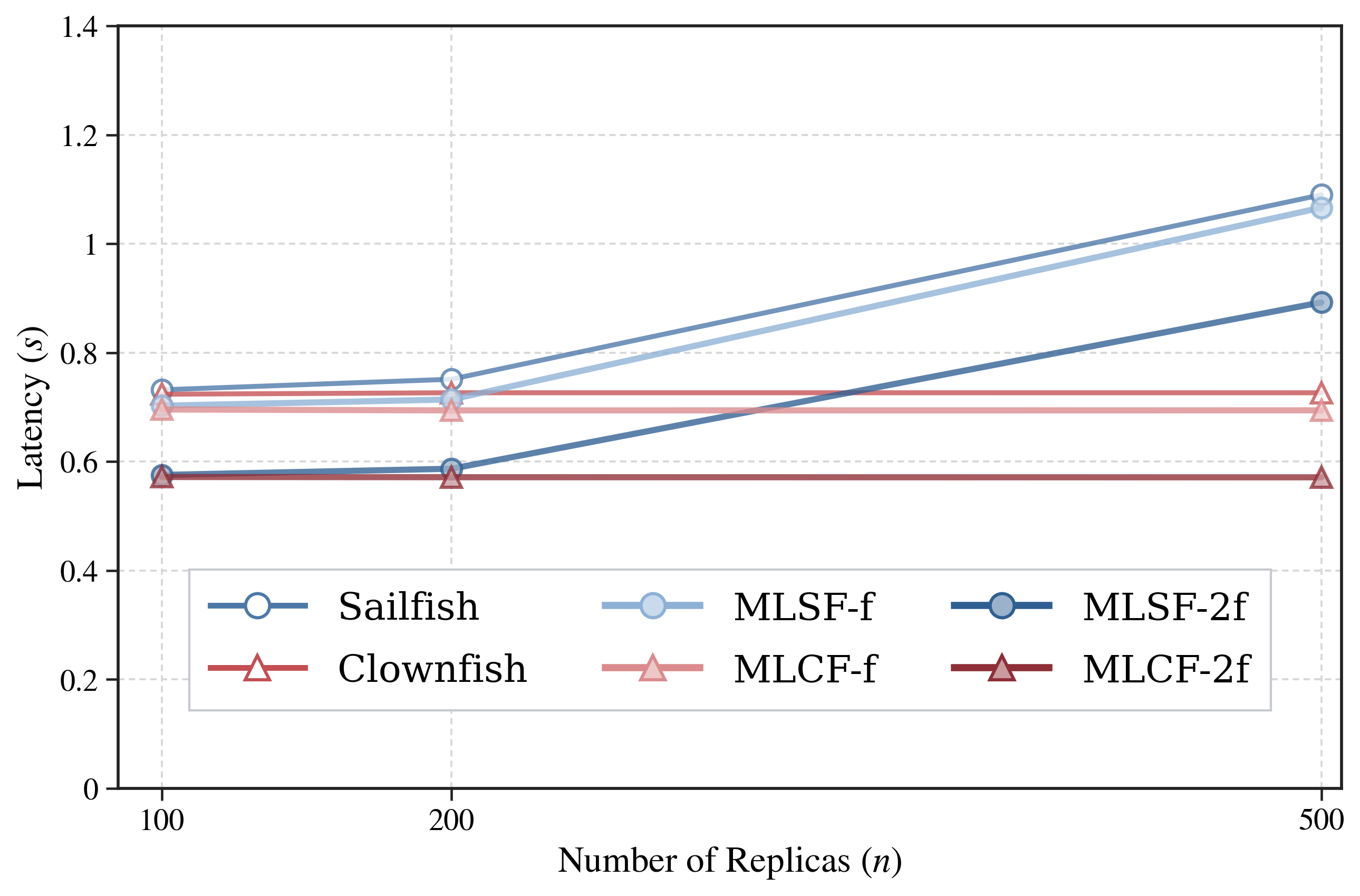}
        \caption{Latency result of simulation.}
        \label{fig:multi_sim}
    \end{subfigure}

    \begin{subfigure}[t]{0.8\linewidth}
        \centering
        \includegraphics[width=\linewidth]{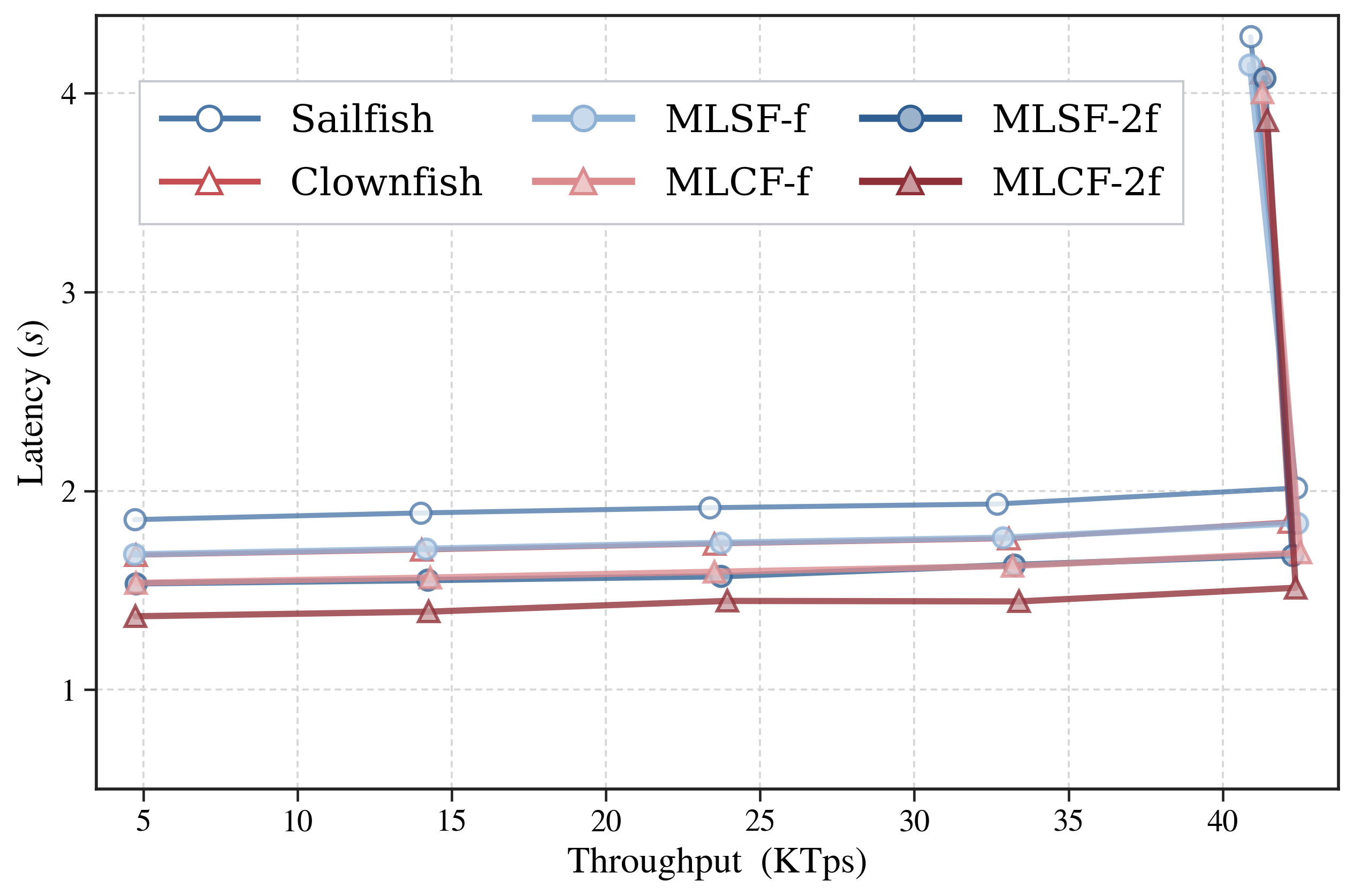}
        \caption{Throughput / latency result of deployment.}
        \label{fig:multi_deploy}
    \end{subfigure}

    \caption{Performance of Multi-leader protocols at $n=50$ without failures.}
    \label{fig:multi_l}
\end{figure}

\noindentparagraph{\textbf{Performance of Multi-leader \name under failure-free case.}}
We compare the performance of Multi-leader \name and Multi-leader Sailfish in the failure-free case (Sparse Bullshark does not support multiple leaders). We adopt the same implementation strategy as Multi-leader Sailfish~\cite{shrestha2025sailfish}, where replicas wait for all leader vertices before entering the next round. Figure~\ref{fig:multi_l}\subref{fig:multi_sim} shows the simulation latency under different numbers of replicas and leaders, where MLCF denotes Multi-leader \name and MLSF denotes Multi-leader Sailfish. Figure~\ref{fig:multi_l}\subref{fig:multi_deploy} shows the throughput-latency performance of these protocols in deployment.

As shown in the figures, Multi-leader \name reduces latency in a manner similar to Multi-leader Sailfish, with larger improvements as the number of leaders increases. More importantly, Multi-leader \name preserves the scalability advantage of \name: its latency remains relatively stable as the number of replicas grows. The deployment results further show that Multi-leader \name reduces average latency in practice while retaining the advantage brought by lower communication overhead.
\section{Discussion}\label{s:discussion}




In \name, leader vertices carry more metadata compared to non-leader vertices. To balance the workload, one could employ a load-balanced broadcast specifically for leader vertices~\cite{alhaddad2022balanced}, as in our implementation. Another orthogonal optimization involves utilizing a leader reputation mechanism~\cite{cohen2022aware, tsimos2024hammerhead} to select the fastest leaders. Since \name requires leaders to perform more work compared to other replicas, the benefits of such an approach would be particularly pronounced. Finally, an interesting direction is extending \name to a signature-free variant. Sailfish++~\cite{shrestha2025optimistic} provides a signature-free implementation of Sailfish; their methodology could be directly adapted to \name. We also aim to explore more efficient and concise implementation approaches in future work.

 
\section{Related Work}\label{s:related_work}

\noindentparagraph{\textbf{Leader-based BFT.}}
Numerous BFT protocols over the past two decades have adopted the leader-based paradigm~\cite{castro1999practical, kotla2007zyzzyva, buchman2018latest, yin2019hotstuff, chan2023simplex}. This paradigm relies on a leader to disseminate data and drive consensus progress, utilizing a view-change mechanism to periodically rotate the leader. Building upon the optimal latency achieved by PBFT~\cite{castro1999practical} and the linear complexity achieved by HotStuff~\cite{yin2019hotstuff}, subsequent protocols aim to achieve a superior latency-communication tradeoff~\cite{gelashvili2022jolteon, malkhi2023hotstuff}, as well as to enhance performance under specific scenarios~\cite{jalalzai2023fast, gueta2019sbft, giridharan2023beegees}. However, the primary drawbacks of leader-based BFT are the complexity of view change~\cite{naor2021cogsworth, bravo2022making} and the throughput bottleneck imposed by a single leader.

\noindentparagraph{\textbf{Towards high throughput BFT.}}
To improve throughput, it is crucial to utilize the available network bandwidth for data transmission. One approach involves running multiple leader-based BFT instances in parallel~\cite{stathakopoulou2019mir, stathakopoulou2022state, gupta2021rcc}. Since distinct replicas can serve as leaders across different instances and broadcast data concurrently, this results in a substantial throughput improvement in good case. However, this method is susceptible to stragglers and necessitates complex coordination for combining outputs and view changes~\cite{lyu2025ladon}.

Another strategy involves decoupling data transmission from consensus logic~\cite{yang2022dispersedledger, giridharan2024autobahn, gao2022dumbo, danezis2022narwhal}. This approach recognizes that consensus serves solely as an ordering mechanism, whereas data availability can be achieved in a fully parallel and asynchronous manner. Consequently, replicas can allocate bandwidth to the underlying data dissemination layer while executing the consensus protocol. 

\noindentparagraph{\textbf{DAG-based BFT.}}
DAG-based protocols offer a natural approach to achieving high throughput. In this paradigm, all replicas concurrently broadcast vertices containing both data payloads and references, collectively constructing a DAG structure. By interpreting specific references as votes and leveraging path relationships, replicas can achieve a total ordering of the data without incurring additional communication overhead.

DAG-based BFT protocols are initially designed for asynchronous networks. Hashgraph~\cite{baird2016swirlds} and Aleph~\cite{gkagol2019aleph} introduce unstructured and structured DAG constructions, respectively. Subsequently, a series of protocols (DAG-Rider~\cite{keidar2021all}, Tusk~\cite{danezis2022narwhal}, and Bullshark~\cite{spiegelman2022bullshark}) introduce more efficient designs, serving as templates for subsequent DAG-based protocols. More recently, works such as GradedDAG~\cite{dai2023gradeddag} and MAHI-MAHI~\cite{jovanovic2025mahi} have further reduced the latency of asynchronous DAG protocols by employing different broadcast primitives and commit rules. To circumvent the FLP impossibility result~\cite{fischer1985impossibility}, these protocols necessitate the use of a common coin for retrospective leader election.

Starting with Bullshark~\cite{spiegelman2022bullshark_p}, DAG-based protocols begin to focus on efficient designs under partial synchrony networks. Bullshark introduces designated leaders and timeout to provide better performance during periods of synchrony. Since Bullshark designates a leader vertex every two rounds and requires all vertices to use RBC, a latency of at least 2 RBCs is necessary to commit a vertex.

To improve latency, Shoal~\cite{spiegelman2024shoal} introduces the concept of ``pseudo-pipelining'', interpreting the DAG as multiple Bullshark instances to achieve the effect of having a leader vertex in every round. Shoal also proposes employing a leader reputation mechanism~\cite{cohen2022aware, tsimos2024hammerhead} to select the fastest leaders. Shoal++~\cite{arun2025shoal++} further interprets the DAG as more Bullshark instances to achieve multiple leaders per round. Additionally, Shoal++ extends Bullshark's commit rules to reduce the commit latency of leader vertices. To support a leader vertex in every round while maintaining safety, Sailfish~\cite{shrestha2025sailfish} introduced additional consensus messages (such as $\NoVote$), along with extra requirements for leader vertices. Furthermore, Sailfish also provides faster commit rules and support for multiple leaders per round.

The aforementioned protocols focus on latency rather than communication complexity. They rely on batching $\Omega(n)$ transactions within each vertex to amortize the high communication overhead. Sparse Bullshark~\cite{anoprenko2025dags} highlights the limitations of this approach and reduces metadata communication complexity by requiring each vertex to include only $O(\lambda)$ edges. We share a similar objective; however, we adopt a strategy that differentiates the treatment of leader vertices from that of non-leader vertices. Another concurrent work, Angelfish~\cite{yu2025angelfish}, offers a more flexible proposal mechanism building upon Sailfish. In their protocol, replicas can freely choose to execute RBC or merely cast a vote. 
Consequently, the protocol's communication complexity depends on the number of replicas performing RBC, which is distinct from our method.

\noindentparagraph{\textbf{Uncertified DAG.}}
Recently, several works have proposed DAG-based protocols that operate without relying on RBC or CBC. Prominent examples include Cordial Miner~\cite{keidar2023cordial} and Mysticeti~\cite{babel2025mysticeti}, which rely on best-effort broadcast for vertex dissemination. While this approach achieves ideal latency in good cases, the inherent tradeoff is the requirement for immediate data fetching to retrieve missing data. This process inevitably incurs additional latency and communication overhead under failures. 
BBCA-CHAIN~\cite{malkhi2024bbca} proposes that leaders employ BBCA (a single-shot PBFT) for broadcasting to facilitate rapid commitment, while other replicas utilize best-effort broadcast. 
Its philosophy of hybridizing DAG and leader-based BFT paradigms provides a new perspective on protocol design.

\bibliographystyle{ACM-Reference-Format}
\bibliography{section/ref}

@inproceedings{spiegelman2022bullshark,
  title={Bullshark: Dag bft protocols made practical},
  author={Spiegelman, Alexander and Giridharan, Neil and Sonnino, Alberto and Kokoris-Kogias, Lefteris},
  booktitle={Proceedings of the 2022 ACM SIGSAC Conference on Computer and Communications Security},
  pages={2705--2718},
  year={2022}
}

@article{spiegelman2022bullshark_p,
  title={Bullshark: The partially synchronous version},
  author={Spiegelman, Alexander and Giridharan, Neil and Sonnino, Alberto and Kokoris-Kogias, Lefteris},
  journal={arXiv preprint arXiv:2209.05633},
  year={2022}
}

@inproceedings{arun2025shoal++,
  title={Shoal++: High Throughput $\{$DAG$\}$$\{$BFT$\}$ Can Be Fast and Robust!},
  author={Arun, Balaji and Li, Zekun and Suri-Payer, Florian and Das, Sourav and Spiegelman, Alexander},
  booktitle={22nd USENIX Symposium on Networked Systems Design and Implementation (NSDI 25)},
  pages={813--826},
  year={2025}
}

@inproceedings{spiegelman2024shoal,
  title={Shoal: Improving dag-bft latency and robustness},
  author={Spiegelman, Alexander and Arun, Balaji and Gelashvili, Rati and Li, Zekun},
  booktitle={International Conference on Financial Cryptography and Data Security},
  pages={92--109},
  year={2024},
  organization={Springer}
}

@inproceedings{shrestha2025sailfish,
  title={Sailfish: Towards improving the latency of dag-based bft},
  author={Shrestha, Nibesh and Shrothrium, Rohan and Kate, Aniket and Nayak, Kartik},
  booktitle={2025 IEEE Symposium on Security and Privacy (SP)},
  pages={1928--1946},
  year={2025},
  organization={IEEE}
}

@article{yu2025angelfish,
  title={Angelfish: Leader, DAG, or Anywhere in Between},
  author={Yu, Qianyu and Losa, Giuliano and Shrestha, Nibesh and Wang, Xuechao},
  journal={arXiv preprint arXiv:2509.15847},
  year={2025}
}

@inproceedings{keidar2021all,
  title={All you need is dag},
  author={Keidar, Idit and Kokoris-Kogias, Eleftherios and Naor, Oded and Spiegelman, Alexander},
  booktitle={Proceedings of the 2021 ACM Symposium on Principles of Distributed Computing},
  pages={165--175},
  year={2021}
}

@inproceedings{danezis2022narwhal,
  title={Narwhal and tusk: a dag-based mempool and efficient bft consensus},
  author={Danezis, George and Kokoris-Kogias, Lefteris and Sonnino, Alberto and Spiegelman, Alexander},
  booktitle={Proceedings of the Seventeenth European Conference on Computer Systems},
  pages={34--50},
  year={2022}
}

@article{anoprenko2025dags,
  title={DAGs for the Masses},
  author={Anoprenko, Michael and Tonkikh, Andrei and Spiegelman, Alexander and Kuznetsov, Petr},
  journal={arXiv preprint arXiv:2506.13998},
  year={2025}
}

@inproceedings{malkhi2024bbca,
  title={Bbca-chain: Low latency, high throughput bft consensus on a dag},
  author={Malkhi, Dahlia and Stathakopoulou, Chrysoula and Yin, Maofan},
  booktitle={International Conference on Financial Cryptography and Data Security},
  pages={51--73},
  year={2024},
  organization={Springer}
}

@inproceedings{keidar2023cordial,
  title={Cordial Miners: Fast and Efficient Consensus for Every Eventuality},
  author={Keidar, Idit and Naor, Oded and Poupko, Ouri and Shapiro, Ehud},
  booktitle={37th International Symposium on Distributed Computing (DISC 2023)},
  pages={26--1},
  year={2023},
  organization={Schloss Dagstuhl--Leibniz-Zentrum f{\"u}r Informatik}
}

@inproceedings{babel2025mysticeti,
  title={MYSTICETI: Reaching the Latency Limits with Uncertified DAGs},
  author={Babel, Kushal and Chursin, Andrey and Danezis, George and Kichidis, Anastasios and Kokoris-Kogias, Lefteris and Koshy, Arun and Sonnino, Alberto and Tian, Mingwei},
  booktitle={Network and Distributed Systems Security Symposium (NDSS)},
  year={2025}
}

@article{baird2016swirlds,
  title={The swirlds hashgraph consensus algorithm: Fair, fast, byzantine fault tolerance},
  author={Baird, Leemon},
  journal={Swirlds Tech Reports SWIRLDS-TR-2016-01, Tech. Rep},
  volume={34},
  pages={9--11},
  year={2016}
}

@inproceedings{gkagol2019aleph,
  title={Aleph: Efficient atomic broadcast in asynchronous networks with byzantine nodes},
  author={G{\k{a}}gol, Adam and Le{\'s}niak, Damian and Straszak, Damian and {\'S}wi{\k{e}}tek, Micha{\l}},
  booktitle={Proceedings of the 1st ACM Conference on Advances in Financial Technologies},
  pages={214--228},
  year={2019}
}

@inproceedings{jovanovic2025mahi,
  title={Mahi-mahi: Low-latency asynchronous bft dag-based consensus},
  author={Jovanovic, Philipp and Kokoris-Kogias, Lefteris and Kumara, Bryan and Sonnino, Alberto and Tennage, Pasindu and Zablotchi, Igor},
  booktitle={2025 IEEE 45th International Conference on Distributed Computing Systems (ICDCS)},
  pages={549--559},
  year={2025},
  organization={IEEE}
}

@inproceedings{dai2023gradeddag,
  title={GradedDAG: An asynchronous DAG-based BFT consensus with lower latency},
  author={Dai, Xiaohai and Zhang, Zhaonan and Xiao, Jiang and Yue, Jingtao and Xie, Xia and Jin, Hai},
  booktitle={2023 42nd International Symposium on Reliable Distributed Systems (SRDS)},
  pages={107--117},
  year={2023},
  organization={IEEE}
}

@article{bracha1987asynchronous,
  title={Asynchronous Byzantine agreement protocols},
  author={Bracha, Gabriel},
  journal={Information and computation},
  volume={75},
  number={2},
  pages={130--143},
  year={1987},
  publisher={Elsevier}
}

@inproceedings{abraham2021good,
  title={Good-case latency of byzantine broadcast: A complete categorization},
  author={Abraham, Ittai and Nayak, Kartik and Ren, Ling and Xiang, Zhuolun},
  booktitle={Proceedings of the 2021 ACM Symposium on Principles of Distributed Computing},
  pages={331--341},
  year={2021}
}

@inproceedings{das2021asynchronous,
  title={Asynchronous data dissemination and its applications},
  author={Das, Sourav and Xiang, Zhuolun and Ren, Ling},
  booktitle={Proceedings of the 2021 ACM SIGSAC Conference on Computer and Communications Security},
  pages={2705--2721},
  year={2021}
}

@inproceedings{alhaddad2022balanced,
  title={Balanced byzantine reliable broadcast with near-optimal communication and improved computation},
  author={Alhaddad, Nicolas and Das, Sourav and Duan, Sisi and Ren, Ling and Varia, Mayank and Xiang, Zhuolun and Zhang, Haibin},
  booktitle={Proceedings of the 2022 ACM Symposium on Principles of Distributed Computing},
  pages={399--417},
  year={2022}
}

@inproceedings{shrestha2025optimistic,
  title={Optimistic, signature-free reliable broadcast and its applications},
  author={Shrestha, Nibesh and Yu, Qianyu and Kate, Aniket and Losa, Giuliano and Nayak, Kartik and Wang, Xuechao},
  booktitle={Proceedings of the 2025 ACM SIGSAC Conference on Computer and Communications Security},
  pages={3780--3794},
  year={2025}
}

@inproceedings{yin2019hotstuff,
  title={HotStuff: BFT consensus with linearity and responsiveness},
  author={Yin, Maofan and Malkhi, Dahlia and Reiter, Michael K and Gueta, Guy Golan and Abraham, Ittai},
  booktitle={Proceedings of the 2019 ACM symposium on principles of distributed computing},
  pages={347--356},
  year={2019}
}

@article{malkhi2023hotstuff,
  title={Hotstuff-2: Optimal two-phase responsive bft},
  author={Malkhi, Dahlia and Nayak, Kartik},
  journal={Cryptology ePrint Archive},
  year={2023}
}

@inproceedings{castro1999practical,
  title={Practical byzantine fault tolerance},
  author={Castro, Miguel and Liskov, Barbara and others},
  booktitle={OSDI},
  series={99},
  number={1999},
  pages={173--186},
  year={1999}
}

@article{buchman2018latest,
  title={The latest gossip on BFT consensus},
  author={Buchman, Ethan and Kwon, Jae and Milosevic, Zarko},
  journal={arXiv preprint arXiv:1807.04938},
  year={2018}
}

@inproceedings{chan2023simplex,
  title={Simplex consensus: A simple and fast consensus protocol},
  author={Chan, Benjamin Y and Pass, Rafael},
  booktitle={Theory of Cryptography Conference},
  pages={452--479},
  year={2023},
  organization={Springer}
}

@inproceedings{gelashvili2022jolteon,
  title={Jolteon and ditto: Network-adaptive efficient consensus with asynchronous fallback},
  author={Gelashvili, Rati and Kokoris-Kogias, Lefteris and Sonnino, Alberto and Spiegelman, Alexander and Xiang, Zhuolun},
  booktitle={International conference on financial cryptography and data security},
  pages={296--315},
  year={2022},
  organization={Springer}
}

@article{jalalzai2023fast,
  title={Fast-hotstuff: A fast and robust bft protocol for blockchains},
  author={Jalalzai, Mohammad M and Niu, Jianyu and Feng, Chen and Gai, Fangyu},
  journal={IEEE Transactions on Dependable and Secure Computing},
  volume={21},
  number={4},
  pages={2478--2493},
  year={2023},
  publisher={IEEE}
}

@inproceedings{giridharan2023beegees,
  title={Beegees: stayin'alive in chained bft},
  author={Giridharan, Neil and Suri-Payer, Florian and Ding, Matthew and Howard, Heidi and Abraham, Ittai and Crooks, Natacha},
  booktitle={Proceedings of the 2023 ACM Symposium on Principles of Distributed Computing},
  pages={233--243},
  year={2023}
}

@inproceedings{kotla2007zyzzyva,
  title={Zyzzyva: speculative byzantine fault tolerance},
  author={Kotla, Ramakrishna and Alvisi, Lorenzo and Dahlin, Mike and Clement, Allen and Wong, Edmund},
  booktitle={Proceedings of twenty-first ACM SIGOPS symposium on Operating systems principles},
  pages={45--58},
  year={2007}
}

@inproceedings{gueta2019sbft,
  title={SBFT: A scalable and decentralized trust infrastructure},
  author={Gueta, Guy Golan and Abraham, Ittai and Grossman, Shelly and Malkhi, Dahlia and Pinkas, Benny and Reiter, Michael and Seredinschi, Dragos-Adrian and Tamir, Orr and Tomescu, Alin},
  booktitle={2019 49th Annual IEEE/IFIP international conference on dependable systems and networks (DSN)},
  pages={568--580},
  year={2019},
  organization={IEEE}
}

@article{naor2021cogsworth,
  title={Cogsworth: Byzantine view synchronization},
  author={Naor, Oded and Baudet, Mathieu and Malkhi, Dahlia and Spiegelman, Alexander},
  year={2021},
  publisher={Metagov}
}

@article{bravo2022making,
  title={Making byzantine consensus live},
  author={Bravo, Manuel and Chockler, Gregory and Gotsman, Alexey},
  journal={Distributed Computing},
  volume={35},
  number={6},
  pages={503--532},
  year={2022},
  publisher={Springer}
}

@article{stathakopoulou2019mir,
  title={Mir-bft: High-throughput bft for blockchains},
  author={Stathakopoulou, Chrysoula and David, Tudor and Vukolic, Marko},
  journal={arXiv preprint arXiv:1906.05552},
  volume={92},
  year={2019}
}

@inproceedings{stathakopoulou2022state,
  title={State machine replication scalability made simple},
  author={Stathakopoulou, Chrysoula and Pavlovic, Matej and Vukoli{\'c}, Marko},
  booktitle={Proceedings of the Seventeenth European Conference on Computer Systems},
  pages={17--33},
  year={2022}
}

@inproceedings{gupta2021rcc,
  title={Rcc: Resilient concurrent consensus for high-throughput secure transaction processing},
  author={Gupta, Suyash and Hellings, Jelle and Sadoghi, Mohammad},
  booktitle={2021 IEEE 37th International Conference on Data Engineering (ICDE)},
  pages={1392--1403},
  year={2021},
  organization={IEEE}
}

@inproceedings{lyu2025ladon,
  title={Ladon: High-Performance Multi-BFT Consensus via Dynamic Global Ordering},
  author={Lyu, Hanzheng and Xie, Shaokang and Niu, Jianyu and Feng, Chen and Zhang, Yinqian and Beschastnikh, Ivan},
  booktitle={Proceedings of the Twentieth European Conference on Computer Systems},
  pages={226--242},
  year={2025}
}

@inproceedings{yang2022dispersedledger,
  title={$\{$DispersedLedger$\}$:$\{$High-Throughput$\}$ byzantine consensus on variable bandwidth networks},
  author={Yang, Lei and Park, Seo Jin and Alizadeh, Mohammad and Kannan, Sreeram and Tse, David},
  booktitle={19th USENIX Symposium on Networked Systems Design and Implementation (NSDI 22)},
  pages={493--512},
  year={2022}
}

@inproceedings{giridharan2024autobahn,
  title={Autobahn: Seamless high speed BFT},
  author={Giridharan, Neil and Suri-Payer, Florian and Abraham, Ittai and Alvisi, Lorenzo and Crooks, Natacha},
  booktitle={Proceedings of the ACM SIGOPS 30th Symposium on Operating Systems Principles},
  pages={1--23},
  year={2024}
}

@inproceedings{gao2022dumbo,
  title={Dumbo-ng: Fast asynchronous bft consensus with throughput-oblivious latency},
  author={Gao, Yingzi and Lu, Yuan and Lu, Zhenliang and Tang, Qiang and Xu, Jing and Zhang, Zhenfeng},
  booktitle={Proceedings of the 2022 ACM SIGSAC Conference on Computer and Communications Security},
  pages={1187--1201},
  year={2022}
}

@book{cachin2011introduction,
  title={Introduction to reliable and secure distributed programming},
  author={Cachin, Christian and Guerraoui, Rachid and Rodrigues, Lu{\'\i}s},
  year={2011},
  publisher={Springer Science \& Business Media}
}

@article{dwork1988consensus,
  title={Consensus in the presence of partial synchrony},
  author={Dwork, Cynthia and Lynch, Nancy and Stockmeyer, Larry},
  journal={Journal of the ACM (JACM)},
  volume={35},
  number={2},
  pages={288--323},
  year={1988},
  publisher={ACM New York, NY, USA}
}

@article{fischer1985impossibility,
  title={Impossibility of distributed consensus with one faulty process},
  author={Fischer, Michael J and Lynch, Nancy A and Paterson, Michael S},
  journal={Journal of the ACM (JACM)},
  volume={32},
  number={2},
  pages={374--382},
  year={1985},
  publisher={ACM New York, NY, USA}
}

@article{boneh2004short,
  title={Short signatures from the Weil pairing},
  author={Boneh, Dan and Lynn, Ben and Shacham, Hovav},
  journal={Journal of cryptology},
  volume={17},
  number={4},
  pages={297--319},
  year={2004},
  publisher={Springer}
}

@inproceedings{cohen2022aware,
  title={Be aware of your leaders},
  author={Cohen, Shir and Gelashvili, Rati and Kogias, Lefteris Kokoris and Li, Zekun and Malkhi, Dahlia and Sonnino, Alberto and Spiegelman, Alexander},
  booktitle={International Conference on Financial Cryptography and Data Security},
  pages={279--295},
  year={2022},
  organization={Springer}
}

@inproceedings{tsimos2024hammerhead,
  title={Hammerhead: Leader reputation for dynamic scheduling},
  author={Tsimos, Giorgos and Kichidis, Anastasios and Sonnino, Alberto and Kokoris-Kogias, Lefteris},
  booktitle={2024 IEEE 44th International Conference on Distributed Computing Systems (ICDCS)},
  pages={1377--1387},
  year={2024},
  organization={IEEE}
}

@inproceedings{alqahtani2021bottlenecks,
  title={Bottlenecks in blockchain consensus protocols},
  author={Alqahtani, Salem and Demirbas, Murat},
  booktitle={2021 IEEE International Conference on Omni-Layer Intelligent Systems (COINS)},
  pages={1--8},
  year={2021},
  organization={IEEE}
}

@inproceedings{blackshear2024sui,
  title={Sui lutris: A blockchain combining broadcast and consensus},
  author={Blackshear, Sam and Chursin, Andrey and Danezis, George and Kichidis, Anastasios and Kokoris-Kogias, Lefteris and Li, Xun and Logan, Mark and Menon, Ashok and Nowacki, Todd and Sonnino, Alberto and others},
  booktitle={Proceedings of the 2024 on ACM SIGSAC Conference on Computer and Communications Security},
  pages={2606--2620},
  year={2024}
}

@misc{suiscan,
  author       = {{SuiScan}},
  title        = {Suiscan: Transaction Blocks Explorer},
  year         = {2026},
  url          = {https://suiscan.xyz/mainnet/txs/tx-blocks},
  lastaccessed = {Accessed: 2026-01-31}
}

@misc{aptoscan,
  author       = {{Aptoscan}},
  title        = {Aptoscan: Blocks Explorer},
  year         = {2026},
  url          = {https://aptoscan.com/blocks},
  lastaccessed         = {Accessed: 2026-01-31}
}

@online{GoogleCloudPerfDashboard,
  author       = {{Google Cloud}},
  title        = {Performance Dashboard Overview},
  year         = {2026},
  url          = {https://docs.cloud.google.com/network-intelligence-center/docs/performance-dashboard/concepts/overview},
  lastaccessed = {March 24, 2026}
}

\appendix

\section{Additional Illustrations}\label{a:illustration}

\Cref{fig:fast_vote} provides an illustration of the fast-vote mechanism in \Cref{ss:protocol}. The left side illustrates the liveness violation. Since replica \node{3} skips from round $r$ to round $r+2$ and the Byzantine replica \node{4} does not reference $LV_r$, $LV_r$ fails to receive sufficient votes to be committed. Similarly, as replica \node{1} skips from round $r+1$ to round $r+3$ (or higher), $LV_{r+1}$ also fails to be committed. The right side shows the efficacy of the fast-vote. Despite skipping rounds, replicas \node{3} and \node{1} still cast fast-votes for $LV_{r}$ and $LV_{r+1}$ respectively, enabling them to be committed.

\begin{figure*}[t]
	\centering
	\includegraphics[width=0.9\textwidth]{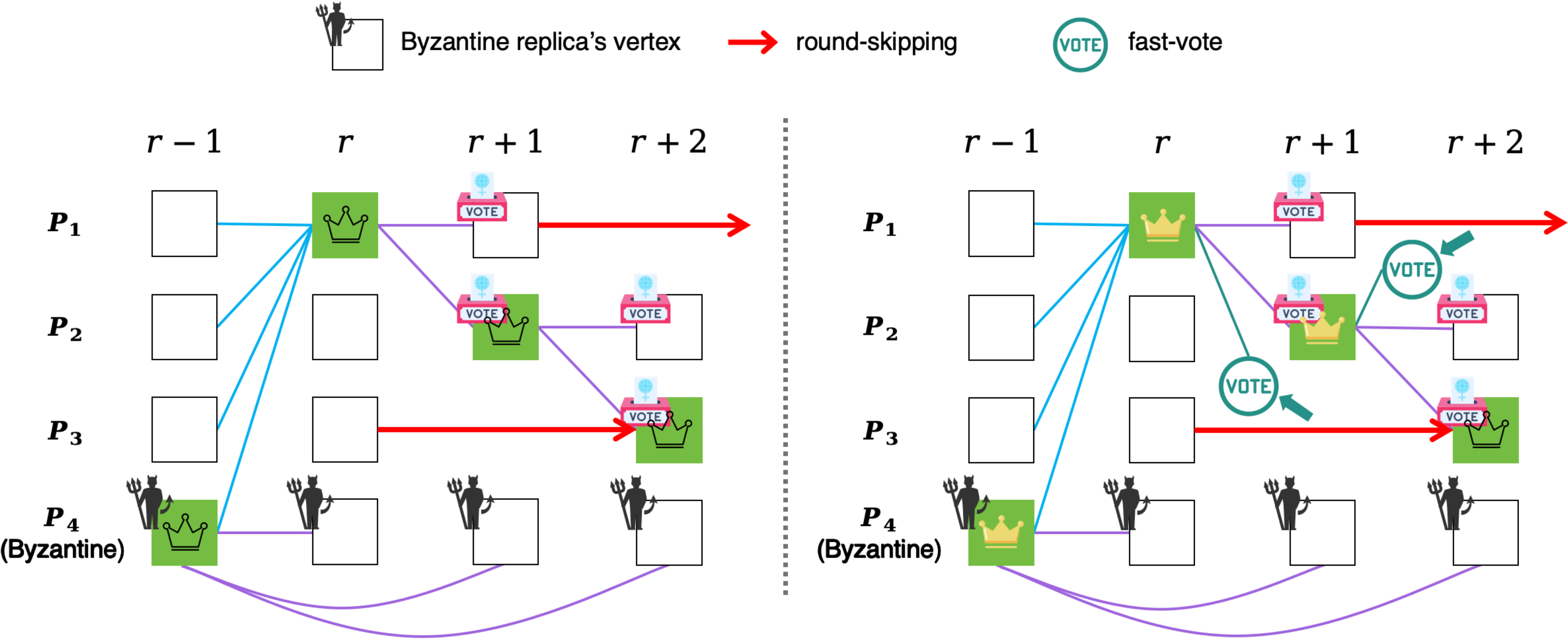    }
	\caption{Illustration of \name's fast-vote mechanism.}
\label{fig:fast_vote}
\end{figure*}

\begin{figure*}[t]
	\centering
	\includegraphics[width=0.75\textwidth]{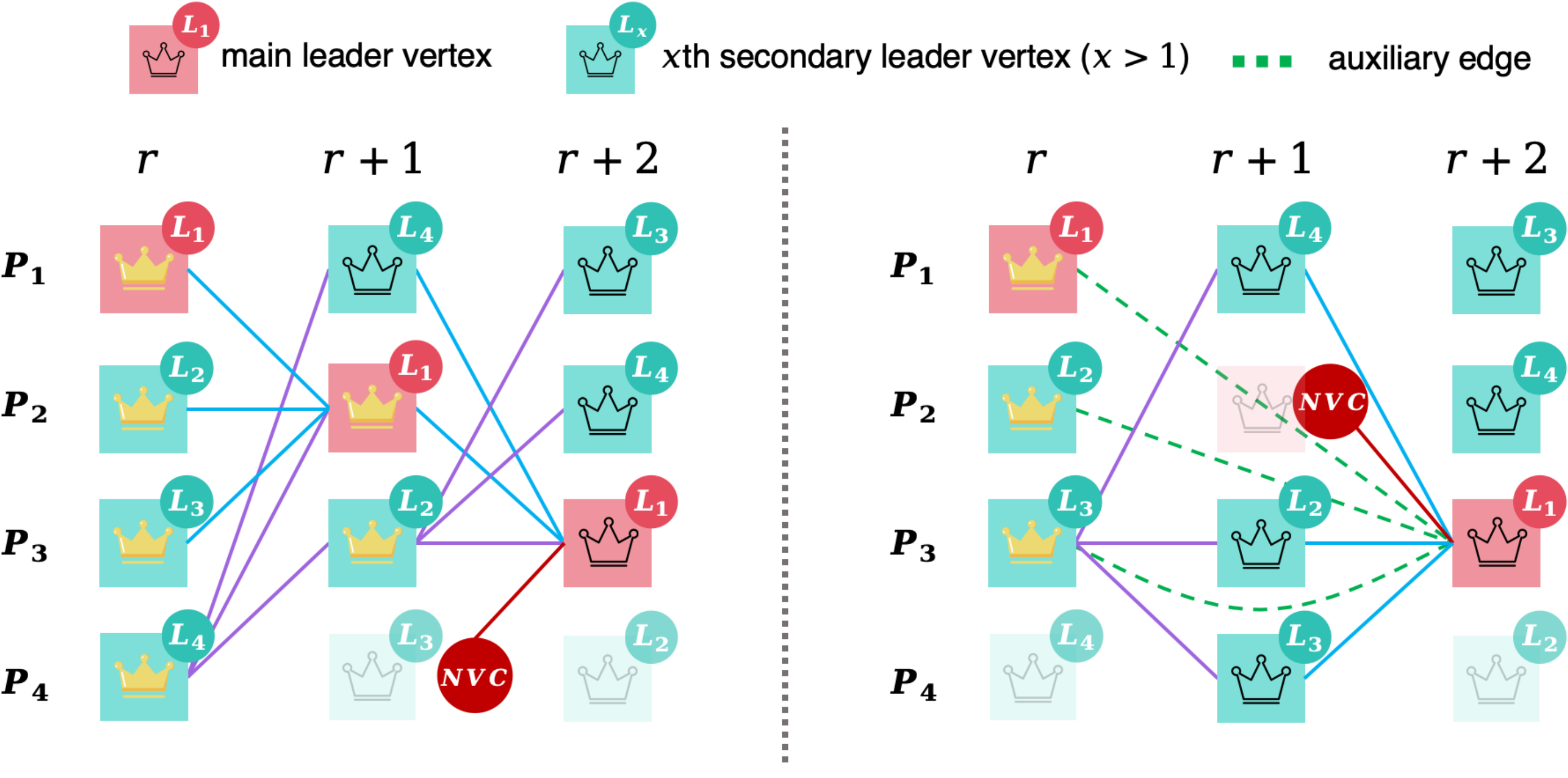}
	\caption{Illustration of Multi-leader \name. }
\label{fig:multi_leader}
\end{figure*}

\Cref{fig:multi_leader} provides an illustration of Multi-leader \name. All replicas are configured as leaders. In the left side, according to the commit rule, all leader vertices in round $r$ are directly committed. In round $r+1$, since $L_3$ has not been delivered, only $L_1$ and $L_2$ are directly committed. The main leader $L_1$ in round $r+2$ need to provide $NVC_{r+1}^{L_3}$. Due to the presence of $LV_{r+1}$, auxiliary edges are omitted. In the right side, no leader vertex in round $r+1$ can be directly committed in the absence of $LV_{r+1}$. The main leader $L_1$ in round $r+2$ is required to provide $NVC_{r+1}$ and connect to the leader vertices $L_1,L_2$ and $L_3$ in round $r$ via auxiliary edges.
\section{\name with Consistent Broadcast}\label{a:protocol_CBC}

We consider the CBC implementation presented in Narwhal~\cite{danezis2022narwhal} (referred to as Narwhal's CBC), which is adopted by many DAG-based protocols. Specifically, to broadcast a vertex $v$ for round $r$, the sender \node{i} first broadcasts $m\coloneqq\tuple{v,r,i}_i$ to all replicas. Upon receiving the valid $m$, a replica \node{j} signs it and sends $\tuple{hash(v),r,i}_j$ back to \node{i}. Once \node{i} collects at least $2f+1$ valid signatures, it aggregates them into a certificate $cert(v)$ and broadcasts it to all replicas. As described in Narwhal~\cite{danezis2022narwhal}, if a replica receives $cert(v)$ but lacks the content of $v$, it can fetch it from other replicas via random pulling.

We provide a detailed specification of CBC-based \name in \Cref{ss:protocol_C}, analyze its theoretical properties in \Cref{ss:analysis_C}, and finally discuss the multi-leader version in \Cref{ss:multi_leader_C}.

\subsection{Protocol Description}\label{ss:protocol_C}

The DAG structure of CBC-based \name is the same as basic \name. Since the data fetching process is off the critical path, we omit the details of data fetching from the consensus protocol description. Briefly, any replica that misses the data associated with a specific certificate can retrieve it from other replicas within an expected constant number of requests (guaranteed by the fact that at least $f+1$ honest replicas possess the data). Unless stated otherwise, we adopt the notation established in basic \name. We denote the availability certificate corresponding to $LV_r$ as $LC_r$.

\begin{algorithm*}[!t]
    \caption{CBC-based \name's pseudocode for replica \node{i}}
    \label{alg:protocol_C}
    \begin{algorithmic}[1]
    \footnotesize

    \Statex \textbf{Local variables:}
    \StateX struct vertex $v$; struct fast-vote $fv$
    {\color{magenta}
    \StateX struct new-round $nr$:
    \Comment{The struct of a new-round message for the leader}
    \StateXX $nr.round$ - the round to be entered by $nr$
    \StateXX $nr.source$ - the replica that sends $nr$
    \StateXX $nr.certs$ - a set of certificates in round $\geq nr.round-1$
    \Comment{Ensure the leader of $nr.round$ can receive $2f+1$ certificates}
    }
    \StateX $DAG_i[]$ - An array of sets of {\color{magenta}certificates} (indexed by rounds)
    \StateX $lastLeader$ - The most recent delivered leader vertex{\color{magenta}'s certificate}
    \vspace{-1em}
    \setlength{\columnsep}{25pt}
    \begin{multicols}{2}

    \Upon {{\color{magenta}\Call{c\_deliver$_i$}{$v, r, p$}}} \label{line:c_deliver}
    \color{magenta}
    \Comment{Receive $v$'s certificate}
    \State set $vc \gets v$'s certificate
    \normalcolor
    \If {{\color{magenta}is\_valid$(vc)$}}
    \color{magenta} 
    \State $DAG_i[r] \gets DAG_i[r]\cup\{vc\}$
    \label{line:c_deliver_start}
    \If {missing the content of $v$}
    \State start the data fetching of $v$
    \label{line:c_deliver_end}
    \EndIf
    \normalcolor
    \EndIf
    \EndUpon

    \Upon {${\color{magenta}|\bigcup_{r'\ge r-1}DAG_i[r'] \text{ from different sources}|\geq 2f+1}~\land$
    \StateX (${\color{magenta} LC_r}$ or $NVC_r \text{ is received })\text{ for } r \geq round$}
    \label{line:advance_C}
    \If {$r>round$}
    \Comment{Skip to a higher round (rule 1)}
    \State \Call{send\_fast\_vote}{$round+1,r$}
    \EndIf
    \color{magenta}
    \State $nr \gets$ \Call{create\_new\_round}{$r+1$}
    \label{line:newround_1}
    \Comment{Explicit round synchronization}
    \State send $\tuple{\NewRound, nr}_i$ to $L_{r+1}$ \label{line:newround_2}
    \If {$LC_r$ is received}
    \State broadcast $LC_r$
    \label{line:LC_sync}
    \EndIf
    \normalcolor
    \State \Call{advance\_round}{$r+1$}
    \EndUpon

    \Upon {receiving a set $\M$ of $\geq f+1$ first messages {\color{magenta}for round $\ge r+1$ from different sources} $\land$ ({\color{magenta}$LC_r$} or $NVC_r$ is received) for $r \geq round$} \label{line:jumping_C}
    \If {$r > round$}
    \Comment{Skip to a higher round (rule 2)}
    \State \Call{send\_fast\_vote}{$round+1,r$}
    \EndIf
    \State \Call{advance\_round}{$r+1$}
    \EndUpon

    \color{magenta}
    \Upon {receiving $LC_r$ for $r>round$} \label{line:jumping_LC}
    \Comment{Skip to a higher round (rule 3)}
    \State \Call{send\_fast\_vote}{$round+1,r$} and \Call{advance\_round}{$r+1$}
    \EndUpon
    \normalcolor
    
    \Upon {{\color{magenta}receiving $LC_r$} for $r>lastLeader.round$} \label{line:receive_leader_C}
    \If {haven't sent $\tuple{\NoVote,r}$ before}
    \State $lastLeader \gets {\color{magenta}LC_r}$ \label{line:last_leader_C}
    \EndIf
    \EndUpon

    \Procedure{broadcast\_vertex}{$r$}
    \State $v \gets$ \Call{create\_new\_vertex}{$r$}
    \State \Call{try\_add\_to\_dag}{$v$}
    \color{magenta}
    \State {\Call{c\_bcast$_i$}{$v, r$}}
    \normalcolor
    \EndProcedure

    \columnbreak

    {\color{magenta}
    \Procedure{create\_new\_round}{$r$}
    \State $nr.round \gets r$
    \State $nr.source \gets \node{i}$
    \State $nr.certs \gets \bigcup_{r'\ge r-1} DAG_i[r']$ from different sources
    \label{line:cert_C}
    \State \Return $nr$
    \EndProcedure
    }

    \Procedure{create\_new\_vertex}{$r$}
    \State $(v.round, v.source, v.block) \gets (r, \node{i}, blocksToPropose.\text{dequeue})$
    \State \Call{set\_leader\_edge}{$v$}
    \If {$\node{i} = L_r$}
    \color{magenta}
    \State $v.strongEdges \gets \bigcup_{r'\ge r-1}DAG_i[r']$ from different sources \label{line:strong_edge_C}
    \normalcolor
    \If {$\not\exists v' \in DAG_i[r-1]: v'.source=L_{r-1}$}
    \State $v.nvc \gets NVC_{r-1}$
    \EndIf
    \State \Call{set\_weak\_edges}{$v, r$}
    \EndIf
    \State \Call{set\_self\_edges}{$v$}
    \State \Return $v$
    \EndProcedure

    \Procedure{send\_fast\_vote}{$start, end$}
    \State start a new $timer$
    \For {$r'=start$ up to $end$ simultaneously}
    \If {{\color{magenta}\Call{c\_deliver$_i$}{$*,r'-1,L_{r'-1}$}} before $\Timeout$ $\land$ $NVC_{r'-1}$ is not
    \StateXX received $\land$ haven't sent $\tuple{\NoVote,r}_i$ before}
    \label{line:fv_condition_C}
    \State $fv \gets$ \Call{create\_fast\_vote}{$r'$}
    \State broadcast $\tuple{fv, r'}$
    \EndIf
    \EndFor
    \EndProcedure 

    \Procedure {advance\_round}{$r$}
    {\color{magenta}
    \If {\node{i}$=L_r$}
    \State \textbf{wait until} $|\bigcup_{r'\ge r-1}DAG_i[r']\text{ from different}$ $\text{sources}| \geq 2f+1$ 
    \label{line:leader_advance_C}
    \EndIf
    }
    \State $round\gets r$
    \State \Call{broadcast\_vertex}{$round$}
    {\color{magenta}
    \Upon {receiving $2f+1$ first messages for $r' \ge round$} \label{line:timer}
    \State start $timer$ for $round$
    \Comment{Delay the timer}
    \EndUpon
    }
    \EndProcedure

    \algstore{break_C}
    \end{multicols}
    \vspace{-2em}

    \end{algorithmic}
\end{algorithm*}

The pseudocode of CBC-based \name are presented in \Cref{alg:protocol_C}. To minimize redundancy, we present only the lines of code that differ from basic \name. Text highlighted in magenta explicitly denotes the modifications relative to basic \name.

\noindentparagraph{\textbf{Leveraging certificates.}}
By employing CBC, a vertex is considered delivered immediately upon the receipt of its corresponding certificate (Line~\ref{line:c_deliver}). Due to network asynchrony or Byzantine senders, a replica might possess only the certificate of a vertex. Prior to the completion of data fetching, the replica may remain unaware of the transactions and references within the vertex. We reiterate that not hinder the protocol from committing the vertex or advancing to subsequent rounds.

A crucial distinction between CBC-based \name and basic \name is that a replica can add a vertex into its local DAG prior to retrieving its content and causal history (Lines~\ref{line:c_deliver_start}--\ref{line:c_deliver_end}). The sole trade-off is that the ordering of its causal history (which is off the critical path) may need additional waiting time, since some references might be temporarily unknown.

\noindentparagraph{\textbf{Explicit round synchronization.}}
Given that CBC lacks \Cref{property:RBC}, ensuring round synchronization in CBC-based \name is more challenging than in basic \name. First, we mandate that all replicas broadcast the certificate $LC_r$ upon delivering it (Line~\ref{line:LC_sync}). This serves to prevent a Byzantine leader from disrupting round synchronization by selectively sending certificates to only a subset of replicas (since replicas are required to await the leader vertex). Merely relying on the broadcasting of $LC_r$ and $NVC_r$ is insufficient to ensure that $L_{r+1}$ enters round $r+1$, since a valid $LV_{r+1}$ is required to reference at least $2f+1$ vertices from round $\geq r$ via strong edges (Line~\ref{line:leader_advance_C}). The reason for considering vertices in rounds $>r$ will be explained in the following paragraphs. The primary issue here is that Byzantine replicas can refuse to send certificates to $L_{r+1}$. Even if an honest replica has received $2f+1$ certificates from rounds $\ge r$ and entered round $r+1$, we cannot guarantee that $L_{r+1}$ has also received these certificates (as some may originate from Byzantine senders). To address this problem, we mandate that any replica possessing a sufficient number of certificates sends a $\NewRound$ message to $L_{r+1}$ (Lines~\ref{line:newround_1}--\ref{line:newround_2}). A $\NewRound$ message for round $r+1$ encapsulates a set of at least $2f+1$ certificates from rounds $\ge r$ (Line~\ref{line:cert_C}). Leveraging these certificates, $L_{r+1}$ can successfully create a valid new vertex (Line~\ref{line:strong_edge_C}).

The round-skipping rules in CBC-based \name also differ from basic \name (Lines~\ref{line:advance_C}, \ref{line:jumping_C}, \ref{line:jumping_LC}). The intuition behind this is that CBC does not guarantee the timely synchronization of local DAG views across replicas, even after GST. By permitting replicas to consider CBC from higher rounds, we prevent honest replicas from being stalled due to the inability to timely retrieve certificates from Byzantine senders (see the proof of \Cref{lemma:synchronization_C} and \Cref{lemma:synchronization_ML_C} to understand its specific role). 

Under these rules, a scenario may arise where the total number of vertices in a specific round falls below $2f+1$. Assuming that $f$ Byzantine replicas have not broadcast vertices for round $r$. Upon receiving $LC_r$ and $f+1$ certificates in round $r$, these Byzantine replicas (provided they are not $L_{r+1}$) can immediately broadcast vertices for round $r+1$. If these vertices complete CBC before $f$ lagging honest replicas enter round $r$, honest replicas will advance to round $r+1$ in accordance with Line~\ref{line:advance_C}. This results in only $f+1$ vertices existing in round $r$. Consequently, we permit $LV_{r+1}$ to reference vertices from rounds higher than $r$ via strong edges to satisfy the $2f+1$ threshold.

\noindentparagraph{\textbf{Delay the round timer.}}
Existing DAG-based protocols (including basic \name) initiate the timer immediately upon entering a round~\cite{shrestha2025sailfish, spiegelman2022bullshark}. In CBC-based \name, this approach may cause a long timeout duration. The intuition is that when the first honest replica enters round $r+1$, the remaining replicas are only guaranteed to enter round $r$ within $\Delta$ (after GST). Consequently, in the worst-case, the timeout duration must account for the latency required for lagging honest replicas to complete the CBC for round $r$ and subsequently advance to round $r+1$. To reduce the timeout duration, a replica in CBC-based \name defers the initialization of the timer until it has received the \textit{first message} of CBC from $2f+1$ distinct replicas for the corresponding round or higher (Line~\ref{line:timer}). This strategy effectively leverages the presence of at least $f+1$ first messages from honest replicas, which guarantees that the remaining replicas can directly enter round $r+1$ within $\Delta$ (by Line~\ref{line:jumping_C}). We analyze the benefits of this design in \Cref{ss:analysis_C}.

\subsection{Correctness and Efficiency Analysis}\label{ss:analysis_C}

The correctness proof is presented in Appendix~\ref{a:correctness_C}

\noindentparagraph{\textbf{Communication complexity.}}
When using Narwhal's CBC, the size of each reference is $O(\lambda)$ (assuming threshold signatures). Consequently, the delivery of a leader vertex incurs a communication overhead of $O(\lambda n^2)$, whereas the delivery of a non-leader vertex incurs $O(\lambda n)$. The all-to-all broadcasting of the leader certificate necessitates $O(\lambda n^2)$ additional communication. Furthermore, each replica is required to transmit a $\NewRound$ message of size $O(\lambda n)$ to the leader, resulting in a total communication overhead of $O(\lambda n^2)$. In summary, the total per-round metadata communication complexity of CBC-based \name is $O(\lambda n^2)$.

Although data fetching is off the critical path, it impacts the communication complexity in bad case. Specifically, $O(n)$ replicas may request a total of $O(\lambda n)$ data (the sum of metadata contained in all vertices) from $O(n)$ other replicas per round. This results in a communication complexity of $O(\lambda n^3)$. This scenario manifests only in the presence of $O(n)$ Byzantine replicas and upon consecutive failures of the random pull. As noted in previous studies, such a bad case rarely occurs in practice~\cite{arun2025shoal++, babel2025mysticeti}. Even if the bad case occurs, the communication complexity of CBC-based \name remains lower than $O(\lambda n^4)$ complexity incurred by other protocols under identical conditions.

\noindentparagraph{\textbf{Latency analysis.}}
In CBC-based \name, the post-GST commit latency for an honest leader vertex is one CBC latency plus $1\delta$. In good case, the commit latency for non-leader vertices incurs an additional CBC latency. For Narwhal's CBC, these corresponding latencies equal to $4\delta$ and $7\delta$, respectively.

Next, we evaluate the additional latency incurred by a Byzantine leader. Analogous to basic \name, each Byzantine leader incurs an additional latency of $4\Delta+\delta$. Regarding Sailfish, we assume it operates under RBC properties with parameters $(k_1, k_2)=(3, 1)$\footnote{In Narwhal's CBC, only honest senders satisfy this property.}. Under this favorable assumption, Sailfish incurs latencies of $4\Delta+2\delta$ and $5\Delta+2\delta$ in the presence of a single Byzantine leader and consecutive Byzantine leaders, respectively. A comprehensive comparison with other protocols is provided in \Cref{table:comparison}.

Finally, we analyze the benefits yielded by delaying the timer, which is not captured in \Cref{table:comparison}. In the absence of this design, setting $\tau=4\Delta$ would be insufficient. According to the proof of \Cref{lemma:synchronization_C}, the first replica to enter a new round would require to wait for an additional CBC latency for round synchronization. Consequently, $\tau$ must be set to $7\Delta$. After delaying the timer, although the interval between entering round and initiating the timer still necessitates waiting for round synchronization, this wait becomes \textit{responsive}, incurring a delay proportional to the actual network latency $\delta$. This is significantly shorter than the timeout duration (proportional to $\Delta$), given that typically $\delta \ll \Delta$.

\subsection{Multi-leader \name with Consistent Broadcast}\label{ss:multi_leader_C}

With the same objective as Multi-leader \name, we extend CBC-based \name to a multi-leader variant. However, given that \Cref{property:CBC} of CBC is weaker than \Cref{property:RBC} of RBC, implementing a multi-leader version that guarantees both safety and liveness proves to be more intricate. In the subsequent description, we delineate the specific challenges encountered and the corresponding solutions. The pseudocode for CBC-based Multi-leader \name is presented in \Cref{alg:protocol_ML_C}, with the differences relative to both CBC-based \name and Multi-leader \name highlighted in magenta.

\begin{figure}[h]
	\centering
	\includegraphics[width=0.4\textwidth]{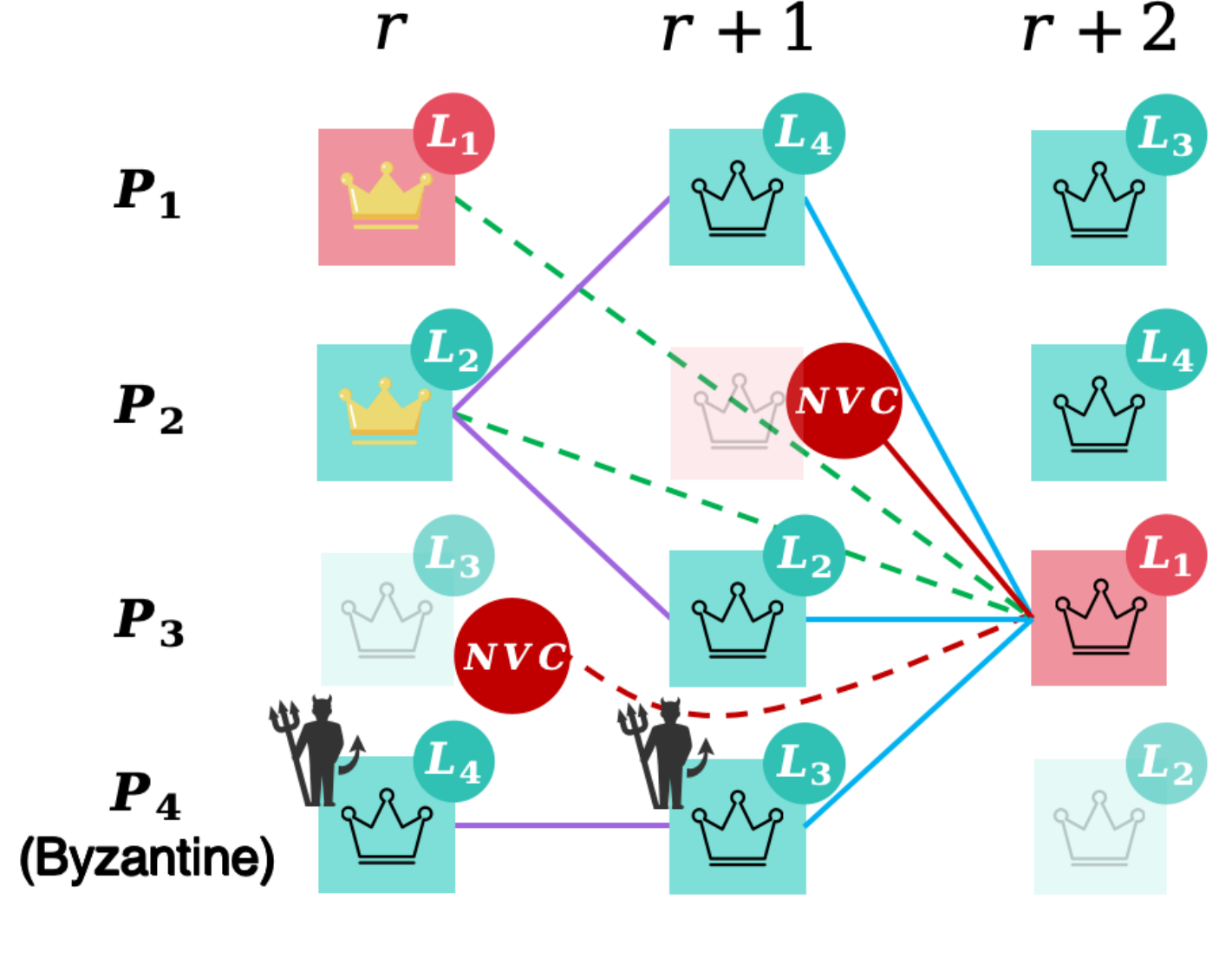}
	\caption{Illustration of CBC-based Multi-leader \name. Due to the absence of $LV_{r+1}$, $LV_{r+2}$ is required to provide $NVC_{r+1}$ and auxiliary edges. Among the vertices in round $r+1$ referenced via $LV_{r+2}$'s strong edges, $L_3$ (originating from \node{4}) contains the highest leader edge. The Byzantine replica \node{4} references $L_4$ of round $r$ via this leader edge, even though $L_3$ in round $r$ does not exist. Consequently, $LV_{r+2}$ is required to connect to $L_1$ and $L_2$ in round $r$ via auxiliary edges, and provide the $NVC_r^{\node{3}}$ to prove that $L_3$ has not been committed.}
\label{fig:multi_leader_C}
\end{figure}

\noindentparagraph{\textbf{The key challenges without reliable broadcast.}}
Recall that in Multi-leader \name, a leader edge implicitly represents a vote for all leaders with indices preceding it (in the same round). Since the properties of RBC guarantee that these leaders will be delivered by all replicas (within a fixed time after GST), within the underlying RBC protocol, all replicas implicitly validate the leader edge by awaiting the delivery of all preceding leader vertices in the same round. Consequently, for any delivered vertex, all leaders represented by its leader edge are guaranteed to be deliverable by the main leader referencing said vertex.

However, in the context of CBC, relying solely on the leader edge proves inefficient. Even considering a leader edge within a vertex created by an honest replica, a certificate with lower index may originate from a Byzantine leader. If we persist in requiring replicas to await all certificates to validate the leader edge in the underlying CBC protocol, Byzantine replicas can exploit this to stall the critical path of the consensus. This is primarily because replicas would need to first acquire these certificates via data fetching, incurring additional latency and communication overhead.

\begin{algorithm*}[!t]
    \caption{CBC-based Multi-leader \name's pseudocode for replica \node{i}}
    \label{alg:protocol_ML_C}
    \begin{algorithmic}[1]
    \footnotesize
    \algrestore{break_C}

    \Statex \textbf{Local variables:}
    \StateX struct vertex $v$:
    \Comment{The struct of a vertex in the DAG}
    \StateXX $v.nvc$ - a $\NoVote$ certificate of a leader vertex in $v.round-1$
    {\color{magenta}
    \StateXX $v.nvc'$ - a $\NoVote$ certificate of a secondary leader vertex in round $< v.round-1$
    \Comment{Only main leader vertices need to contain}
    }

    \vspace{-1em}
    \setlength{\columnsep}{25pt}
    \begin{multicols}{2}

    \Upon {\Call{c\_deliver$_i$}{$v, r, p$}}
    \Comment{Receive $v$'s certificate}
    \State set $vc \gets v$'s certificate
    \color{magenta}
    \State set $lc \gets v.leaderEdge$
    \label{line:wait_leader_edge}
    \normalcolor
    \If {is\_valid$(vc)$ {\color{magenta}$\land$ is\_valid$(lc)$}}
    \label{line:valid_leader_edge}
    \State $DAG_i[r] \gets DAG_i[r]\cup\{vc\}$
    \If {missing the content of $v$}
    \State start the data fetching of $v$
    \EndIf
    \EndIf
    \EndUpon

    \Procedure{create\_new\_vertex}{$r$}
    \State $v.round \gets r$
    \State $v.source \gets \node{i}$
    \State $v.block \gets blocksToPropose.\text{dequeue}$
    \State \Call{set\_leader\_edge}{$v$}
    \If {$\node{i} = L_r$}
    \State $v.strongEdges \gets \bigcup_{r'\ge r-1}DAG_i[r']$
    \If {$\exists v' \in DAG_i[r-1]: v'.source=L_{r-1}$}
    \For {$\ell \in \ML_{r-1}$}
    \If {$\nexists v' \in DAG_i[r-1]:v'.source=\ell$}
    \State $v.nvc\gets NVC_{r-1}^{\ell}$
    \State \textbf{break}
    \EndIf
    \EndFor
    \Else
    \State $v.nvc \gets NVC_{r-1}$
    \State $v^* \gets \arg\max_{v'\in \bigcup_{r'\ge r-1}DAG_i[r']}\{v'.leaderEdge\}$
    \State $(r^*, x^*) \gets (v^*.round,v^*.index)$
    \color{magenta}
    \For {$\ell \in \ML_{r^*}[:x^*]$}
    \label{line:start_aux}
    \If {$\nexists v'\in DAG_i[r^*]:v'.source=\ell$}
    \State $v.nvc' \gets NVC_{r^*}^{\ell}$
    \State $l^* \gets l.index-1$
    \State \textbf{break}
    \EndIf
    \EndFor
    \State $v.auxEdges \gets \{$ \Call{get\_leader\_vertex}{$r^*,x$} for
    \StateXXX $x=1$ to $\min(l^*,x^*)\}$
    \label{line:end_aux}
    \EndIf
    \normalcolor
    \State \Call{set\_weak\_edges}{$v, r$}
    \EndIf
    \State \Call{set\_self\_edges}{$v$}
    \State \Return $v$
    \EndProcedure

    \columnbreak

    \Procedure{create\_new\_round}{$r$}
    \State $nr.round \gets r$ $nr.source \gets \node{i}$
    \State $nr.certs \gets$ {\color{magenta}$\bigcup_{r'\ge r-1} DAG_i[r']$ $\bigcup$
    \StateX $\{v'.leaderEdge\, | \, v'\in\bigcup_{r'\ge r-1} DAG_i[r']\}$}
    \label{line:cert_ML_C}
    \State \Return $nr$
    \EndProcedure

    \Procedure {advance\_round}{$r$}
    \color{magenta}
    \If {$lastLeader.round < r-1$}
    \State $(r^*, x^*)\gets (lastLeader.round, lastLeader.index)$
    \For {$\forall \ell \in \ML_{r^*}[:x^*]$, $\forall r''=r^*+1$ to $r$} \Comment{additional certificates}
    \State send \Call{get\_vertex}{$\ell,r^*$} to $L_{r''}$
    \label{line:send_cert}
    \EndFor
    \For {$\forall \ell \in \ML_{r^*}[x^*+1:]$, $\forall r''=r^*+1$ to $r$}
    \label{line:start_send_novote}
    \Comment{additional $\NoVote$s}
    \State send $\tuple{\NoVote, \ell,r^*}$ to $L_{r''}$
    \EndFor
    \For {$\forall r'=r^*+1$ to $r-2,\forall \ell \in \ML_{r'}:$, $\forall r''=r^*+1$ to $r$}
    \State send $\tuple{\NoVote, \ell,r'}$ to $L_{r''}$
    \EndFor
    \label{line:end_send_novote}
    \EndIf

    \normalcolor
    \For {$\ell\in \ML_{r-1}$}
    \Comment{iterative over $\ML_{r-1}$ in order}
    \If {$\nexists v' \in DAG_i[r-1]:v'.source=\ell$}
    \State send $\tuple{\NoVote, \ell, r-1}_i$ to $L_r$
    \If {$lastLeader=LC_{r-1} \land l.index>2$}
    \State $lastLeader \gets$ \Call{get\_leader}{$r-1, \ell.index-1$}
    \EndIf
    \EndIf
    \EndFor
    \If {\node{i}$=L_r$}
    \State \textbf{wait until} $|\bigcup_{r'\ge r-1}DAG_i[r']\text{ from different}$ $\text{sources}| \geq 2f+1$ 
    \label{line:leader_advance_ML_C}
    \EndIf
    \If {$\node{i} = L_{r}$  and \Call{c\_deliver$_i$}{$LV_{r-1},r-1,L_{r-1}$}}
    \State \textbf{wait until} $((\exists v' \in DAG_i[r-1]: v'.source=\ell)$ for $\forall \ell \in$
    \StateXX $\ML_{r-1}[:x])$ $\land$ $(NVC_{r-1}^{\ell'}$ is received for $\ell'=\ML_{r-1}[x+1])$
    \label{line:wait_NVC}
    \EndIf
    \If {$\node{i} = L_r$ and $NVC_{r-1}$ is received}
    \State $v^* \gets \arg\max_{v'\in \bigcup_{r'\ge r-1}DAG_i[r']}\{v'.leaderEdge\}$
    \State $(r^*,x^*)\gets (v^*.round, v^*.index)$
    \label{line:find_highest}
    \color{magenta}
    \State \textbf{wait until} $(\exists x<x^*:((\exists v^* \in DAG_i[r^*]:$ $v^*.source=\ell)$ for 
    \StateXX $\forall \ell \in \ML_{r^*}[:x])$ and $(NVC_{r^*}^{\ell'}$
    \StateXX  is received for $\ell'=\ML_{r^*}[x+1]))$ 
    \StateXX \textbf{or} $((\exists v^* \in$ $DAG_i[r^*]$: $v^*.source=\ell)$ for $\forall \ell \in \ML_{r^*}[:x^*])$
    \label{line:wait_MLC}
    \normalcolor
    \EndIf
    \State $round\gets r$; start $timer$ for $round$
    \State \Call{broadcast\_vertex}{$round$}
    \EndProcedure
    \end{multicols}
    \vspace{-2em}

    \end{algorithmic}
\end{algorithm*}

To address this problem, a straightforward approach is to require replicas to validate only the leader edge in the underlying CBC process. While this strategy eliminates additional waiting periods and does not require the inclusion of additional edges in the vertex, it introduces the risk of the non-existence of certificates with lower indices\footnote{For instance, a Byzantine replica might intentionally reference a leader edge with a high index, while a Byzantine leader with a lower index doesn't initiate the CBC process.}. In such scenarios, the main leader might be unable to provide the requisite auxiliary edge (in accordance with the rules of Multi-leader \name). This poses a significant challenge to ensuring the existence of a leader path.

\noindentparagraph{\textbf{Solve the challenges through additional components.}}
Within the underlying CBC protocol, we require replicas to validate \textit{only} the leader edge and await the receipt of the main leader's certificate of the same round before deeming the vertex valid (which is not explicitly depicted in the pseudocode). The rationale for awaiting the latter is to prevent Byzantine replicas from referencing a leader vertex in a round where the main leader vertex does not exist, as such a leader edge is invalid. Since a single certificate cannot reflect the leader edge of its corresponding vertex, the main leader might be unable to obtain the leader edges of the referenced vertices (potentially used for creating auxiliary edges). To address this, we require replicas to additionally include the leader edges of vertices in the $\NewRound$ message (Line~\ref{line:cert_ML_C}). Correspondingly, the delivery rule for a vertex is also modified to require the receipt of both the vertex and the certificate of its leader edge (Lines~\ref{line:wait_leader_edge}--\ref{line:valid_leader_edge}).

Building upon these, we modify the requirements for \textit{auxiliary edges}. Regarding $LV_r$, we consider the highest leader edge (with round $r^*$ and index $x^*$) within all vertices referenced by its strong edges, as in Multi-leader \name (Line~\ref{line:find_highest}). Given that such an edge may originate from a Byzantine replica and it may not send $\NewRound$ messages as required, $L_r$ does not necessarily possess all certificates for $\ML_{r^*}[:x^*]$. In this case, we merely mandate that the main leader includes a $NVC$ with an index $\leq x^*$ to prove that subsequent leader vertices cannot be directly committed (Lines~\ref{line:start_aux}--\ref{line:end_aux}, \ref{line:wait_MLC}). \Cref{fig:multi_leader_C} provides a illustration of this scenario.

To ensure that the main leader receives the $NVC$s and certificates to create vertex, we mandate that replicas transmit additional messages to the main leader upon entering a new round (even via skipping). Specifically, replicas are required to send certificates for all leader vertices preceding $lastLeader$ within its respective round (Line~\ref{line:send_cert}). Furthermore, replicas must send $\NoVote$ messages for all leaders after $lastLeader$ (Lines~\ref{line:start_send_novote}--\ref{line:end_send_novote}). Note that we require replicas to transmit these messages to multiple main leaders after $lastLeader$, rather than solely to the main leader of the new round. Intuitively, this aims to prevent any main leader from failing to meet the requirements for entering a new round, which could further block the round synchronization (see \Cref{lemma:synchronization_ML_C}). We demonstrate in the proofs that the main leader is capable of leveraging these messages to establish valid $auxEdges$ and create a new vertex.

We provide the complete correctness proof for CBC-based Multi-leader \name in Appendix~\ref{a:correctness_ML_C}.

\noindentparagraph{\textbf{Communication complexity.}}
The number of references within a vertex is identical to that in CBC-based \name. Upon entering a new round, the total number of certificates sent by each replica to the main leader remains $O(n)$. In the good or average case, the number of $\NoVote$ and certificates sent by each replica is $O(n)$. This is because the average number of rounds between $lastLeader$ and the current round is $O(1)$. In the bad case, the number of $\NoVote$ and certificates sent by each replica may reach $O(n^2)$. This arises because, in the event of $O(n)$ consecutive failures, the number of rounds between $lastLeader$ and the current round extends to $O(n)$. The communication overhead of this scenario is equivalent to requiring replicas to broadcast all ($O(n)$) $\NoVote$ messages and certificates, resulting in a total communication complexity of $O(\lambda n^3)$ per round.

In summary, the per-round metadata communication complexity of CBC-based Multi-leader \name is $O(\lambda n^2)$ in the good (or average) case and $O(\lambda n^3)$ in the bad case. This remains consistent with CBC-based \name.

\noindentparagraph{\textbf{Latency analysis.}}
In the best-case scenario, the commit latency for every leader vertex is $4\delta$. In a non-optimal case, the main leader vertex requires an additional $\delta$ to gather $NVC$s and certificates. Consequently, the direct commit latency for all leader vertices becomes $5\delta$. Analogous to the reasoning presented for Multi-leader \name, provided that $x > \frac{n-f+3}{3}$ leader vertices are directly committed, the average commit latency of CBC-based Multi-leader \name outperforms that of CBC-based \name.
\section{Proofs}\label{a:correctness}

\subsection{Correctness Proof of \name}\label{a:correctness_R}

For the sake of brevity, throughout all our proofs, we implicitly use the following fact guaranteed by the properties of RBC and CBC.

\begin{fact}\label{fact:ne}
Any vertex appearing at the same position (i.e., the same round and the same source) within the DAG views of all honest replicas must be identical.
\end{fact}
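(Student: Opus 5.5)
This fact follows from the non-equivocation guarantees of the underlying broadcast primitives together with the way a vertex is inserted into $DAG_i$. The plan is to show that a vertex $v$ occupies position $(r,p)$ in an honest replica's local DAG only if it was delivered by the broadcast instance identified by $(r,p)$. Then we apply the appropriate uniqueness property to that instance.

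First, I would fix a position and inspect the insertion paths. For the RBC-based protocol, a vertex enters $DAG_i[r]$ in one of two ways. It may come through try\_add\_to\_dag after \Call{r\_deliver$_i$}{$v,r,p$}, where the check $v.source=p \land v.round=r$ ties the DAG position to the RBC instance $(r,p)$. Otherwise it is the replica's own vertex, added in broadcast\_vertex before r\_bcast. In the second case the replica is honest and creates exactly one vertex per round, since advance\_round only increases $round$. So the only content it will ever r\_bcast for $(r,p_i)$ is that vertex, and RBC Validity/Integrity then show that every honest replica delivers exactly it. Buffered vertices are added later, but they are still the delivered $v$, so buffering changes nothing.

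Second, I would apply uniqueness. Suppose honest $p_i$ and $p_j$ both hold vertices $v$ and $v'$ at position $(r,p)$. Each was r\_delivered (or self-created) in instance $(r,p)$. RBC Agreement implies that every honest replica delivers the same message in a given instance, and Integrity implies at most one delivery per instance. Together these give $v=v'$. For the CBC-based variants, the \emph{Consistency} property of CBC gives the same conclusion directly: the certificate $cert(v)$ requires $2f+1$ signatures on $hash(v)$ for $(r,p)$. Two certificates for different contents would then share an honest signer that signed twice in one instance, which is impossible, assuming hash collision resistance. Data fetched via random pulling is checked against $hash(v)$, so the content matches as well.

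The main (mild) obstacle is the CBC case, where $DAG_i$ stores certificates and the content may arrive later. There I would argue that the "vertex at a position" is determined by the certified hash, so identity follows from Consistency plus collision resistance, and the pulled content is validated against it. Everything else is routine bookkeeping about the validity checks on delivery.
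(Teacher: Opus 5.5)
Your proposal is correct and follows the paper's route: the paper gives no separate proof of this fact and simply attributes it to the non-equivocation guarantees of the broadcast primitives (Agreement together with Integrity for RBC, Consistency for CBC). Your additional steps make explicit what the paper leaves implicit: tying each DAG position to its broadcast instance via the $v.source=p \land v.round=r$ check, handling the self-created vertex through Validity and Integrity, and re-deriving CBC Consistency from quorum intersection of $2f+1$-signature certificates plus collision resistance.
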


\noindentparagraph{\textbf{The core of safety.}}
As highlighted in the discussion of main challenges, the cornerstone of the protocol's safety is ensuring that a path exists between a leader vertex committed by any honest replica and all subsequent leader vertices. We formalize this property through the following lemma.

\begin{lemma}\label{lemma:leader_path}
If an honest replica \node{i} directly commits the leader vertex $LV_r$, then for any valid leader vertex $LV_{r'}$ in round $r' > r$, there exists a leader path from $LV_{r'}$ to $LV_r$.
\end{lemma}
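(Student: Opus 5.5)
Because leader paths compose by concatenation, I will prove the lemma by strong induction on $r'$. It suffices to show that every valid $LV_{r'}$ with $r'>r$ either references $LV_r$ directly via a strong edge, or has a leader path to some valid leader vertex $LV_{r''}$ with $r<r''<r'$, or has a leader path that passes through a non-leader vertex whose leader edge points to $LV_r$ or to such an intermediate leader. Applying the induction hypothesis to $LV_{r''}$ then closes the argument.

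The first step is a quorum fact extracted from the direct-commit rule (Line~\ref{line:commit}). Since \node{i} collected $2f+1$ votes for $LV_r$, at least $f+1$ honest replicas voted. Each such vote is either a round-$(r+1)$ first message whose leader edge points to $LV_r$, or a fast-vote for round $r+1$. In either case that honest replica delivered $LV_r$ and had not sent $\tuple{\NoVote,r}$. By Line~\ref{line:novote} and the fast-vote condition (Line~\ref{line:fv_condition}), it will never send $\tuple{\NoVote,r}$ afterwards. Since only $2f$ other replicas remain, $NVC_r$ cannot exist. From this I derive the key invariant: let $H$ be this set of at least $f+1$ honest voters. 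For every replica $p\in H$ and every vertex $p$ creates in a round $\ge r+1$, the leader edge points to $LV_r$ or to a later leader vertex $LV_{s}$ with $s>r$. This holds because $lastLeader$ only increases in round (Line~\ref{line:receive_leader}) and was set to $LV_r$ before $p$ voted. A replica that skipped to a higher round and sent a fast-vote still had $lastLeader.round\ge r$.

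Next, I argue about $LV_{r'}$. Its strong edges cover at least $2f+1$ round-$(r'-1)$ vertices. For $r'=r+1$, validity forces $LV_{r+1}$ either to reference $LV_r$ by a strong edge or to carry $NVC_r$. The latter is impossible, so the base case is immediate. For $r'>r+1$, the $2f+1$ strong edges intersect $H$ in at least one vertex $u$ created by some $p\in H$ in round $r'-1\ge r+1$. By the invariant, $u.leaderEdge=LV_s$ with $r\le s<r'-1$ (leader edges point to rounds $\le r'-2$ here). If $s=r$ we are done. Otherwise $LV_s$ is a valid leader vertex, since it was delivered. It lies in a round strictly between $r$ and $r'$, so the induction hypothesis gives a leader path from $LV_s$ to $LV_r$. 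Concatenating $LV_{r'}\to u\to LV_s\to LV_r$ gives the required path. There is one subtlety: if $p$ did not create a round-$(r'-1)$ vertex at all, the intersection argument has to count only vertex creators. Intersection still holds because the $2f+1$ strong-edge targets have distinct sources, and $|H|\ge f+1$.

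The step I expect to be the main obstacle is the invariant. It requires that every member of $H$ has $lastLeader.round\ge r$ at the time it creates any round-$(r+1)$-or-later vertex. That includes members whose vote was a first message for round $r+1$ that was never completed, and members who skipped rounds. I would handle this by carefully checking the ordering in the pseudocode: $lastLeader$ is updated upon delivering $LV_r$ before voting, a fast-vote requires prior delivery, and $lastLeader$ never decreases. I would also need to confirm that "valid" for a leader vertex entails that its strong-edge targets are delivered vertices with distinct sources. I would use Fact~\ref{fact:ne} to ensure that the $u$ seen by different replicas is the same vertex.
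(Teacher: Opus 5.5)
This is essentially the paper's own proof: the same set $\mathcal{H}$ of at least $f+1$ honest voters, the same invariant that their later leader edges point to rounds $\ge r$, and the same induction on $r'$. The base case comes from the impossibility of $NVC_r$, and the inductive step intersects $\mathcal{H}$ with the $2f+1$ strong-edge sources of $LV_{r'}$ and applies the hypothesis to $LV_s$. The step you flag is exactly where the paper is equally brief, and your three facts do not close it on their own: they bound $lastLeader$ only from the moment $LV_r$ is delivered, so for a fast-voter you also need the waits inside $\mathsf{send\_fast\_vote}$ (which the skipping handler calls before $\mathsf{advance\_round}$) to finish before the vertex of the skipped-to round is created, and your list of vote types omits a first message of $LV_{r+1}$ that contains $LV_r$ only as a strong edge, which the commit rule counts even if $L_{r+1}$ had already broadcast $\tuple{\NoVote,r}$.
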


\begin{proof}
Since \node{i} has directly committed $LV_r$, there must exist at least $2f+1$ votes (either first messages or fast-votes) for $LV_r$. Among these, at least $f+1$ originate from honest replicas. Let $\mathcal{H}$ denote this set of honest replicas. According to Lines~\ref{line:receive_leader}--\ref{line:last_leader} of the protocol, the round associated with the $lastLeader$ variable for any replica in $\mathcal{H}$ must satisfy $lastLeader.round \ge r$. Furthermore, these replicas will not broadcast a $\tuple{\NoVote, r}$ message (by Line~\ref{line:novote} and Line~\ref{line:fv_condition}). Consequently, any leader edge created by these replicas in subsequent rounds is guaranteed to reference a leader vertex from a round no less than $r$. We now proceed by induction on $r'$.

\noindent\textbf{Case $r' = r+1$:} Since $|\mathcal{H}| \ge f+1$, the property of quorum intersection guarantees that an $NVC_r$ cannot be formed. Therefore, $LV_{r+1}$ must reference $LV_r$ via a strong edge.

\noindent\textbf{Case $r' \ge r+2$:} We assume the inductive hypothesis holds for all leader vertices in rounds $r''$ such that $r < r'' < r'$. Since $LV_{r'}$ references at least $2f+1$ vertices from round $r' - 1$ via strong edges, quorum intersection implies that at least one of these vertices originates from a replica in $\mathcal{H}$. We denote one such replica as \node{k}. The vertex created by \node{k} in round $r' - 1$ references a leader vertex $LV_{r^*}$ via a leader edge. Based on the property of set $\mathcal{H}$ derived earlier, we have $r^* \ge r$. If $r^* = r$, the leader path is directly established. If $r^* > r$, the inductive hypothesis guarantees that a leader path exists between $LV_{r^*}$ and $LV_r$. By transitivity, a leader path must exist between $LV_{r'}$ and $LV_r$.
\end{proof}

\noindentparagraph{\textbf{The core of liveness.}}
Given that only leader vertices can be committed, the key of liveness lies in showing that a leader vertex created by an honest replica after GST will receive sufficient votes. We formalize this property through two lemmas. Specifically, \Cref{lemma:synchronization_R} guarantees round synchronization, and \Cref{lemma:leader_commit_R} ensures that an honest leader vertex is directly committed.

\begin{lemma}\label{lemma:synchronization_R}
Let $t$ be a time after GST. If the first honest replica \node{i} enters round $r$ at time $t$, then all honest replicas will enter round $r$ or higher by time $t + k_2\Delta$.
\end{lemma}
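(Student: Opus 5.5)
We argue by looking at the reason \node{i} entered round $r$ at time $t$. Since \node{i} is the first honest replica to enter round $r$, it did so through one of the two advancement rules of \Cref{alg:dag_construction}, each triggered by information about round $r-1$. Rule 2 (Line~\ref{line:jumping}) needs $f+1$ first messages for round $r$, so at least one comes from an honest replica that is already in round $r$. That would contradict the minimality of $t$, unless the sender is \node{i} itself or entered at the same instant $t$. So we may assume \node{i} (or some honest replica entering at $t$) used rule 1 (Line~\ref{line:advance}). At time $t$ it has therefore delivered $2f+1$ vertices of round $r-1$, and it has either delivered $LV_{r-1}$ or received $NVC_{r-1}$.

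We then transfer each ingredient to every honest replica $p_j$ within $k_2\Delta$ after $t$.
\begin{itemize}
\item \textbf{Round-$(r-1)$ vertices.} By part (ii) of \Cref{property:RBC}, every vertex delivered by \node{i} by time $t$ is delivered by every honest replica by $t+k_2\Delta$. This covers the $2f+1$ round-$(r-1)$ vertices and, in the first case, $LV_{r-1}$ as well.
\item \textbf{Causal dependencies.} Delivery must also mean insertion into $DAG_j$, i.e.\ \textsc{try\_add\_to\_dag} succeeds. The references of those vertices (leader edge, strong/weak/self edges) point to vertices already in \node{i}'s DAG before $t$, hence delivered by \node{i} before $t$. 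Applying (ii) again, they reach $p_j$ by $t+k_2\Delta$, and the buffer mechanism then inserts everything. This needs a short induction on rounds, or simply the observation that every vertex in $DAG_i$ at time $t$ was $r\_deliver$ed by \node{i} at or before $t$.
\item \textbf{The $NVC_{r-1}$ case.} Here \node{i} forwards $NVC_{r-1}$ to all replicas upon receipt (Line~\ref{line:NVC_sync}), at a time $\le t$. After GST it arrives within $\Delta \le k_2\Delta$, using $k_2\ge 1$.
\end{itemize}

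So by $t+k_2\Delta$ every honest $p_j$ satisfies the rule-1 condition for round $r-1$. If $p_j$ is still at a round $\le r-1$, the upon-clause for round $r-1 \ge round$ fires and $p_j$ advances to round $r$, possibly skipping rounds and sending fast-votes. Otherwise $p_j$ is already at round $r$ or higher.

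The main obstacle is the \texttt{wait} in \textsc{advance\_round} when $p_j = L_r$. Setting $round \gets r$ happens only after $|DAG_j[r-1]|\ge 2f+1$. That condition already holds by the first item, so the wait adds no delay beyond $t+k_2\Delta$.

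A second subtlety is the base case where \node{i} is itself at round $r$ through a leader wait. Here we check that entering the round is counted at the moment $round\gets r$ is set, and that the triggering condition still refers only to round-$(r-1)$ deliveries, so the argument is unchanged.

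Finally, we need to rule out circularity for rule 2. If \node{i} used rule 2 at time $t$, the honest sender of one of the first messages entered round $r$ no later than $t$, hence at exactly $t$ by minimality. Its entry cannot itself be a rule-2 entry at the same moment ad infinitum, since first messages are sent after entering. So some honest replica entered round $r$ at $t$ via rule 1, and we run the argument from that replica.
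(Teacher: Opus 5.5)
Your proposal is correct and follows essentially the same route as the paper. You rule out the first-message skipping rule for the first entrant, use part (ii) of \Cref{property:RBC} to propagate the $2f+1$ round-$(r-1)$ vertices and $LV_{r-1}$, and use the forwarding of $NVC_{r-1}$ (arriving within $\Delta \le k_2\Delta$) for the other case; your extra bookkeeping on causal-history insertion into $DAG_j$, the wait of $L_r$ in advance\_round, and the same-instant circularity for rule 2 only makes explicit details the paper leaves implicit.
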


\begin{proof}
Since \node{i} is the first honest replica to enter round $r$, it could not have advanced via round-skipping. Consequently, \node{i} must have delivered $2f+1$ vertices from round $r-1$. By \Cref{property:RBC}, all honest replicas are guaranteed to deliver these vertices within $k_2\Delta$. We proceed by considering the two cases regarding how \node{i} entered round $r$.

If \node{i} enters round $r$ by delivering $LV_{r-1}$, then all honest replicas will deliver $LV_{r-1}$ within $k_2\Delta$ and subsequently enter round $r$ (or higher). If \node{i} enters round $r$ via $NVC_{r-1}$, it must have broadcast $NVC_{r-1}$ by time $t$. Consequently, all honest replicas will receive $NVC_{r-1}$ within $\Delta$. Given that $k_2 \ge 1$ holds for all known RBC protocols, it follows that all honest replicas will enter round $r$ within $k_2\Delta$ (unless they have already advanced to a higher round).
\end{proof}

\begin{lemma}\label{lemma:leader_commit_R}
If the first honest replica enters round $r$ at time $t$ after GST, and the timeout parameters are configured such that $\tau=(k_1+k_2)\Delta$ and $\tau'=k_2\Delta$, then $LV_r$ is guaranteed to be directly committed before $t+\tau+\Delta$ provided that $L_r$ is honest.
\end{lemma}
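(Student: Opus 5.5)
The plan is to show that, after GST with honest $L_r$, every honest replica either casts a vote for $LV_r$ (a first message of a round $r+1$ vertex whose leader edge is $LV_r$, or a fast-vote for round $r+1$) or never needs to, and that at least $2f+1$ such votes reach every honest replica before $t+\tau+\Delta$. The timeline is anchored at $t$, the time the first honest replica enters round $r$. By \Cref{lemma:synchronization_R}, every honest replica enters round $r$ or higher by $t+k_2\Delta$.

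\textbf{Step 1: $L_r$ broadcasts and $LV_r$ is delivered in time.} $L_r$ enters round $r$ by $t+k_2\Delta$. Its wait for $|DAG[r-1]|\ge 2f+1$ is already satisfied by then: the first honest replica had delivered $2f+1$ round-$r-1$ vertices by time $t$, and part (ii) of \Cref{property:RBC} propagates these to $L_r$ within $k_2\Delta$. If $L_r$ entered via $NVC_{r-1}$, it also has that certificate to attach. So $L_r$ reliably broadcasts $LV_r$ by $t+k_2\Delta$. By part (i) of \Cref{property:RBC}, every honest replica delivers $LV_r$ by $t+(k_1+k_2)\Delta=t+\tau$.

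\textbf{Step 2: no honest replica sends $\tuple{\NoVote,r}$, so no $NVC_r$ forms.} Each honest replica starts its round-$r$ timer no earlier than $t$. It therefore times out no earlier than $t+\tau$, and by then it has already delivered $LV_r$, so Line~\ref{line:timeout} does not fire. At most $f$ no-votes can exist, so $NVC_r$ never forms. Consequently, when an honest replica delivers $LV_r$, it sets $lastLeader\gets LV_r$, unless it already holds a higher leader (handled below).

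\textbf{Step 3: case split on how each honest replica leaves round $r$.}
\begin{enumerate*}[label=(\alph*)]
\item An honest replica that advances normally from round $r$ to $r+1$ must have delivered $LV_r$, since $NVC_r$ cannot exist. Its round $r+1$ vertex carries a leader edge to $LV_r$, and it sends the first message as soon as $LV_r$ is delivered, i.e.\ by $t+\tau$. This first message reaches everyone by $t+\tau+\Delta$.
\item An honest replica that skips from round $\le r$ to some round $>r+1$ calls \textproc{send\_fast\_vote}, whose range includes round $r+1$. The remaining risk is that the fast-vote timer $\tau'=k_2\Delta$ expires before $LV_r$ is delivered.
\end{enumerate*}

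\textbf{Step 4: bounding the skip case (main obstacle).} I expect Step 3(b) to be the main obstacle. A skip past round $r+1$ requires round $r+1$ to complete, via $2f+1$ delivered round-$(r+1)$ vertices or $f+1$ first messages. In either case some honest replica has already sent a round-$(r+1)$ first message, so by Step 3(a) some honest replica had delivered $LV_r$ at an earlier time $t_0$. Part (ii) of \Cref{property:RBC} then gives delivery of $LV_r$ at the skipping replica by $t_0+k_2\Delta$, which lies within its $\tau'$ window because the skip occurs after $t_0$. The fast-vote is therefore sent by $\min(t+\tau,\,\cdot)$ and arrives by $t+\tau+\Delta$. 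The condition ``before $\Timeout$'' also holds, by Step 2.

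An honest replica whose $lastLeader$ is already a higher leader $LV_{r''}$ with $r''>r$ must also be excluded from the count, or shown to vote anyway. I would argue that $LV_{r''}$ for $r''\ge r+1$ cannot be delivered before round $r+1$ vertices exist, and that such a replica is either fast, and hence already voted in round $r+1$, or skipping, and hence covered by the fast-vote rule, which does not depend on $lastLeader$.

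\textbf{Step 5: conclusion.} Summing over all $2f+1$ honest replicas gives $2f+1$ votes, first messages or fast-votes, at every honest replica before $t+\tau+\Delta$. Line~\ref{line:commit} therefore fires and $LV_r$ is directly committed, unless it was already committed.
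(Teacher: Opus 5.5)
Your proposal is correct and follows essentially the paper's proof. It uses the same ingredients: the synchronization bound, delivery of $LV_r$ to every honest replica by $t+\tau$ via \Cref{property:RBC}, the impossibility of $NVC_r$, and a case split in which replicas entering round $r+1$ vote through their leader edge, while replicas skipping round $r+1$ fast-vote because an earlier honest delivery of $LV_r$ propagates within $\tau'=k_2\Delta$. Two small tightenings match what the paper does. In Step 3(a), advancing also requires $2f+1$ delivered round-$r$ vertices, which the same RBC bound supplies by $t+\tau$. Anchoring the case split at time $t+\tau$, as the paper does, shows that every skip of round $r+1$, and hence every fast-vote, occurs by $t+\tau$; this replaces your garbled $\min(t+\tau,\cdot)$. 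Your $lastLeader$ concern is handled in the paper simply by the constraint that a round-$(r+1)$ leader edge must target a round $\le r$.
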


\begin{proof}
The first honest replica to enter round $r$ initiates the timer at time $t$. By Lemma~\ref{lemma:synchronization_R}, all honest replicas (including $L_r$) are guaranteed to enter round $r$ or higher by time $t + k_2\Delta$.

It is obvious that no honest replica will broadcast $\tuple{\NoVote, r}$ before $t+\tau$. Consequently, $NVC_r$ cannot exist before $t+\tau$. The definition of $t$ also ensures that $NVC$ corresponding to higher round cannot exist before $t+\tau$. Thus, honest replicas cannot enter a round higher than $r$ via $NVC$.

Consider an honest replica \node{i}. If \node{i} has not entered a round higher than $r$ by time $t+\tau$, then by \Cref{property:RBC}, it is guaranteed to deliver at least $2f+1$ vertices from round $r$ (including $LV_r$) by time $t+\tau$. Subsequently, it will enter round $r+1$ and create a vertex referencing $LV_r$ via leader edge. If \node{i} enters round $r+1$ before $t+\tau$, since $NVC_r$ does not exist, the vertex it creates for round $r+1$ must reference $LV_r$. The last case is that \node{i} enters a round higher than $r+1$ before $t+\tau$. If it does not enter round $r+1$ (i.e., it skips round $r+1$), it must have invoked the $\mathsf{send\_fast\_vote}$ procedure covering round $r+1$. Since the first honest replica to enter round $> r+1$ must have delivered $LV_r$ (given the non-existence of $NVC_r$ and $NVC_{r+1}$), it follows from \Cref{property:RBC} that \node{i} will deliver $LV_r$ within $\tau'=k_2\Delta$ and consequently broadcast a fast-vote for it.

In summary, all honest replicas will cast a vote for $LV_r$ (via either a first message or a fast-vote) by time $t+\tau$. This guarantees that $LV_r$ is directly committed by all honest replicas before $t+\tau+\Delta$.
\end{proof}

Next, we prove that \name satisfies the four properties required by BAB (as defined in \Cref{ss:dag_bft}). We say that a leader vertex $LV$ is \textit{directly committed} by \node{i} if \node{i} invokes \Call{commit\_leader}{$LV$}. A leader vertex is \textit{indirectly committed} if it is pushed onto the $leaderStack$. Furthermore, we say that \node{i} \textit{consecutively directly commits} leader vertices $LV_r$ and $LV_{r'}$ if \node{i} directly commits both $LV_r$ and $LV_{r'}$ without directly committing any leader vertex between $r$ and $r'$.

The \textit{Integrity} property is straightforward:

\begin{theorem}\label{thm:integrity_R}
\name satisfies \textbf{Integrity}.
\end{theorem}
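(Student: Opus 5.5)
The plan is to read Integrity directly off the code path that produces \Call{a\_deliver$_i$}{} outputs. The only place where \node{i} outputs $a\_deliver_i(m,r,\node{k})$ is inside \Call{order\_vertices}{} (Lines~\ref{line:start_traversal}--\ref{line:end_traversal}). So it suffices to show that, for fixed $(r,\node{k})$, at most one vertex $v'$ with $v'.round=r$ and $v'.source=\node{k}$ ever passes through that loop at \node{i}, and that it passes through at most once.

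First I would use the guard in the \Call{r\_deliver$_i$}{} handler (Line~\ref{line:r\_deliver}). A vertex enters $DAG_i[r]$ only if $v.source=p$ and $v.round=r$, and the RBC Integrity property lets \node{i} deliver at most one message per (round, source) instance. The vertex \node{i} creates itself is also added locally, but it is the same vertex it reliably broadcasts, so its RBC instance delivers that same vertex. Hence $DAG_i[r]$ contains at most one vertex with source \node{k}; this is the local form of \Cref{fact:ne}. Every $v'$ in $toDeliver$ comes from $\bigcup_{r>0}DAG_i[r]$, so any output $a\_deliver_i(\cdot,r,\node{k})$ must come from this unique vertex.

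Second I would use the $deliveredVertices$ bookkeeping. The set $toDeliver$ is built only from vertices with $v'\notin deliveredVertices$. Each output is immediately followed by inserting $v'$ into $deliveredVertices$, and that set never shrinks. Within one traversal, $toDeliver$ is a set, so $v'$ is iterated at most once. Across later pops of $leaderStack$ and later commits, $v'$ is already excluded. Therefore the unique vertex at position $(r,\node{k})$ triggers at most one output, which gives Integrity regardless of $m$.

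The only step requiring care is the first one: making sure that two distinct vertices for the same (round, source) can never both appear in $DAG_i$. The buffer re-insertion path through \Call{try\_add\_to\_dag}{} and the locally added own vertex are the places to check. Both only ever add RBC-delivered vertices or the replica's own broadcast vertex, so uniqueness follows from RBC Integrity and non-equivocation. The rest is immediate.
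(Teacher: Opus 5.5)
Your proposal is correct and follows the same route as the paper, which reduces Integrity to the Integrity property of RBC by noting that only vertices already in $DAG_i$ are ever a\_delivered. Your version is more complete than the paper's one-line proof: it spells out the $deliveredVertices$ deduplication in \textsc{order\_vertices}, and it handles the replica's own vertex, which is added to $DAG_i$ before its RBC instance completes. The paper's proof leaves both of these points implicit.
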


\begin{proof}
An honest replica \node{i} outputs \Call{a\_deliver$_i$}{$v.block, v.round,$\\$v.source$} only if vertex $v$ is already in $DAG_i$. This implies that \node{i} has already output \Call{r\_deliver$_i$}{$v, v.round, v.source$}. Consequently, the theorem follows directly from the \textit{Integrity} property of RBC.
\end{proof}

We now proceed to establish the \textit{Total order} property by leveraging \Cref{lemma:leader_path}.

\begin{lemma}\label{lemma:single_order}
If an honest replica \node{i} directly commits leader vertex $LV_r$, and an honest replica \node{j} directly commits leader vertex $LV_{r'}$ with $r' \ge r$, then \node{j} either directly or indirectly commits $LV_{r}$.
\end{lemma}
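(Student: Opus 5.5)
Case $r'=r$ is immediate: by \Cref{fact:ne} both replicas hold the same $LV_r$, and \node{j} commits it directly. So assume $r'>r$. The plan is to combine \Cref{lemma:leader_path} with the structure of the indirect-commit loop in \textsc{commit\_leader} (Lines~\ref{line:indirect_start}--\ref{line:indirect_end}). By \Cref{lemma:leader_path}, since \node{i} directly commits $LV_r$, every valid leader vertex in a round above $r$ has a leader path to $LV_r$. In particular $LV_{r'}$ has one, and since \node{j} has $LV_{r'}$ in its DAG, it also has the whole causal history of $LV_{r'}$, including that path, because vertices are only added once all their references are present.

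Next, look at the direct commit in which \node{j} handles $LV_{r'}$. If \node{j}'s $committedRound$ is already at least $r$ at that moment, then some earlier direct commit of \node{j} covered round $r$. I would handle this by strong induction over \node{j}'s sequence of direct commits. Take the first direct commit $LV_{r''}$ of \node{j} with $r''\ge r$. If $r''=r$, we are done. Otherwise, at that call $committedRound<r$, so the loop scans every round from $r''-1$ down to $r$.

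Within the loop I would prove an invariant: whenever the loop is at a round $s\ge r$, the current anchor $v'$ is a leader vertex in a round above $s$ that has a leader path to $LV_r$. Initially $v'=LV_{r''}$, which satisfies this by \Cref{lemma:leader_path}. When the loop updates $v'$ to some $v_s$ with $s>r$, the new anchor $v_s$ is a valid leader vertex in round $s>r$, so \Cref{lemma:leader_path} applies to it directly. Hence, when the loop reaches round $r$, the check \textsc{leader\_path}$(v',LV_r)$ succeeds and $LV_r$ is pushed. A point to verify here is that \textsc{get\_leader\_vertex}$(r)$ returns $LV_r$ in \node{j}'s view. This holds because $LV_r$ lies on a leader path starting from a vertex in \node{j}'s DAG, hence is itself in \node{j}'s DAG, and by \Cref{fact:ne} it is the same $LV_r$ that \node{i} committed.

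The main obstacle is this invariant step: the loop only follows leader paths from the \emph{most recently pushed} anchor, not from $LV_{r''}$ itself. The fix is to note that the lemma applies to every valid leader vertex above $r$, so the choice of anchor does not matter. A secondary subtlety is that $committedRound$ may exceed $r$ because of an earlier direct commit, which the choice of the first direct commit with round $\ge r$ handles.
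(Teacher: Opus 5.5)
Your proposal is correct and takes essentially the same route as the paper. The case $r'=r$ is trivial; otherwise the argument combines \Cref{lemma:leader_path} with the indirect-commit loop of commit\_leader. Your choice of the first direct commit $LV_{r''}$ of \node{j} with $r''\ge r$ is exactly what the paper's brief ``inductive argument'' reduces to. Your explicit invariant on the current anchor $v'$ (every valid leader vertex above round $r$ has a leader path to $LV_r$, so the loop's choice of anchor does not matter) makes precise what the paper leaves implicit in ``based on the pseudocode for commit\_leader''. The same holds for your check that the path and $LV_r$ lie in \node{j}'s local DAG.
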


\begin{proof}
The case where $r'=r$ is trivial. When $r'>r$, by \Cref{lemma:leader_path}, there exists a leader path from $LV_{r'}$ to $LV_r$. Based on the pseudocode for commit\_leader, \node{j} indirectly commits $LV_r$ if it has not yet directly committed any $LV_{r''}$ such that $r \leq r'' < r'$. Otherwise, by an inductive argument, \node{j} has already committed $LV_r$ when directly committing $LV_{r''}$.
\end{proof}

\begin{lemma}\label{lemma:pair_order}
If an honest replica \node{i} consecutively directly commits leader vertices $LV_{r_i}$ and $LV_{r_i'}$ with $r_i' > r_i$, and an honest replica \node{j} consecutively directly commits leader vertices $LV_{r_j}$ and $LV_{r_j'}$ with $r_j' > r_j$, then \node{i} and \node{j} commit the same leader vertices between rounds $\max(r_i, r_j)$ and $\min(r_i', r_j')$ in the same order.
\end{lemma}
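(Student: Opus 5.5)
If the two intervals do not overlap, i.e.\ $\max(r_i,r_j) > \min(r_i',r_j')$, the statement is vacuous. So assume they overlap, and without loss of generality let $r_i' \le r_j'$, so that $\min(r_i',r_j') = r_i'$.

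The first step is to place $LV_{r_i'}$ inside $p_j$'s interval. Since $p_i$ directly commits $LV_{r_i'}$ and $p_j$ directly commits $LV_{r_j'}$ with $r_j' \ge r_i'$, \Cref{lemma:single_order} says $p_j$ commits $LV_{r_i'}$ directly or indirectly. Because $p_j$ directly commits nothing strictly between $r_j$ and $r_j'$, and $r_i' > r_i$, we get two cases. If $r_i' = r_j'$, both replicas start their walks from the same vertex. Otherwise $r_j < r_i' < r_j'$, and $LV_{r_i'}$ is pushed onto $p_j$'s $leaderStack$ during commit\_leader$(LV_{r_j'})$.

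The key structural claim is that in commit\_leader$(LV_{r_j'})$, once $p_j$ has pushed $LV_{r_i'}$ and set $v' = LV_{r_i'}$, the rest of its loop is exactly the walk that $p_i$ runs from $LV_{r_i'}$. At each round $r$ the loop asks whether $\mathsf{leader\_path}(v', LV_r)$ holds, and moves $v'$ to $LV_r$ when it does. I will argue that the answer is the same at both replicas, by three observations.
\begin{itemize}
\item By \Cref{fact:ne}, all leader vertices involved are identical at both replicas.
\item A vertex is added to the DAG only after its whole causal history is present (try\_add\_to\_dag), so the outcome of $\mathsf{leader\_path}(v',u)$ depends only on $v'$ and $u$, not on the local view.
\item If $LV_r$ is missing from a replica's DAG, no path to it can exist from a vertex that replica holds. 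So that replica and the other replica make the same skip decision.
\end{itemize}
Hence, going down from $r_i'$, both replicas walk through the same sequence of leader vertices for as long as both loops are running.

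It remains to handle the lower end. $p_i$'s loop stops at $r_i$, since $committedRound = r_i$, and $p_j$'s stops at $r_j$. In the interval $[\max(r_i,r_j), r_i']$ both walks coincide, which gives the same set of committed leaders. The lower endpoint needs care. If $r_i \ge r_j$, then by \Cref{lemma:single_order} applied to the direct commit of $LV_{r_i}$, $p_j$ also commits $LV_{r_i}$, so both include it. If $r_j > r_i$, the symmetric argument applies: $p_i$ commits $LV_{r_j}$, because it lies on the same walk. The walks reach the bottom of the interval with the same current $v'$, so membership agrees there as well.

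Order agreement then follows from the $leaderStack$. Vertices are pushed in decreasing round order, and the push order is the same at both replicas. One step I expect to need checking is that \emph{popping} the stacks in order\_vertices, across separate commit\_leader calls, still yields increasing round order over the common interval.

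The main obstacle I expect is the determinism of $\mathsf{leader\_path}$ across views. The edges counted are leader edges and strong edges, and the argument relies on the causal-completeness condition in try\_add\_to\_dag. I would first state this as a small claim: if $u$ is reachable from $v$ in some honest replica's view, then it is reachable from $v$ in every honest view containing $v$, and only then run the walk comparison above.
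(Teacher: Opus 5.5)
Your proposal is correct and follows the paper's proof: both arguments use \Cref{lemma:single_order} to show that both replicas commit $LV_{\min(r_i',r_j')}$, then argue that the deterministic commit\_leader walk from that vertex is identical at both replicas down to round $\max(r_i,r_j)$, with \Cref{lemma:leader_path} guaranteeing that the lower endpoint is reached. What you add simply makes explicit what the paper leaves to ``the deterministic logic of commit\_leader'': the view-independence of leader\_path (via try\_add\_to\_dag and \Cref{fact:ne}), and a WLOG $r_i'\le r_j'$ that also covers the nested case; your pop-order worry dissolves because order\_vertices drains the stack at the end of every call and the $committedRound$ guard makes direct commits strictly increasing in round.
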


\begin{proof}
Without loss of generality, we only need to consider the non-trivial case where $r_i \leq r_j < r_i' \leq r_j'$. By \Cref{lemma:single_order}, both \node{i} and \node{j} will commit $LV_{r_i'}$. 
Based on the pseudocode for commit\_leader and \Cref{lemma:leader_path}, upon committing $LV_{r_i'}$, both \node{i} and \node{j} will indirectly commit all leader vertices that have a leader path to $LV_{r_i'}$ from rounds $r_i'$ down to $r_j$. Due to the deterministic logic of commit\_leader, \node{i} and \node{j} will commit these vertices in the same order.
\end{proof}

\begin{theorem}\label{thm:total_order_R}
\name satisfies \textbf{Total order}.
\end{theorem}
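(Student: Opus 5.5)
To prove \Cref{thm:total_order_R}, I would first show that all honest replicas agree on the sequence of committed leader vertices, and then show that this agreement fixes the order in which each replica \emph{a\_deliver}s vertices. For the first step, consider two honest replicas \node{i} and \node{j}. Take the sequences of leader vertices each pushes onto its $leaderStack$, ordered by round. I would show that one sequence is a prefix of the other. The rounds at which either replica directly commits split the timeline into intervals. \Cref{lemma:pair_order} says that, on each overlapping interval between consecutive direct commits, \node{i} and \node{j} commit the same leader vertices in the same order. \Cref{lemma:single_order} handles the boundary rounds: whenever one replica directly commits $LV_r$ and the other directly commits some $LV_{r'}$ with $r'\ge r$, the second replica also commits $LV_r$. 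Chaining these interval by interval, by induction on the number of direct commits, should give the prefix relation up to the smaller of the two replicas' $committedRound$ values.

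The second step turns the common leader sequence into a common vertex order. By \Cref{alg:commit_rule}, order\_vertices pops leaders from the stack in increasing round order. For each popped leader $v$, it outputs every vertex in $v$'s causal history not yet in $deliveredVertices$, using a deterministic order. The causal history of a fixed leader vertex is the same set at every honest replica. This holds because, by \Cref{fact:ne}, a vertex at a given position is identical everywhere, and try\_add\_to\_dag only inserts a vertex once all of its referenced vertices are present, so the history is complete when the leader is delivered. By induction over the leader sequence, the set $deliveredVertices$ before processing the $k$-th committed leader is the union of the causal histories of the first $k-1$ leaders, and this union is the same at every replica. So the batch emitted for the $k$-th leader is identical across replicas, and so is its internal order.

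To finish, suppose \node{i} delivers $m$ before $m'$, and \node{j} delivers both. Let $k$ and $k'$ be the positions of the leaders whose batches contain $m$ and $m'$, so $k\le k'$. Because the prefixes agree, \node{j} places $m$ and $m'$ in the same batches. Either $k<k'$, or $k=k'$ and the deterministic within-batch order decides. In both cases \node{j} delivers $m$ before $m'$. If \node{j} never delivers $m'$, the property holds vacuously.

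I expect the main obstacle to be the prefix-agreement step, specifically the interleaving of direct commits in the cases \Cref{lemma:pair_order} does not cover. One such case is when one replica's interval lies strictly inside the other's. Another is the first commit, where $committedRound=0$. I would handle these with the convention of a virtual leader at round $0$ that counts as directly committed by everyone. I would also use \Cref{lemma:single_order} to show that any leader committed indirectly by one replica is committed, directly or indirectly, by the other. The reason is that both replicas walk the same leader path from a common directly committed leader, and that leader path is evaluated on identical vertices by \Cref{fact:ne}.
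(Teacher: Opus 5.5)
Your proposal is correct and follows essentially the same route as the paper: the paper obtains a common committed-leader sequence by inductively applying \Cref{lemma:pair_order} to every pair of honest replicas, then argues that the deterministic traversal in order\_vertices produces the same a\_deliver order. Your version makes explicit what the paper leaves implicit: the prefix relation up to the smaller $committedRound$, the boundary cases handled via \Cref{lemma:single_order}, and the fact that a leader's causal history is identical at every honest replica, by \Cref{fact:ne} and the completeness check in try\_add\_to\_dag.
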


\begin{proof}
By inductively applying \Cref{lemma:pair_order} to every pair of honest replicas, we can deduce that all honest replicas commit same leader vertices in the same order. Based on the pseudocode for order\_vertices, all honest replicas will traverse the causal histories of these leader vertices in the same deterministic manner, and consequently a\_deliver all vertices in the same order.
\end{proof}

Next, we rely on \Cref{lemma:synchronization_R} and \Cref{lemma:leader_commit_R} to establish the \textit{Validity} property.

\begin{lemma}\label{lemma:keep_higher}
All honest replicas continuously enter higher rounds.
\end{lemma}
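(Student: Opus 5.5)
The plan is to argue by contradiction that no honest replica can stall forever at a round, and to use \Cref{lemma:synchronization_R} to propagate progress from one round to the next. Let $r$ be the highest round that some honest replica ever enters, and suppose, for contradiction, that such a maximum exists. Let $t$ be the time at which the first honest replica enters $r$, taken after GST; otherwise we wait until GST, since the maximum round is still reached at some finite point and all later arguments are post-GST. By \Cref{lemma:synchronization_R}, every honest replica has entered round $r$ by $t+k_2\Delta$, because none can enter a higher round by the choice of $r$.

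Each honest replica that enters round $r$ calls advance\_round. We first check that this call does not block. For a non-leader, advance\_round broadcasts a vertex immediately. If $L_r$ is honest, it waits for $|DAG[r-1]|\ge 2f+1$. The first honest replica in round $r$ (which could not have skipped) delivered $2f+1$ round-$(r-1)$ vertices, so by part (\rmnum{2}) of \Cref{property:RBC} $L_r$ delivers them too within $k_2\Delta$. Hence every honest replica $r\_bcast$s a round-$r$ vertex, and by part (\rmnum{1}) of \Cref{property:RBC} all honest replicas deliver at least $2f+1$ round-$r$ vertices (the $2f+1$ honest ones) by roughly $t+k_2\Delta+k_1\Delta$. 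The validity check passes because leader edges and self edges point to already-delivered vertices, which RBC agreement delivers everywhere.

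It remains to obtain either $LV_r$ or $NVC_r$ at every honest replica, which is what the advancement rule at Line~\ref{line:advance} requires. We split into two cases.
\begin{enumerate*}[label=(\alph*)]
\item Some honest replica delivers $LV_r$. By \Cref{property:RBC}(\rmnum{2}), all honest replicas then deliver it within $k_2\Delta$.
\item No honest replica ever delivers $LV_r$. Then every honest replica's timer, started on entering $r$, expires, and each broadcasts $\tuple{\NoVote,r}$ at Line~\ref{line:timeout}. The $2f+1$ honest no-votes form $NVC_r$ at every honest replica within $\Delta$.
\end{enumerate*}
There is a mixed subcase: some honest replicas timed out while others later deliver $LV_r$. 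This is handled by case (a), since delivery, not leader-edge eligibility, is all Line~\ref{line:advance} needs. In either case the trigger at Line~\ref{line:advance} fires for round $r$, and every honest replica enters round $r+1$, contradicting maximality.

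The main obstacle is the step showing the honest leader's wait and the vertex validity never block indefinitely, since a stuck $L_r$ would be harmless for advancement while a stuck non-leader would not. I would handle this by noting that the advancement trigger depends only on delivered vertices and $NVC_r$. Then $2f+1$ honest non-blocking broadcasts suffice even if $L_r$ is slow; for $n=3f+1$ there are $2f$ honest non-leaders plus $L_r$, and I would verify that $L_r$'s wait is resolved via \Cref{property:RBC}(\rmnum{2}) as argued above. Finally, I would note that pre-GST behaviour is irrelevant, because messages sent before GST arrive by GST$+\Delta$.
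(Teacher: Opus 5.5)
Your proof is correct and is essentially the paper's argument: you get every honest replica into round $r$ via \Cref{lemma:synchronization_R}, note that all honest replicas broadcast and hence deliver $2f+1$ round-$r$ vertices, and split on whether some honest replica delivers $LV_r$ (RBC then gives it to everyone) or all honest replicas time out and form $NVC_r$; casting this as a contradiction with a maximal round only repackages the paper's direct progress step. Your explicit check that $L_r$'s wait in advance\_round terminates is a detail the paper leaves implicit, and it is genuinely needed, so drop the aside that a stuck $L_r$ would be harmless: with an honest $L_r$ and silent Byzantine replicas, only $2f$ non-leader vertices can appear in round $r$.
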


\begin{proof}
Assume that all honest replicas have entered round $r$ or higher. If an honest replica \node{i} enter round $r'>r$ at any time, then due to the finiteness of GST and \Cref{lemma:synchronization_R}, all honest replicas will eventually enter round $r'$ or higher. Otherwise, if all honest replicas remain in round $r$, they will invoke \Call{broadcast\_vertex}{$r$} and eventually deliver at least $2f+1$ vertices from round $r$. If any honest replica r\_delivers $LV_r$ before the timeout, then all honest replicas will satisfy the condition to enter round $r+1$ upon delivering $LV_r$. Otherwise, all honest replicas will broadcast $\tuple{\NoVote, r}$ and receive $NVC_r$, which also satisfies the condition to enter round $r+1$.
\end{proof}

\begin{theorem}\label{thm:validity}
\name satisfies \textbf{Validity}.
\end{theorem}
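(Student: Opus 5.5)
Let an honest replica $\node{k}$ call $a\_bcast_k(m,r)$. The block $m$ is enqueued in $blocksToPropose$. By \Cref{lemma:keep_higher}, $\node{k}$ keeps entering higher rounds, so it eventually dequeues $m$ into a vertex $v$ for some round $r_v$ and reliably broadcasts it. By the \emph{Validity} and \emph{Agreement} of RBC, every honest replica eventually r\_delivers $v$. Since $v$'s edges point only to vertices already in $\node{k}$'s DAG, they are also eventually delivered everywhere, so every honest replica eventually adds $v$ to its DAG. What remains is to show that $v$ ends up in the causal history of some directly committed leader vertex. Once that holds, order\_vertices a\_delivers $v$ at every honest replica that commits that leader (directly or indirectly); by \Cref{thm:total_order_R}, all honest replicas do so.

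\textbf{Step 1: infinitely many honest leaders commit after GST.} Assume a leader rotation in which honest leaders occur infinitely often, for instance round-robin. By \Cref{lemma:keep_higher}, honest replicas reach arbitrarily high rounds after GST. For any round $r'$ whose first honest entry time $t$ is after GST and whose $L_{r'}$ is honest, \Cref{lemma:leader_commit_R} gives that $LV_{r'}$ is directly committed by all honest replicas. Hence honest leader vertices are directly committed in infinitely many rounds.

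\textbf{Step 2: $v$ enters the causal history of one such leader.} If $\node{k}$ is itself a leader in round $r_v$, we are done once its vertex is committed or lies on a leader path. Otherwise, consider an honest leader $L_{r'}$ with $r' > r_v + 1$ that is directly committed in Step 1. When $L_{r'}$ builds $LV_{r'}$, it has delivered $v$ by then, provided $r'$ is large enough, because RBC delivers within bounded time after GST. So $v$ is either already reachable from $LV_{r'}$ or else it is added through set\_weak\_edges, which references every delivered vertex in rounds $\le r'-2$ not yet reachable. In both cases $path(LV_{r'},v)$ holds. Once $LV_{r'}$ is committed, order\_vertices outputs $a\_deliver(m,r_v,\node{k})$ unless $v$ was output earlier, which is equally fine.

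\textbf{Main obstacle.} The key difficulty is making sure $v$ has been delivered at $L_{r'}$ before $L_{r'}$ creates its vertex. I would handle this by choosing $r'$ so that its first honest entry time is at least $k_1\Delta$ after $\node{k}$'s broadcast (and after GST), and then applying \Cref{property:RBC}(\rmnum{1}). This works because \Cref{lemma:keep_higher} gives unboundedly many later rounds, and by Step 1 among them there is one with an honest committed leader.

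A secondary point is the vertex format. Honest leaders always include weak edges, and the path procedure counts weak edges, so no leader-path argument is needed for $v$ itself: reachability through any edge type is enough for order\_vertices to output it.
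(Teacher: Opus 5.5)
Your proposal is correct and follows essentially the same route as the paper: \Cref{lemma:keep_higher} gets $v$ created and RBC-delivered everywhere, \Cref{lemma:leader_commit_R} gets honest leader vertices committed after GST, the weak edges of a sufficiently late honest leader vertex reach $v$, and order\_vertices outputs it. Your version is more careful than the paper's, since you make explicit that honest leaders must recur and you choose $r'$ late enough that $L_{r'}$ already has $v$ in its DAG. The final appeal to \Cref{thm:total_order_R} is unnecessary, because \Cref{lemma:leader_commit_R} already gives commitment at every honest replica.
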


\begin{proof}
Suppose an honest replica \node{i} calls \Call{a\_bcast}{$b$}, which implies that it pushes $b$ into the $blockToPropose$ queue. By \Cref{lemma:keep_higher}, \node{i} continuously enters higher rounds and creates new vertices. Consequently, \node{i} will eventually create a vertex $v$ in some round $r$ that contains $b$. Due to the \textit{Validity} property of RBC, all honest replicas will eventually add $v$ to the DAG.

According to \Cref{lemma:leader_commit_R}, all honest leader vertices created after GST will be committed by all honest replicas. Facilitated by the weak edges, an honest leader will eventually create a leader vertex that has a path to $v$, and this leader vertex will be committed. Based on the pseudocode for order\_vertices, all honest replicas will eventually output \Call{a\_deliver}{$b, r, \node{i}$}.
\end{proof}

Finally, we directly establish the \textit{Agreement} property:

\begin{theorem}\label{thm:agreement}
\name satisfies \textbf{Agreement}.
\end{theorem}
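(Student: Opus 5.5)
Suppose an honest replica \node{i} outputs \Call{a\_deliver$_i$}{$v.block, v.round, v.source$}. The plan is to show that every honest replica \node{j} eventually commits a leader vertex whose causal history contains $v$, and then appeal to the deterministic traversal in order\_vertices. By the pseudocode, \node{i} a\_delivers $v$ only while processing its $leaderStack$. So there is some leader vertex $LV_r$ that \node{i} directly committed, together with a (possibly indirectly committed) leader vertex $LV_{r''}$ with $r''\le r$ on its stack, such that $path(LV_{r''},v)$ holds in $DAG_i$.

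\textbf{Step 1: \node{j} eventually directly commits some later leader.} By \Cref{lemma:keep_higher}, the honest replicas keep entering higher rounds. Hence after GST there is a round $r'>r$ whose leader $L_{r'}$ is honest; I assume a rotation in which honest leaders recur infinitely often. By \Cref{lemma:leader_commit_R}, every honest replica, \node{j} included, directly commits $LV_{r'}$. By \Cref{lemma:single_order}, \node{j} then commits $LV_r$, directly or indirectly.

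\textbf{Step 2: \node{j} also commits $LV_{r''}$ and delivers $v$.} Using the argument of \Cref{lemma:pair_order} and \Cref{thm:total_order_R}, the sequence of committed leaders is identical at \node{i} and \node{j}. So \node{j} also places $LV_{r''}$ on its $leaderStack$ and eventually runs order\_vertices on it. There is one timing subtlety: \node{j} must actually have $LV_{r''}$ and all vertices on the path to $v$ in $DAG_j$ when it evaluates leader\_path and traverses the history.

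\begin{itemize}
\item \Cref{fact:ne} guarantees that these vertices are identical in both views.
\item RBC \emph{Agreement} guarantees that \node{j} eventually r\_delivers each of them.
\item The try\_add\_to\_dag rule inserts a vertex only after all of its referenced vertices are present. Therefore, once a leader vertex is in $DAG_j$, its entire causal history is already there.
\end{itemize}

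It follows that the set reachable from $LV_{r''}$ is the same at \node{i} and \node{j}. In particular $v$ is reachable at \node{j}, and \node{j} outputs \Call{a\_deliver$_j$}{$v.block, v.round, v.source$}. If \node{j} had already delivered $v$ earlier, the claim holds trivially.

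\textbf{Main obstacle.} The hard part is the timing gap. When \node{j} directly commits $LV_{r'}$ from first messages, it may not yet have delivered $LV_r$ or the intermediate leaders, and then leader\_path cannot yet be evaluated. I would argue that commit\_leader is executed only once the required vertices are in the local DAG, implicitly waiting for RBC delivery, which eventually occurs by Agreement. From that point, the closure of the DAG under references gives consistent leader\_path answers at all honest replicas.
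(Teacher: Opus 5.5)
Your proposal is correct and follows the same route as the paper. The paper's proof runs as follows: use the liveness reasoning from the proof of \Cref{thm:validity} so that $p_j$ eventually commits a later leader; then use the leader-path guarantee (\Cref{lemma:leader_path}, via \Cref{lemma:single_order}) to conclude that $p_j$ also commits the leader whose causal history contains $v$; finally, rely on the deterministic traversal in order\_vertices. Your extra steps make explicit what the paper's three-line proof leaves implicit: the separate handling of the possibly indirectly committed leader $LV_{r''}$ through \Cref{lemma:pair_order}, the closure of each local DAG under references, and the need to wait for RBC delivery before evaluating leader paths.
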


\begin{proof}
If an honest replica \node{i} outputs \Call{a\_deliver$_i$}{$v.block, v.round,$\\$v.source$}, then $v$ must be in the causal history of some committed leader vertex $LV_r$. Based on the reasoning in \Cref{thm:validity}, any honest replica \node{j} will eventually commit a leader vertex $LV_{r'}$ with $r' \geq r$. According to the pseudocode for order\_vertices and \Cref{lemma:leader_path}, \node{j} will eventually output \Call{a\_deliver$_j$}{$v.block, v.round, v.source$} when traversing the causal history of $LV_r$.
\end{proof}
\subsection{Correctness Proof of CBC-based \name}\label{a:correctness_C}

We first give the key arguments of safety and liveness. 

\noindentparagraph{\textbf{The key safety argument.}}
Observe that the commit rules and the format of leader vertices in CBC-based \name- are basically identical to those in basic \name. The sole distinction lies in the fact that references take the form of certificates and may have higher rounds, which provides same guarantees. Consequently, the proof remains identical to \Cref{lemma:leader_path} and we omit it here.

\noindentparagraph{\textbf{The key liveness argument.}}
Without \Cref{property:RBC}, we must re-establish two key liveness lemmas. The core intuition relies on leveraging the explicit round synchronization and round-skipping rules. We begin by outlining the properties satisfied by CBC\cite{danezis2022narwhal} after GST.

\begin{property}\label{property:CBC}
Let $t$ be a time after GST. If an honest replica consistently broadcasts a vertex $v$ at time $t$, then all honest replicas will receive the certificate of $v$ by time $t + 3\Delta$.
\end{property}

The following two lemmas correspond to \Cref{lemma:synchronization_R} and \Cref{lemma:leader_commit_R} in basic \name.

\begin{lemma}\label{lemma:synchronization_C}
Let $t$ be a time after GST. If the first honest replica \node{i} enters round $r$ at time $t$, then all honest replicas will enter round $r$ or higher by time $t + 4\Delta$.
\end{lemma}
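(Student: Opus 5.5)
Since \node{i} is the first honest replica to enter round $r$, I would first rule out that it got there by round-skipping via rule 2 or rule 3. Rule 2 needs $f+1$ first messages for round $\ge r$, so at least one comes from an honest replica already in round $\ge r$, which is impossible. Rule 3 needs $LC_{r-1}$ while \node{i} is below round $r-1$; I would fold this case into the analysis below, since it yields the same trigger. So \node{i} entered via Line~\ref{line:advance_C}, or via rule 3, with $2f+1$ certificates of round $\ge r-1$ from distinct sources and either $LC_{r-1}$ or $NVC_{r-1}$. On entering, \node{i} broadcasts $LC_{r-1}$ if it holds it (Line~\ref{line:LC_sync}) and forwards $NVC_{r-1}$ (Line~\ref{line:NVC_sync}), so every honest replica holds $LC_{r-1}$ or $NVC_{r-1}$ by $t+\Delta$. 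The difficulty is that the $2f+1$ certificates may come from Byzantine senders who do not share them, so the other honest replicas cannot be assumed to reach the $2f+1$-certificate condition of Line~\ref{line:advance_C}.

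I would therefore drive the argument through the honest replicas' own broadcasts and the second skipping rule (Line~\ref{line:jumping_C}). Let $h$ be any honest replica. In the case of $LC_{r-1}$, $h$ receives it by $t+\Delta$ and skips to round $r$ by rule 3 at once if it is below round $r-1$. In the $NVC$ case I need $h$ to reach round $r-1$ at least. For this I would use that some honest replica was in round $r-1$ by time $t$ (the first honest one to enter round $r-1$ did so no later than $t$), and apply an induction on rounds. Once $h$ is in round $\ge r-1$ it has c\_bcast its round-$(r-1)$ vertex, and the first messages of all $\ge 2f+1$ honest replicas for round $\ge r-1$ are then in flight.

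The key step is to show that every honest replica receives $f+1$ first messages for round $\ge r$, or the $2f+1$ certificates of round $\ge r-1$, within the budget. The honest replicas in round $r-1$ complete their CBCs within $3\Delta$ by \Cref{property:CBC}, so all honest replicas hold $\ge 2f+1$ honest certificates of round $\ge r-1$. Together with $LC_{r-1}$ or $NVC_{r-1}$ (held by $t+\Delta$), this triggers Line~\ref{line:advance_C}. To fit $t+4\Delta$, I would count as follows: the straggling honest replicas enter round $r-1$ by about $t+\Delta$ (through $LC$/$NVC$ forwarding and rule 2 with respect to \node{i}'s already-sent messages). Their CBCs then finish by $t+4\Delta$, giving every honest replica $2f+1$ certificates of round $\ge r-1$ from distinct sources. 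Certificates from higher rounds count as well, since the condition accepts rounds $\ge r-1$.

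I expect the main obstacle to be this timing bound, namely proving that lagging honest replicas reach round $r-1$ within $\Delta$ of $t$ without circularly assuming the lemma for round $r-1$. My first attempt would be an induction on $r$ that uses the lemma for $r-1$ only in the weak form ``by the time the first honest replica enters $r$, all honest replicas are in round $\ge r-1$.'' I would justify that form by noting that \node{i}'s $2f+1$ certificates of round $\ge r-1$ include at least $f+1$ honest ones. Each such honest sender sent its first message for round $r-1$ before $t$, so those first messages arrive by $t+\Delta$, and rule 2 applies once $LC_{r-2}$ or $NVC_{r-2}$ is present, which holds by the induction hypothesis. The honest replicas then all c\_bcast round-$(r-1)$ vertices by $t+\Delta$, and \Cref{property:CBC} finishes the bound at $t+4\Delta$.
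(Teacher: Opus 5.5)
Your outline has the same skeleton as the paper's proof: the $f+1$ honest senders among \node{i}'s certificates, forwarding of $LC$/$NVC$, rule 2 pulling everyone to round $\ge r-1$ by $t+\Delta$, then $3\Delta$ of CBC. However, it has a genuine gap. You treat all honest replicas uniformly and never account for the leader's wait in advance\_round (Line~\ref{line:leader_advance_C}).

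An honest $L_{r-1}$ that is triggered to enter round $r-1$ (e.g.\ by rule 2) does not c\_bcast anything until it holds $2f+1$ certificates of rounds $\ge r-2$ from distinct sources. Byzantine replicas can withhold their certificates from it. Likewise, an honest $L_r$ that receives $LC_{r-1}$ does not ``skip to round $r$ at once'' but blocks on the analogous condition.

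Suppose $L_{r-1}$ is honest, lagging, and not among the $f+1$ honest senders of \node{i}'s certificates. Then your argument guarantees only $2f$ honest certificates of rounds $\ge r-1$ by $t+4\Delta$. That is below the $2f+1$ threshold of Line~\ref{line:advance_C}, so your argument establishes neither the timing nor the conclusion.

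The missing idea is the explicit round synchronization through $\NewRound$ messages (Lines~\ref{line:newround_1}--\ref{line:newround_2}). The first honest replica to enter round $r-1$ did so before $t$ and sent $L_{r-1}$ its $2f+1$ certificates. \node{i} sends its $2f+1$ certificates to $L_r$ at time $t$. So both leaders clear their wait within $\Delta$. This is precisely how the paper handles the leaders, and it is why this mechanism exists in the CBC variant.

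A second, smaller problem is the source of $LC_{r-2}$ or $NVC_{r-2}$, which rule 2 needs at every honest replica by $t+\Delta$. An induction hypothesis cannot provide this. The lemma for round $r-1$ only yields the bound $t_{r-1}+4\Delta$, where $t_{r-1}\le t$ is when the first honest replica entered round $r-1$. The ``weak form'' you state is neither implied by that bound nor true in general, since a slow honest replica may still be below round $r-1$ at time $t$.

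The paper needs no induction here. The honest replica that first entered round $r-1$, before $t$, either rebroadcast $LC_{r-2}$ (Line~\ref{line:LC_sync}) or had already forwarded $NVC_{r-2}$ (Line~\ref{line:NVC_sync}). Hence every honest replica holds one of them by $t+\Delta$, exactly as you argue for round $r-1$.

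Finally, folding rule 3 into the main case is not justified. Rule 3 (Line~\ref{line:jumping_LC}) supplies only $LC_{r-1}$, with no $2f+1$ certificates, no rebroadcast, and no $\NewRound$. The paper's proof also passes over this case by simply asserting that \node{i} did not skip.
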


\begin{proof}
Since \node{i} is the first honest replica to enter round $r$, it could not have advanced via skipping. Consequently, \node{i} must have received certificates for at least $2f+1$ vertices from different sources in rounds $\ge r-1$. Among these, at least $f+1$ originate from honest replicas (let $\mathcal{H}$ denote this set). By virtue of Line~\ref{line:LC_sync} and the $\NoVote$ mechanism, all replicas are guaranteed to receive either $LC_{r-2}$ or $NVC_{r-2}$ and either $LC_{r-1}$ or $NVC_{r-1}$ within $\Delta$. Given that \node{i} sends a $\NewRound$ message to $L_r$, $L_r$ is guaranteed to obtain at least $2f+1$ certificates from different sources in rounds $\ge r-1$ and satisfy the conditions specified in Line~\ref{line:leader_advance_C} within $\Delta$. 

Furthermore, all replicas will receive the first messages from $\mathcal{H}$ within $\Delta$. Since the first honest replica to enter round $r-1$ must have done so before time $t$ and sent a $\NewRound$ message to $L_{r-1}$, $L_{r-1}$ is also guaranteed to satisfy the conditions specified in Line~\ref{line:leader_advance_C} within $\Delta$. In conjunction with Line~\ref{line:jumping_C}, this ensures that all honest replicas will enter round $r-1$ (or higher) within $\Delta$. After a subsequent latency of $3\Delta$, all honest replicas will complete the CBC for rounds $\ge r-1$ and advance to round $r$ (or higher) by Line~\ref{line:advance_C}.
\end{proof}

\begin{lemma}\label{lemma:leader_commit_C}
If the first honest replica enters round $r$ at time $t$ after GST, and the timeout parameters are configured such that $\tau=4\Delta$ and $\tau'=\Delta$, then $LV_r$ is guaranteed to be directly committed before $t+10\Delta$ provided that $L_r$ is honest.
\end{lemma}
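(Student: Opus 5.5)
The proof will follow the template of \Cref{lemma:leader_commit_R}, with \Cref{lemma:synchronization_C} replacing \Cref{lemma:synchronization_R} and \Cref{property:CBC} replacing \Cref{property:RBC}. The new ingredient is the delayed timer of Line~\ref{line:timer}. A replica only starts its round-$r$ timer after receiving $2f+1$ first messages for rounds $\ge r$. So the earliest time any honest replica can time out in round $r$ is some $t_0+\tau$, where $t_0\ge t$ is the first time an honest replica holds $2f+1$ such first messages. Consequently no honest replica sends $\tuple{\NoVote,r}$ before $t_0+\tau\ge t+4\Delta$. It follows that no $NVC_r$, and no $NVC$ of a higher round (those rounds cannot even be entered by honest replicas without $LC_r$ or $NVC_r$), exists before $t+4\Delta$. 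The main bookkeeping is to bound when the timers actually start, so that the argument closes by $t+10\Delta$.

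The first step is timing. By \Cref{lemma:synchronization_C}, all honest replicas, including $L_r$, are in round $r$ or higher by $t+4\Delta$. The explicit $\NewRound$ message sent to $L_r$ guarantees that $L_r$ passes the wait in Line~\ref{line:leader_advance_C} by then as well. Hence $L_r$ c\_bcasts $LV_r$ by $t+4\Delta$, and by \Cref{property:CBC} every honest replica receives $LC_r$ by $t+7\Delta$.

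The second step is to show that the $NVC_r$ threat is harmless until then. An honest replica that starts its timer at time $s$ cannot time out before $s+4\Delta$. To reach $7\Delta$ of slack, I would argue that each honest replica $p$ starts its timer only after receiving $2f+1$ first messages for round $\ge r$. At least $f+1$ of these come from honest replicas that entered round $r$, and all of them enter no earlier than $t$. The intended comparison is that the first messages reach $p$ no earlier than $t$, and that $LV_r$ was sent no later than $t+4\Delta$. I expect this to be the main obstacle. Showing that $p$ receives $LC_r$ before its own timeout, i.e.\ by $s_p+4\Delta$, requires relating $s_p$ to $L_r$'s broadcast time; $t+4\Delta$ is not directly comparable to $s_p$. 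My first attempt would use responsiveness: once $f+1$ honest first messages exist, all honest replicas skip into round $r$ (Line~\ref{line:jumping_C}) within $\Delta$. I would also check whether $L_r$, being honest, itself contributes to and receives the same quorum, so that $L_r$'s vertex is in flight within $\Delta$ of $s_p$. The CBC then completes within $3\Delta$ more, giving receipt of $LC_r$ by $s_p+4\Delta=s_p+\tau$. If this fails, I would fall back to the cruder absolute bound that no timer starts before $t$, which still lets me conclude that $LC_r$ is received, and any missing round-$r$ vote is collected, by $t+10\Delta$.

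The third step splits into cases on each honest replica $p$ exactly as in \Cref{lemma:leader_commit_R}. If $p$ enters round $r+1$, its leader edge references $LV_r$: it has $LC_r$ and sent no $\NoVote$, and advancing without $LC_r$ would require $NVC_r$, which does not exist. Its first message is then a vote. If $p$ skips past round $r+1$ (rules 1--3), the first honest replica to enter a round $>r+1$ must hold $LC_r$ and has broadcast it (Line~\ref{line:LC_sync}). So $p$ receives $LC_r$ within $\tau'=\Delta$, meets the condition of Line~\ref{line:fv_condition_C}, and sends a fast-vote. Collecting these bounds, every honest replica has voted by roughly $t+9\Delta$, and one more $\Delta$ for the votes to arrive gives $2f+1$ votes at all honest replicas. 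Hence $LV_r$ is directly committed before $t+10\Delta$.
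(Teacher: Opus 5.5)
Your proposal follows the paper's proof, and the step you flag as the main obstacle is resolved by your responsiveness idea.

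The paper anchors everything at $t^*$, the first time an honest replica starts its round-$r$ timer. This is your $t_0$, and it works uniformly, so you do not need per-replica $s_p$. The paper argues $t^*\le t+5\Delta$, then uses the $f+1$ honest first messages together with Line~\ref{line:jumping_C} to put every honest replica in round $r$ or higher by $t^*+\Delta$. Hence $LC_r$ reaches everyone by $t^*+\tau$, before any honest round-$r$ timer can expire.

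The $L_r$ check you raise goes through. The $\NewRound$ message from the replica that entered round $r$ at time $t$ already satisfies Line~\ref{line:leader_advance_C} by $t+\Delta\le t^*+\Delta$.

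Your own bookkeeping is a valid minor variant. You get $LC_r$ at every honest replica by $t+7\Delta$ from \Cref{lemma:synchronization_C}. Once the responsiveness argument has ruled out $NVC_r$, this closes the proof without the paper's bound on $t^*$, and even gives a slightly earlier commit time.

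Drop the fallback, though. Knowing only that no timer starts before $t$ rules out $\NoVote$s before $t+4\Delta$, but $LC_r$ may arrive as late as $t+7\Delta$. That argument therefore cannot exclude an $NVC_r$ forming in between, and your case analysis would not go through.
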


\begin{proof}
We begin by establishing a bound on the time at which the timer for round $r$ is started. By \Cref{lemma:synchronization_C}, all honest replicas are guaranteed to enter round $r$ or higher by time $t + 4\Delta$. If no honest replica has advanced to a higher round by time $t + 5\Delta$, it is certain that all honest replicas will receive $2f+1$ \textit{first messages} for round $r$ and start the timer for round $r$ by time $t + 5\Delta$. Conversely, consider the scenario where the first honest replica does advance to a higher round. This replica must have delivered $2f+1$ vertices from rounds $\ge r$, implying that it must have initiated the timer prior to this advancement. 

Let $t^* \le t + 5\Delta$ denote the time when the first honest replica \node{i} initiates the timer for round $r$. Since the set of $2f+1$ first messages received by \node{i} for round $\ge r$ must contain at least $f+1$ messages from honest replicas, it follows from Line~\ref{line:jumping_C} that all honest replicas will enter round $r$ or a higher round by time $t^* + \Delta$.

Analogous to \Cref{lemma:leader_commit_R}, no honest replica can enter a higher round via $NVC$ before $t^*+\tau$. We proceed by considering an arbitrary honest replica \node{j} and discussing the following cases:

If \node{j} has not entered a round higher than $r$ by time $t^*+\tau$, then by \Cref{property:CBC} and the preceding argument, it is guaranteed to receive at least $2f+1$ certificates from rounds $\geq r$ (including $LC_r$) by $t^*+\tau$. Consequently, \node{j} will enter round $r+1$ and create a vertex referencing $LC_r$. If \node{j} enters round $r+1$ before $t^*+\tau$, given the non-existence of $NVC_r$, the vertex it creates for round $r+1$ must reference $LC_r$. If \node{j} enters a round higher than $r+1$ before $t^*+\tau$ and skips round $r+1$, it must have invoked the $\mathsf{send\_fast\_vote}$ procedure covering round $r+1$. 
Since the first honest replica to enter round $r+2$ possesses $LC_r$ and has not sent $\tuple{\NoVote, r}$, \node{j} will receive $LC_r$ within $\tau'=\Delta$ and subsequently broadcast a fast-vote for it.

In summary, all honest replicas will cast a vote for $LV_r$ (either via a first message or a fast-vote) by time $t^*+\tau$. This guarantees that $LV_r$ is directly committed by all honest replicas before $t^*+\tau+\Delta \leq t+10\Delta$.
\end{proof}

Since CBC also satisfies the \textit{Integrity} property, CBC-based \name satisfies the \textit{Integrity} property, and the proof is therefore omitted. 

\begin{theorem}\label{thm:integrity_C}
CBC-based \name satisfies \textbf{Integrity}.
\end{theorem}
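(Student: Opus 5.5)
The plan mirrors the proof of \Cref{thm:integrity_R}. Integrity asks that, for every round $r$ and replica $\node{k}$, an honest replica $\node{i}$ outputs $a\_deliver_i(m,r,\node{k})$ at most once, whatever $m$ is. Since order\_vertices is unchanged in CBC-based \name, I would trace when an $a\_deliver$ output can happen.

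\emph{Step 1: outputs come only from DAG vertices.} An output $a\_deliver_i(v.block, v.round, v.source)$ occurs only inside order\_vertices. There $v$ ranges over vertices in $\bigcup_{r>0} DAG_i[r]$ reachable from a committed leader. In CBC-based \name, an entry enters $DAG_i[r]$ only through the $c\_deliver_i(v,r,p)$ handler (Line~\ref{line:c_deliver}), after its certificate is validated. So every output is preceded by $c\_deliver_i(v, v.round, v.source)$, with the block fetched through data fetching if needed.

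\emph{Step 2: one vertex per position.} By the \emph{Integrity} property of CBC, $\node{i}$ outputs $c\_deliver_i(\cdot, r, \node{k})$ at most once per pair $(r,\node{k})$. Hence $DAG_i[r]$ contains at most one entry with source $\node{k}$, which is also the content of \Cref{fact:ne} applied locally. Since certificates bind to $hash(v)$, a later data fetch cannot swap in a different block for the same position.

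\emph{Step 3: no duplicate outputs.} The $deliveredVertices$ set ensures each vertex is output at most once. Combined with Step 2, at most one $a\_deliver_i(\cdot,r,\node{k})$ is ever produced.

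The only point needing care is Step 2. In the CBC variant the DAG stores certificates and fetches content lazily, so I must argue that fetching cannot yield two different blocks for the same certificate. This follows from collision resistance of the hash signed in $cert(v)$.
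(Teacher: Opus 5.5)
Your proposal is correct and takes the paper's route. The paper simply reuses the argument of \Cref{thm:integrity_R}: an $a\_deliver$ output requires the vertex to be in $DAG_i$, hence to have been $c\_deliver$ed, so the \emph{Integrity} of CBC carries over directly. Your explicit appeal to $deliveredVertices$ supplies a step the paper leaves implicit. The collision-resistance point, however, is not needed for Integrity, since outputs are counted once per DAG position whatever content is fetched; it matters for Agreement instead.
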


Regarding the \textit{Total order} property, based on the argument in \Cref{ss:analysis_C}, all lemmas and their corresponding proofs from basic \name can be directly applied to CBC-based \name. Consequently, we directly obtain the following theorem:

\begin{theorem}\label{thm:total_order_C}
CBC-based \name satisfies \textbf{Total order}.
\end{theorem}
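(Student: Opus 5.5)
The plan is to reuse the chain of arguments given for basic \name (\Cref{lemma:single_order}, \Cref{lemma:pair_order}, \Cref{thm:total_order_R}). The only non-routine ingredient is the CBC analogue of \Cref{lemma:leader_path}. First I would re-establish that lemma for CBC-based \name. Suppose an honest replica directly commits $LV_r$, i.e.\ it has seen $2f+1$ votes (first messages or fast-votes). Then a set $\mathcal{H}$ of at least $f+1$ honest voters exists. Each voter in $\mathcal{H}$ has set $lastLeader$ to $LC_r$ or something higher (Line~\ref{line:last_leader_C}) and has never sent $\tuple{\NoVote,r}$, by the condition at Line~\ref{line:fv_condition_C}. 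Hence $NVC_r$ cannot form, so $LV_{r+1}$ must carry a strong edge to $LV_r$. For $r'\ge r+2$, I would induct on $r'$. Strong edges of $LV_{r'}$ now point to $2f+1$ certificates from distinct sources in rounds $\ge r'-1$ (Line~\ref{line:strong_edge_C}). Quorum intersection still yields a referenced vertex from some $\node{k}\in\mathcal{H}$, and that vertex was created after \node{k} voted. Its leader edge therefore points to some $LV_{r^*}$ with $r\le r^*<r'$, and the inductive hypothesis finishes the argument.

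The main obstacle I expect is exactly this step. Under CBC, the referenced vertex of \node{k} may lie in a round $\ge r'$ rather than $r'-1$. I would handle this by indexing the induction on the round of the vertex, or simply noting that the leader edge always points to a strictly earlier round. I must also check that the vertex created by \node{k} postdates its vote for $LV_r$. This holds because the vote is cast for round $r+1$ or via fast-vote before \node{k} creates any vertex in a round $\ge r+2$. A second point to settle is that certificates, rather than full contents, suffice. Fact~\ref{fact:ne} holds by CBC Consistency, so leader paths are evaluated identically once the contents are fetched.

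With the leader-path lemma in hand, \Cref{lemma:single_order} and \Cref{lemma:pair_order} carry over verbatim, since commit\_leader is deterministic given the (identical) vertices. Chaining pairs of consecutive direct commits across all honest replicas then yields a common sequence of committed leaders. Finally, order\_vertices traverses the causal histories deterministically, and by Fact~\ref{fact:ne} these histories coincide once fetched. This gives the same a\_deliver order, so Total order follows.
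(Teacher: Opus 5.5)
Your overall route is the paper's: it simply asserts that \Cref{lemma:leader_path} carries over because only the form of references changes, after which \Cref{lemma:single_order}, \Cref{lemma:pair_order} and \Cref{thm:total_order_R} apply verbatim. However, neither remedy you propose closes the step you yourself flag as the main obstacle.

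In CBC-based Clownfish the strong edges of $LV_{r'}$ are bounded only from below (rounds $\ge r'-1$). So the vertex $u$ of $p_k\in\mathcal{H}$ that quorum intersection gives you may lie in a round $\rho\ge r'+2$, and its leader edge may then point to $LV_{r'+1}$. For instance: $NVC_{r'}$ forms, $LV_{r'+1}$ carries it and gets certified, and $p_k$ references $LV_{r'+1}$ from round $r'+2$. Only afterwards does a Byzantine $L_{r'}$ build a valid $LV_{r'}$ with a strong edge to $u$. Then $r^*<r'$ fails, and the hypothesis for rounds in $(r,r')$ says nothing about $LV_{r^*}$.

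Your first remedy, observing that a leader edge points below the round of its own vertex, only gives $r^*<\rho$. Your second, inducting on the round of $u$, fails too, because $LV_{r^*}$ may in turn reference an $\mathcal{H}$-vertex from a round above $\rho$. The rounds along the chain $LV_{r'}\to u\to LV_{r^*}\to\cdots$ need not decrease at all.

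What does decrease is causal depth. Every reference is to an already certified vertex, so the DAG is acyclic and causal histories are finite. Hence $LV_{r^*}$ is a strict causal predecessor of $u$, and therefore of $LV_{r'}$; in particular $r^*\neq r'$. So prove ``every valid leader vertex in a round $>r$ has a leader path to $LV_r$'' by well-founded induction on causal depth:
\begin{itemize}
\item The round-$(r+1)$ case uses the impossibility of $NVC_r$ exactly as you do.
\item For rounds $\ge r+2$, your intersection step yields $LV_{r^*}$ with $r^*\ge r$, and the hypothesis applies to it whenever $r^*>r$.
\end{itemize}
With this replacement, the rest of your plan goes through as in the paper: the timing of fast-votes before $p_k$'s later vertices, Fact~\ref{fact:ne} for certificates, and the verbatim transfer of the ordering lemmas.
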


By substituting the application of \Cref{lemma:synchronization_R} with \Cref{lemma:synchronization_C} in the proof of \Cref{lemma:keep_higher}, we obtain the following lemma.

\begin{lemma}\label{lemma:keep_higher_C}
All honest replicas continuously enter higher rounds.
\end{lemma}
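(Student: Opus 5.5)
We follow the proof of \Cref{lemma:keep_higher}, replacing each use of \Cref{lemma:synchronization_R} by \Cref{lemma:synchronization_C} and each RBC-specific fact by \Cref{property:CBC} together with the explicit round synchronization mechanism. Fix a time after GST and let $r$ be the smallest round such that every honest replica is in round $r$ or higher. The argument splits into two cases.

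\textbf{Case 1: some honest replica enters a round $r'>r$.} Consider the first honest replica to enter such a round $r'$. \Cref{lemma:synchronization_C} then guarantees that all honest replicas reach round $r'$ or higher within $4\Delta$. If that first entry happened before GST, we instead note that the certificates, $\NewRound$ messages and $LC/NVC$ forwards it sent arrive by $GST+\Delta$. The same argument as in \Cref{lemma:synchronization_C}, run from time $GST$, then gives the same conclusion. In this case progress is immediate.

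\textbf{Case 2: all honest replicas remain in round $r$.} We argue this cannot last forever. The first step is to show that each honest replica, including $L_r$, actually broadcasts its round-$r$ vertex.
\begin{itemize}[noitemsep]
\item Non-leaders do so immediately upon entering round $r$.
\item $L_r$ must wait at Line~\ref{line:leader_advance_C} for $2f+1$ certificates from rounds $\ge r-1$. It obtains these from the $\NewRound$ message sent by the first honest replica to enter round $r$, since that replica collected such certificates before advancing.
\end{itemize}
By \Cref{property:CBC}, every honest replica then receives certificates of at least the $2f+1$ honest round-$r$ vertices within $3\Delta$. Each honest replica therefore also receives $2f+1$ first messages, so its round timer (Line~\ref{line:timer}) eventually starts. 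Two sub-cases follow.
\begin{itemize}[noitemsep]
\item \emph{Some honest replica receives $LC_r$.} By rule~1 it advances, and it broadcasts $LC_r$ at Line~\ref{line:LC_sync}. Every honest replica then either advances via rule~1 or skips via rule~3 (Line~\ref{line:jumping_LC}).
\item \emph{No honest replica receives $LC_r$.} Every honest replica times out and broadcasts $\tuple{\NoVote,r}$. Each then assembles $NVC_r$ and, having $2f+1$ certificates, advances by Line~\ref{line:advance_C}.
\end{itemize}
Either way all honest replicas move past round $r$, contradicting the assumption of Case 2.

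\textbf{Expected obstacle.} The delicate point is the availability of certificates without the agreement property of RBC. Byzantine certificates may reach only some replicas, and the leader may be stalled waiting for them. We handle this by counting only honest-origin certificates, which \Cref{property:CBC} delivers to everyone, and by relying on the $\NewRound$ and $LC_r$ forwarding mechanisms exactly as in \Cref{lemma:synchronization_C}. The pre-GST case needs a brief note that pending messages are delivered by $GST+\Delta$.
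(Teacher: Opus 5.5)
Your proof is correct and follows the paper's route: the paper simply reruns the proof of \Cref{lemma:keep_higher} with \Cref{lemma:synchronization_C} in place of \Cref{lemma:synchronization_R}, and your extra CBC-specific steps ($L_r$ clearing its wait via the $\NewRound$ message, the delayed timer eventually starting, and $LC_r$ forwarding) spell out what that substitution implicitly relies on. One small slip: $r$ should be the \emph{largest} round such that every honest replica is in round $r$ or higher (i.e., the minimum current round among honest replicas), not the smallest.
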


By substituting the application of \Cref{lemma:keep_higher} with \Cref{lemma:keep_higher_C}, and the application of \Cref{lemma:leader_commit_R} with \Cref{lemma:leader_commit_C} in the proof of \Cref{thm:validity}, we establish the \textit{Validity} property, and further obtain the \textit{Agreement} property.

\begin{theorem}\label{thm:validity_C}
CBC-based \name satisfies \textbf{Validity}.
\end{theorem}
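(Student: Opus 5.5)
The plan mirrors the proof of \Cref{thm:validity}, substituting the CBC liveness lemmas for the RBC ones. Let \node{i} be an honest replica that calls \Call{a\_bcast}{$b$}, so $b$ sits in $blocksToPropose$. By \Cref{lemma:keep_higher_C}, \node{i} keeps entering higher rounds. Each entry invokes \Call{broadcast\_vertex}{} and dequeues one block, so \node{i} eventually creates a vertex $v$ in some round $r$ with $v.block=b$. Since \node{i} is honest, \Cref{property:CBC} (once GST has passed) guarantees that every honest replica receives $cert(v)$ and adds it to $DAG$ by the c\_deliver handler. In CBC-based \name this certificate alone suffices to place $v$ in the local DAG, so no RBC agreement is needed at this step.

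Next I argue that $v$ eventually enters the causal history of some committed leader vertex. Consider rounds after GST whose main leader is honest; the deterministic leader rotation produces infinitely many such rounds. By \Cref{lemma:leader_commit_C}, each such $LV_{r'}$ is directly committed by every honest replica. Any honest leader $L_{r'}$ with $r'\ge r+2$ that holds $cert(v)$ when creating $LV_{r'}$ will reference $v$. Either $v$ already has a path from the new vertex, or \Call{set\_weak\_edges}{} adds $v$ as a weak edge. If $r'=r+1$, $v$ may appear directly among the strong edges. Once such a committed $LV_{r'}$ has a path to $v$, \Call{order\_vertices}{} outputs \Call{a\_deliver}{$b,r,\node{i}$} at every honest replica.

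The main obstacle is that under CBC the contents of vertices, and hence their edges, may not be known locally when ordering. Traversing the causal history therefore needs the data-fetching mechanism to terminate. I would handle this by invoking the random-pull guarantee: at least $f+1$ honest replicas hold each certified vertex's content, so fetching completes eventually. Because fetching lies off the critical path, it delays delivery but does not prevent it, which is all Validity requires. A second subtlety is the validity of weak edges when honest leaders have not yet received $cert(v)$. This is resolved by picking $r'$ large enough that $cert(v)$ has reached all honest replicas by \Cref{property:CBC}. The Agreement property then follows exactly as in \Cref{thm:agreement}, using \Cref{lemma:leader_path} with its CBC analogue.
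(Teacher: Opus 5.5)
Your proposal is correct and takes essentially the same route as the paper, which proves this theorem by reusing the proof of \Cref{thm:validity} with \Cref{lemma:keep\_higher\_C} in place of \Cref{lemma:keep\_higher} and \Cref{lemma:leader\_commit\_C} in place of \Cref{lemma:leader\_commit\_R}. Your extra CBC-specific details only spell out what that substitution leaves implicit: that the certificate alone places $v$ in the local DAG, that an honest leader in round $r'\ge r+2$ reaches $v$ via a path or a weak edge, and that data fetching terminates before ordering. One small adjustment is that eventual receipt of $cert(v)$ is more cleanly justified by the \textit{Validity} property of CBC than by \Cref{property:CBC}, since $v$ may have been broadcast before GST.
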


\begin{theorem}\label{thm:agreement_C}
CBC-based \name satisfies \textbf{Agreement}.
\end{theorem}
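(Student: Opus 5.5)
The plan is to follow the proof of \Cref{thm:agreement} essentially verbatim, replacing the RBC-specific ingredients with their CBC counterparts.

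Suppose an honest replica \node{i} outputs \Call{a\_deliver$_i$}{$v.block, v.round, v.source$}. By the pseudocode for order\_vertices, $v$ then lies in the causal history of some leader vertex $LV_r$ that \node{i} committed, directly or indirectly. Choose the directly committed leader vertex $LV_{r_0}$ with $r_0 \ge r$ whose commit\_leader invocation pushed $LV_r$ onto the stack. There is then a leader path from $LV_{r_0}$ to $LV_r$, and hence a path from $LV_{r_0}$ to $v$.

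Next, by \Cref{lemma:keep_higher_C} and \Cref{lemma:leader_commit_C}, every honest leader vertex created after GST is eventually directly committed by all honest replicas. This is exactly what yields \Cref{thm:validity_C}. So any honest \node{j} eventually directly commits some $LV_{r'}$ with $r' \ge r_0$. By the CBC analogue of \Cref{lemma:leader_path} (its proof is unchanged, as argued in the key safety argument) and by \Cref{lemma:single_order}, \node{j} commits $LV_{r_0}$. By \Cref{lemma:pair_order}, \node{j} also commits the same leader sequence in the same order as \node{i}, so $LV_r$ is committed by \node{j} as well. Hence, when \node{j} runs order\_vertices, it traverses the causal history of $LV_r$, which contains $v$.

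The main obstacle, and the one genuinely new point relative to the RBC case, is that under CBC \node{j} may hold only the certificates of vertices on the path to $v$, not their contents. In that case \node{j} cannot yet evaluate $path(LV_r, v)$ or output $v.block$. I would handle this in two steps:
\begin{itemize}
\item By CBC consistency (\Cref{fact:ne}), every certificate \node{j} holds or learns along the path refers to the same vertex that \node{i} saw.
\item Each certified vertex was signed by $2f+1$ replicas, at least $f+1$ of them honest, so its content is stored by an honest replica. The random-pull data fetching therefore eventually retrieves the content, and recursively the entire finite causal history of $LV_r$.
\end{itemize}
Once the fetching completes, \node{j}'s deterministic traversal produces the same $toDeliver$ set and the same order as \node{i}'s, and in particular \node{j} outputs \Call{a\_deliver$_j$}{$v.block, v.round, v.source$}. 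I expect the care to go into stating precisely that ordering waits for fetching, without blocking the consensus critical path, and that fetching terminates after GST.
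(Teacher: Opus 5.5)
Your proposal is correct and takes the paper's route. The paper proves this theorem by rerunning the proof of \Cref{thm:agreement}, with \Cref{lemma:keep_higher_C} and \Cref{lemma:leader_commit_C} substituted into the Validity argument, exactly as you do; your detour through the directly committed $LV_{r_0}$ and \Cref{lemma:pair_order} is a more careful version of the paper's terse step, and your data-fetching paragraph makes explicit a point the paper leaves implicit (which should also cover the contents read by the leader\_path checks inside commit\_leader, not only order\_vertices).
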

\subsection{Correctness Proof of Multi-leader \name}\label{a:correctness_ML_R}

For rounds $r, r'$ and indices $x, x'$, we define the ordering $(r', x') \succeq (r, x)$ if and only if $(r' = r \land x' \ge x) \lor (r' > r)$ is true. When we refer to a leader edge $v^*$ as being the ``highest'' within a specific set of leader edges, it implies that the tuple $(v^*.round, v^*.index)$ associated satisfies the aforementioned relation $\succeq$ with respect to the corresponding tuples of all other leader edges in the set.

We say that a leader vertex $\MLV_r[x]$ is \textit{directly committed} by \node{i} if \node{i} invokes \Call{commit\_leader}{$r$} and $\MLV_r[x]$ is present in the corresponding $\CLS$ (Line~\ref{line:CLS}). 
A leader vertex is \textit{indirectly committed} if it is pushed onto the $leaderStack$ within $\CMV$ (Line~\ref{line:CMV}).

The argument for the \textit{Integrity} property is identical to \Cref{thm:integrity_R} and is therefore omitted.

\begin{theorem}\label{thm:integrity_ML_R}
Multi-leader \name satisfies \textbf{Integrity}.
\end{theorem}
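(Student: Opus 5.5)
The argument follows \Cref{thm:integrity_R} almost verbatim. The claim to establish is this: for every round $r$ and replica $\node{k}$, an honest replica $\node{i}$ outputs $a\_deliver_i(m,r,\node{k})$ at most once. The first step is to locate the only place in Multi-leader \name where $a\_deliver$ is output, namely the modified order\_vertices procedure (Line~\ref{line:order_ML}). That procedure outputs $a\_deliver_i(v'.block, v'.round, v'.source)$ only for vertices $v'$ that are already in $\bigcup_{r>0}DAG_i[r]$. Hence $v'$ was previously added to the local DAG through $r\_deliver_i(v', v'.round, v'.source)$ together with the validity check $v.source = p \land v.round = r$.

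Next I would rule out two deliveries for the same pair $(r,\node{k})$, using two facts. First, every output vertex is immediately inserted into $deliveredVertices$. Second, $toDeliver$ explicitly excludes vertices already in that set. Together these show that a single vertex is never output twice, even though the nested iteration over $\CMV$ and over several stack entries may reach the same vertex through several causal histories. Third, two distinct vertices with the same round and source cannot both sit in $DAG_i$. This follows from the \textit{Integrity} property of RBC: $\node{i}$ $r\_delivers$ at most one message per $(r,\node{k})$. It also follows from \Cref{fact:ne}. Combining these gives at most one output per $(r,\node{k})$, regardless of $m$.

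The only point needing any care, and it is mild, is the multi-leader ordering loop. There, one vertex can lie in the causal history of several leaders in the same $\CMV$ list, or of leaders pushed in different commit\_leaders invocations. The $deliveredVertices$ bookkeeping is global and persistent across invocations, so this causes no duplication. I do not expect any real obstacle; the proof reduces directly to RBC Integrity, as in \Cref{thm:integrity_R}.
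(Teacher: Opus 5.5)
Your proposal is correct and takes the same approach as the paper, which omits the proof as identical to that of \Cref{thm:integrity_R}: every a\_delivered vertex already lies in $DAG_i$, so it was r\_delivered, and RBC \textit{Integrity} leaves at most one vertex per $(r,\node{k})$. Your use of the persistent $deliveredVertices$ set to rule out re-outputting the same vertex across the $\CMV$ lists only makes explicit a step the paper leaves implicit.
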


Next, we establish the \textit{Total order} property. The overall approach aligns with the proof for \name.

\begin{lemma}\label{lemma:leader_path_ML_R}
If an honest replica \node{i} directly commits a leader vertex $\MLV_r[x]$, then for any valid main leader vertex $LV_{r'}$ in round $r' > r$, there exists a leader path from $LV_{r'}$ to $\MLV_r[x]$.
\end{lemma}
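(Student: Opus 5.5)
The plan is to follow the proof of \Cref{lemma:leader_path}: first extract a set $\mathcal{H}$ of at least $f+1$ honest voters, then argue by strong induction on $r'$. Throughout I use the order $\succeq$ on (round, index) pairs.

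\textbf{Step 1: the honest voters.} Since \node{i} directly commits $\MLV_r[x]$, it also directly committed $LV_r$ in the same call of commit\_leaders. For $x>1$ it moreover holds $2f+1$ first messages of round $r+1$ vertices whose leader edges point to round $r$ with index $\ge x$. For $x=1$ it holds $2f+1$ votes, each a first message with a leader edge into round $r$ or a fast-vote. In both cases at least $f+1$ of these come from honest replicas; call this set $\mathcal{H}$.

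\textbf{Step 2: invariants of $\mathcal{H}$.} I would establish two facts.
\begin{enumerate*}[label=(\roman*)]
\item Every replica in $\mathcal{H}$ had delivered all of $\MLV_r[:x]$ when it voted. Its leader edge (or fast-vote) certifies this, via Lines~\ref{line:last_leader_1} and~\ref{line:last_leader_2}. Hence it never sends $\tuple{\NoVote,\ell,r}$ for any $\ell\in\ML_r[:x]$. This covers both the timeout rule and Line~\ref{line:novote_ML}. By quorum intersection, no $NVC_r^{\ell}$ with $\ell\in\ML_r[:x]$ can exist; in particular $NVC_r$ cannot exist.
\item From then on, $lastLeader$ of each replica in $\mathcal{H}$ stays $\succeq (r,x)$. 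The only index-decreasing update, Line~\ref{line:last_leader_2}, touches the leader of the round just left and only lowers the index to skip an undelivered leader. By (i), it can never drop below $x$ in round $r$. So every later leader edge created by $\mathcal{H}$ is $\succeq(r,x)$.
\end{enumerate*}

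\textbf{Step 3: base case $r'=r+1$.} Because $NVC_r$ does not exist, $LV_{r+1}$ cannot be in the ``else'' branch of create\_new\_vertex. So it takes strong edges to $\MLV_r[:y]$ and, if $y<|\ML_r|$, carries $NVC_r^{\ML_r[y+1]}$. By (i) this $NVC$ cannot exist for $y+1\le x$, so $y\ge x$. Hence $LV_{r+1}$ has a strong edge directly to $\MLV_r[x]$.

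\textbf{Step 4: inductive case $r'\ge r+2$.} Assume the claim for every valid main leader vertex in rounds $r<r''<r'$. The $2f+1$ strong edges of $LV_{r'}$ hit some vertex $u$ of a replica in $\mathcal{H}$ from round $r'-1$. By (ii), $u$'s leader edge is $\succeq(r,x)$. Two subcases arise.
\begin{itemize}
\item If $LV_{r'}$ strongly references $LV_{r'-1}$, the induction hypothesis applied to $LV_{r'-1}$ gives a leader path from it to $\MLV_r[x]$; prepend the strong edge.
\item Otherwise $LV_{r'}$ carries $NVC_{r'-1}$ and auxiliary edges to $\ML_{r^*}[:x^*]$, where $(r^*,x^*)$ is the highest leader edge among its strong-edge targets. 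Then $(r^*,x^*)\succeq$ the leader edge of $u\succeq(r,x)$.
\begin{itemize}
\item If $r^*=r$, then $x^*\ge x$, so an auxiliary edge goes directly to $\MLV_r[x]$.
\item If $r^*>r$, an auxiliary edge goes to the main vertex $LV_{r^*}$ with $r<r^*\le r'-1$. The induction hypothesis gives the path, and transitivity finishes.
\end{itemize}
\end{itemize}
The aux-edge targets are delivered by the wait condition (Line~\ref{line:end_nexist}) and the RBC properties, so these edges lie in $\mathsf{leader\_path}$.

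\textbf{Expected obstacle.} The delicate step is invariant (ii). I must check that the rollback of $lastLeader$ at Line~\ref{line:last_leader_2}, and fast-votes cast after round-skipping, never let an $\mathcal{H}$ replica later emit a leader edge below $(r,x)$. The same check shows that fast-voters (relevant only when $x=1$) do not later no-vote $L_r$. I would verify this by tracking, for each $\mathcal{H}$ replica, the moment it casts its vote: it cannot have sent the forbidden $\NoVote$ before that moment (by the fast-vote condition at Line~\ref{line:fv_condition} and the $lastLeader$ guard), and afterwards it has already left round $r$.
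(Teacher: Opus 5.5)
Your outline follows the paper's proof essentially line by line: the same voter set $\mathcal{H}$, the same monotonicity of $lastLeader$, and the same induction with the $NVC$-exclusion base case and the strong-edge/auxiliary-edge split. Your base case is argued more carefully than the paper's. There is, however, a hole in Step~4, and the paper's proof has the same hole. Invariant~(ii) only constrains leader edges created \emph{after} a replica votes. The vertex $u$ that quorum intersection gives you in round $r'-1$ need not postdate its owner's vote. For first-message voters this is harmless, because $u$ is then the voting round-$(r+1)$ vertex or a later one. For a fast-voter (the case $x=1$) it fails. A replica in round $\le r$ that skips on $NVC_{R-1}$ to some $R\ge r+2$ (Line~\ref{line:jumping}) can create its round-$R$ vertex with $leaderEdge=lastLeader$ still below round $r$. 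Only afterwards does it deliver $LV_r$ and send the fast-vote that the committer counts. Your ``expected obstacle'' check looks only at what happens after the vote, so it misses this vertex.

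If fast-votes run in the background after the skip, as your obstacle paragraph and the paper's prose (``after skipping rounds'') suggest, the lemma itself fails. Take $n=4$ with \node{4} Byzantine, $x=1$ and $L_r=\node{1}$.
\begin{itemize}
\item \node{3} no-votes $L_r$ and only then delivers $LV_r$, so its $lastLeader$ stays stale (Line~\ref{line:novote_main}).
\item \node{1} and \node{4} vote via round-$(r+1)$ leader edges.
\item $L_{r+1}=\node{2}$ lags, so $NVC_{r+1}$ forms. \node{2} skips to round $r+2$ by rule~2 and creates a vertex with a stale leader edge.
\item Only then does \node{2} deliver $LV_r$ and fast-vote, so \node{1} commits $LV_r$ with $\mathcal{H}=\{\node{1},\node{2}\}$.
\item $L_{r+2}=\node{1}$ is no-voted by \node{2}, \node{3}, \node{4}.
\item An honest $L_{r+3}=\node{3}$ then builds $LV_{r+3}$ from $NVC_{r+2}$ and the round-$(r+2)$ vertices of \node{2}, \node{3}, \node{4}. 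Every leader, strong and auxiliary edge out of it drops below round $r$, so no leader path reaches $LV_r$.
\end{itemize}

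What rescues the argument is the literal call order in the pseudocode. $\mathsf{send\_fast\_vote}$, which waits on the skipped leader vertices until $\tau'$ expires, completes before $\mathsf{advance\_round}$ builds the new vertex. So a fast-voter has delivered $LV_r$, and hence has $lastLeader\succeq(r,1)$ by Line~\ref{line:receive_main}, before it creates any vertex in a round $\ge r+2$. You must state this ordering explicitly and use it to show that $u$ postdates its owner's vote. The paper's claim about leader edges created ``in subsequent rounds'' silently relies on the same fact.
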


\begin{proof}
Since \node{i} directly commits $\MLV_r[x]$, there must exist at least $2f+1$ votes for $\MLV_r[x]$. These votes comprise first messages from vertices in round $r+1$ with leader edges higher than $(r,x)$, as well as fast-votes. The latter exist only for $LV_r$. Among these, at least $f+1$ originate from honest replicas. Let $\mathcal{H}$ denote this set of honest replicas. According to Lines~\ref{line:last_leader_1} and \ref{line:last_leader_2} of the protocol, the $lastLeader$ variable held by any replica in $\mathcal{H}$ must satisfy $(lastLeader.round, lastLeader.index) \succeq (r,x)$. Furthermore, these replicas will not broadcast $\tuple{\NoVote, \ML_r[y], r}$ for any $y \leq x$. Consequently, any leader edge $l$ of the vertices created by these replicas in subsequent rounds is guaranteed to satisfy $(l.round, l.index) \succeq (r,x)$. We now proceed by induction on $r'$.

\noindent\textbf{Case $r' = r+1$:} 
Since $|\mathcal{H}| \ge f+1$, the property of quorum intersection guarantees that for any $\ell \in \ML_r[:x]$, an $NVC_r^{\ell}$ cannot be formed. Therefore, if a valid leader vertex $LV_{r+1}$ exists, it must reference $\MLV_r[x]$ via a strong edge.

\noindent\textbf{Case $r' \ge r+2$:} 
We assume the inductive hypothesis holds for all leader vertices in rounds $r''$ such that $r < r'' < r'$. Since $LV_{r'}$ references at least $2f+1$ vertices from round $r'-1$ via strong edges, quorum intersection implies that at least one of these referenced vertices originates from a replica in $\mathcal{H}$. We denote one such replica as \node{k}. Based on the property of set $\mathcal{H}$ derived earlier, the leader edge $l$ of the vertex created by \node{k} in round $r' - 1$ satisfies $(l.round, l.index) \succeq (r, x)$. If $LV_{r'}$ references $LV_{r'-1}$ via strong edge (Lines~\ref{line:start_exist}--\ref{line:end_exist}), then by inductive hypothesis and transitivity, a leader path exists from $LV_{r'}$ to $\MLV_r[x]$. If $LV_{r'}$ does not reference $LV_{r'-1}$ (Lines~\ref{line:start_nexist}--\ref{line:end_nexist}), the highest leader edge $v^*$ of round $r'-1$ vertices referenced by $LV_{r'}$ must satisfy $(v^*.round, v^*.index) = (r^*, x^*) \succeq (r, x)$ (since it is higher than or equal to \node{k}'s leader edge). If $r^* = r$, $LV_{r'}$ must connect to $\MLV_r[x]$ via $auxEdges$. If $r^* > r$, $LV_{r'}$ must connect to $LV_{r^*}$ via $auxEdges$. By the inductive hypothesis and transitivity, a leader path exists between $LV_{r'}$ and $\MLV_r[x]$.
\end{proof}

By substituting the application of \Cref{lemma:leader_path} with \Cref{lemma:leader_path_ML_R} in the proof of \Cref{lemma:single_order}, and based on the pseudocode for commit\_leaders, we directly obtain the following lemma:

\begin{lemma}\label{lemma:single_order_main}
If an honest replica \node{i} directly commits main leader vertex $LV_r$, and an honest replica \node{j} directly commits main leader vertex $LV_{r'}$ with $r' \ge r$, then \node{j} either directly or indirectly commits $LV_r$.
\end{lemma}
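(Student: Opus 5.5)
The plan mirrors the proof of \Cref{lemma:single_order}, with \Cref{lemma:leader_path_ML_R} in place of \Cref{lemma:leader_path} and commit\_leaders in place of commit\_leader.

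First, the case $r'=r$. Here \node{j} directly commits $LV_{r'}=LV_r$, and by \Cref{fact:ne} this is the same vertex that \node{i} commits. Note that a direct commit of any leader in round $r$ always begins with $\CLS[0]=LV_r$ (Line~\ref{line:CLS}), so the main leader is always included in $\CLS$.

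Second, the case $r'>r$. Since \node{i} directly commits $LV_r=\MLV_r[1]$, \Cref{lemma:leader_path_ML_R} with $x=1$ gives a leader path from $LV_{r'}$ to $LV_r$. Moreover, every valid main leader vertex $LV_{r''}$ with $r<r''<r'$ also has a leader path to $LV_r$.

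Next I trace commit\_leaders for \node{j}, using strong induction on $r'$ (equivalently, on the sequence of direct commits by \node{j}).
\begin{itemize}
\item If \node{j} has already directly committed some $LV_{r''}$ with $r\le r''<r'$, then by the inductive hypothesis that earlier commit already committed $LV_r$. Since $committedRound\ge r$, nothing needs to be reproved.
\item Otherwise $committedRound<r$ when \node{j} commits $LV_{r'}$, so the while loop (Lines~\ref{line:start_indirect_commit}--\ref{line:end_indirect_commit}) visits round $r$. I must show that at round $r$ the first entry tested, \Call{leader\_path}{$v',LV_r$}, succeeds, where $v'$ is the current anchor.
\end{itemize}

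The anchor $v'$ is always a main leader vertex. It is $\CLS[0]=LV_{r'}$ initially, and is updated only to $\CMV[0]$, which is $\MLV_{r''}[1]=LV_{r''}$ because the loop over $x$ breaks at the first failure. Such a $v'$ lies in some round $r''>r$ and is valid, because it was delivered. Hence \Cref{lemma:leader_path_ML_R} gives a leader path from $v'$ to $LV_r$, and $LV_r$ is pushed in $\CMV$ for round $r$, i.e.\ indirectly committed.

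The main obstacle is this anchor-tracking step. Multi-leader indirect commit may skip rounds where $\CMV$ is empty, and it only continues from main leaders. So I must check carefully that the anchor is never a secondary leader, and that the leader path guaranteed by \Cref{lemma:leader_path_ML_R} from any later valid main leader suffices regardless of which intermediate anchor is current. A minor point: the path must exist in \node{j}'s local DAG. This holds because delivery via RBC requires all referenced edges (including $auxEdges$ and leader edges) to already be in the DAG, per try\_add\_to\_dag.
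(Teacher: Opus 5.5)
Your proposal is correct and follows the paper's route. The paper obtains this lemma by rerunning the proof of \Cref{lemma:single_order} with \Cref{lemma:leader_path_ML_R} in place of \Cref{lemma:leader_path} and appealing to the commit\_leaders pseudocode; your observation that the anchor is always $\CMV[0]$, a valid main leader vertex in a round above $r$, is exactly what justifies that appeal, and your local-DAG remark relies on try\_add\_to\_dag also checking $auxEdges$, which the multi-leader pseudocode leaves implicit, just as the paper does.
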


\begin{lemma}\label{lemma:single_order_all}
If an honest replica \node{i} directly commits leader vertices $\MLV_r[:x]$ (with $x>0$), and an honest replica \node{j} directly commits main leader vertex $LV_{r'}$ with $r' > r$, then \node{j} either directly or indirectly commits leader vertices $\MLV_r[:x]$.
\end{lemma}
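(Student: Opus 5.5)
The plan is to combine \Cref{lemma:single_order_main}, \Cref{lemma:leader_path_ML_R}, and the deterministic structure of commit\_leaders. Since \node{i} directly commits $\MLV_r[:x]$ with $x>0$, it directly commits the main leader $LV_r=\MLV_r[1]$. By \Cref{lemma:single_order_main}, \node{j}, which directly commits $LV_{r'}$ with $r'>r$, commits $LV_r$ either directly or indirectly. The remaining task is to show that, at the moment \node{j} handles round $r$ (directly or in the indirect loop), it also places $\MLV_r[2],\dots,\MLV_r[x]$ into the corresponding list.

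\textbf{Case 1: \node{j} directly commits some main leader in round $r$.} It invokes commit\_leaders$(r)$. Then I would have to show that \node{j}'s loop at Line~\ref{line:commit_secondary} reaches index $x$. This requires \node{j} to have seen $2f+1$ first messages with leader edges of index $\ge x$, and that is not implied by \node{i}'s observation. So in this case I would instead argue that $\MLV_r[:x]$ is committed later in the indirect loop triggered by $LV_{r'}$: if $\CLS$ for round $r$ stops early, the leader path from $LV_{r'}$ still reaches $\MLV_r[x]$. I expect this to be a subtle point.

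\textbf{Case 2 (main case): round $r$ is reached in the indirect loop (Lines~\ref{line:start_indirect_commit}--\ref{line:end_indirect_commit}) starting from some directly committed main leader $LV_{r''}$ with $r<r''\le r'$.} Applying \Cref{lemma:leader_path_ML_R} to each $\MLV_r[y]$ with $y\le x$ gives a leader path from every valid main leader vertex in a round $>r$ to $\MLV_r[y]$. Each $\MLV_r[y]$ with $y\le x$ is directly committed by \node{i}, since the loop at Line~\ref{line:commit_secondary} is monotone in $x$. The loop tracks $v'$, the most recently found main leader vertex. I would show by downward induction that when the loop reaches round $r$, $v'$ is a main leader vertex of some round in $(r, r'']$. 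It is indeed a main leader, because $\CMV[0]$ is always index $1$: the loop breaks at the first missing index, so a nonempty $\CMV$ begins with index $1$. Hence leader\_path$(v',\MLV_r[y])$ holds for every $y\le x$. Since the for-loop iterates $y=1,\dots,|\ML_r|$ and breaks only on failure, it does not break before $x$, and all of $\MLV_r[:x]$ are pushed in $\CMV$.

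\textbf{Obstacle.} One must verify that \node{j}'s local DAG actually contains the path. By \Cref{fact:ne} and RBC agreement, the vertices on the path are those \node{j} has delivered: \node{j} can only add $LV_{r''}$ after adding all vertices it references, recursively. The lemma's statement covers exactly the main leader vertex, so this holds. The second obstacle is Case 1, where \node{j} directly commits round $r$ itself but with a shorter $\CLS$. Here I would argue that, because committedRound is then set to $r$, the claim may fail. I would therefore check whether the intended meaning is that \node{j} commits $LV_{r'}$ without having directly committed round $r$. Otherwise, I would strengthen the direct-commit analysis by noting that the $2f+1$ first messages from $\mathcal H$-type quorums force index $\ge x$ in honest votes only after delivery, and fall back to the indirect argument.
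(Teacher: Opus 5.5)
Your Case~2 is, in substance, the paper's whole proof. \Cref{lemma:leader_path_ML_R} gives a leader path from every main leader vertex of a round $>r$ to each $\MLV_r[y]$ with $y\le x$. So once the indirect loop of commit\_leaders reaches round $r$, it pushes all of $\MLV_r[:x]$. Your remark that the pivot $v'$ is always a main leader, because a nonempty $\CMV$ starts at index $1$, is a detail the paper leaves implicit. The paper never treats your Case~1. Its split ``\node{j} has not yet directly committed any $LV_{r''}$ with $r<r''<r'$; otherwise induct on such $r''$'' tacitly assumes $committedRound<r$ when round $r$ is reached, i.e.\ that \node{j} did not directly commit round $r$ itself.

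Case~1 is a genuine gap in your proposal, and you were right to suspect the claim fails there. Neither of your repairs can close it. Falling back on the indirect loop of $LV_{r'}$ fails because, once \node{j} directly commits round $r$, $committedRound=r$, and every later call of commit\_leaders stops its loop at round $r+1$. Sharpening the direct-commit analysis fails because \node{j}'s $\CLS$ is computed once, from the first messages it holds when the main-leader condition fires (Line~\ref{line:commit_secondary}). That quorum may contain fast-votes and index-$1$ leader edges. Intersecting it with \node{i}'s quorum guarantees only $f+1$ common members, not $2f+1$ first messages of index $\ge x$.

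In fact the statement is false in this case. Take $n=4$, $f=1$, $x=2$. Let the honest $p_1,p_2,p_3$ give their round-$(r+1)$ vertices leader edges to $\MLV_r[2]$, and let the Byzantine $p_4$ give its vertex a leader edge to $LV_r$. If $p_1$ first receives the first messages of $p_1,p_2,p_3$, it directly commits $\MLV_r[:2]$. If $p_2$ first receives those of $p_1,p_2,p_4$, it directly commits round $r$ with $\CLS=[LV_r]$. When $p_2$ later directly commits $LV_{r+1}$, its indirect loop begins at round $r$ and immediately fails the test $r>committedRound$. So $p_2$ never commits $\MLV_r[2]$.

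Thus your Case~2 argument, which is equivalent to the paper's, proves the lemma exactly under the added hypothesis that \node{j} has not directly committed round $r$. Note that \Cref{lemma:pair_order_ML_R} likewise only obtains ``at least $LV_{r_i'}$'' in the same-round case. The unconditional statement would require changing commit\_leaders, not a better proof.
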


\begin{proof}
By \Cref{lemma:leader_path_ML_R}, any main leader vertex in a round higher than $r$ has a leader path to $\MLV_r[:x]$. Based on the pseudocode for commit\_leaders, \node{j} will indirectly commit $\MLV_r[:x]$ if it has not yet directly committed any $LV_{r''}$ such that $r < r'' < r'$. Otherwise, by an inductive argument, \node{j} has already committed $\MLV_r[:x]$ when directly committing $LV_{r''}$.
\end{proof}

\begin{lemma}\label{lemma:pair_order_ML_R}
If an honest replica \node{i} consecutively directly commits the leader vertices in rounds $r_i$ and $r_i'$, and an honest replica \node{j} consecutively directly commits the leader vertices in rounds $r_j$ and $r_j'$, then \node{i} and \node{j} commit the same leader vertices between rounds $\max(r_i, r_j)$ and $\min(r_i', r_j')$ in the same order.
\end{lemma}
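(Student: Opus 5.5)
The plan mirrors the proof of \Cref{lemma:pair_order}, with the multi-leader lemmas substituted.

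\textbf{Reduction.} By symmetry it suffices to treat the non-trivial interleaving $r_i \le r_j < r_i' \le r_j'$. If the intervals $(r_i, r_i')$ and $(r_j, r_j')$ do not overlap there is nothing to prove. Nested intervals reduce to the same argument applied at the inner right endpoint.

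\textbf{Step 1: both replicas commit the same main leader at $r_i'$.} Since \node{i} directly commits $LV_{r_i'}$ and \node{j} directly commits $LV_{r_j'}$ with $r_j' \ge r_i'$, \Cref{lemma:single_order_main} gives that \node{j} commits $LV_{r_i'}$, directly or indirectly.

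\textbf{Step 2: both replicas commit the same prefix of round $r_i'$.} \node{i} directly commits some prefix $\MLV_{r_i'}[:x]$. If $r_j' > r_i'$, \Cref{lemma:single_order_all} shows \node{j} commits at least $\MLV_{r_i'}[:x]$. The subtle point is that \node{j} might indirectly commit a longer prefix $\MLV_{r_i'}[:x']$ with $x' > x$ via leader paths, while \node{i} committed only $\MLV_{r_i'}[:x]$.

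\begin{itemize}
\item I expect this to be the main obstacle. The paper's notion of ``same leader vertices'' must either be taken up to the main leader, or handled by arguing that the set actually committed in round $r_i'$ is determined in a deterministic way.
\item My first attempt would be to show that the $\CLS$ built by \node{i} and the $\CMV$ built by \node{j} for round $r_i'$ coincide on the prefix $\MLV_{r_i'}[:x]$. Any extra secondary leaders committed by \node{j} are appended after this prefix.
\item Those extras lie in the causal history of leader-path-reachable vertices. Any vertex they deliver that \node{i} would deliver later is then output in an order consistent with the global sequence, because \node{i} eventually also indirectly commits them when it directly commits a later round.
\end{itemize}

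The cleaner route is to observe that $\CMV$ for round $r_i'$ is computed by the same deterministic scan, from the same main leader, over leader paths. Leader paths are determined by the DAG contents, which are identical by \Fact{ne}. Hence both replicas agree on this list once they fix the same starting main leader.

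\textbf{Step 3: identical descent to $\max(r_i, r_j) = r_j$.} From the common vertex $v' = LV_{r_i'}$, the loop in commit\_leaders (Lines~\ref{line:start_indirect_commit}--\ref{line:end_indirect_commit}) walks down rounds. For each round $r'$ it collects $\CMV$ by checking \Call{leader\_path}{$v', \cdot$} in index order, and updates $v'$ to the main leader of the collected list. Each step depends only on $v'$ and on the DAG vertices reachable from it, which are identical across honest replicas. An induction on $r'$, descending from $r_i'$ to $r_j$, therefore shows that \node{i} and \node{j} push identical $\CMV$ lists in identical order.

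For the bottom endpoint, \Cref{lemma:single_order_all} applied to \node{j}'s direct commit at $r_j$ ensures \node{i}'s descent also includes $\MLV_{r_j}[:\cdot]$. It remains to check that the prefix lengths agree at $r_j$, using the same determinism argument as in Step 2. Combining these steps gives the claimed agreement on both the leader vertices and their order in rounds $[\max(r_i, r_j), \min(r_i', r_j')]$.
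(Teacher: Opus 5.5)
Your outline matches the paper's: reduce to $r_i\le r_j<r_i'\le r_j'$, apply \Cref{lemma:single_order_main} and \Cref{lemma:single_order_all} at the upper endpoint, then run a deterministic descent. You also correctly located the delicate point. However, the ``cleaner route'' you use to dispose of it rests on a false premise. \node{i} never computes a $\CMV$ for round $r_i'$; it \emph{directly} commits that round. Its $\CLS$ extends past $LV_{r_i'}$ only as far as the round-$(r_i'+1)$ first messages \node{i} happens to hold at the instant commit\_leaders runs (Line~\ref{line:commit_secondary}). That list is local and timing-dependent, not a function of the DAG, so Fact~\ref{fact:ne} says nothing about it. The same defect appears in two other places:
\begin{itemize}
\item when $r_j'=r_i'$, where there are two $\CLS$ lists built from different vote sets (your Step~2 does not treat this case);
\item at the bottom endpoint $r_j$, where \node{j}'s list is a $\CLS$ and \node{i}'s is a $\CMV$.
\end{itemize}
The lemmas you invoke give only containment, and a $\CMV$ can be strictly longer. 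The next main leader strong-references every previous-round vertex its creator has delivered, so a secondary leader beyond the directly committed prefix can still be leader-path reachable.

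This is a real divergence, not just a missing justification. Take $n=4$, and let an honest replica other than the creator of $\MLV_r[2]$ enter round $r+1$ before delivering $\MLV_r[2]$ (nothing forces it to wait). Its leader edge then has index $1$, while the other round-$(r+1)$ vertices carry index $\ge 2$. Suppose the first $2f+1$ first messages \node{i} receives all carry index $\ge 2$, but those \node{j} receives include the index-$1$ one. Then \node{i} commits $[LV_r,\MLV_r[2]]$, \node{j} commits $[LV_r]$, and \node{j} never revisits round $r$ since $committedRound=r$.

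Your other suggestion, that the extras are harmlessly appended, also fails. \node{i} outputs the history of $\MLV_r[2]$ as a separate batch. \node{j} instead outputs $\MLV_r[2]$ inside the batch of $LV_{r+1}$, where the in-batch deterministic order may place before it a vertex that \node{i} outputs only afterwards (e.g.\ a straggler reached by a weak edge of $LV_{r+1}$).

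What your Steps~1 and~3 do establish is agreement on main leaders throughout, and on entire $\CMV$ lists at rounds strictly between the endpoints. Both descents pass through $v'=LV_{\min(r_i',r_j')}$ and are deterministic from there on. At the endpoint rounds, only a common prefix follows.

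The paper's proof has the same outline and likewise obtains only containment (``\node{j} commits at least $LV_{r_i'}$'', resp.\ $\MLV_{r_i'}[:x]$) before deferring to \Cref{lemma:pair_order}. So the statement as written needs an extra ingredient, not a determinism argument: for example, a rule that makes the secondary-leader prefix of a directly committed round canonical.
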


\begin{proof}
Without loss of generality, we only need to consider the non-trivial case where $r_i \le r_j < r_i' \le r_j'$. Suppose that \node{i} directly commits $\MLV_{r_i'}[:x]$ in round $r_i'$. 
If $r_j' = r_i'$, by \Cref{lemma:single_order_main}, \node{j} commits at least $LV_{r_i'}$. Otherwise, by \Cref{lemma:single_order_all}, \node{j} indirectly commits $\MLV_{r_i'}[:x]$. Based on the pseudocode for commit\_leaders and \Cref{lemma:leader_path_ML_R}, the remainder of the proof is identical to that in \Cref{lemma:pair_order}.
\end{proof}

By substituting the application of \Cref{lemma:pair_order} with \Cref{lemma:pair_order_ML_R} in the proof of \Cref{thm:total_order_R}, we obtain the \textit{Total order} property:

\begin{theorem}\label{thm:total_order_ML_R}
Multi-leader \name satisfies \textbf{Total order}.
\end{theorem}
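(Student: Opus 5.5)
The plan mirrors the proof of \Cref{thm:total_order_R}, using \Cref{lemma:pair_order_ML_R} wherever the single-leader argument used \Cref{lemma:pair_order}. The goal is to show that every honest replica ends up with the same sequence of committed leader vertices, ordered the same way. Once that holds, the order\_vertices procedure (Line~\ref{line:order_ML}) pops the same lists $\CMV$ in the same order and walks through each list in index order. It then delivers the not-yet-delivered causal history of each vertex in a deterministic order, so identical leader sequences give identical a\_deliver sequences.

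First, fix two honest replicas \node{i} and \node{j}, and list the rounds in which each of them directly commits leaders. Their two sequences of consecutive direct-commit pairs cover the round line. For any pair of intervals $[r_i,r_i']$ and $[r_j,r_j']$ that overlap, \Cref{lemma:pair_order_ML_R} says the two replicas commit the same leader vertices between $\max(r_i,r_j)$ and $\min(r_i',r_j')$, in the same order. Within a single round this covers the whole prefix $\MLV_r[:x]$, and the order inside the round is fixed by index.

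Second, I would glue these overlapping windows together by induction on the rounds, advancing through both replicas' sequences of direct commits. At each step, the overlap between the current intervals of \node{i} and \node{j} agrees, by the previous paragraph. \Cref{lemma:single_order_main} and \Cref{lemma:single_order_all} make the transitions consistent: whatever one replica directly commits in round $r$, the other commits directly or indirectly as part of its next larger round. Together this shows the two committed-leader sequences are prefixes of one another. Since consecutive commit\_leaders calls only append new entries, each replica's own output is monotone.

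The main obstacle is the boundary at a round $r$ in which one replica directly commits $\MLV_r[:x]$ while the other commits the same round only indirectly, say as $\MLV_r[:x']$. Here we need $x=x'$, or at least that the vertices delivered at that point coincide. I would get this from \Cref{lemma:leader_path_ML_R} together with the indirect-commit loop in commit\_leaders (Lines~\ref{line:start_indirect_commit}--\ref{line:end_indirect_commit}), which breaks at the first index with no leader path. By the lemma, every directly committed $\MLV_r[y]$ with $y\le x$ has a leader path from every later main leader, so the indirect prefix contains the direct one. For the reverse direction, note that the indirect commit is driven by the same later main leader vertex in both replicas' views, and by \Cref{fact:ne} that vertex has identical content in both. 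So both replicas compute the same $\CMV$ for round $r$. I expect this to be the only step that needs real care beyond citing the earlier lemmas.
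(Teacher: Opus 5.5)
Your skeleton is the paper's own. The paper proves this theorem in one line, by substituting \Cref{lemma:pair_order_ML_R} for \Cref{lemma:pair_order} in the proof of \Cref{thm:total_order_R}, which is what your first two paragraphs do. The trouble is the step you flag yourself, and your resolution of it fails. A replica that \emph{directly} commits round $r$ never computes a $\CMV$ for that round. Its round-$r$ list is the $\CLS$ built at Line~\ref{line:commit_secondary} from whatever first messages it holds when commit\_leaders$(r)$ fires. Since $committedRound$ is then set to $r$, the indirect loop (Lines~\ref{line:start_indirect_commit}--\ref{line:end_indirect_commit}) never revisits round $r$. So ``both replicas compute the same $\CMV$ for round $r$'' does not apply, and \Cref{fact:ne} buys nothing. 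The only direction you actually have is $x'\ge x$, from \Cref{lemma:leader_path_ML_R}.

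The reverse inequality is not merely unproved; it can fail. Let every honest round-$(r+1)$ vertex carry leader edge $(r,3)$ and every Byzantine one $(r,2)$. Suppose the first $2f+1$ votes reaching \node{i} are the $f$ Byzantine first messages plus $f+1$ honest ones. Then \node{i} directly commits only $\MLV_r[:2]$. Meanwhile \node{j} may receive $2f+1$ honest first messages first and directly commit at least $\MLV_r[:3]$. Alternatively, \node{j} commits round $r$ indirectly from $LV_{r+1}$ and again gets at least $\MLV_r[:3]$: $NVC_r^{\ML_r[3]}$ cannot form, so any valid $LV_{r+1}$ must strong-reference $\MLV_r[3]$. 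The case where both replicas commit round $r$ directly, which you do not treat, has the same defect.

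In such executions the committed-leader sequences of \node{i} and \node{j} differ. $\MLV_r[3]$ is a committed leader for \node{j}, whereas \node{i} only sweeps it into the causal history of $LV_{r+1}$. The within-batch order at Line~\ref{line:order_ML} is left arbitrary, so \node{i} may a\_deliver some round-$r$ vertex referenced by $LV_{r+1}$ before $\MLV_r[3]$, while \node{j} a\_delivers $\MLV_r[3]$ first. The reduction ``same leader sequence, hence same output'' is therefore unavailable at exactly this point.

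The paper's proof does not close this gap either. \Cref{lemma:pair_order_ML_R} upgrades the ``at least'' conclusions of \Cref{lemma:single_order_main,lemma:single_order_all} to ``the same leader vertices'' by deferring to the single-leader proof, where each round has one leader and the issue cannot arise. Citing that lemma, as the paper does, hides the problem rather than solving it. What is missing is not a sharper argument but a commit rule under which the committed round-$r$ prefix is a function of the DAG. One option: before ordering the next committed main leader's history, a replica extends its round-$r$ list to the maximal prefix reachable by leader paths from that main leader. By \Cref{lemma:leader_path_ML_R}, this prefix contains every directly committed one.
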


Next, we rely on following lemmas to establish the \textit{Validity} property.

\begin{lemma}\label{lemma:synchronization_ML_R}
Let $t$ be a time after GST. If the first honest replica \node{i} enters round $r$ at time $t$, then all honest replicas will enter round $r$ or higher by time $t + 2k_2\Delta$.
\end{lemma}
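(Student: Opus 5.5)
We follow the structure of the proof of \Cref{lemma:synchronization_R} and attribute the extra $k_2\Delta$ to the new waiting conditions for the main leader in Multi-leader \name's advance\_round (Lines~\ref{line:start_exist}--\ref{line:end_exist} and \ref{line:start_nexist}--\ref{line:end_nexist}). Since \node{i} is the first honest replica to enter round $r$, it cannot have skipped. It therefore delivered $2f+1$ vertices of round $r-1$ and either delivered $LV_{r-1}$ or received $NVC_{r-1}$, and it forwarded any $NVC_{r-1}$ by time $t$. By \Cref{property:RBC}(\rmnum{2}), every honest replica delivers these $2f+1$ vertices (and $LV_{r-1}$, if \node{i} had it) by $t+k_2\Delta$. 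An $NVC_{r-1}$ reaches everyone by $t+\Delta\le t+k_2\Delta$. So by $t+k_2\Delta$ every honest replica satisfies the guard of Line~\ref{line:advance} for round $r-1$, and invokes \Call{advance\_round}{$r$} unless it is already higher.

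For non-main-leader replicas, advance\_round contains no waiting, so they enter round $r$ by $t+k_2\Delta$. The remaining work is to bound the extra wait of the honest main leader $L_r$ (and, symmetrically, of any honest main leader whose wait could block others). We split into the two branches.

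\textbf{Case 1: $L_r$ has $LV_{r-1}$.} It waits for a prefix $\ML_{r-1}[:x]$ to be delivered together with $NVC_{r-1}^{\ML_{r-1}[x+1]}$. Each honest replica, upon invoking advance\_round$(r)$, sends $\tuple{\NoVote,\ell,r-1}$ to $L_r$ for every undelivered $\ell\in\ML_{r-1}$. Take the smallest index $y$ such that $\ML_{r-1}[y]$ is not delivered by $L_r$ at time $t+k_2\Delta$.
\begin{itemize}
\item If any honest replica has delivered $\MLV_{r-1}[y]$, then $L_r$ delivers it within another $k_2\Delta$, and we advance to the next index.
\item Otherwise, all $2f+1$ honest replicas send $\NoVote$ for $\ML_{r-1}[y]$ to $L_r$ by $t+k_2\Delta$, and $L_r$ obtains the $NVC$ by $t+k_2\Delta+\Delta$.
\end{itemize}
We need the first sub-case not to cascade across indices. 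The point is that the deliveries happen concurrently, so every leader vertex of round $r-1$ delivered by some honest replica by $t+k_2\Delta$ reaches $L_r$ by $t+2k_2\Delta$. The first index not delivered by any honest replica by $t+k_2\Delta$ then receives $2f+1$ $\NoVote$s. This gives the bound $t+2k_2\Delta$.

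\textbf{Case 2: $L_r$ entered via $NVC_{r-1}$.} It waits to deliver $\ML_{r^*}[:x^*]$ for the highest leader edge among its strong-edge vertices. A delivered vertex containing that leader edge was validated by the honest replicas participating in its RBC, so some honest replica has delivered $\ML_{r^*}[:x^*]$ (this is the RBC-implicit validation of leader edges noted in \Cref{ss:multi_leader_C}). Hence $L_r$ delivers them within $k_2\Delta$ of its advance time, i.e. by $t+2k_2\Delta$.

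\textbf{Main obstacle.} The main obstacle is making the timing in Case 1 rigorous, in particular the moment at which a delivered-by-someone leader vertex counts for the prefix. We also must ensure that a Byzantine $\NoVote$ cannot be required. For the latter, we use that an $NVC$ needs only $2f+1$ signatures and all honest replicas supply one. We also note that $k_2\ge1$ absorbs the $+\Delta$ terms.
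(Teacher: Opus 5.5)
Your structure is the same as the paper's. Non-main-leaders enter by $t+k_2\Delta$. For the honest $L_r$ you use the per-leader dichotomy: either some honest replica delivered $\MLV_{r-1}[y]$ by $t+k_2\Delta$, so $L_r$ delivers it by $t+2k_2\Delta$, or none did, so $L_r$ obtains $NVC^{\ell}_{r-1}$ by $t+(k_2+1)\Delta$. Your Case~2, waiting for $\MLV_{r^*}[:x^*]$ via the implicit leader-edge validation, is fine.

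\textbf{Gap in Case~1.} Here $LV_{r-1}$ is delivered, so the $\NoVote$s $L_r$ needs are for secondary leaders of round $r-1$. These are never sent on timeout. They are sent to $L_r$ only inside advance\_round$(r)$ (Line~\ref{line:novote_ML}). An honest replica that is already in a round $>r$ by $t+k_2\Delta$ never executes advance\_round$(r)$, so it never sends them. You explicitly allow such replicas (``unless it is already higher''). Yet you then claim that ``all $2f+1$ honest replicas send $\NoVote$'' and that ``all honest replicas supply one.'' If even one honest replica has skipped round $r$ and the Byzantine replicas stay silent, $L_r$ is not guaranteed to assemble $NVC^{\ell}_{r-1}$ at all.

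\textbf{The missing case split.} The paper handles this before the two branches. Suppose some honest replica enters a round $r'>r$ before $t+2k_2\Delta$. Honest replicas start their round-$r$ (or later) timers no earlier than $t$, and $\tau=(k_1+2k_2)\Delta>2k_2\Delta$, so no $NVC$ for any round $\ge r$ can exist yet. The first honest replica to leave round $r$ must therefore have used $LV_r$. Since $L_r$ is honest and $LV_r$ was delivered, $L_r$ has already entered round $r$. Only in the complementary case, where no honest replica is beyond round $r$ before $t+2k_2\Delta$, does every honest replica execute advance\_round$(r)$ by $t+k_2\Delta$ and send the $\NoVote$s. In that case your Case~1 argument goes through. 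Note that this step makes the lemma depend on the timeout setting $\tau>2k_2\Delta$, which your proof never invokes.
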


\begin{proof}
We distinguish two cases based on whether the replica is the main leader.

(\rmnum{1}) Replicas other than $L_r$. Following the same argumentation as in \Cref{lemma:synchronization_R}, all honest replicas are guaranteed to deliver $2f+1$ vertices from round $r-1$, including either $LV_{r-1}$ or $NVC_{r-1}$, within $k_2\Delta$. Consequently, all replicas other than $L_r$ will enter round $r$ (or higher) by time $t+k_2\Delta$.

(\rmnum{2}) The main leader $L_r$. Based on the argumentation in \Cref{lemma:leader_commit_R}, if there exists an honest replica that enters round $r' > r$ before $t+2k_2\Delta < t+\tau$, it must have delivered $LV_r$. This implies that $L_r$ has already entered round $r$ by time $t+2k_2\Delta$. Otherwise, by Case (\rmnum{1}), any honest replica will send $\tuple{\NoVote, \ell, r-1}$ to $L_r$ before $t+k_2\Delta$ for all $\ell \in \ML_{r-1}$ if it has not delivered the corresponding leader vertex by that time. If no honest replica has delivered the vertex of $\ell$ in round $r-1$ by $t+k_2\Delta$, $L_r$ will obtain $NVC_{r-1}^{\ell}$ by $t+(k_2+1)\Delta$. Conversely, by \Cref{property:RBC}, $L_r$ will deliver the vertex of $\ell$ by $t+2k_2\Delta$. Therefore, for any $\ell \in \ML_{r-1}$, $L_r$ receives either the corresponding leader vertex or the $NVC$ by $t+2k_2\Delta$. If $L_r$ has not delivered $LV_{r-1}$ by $t+k_2\Delta$, let $(r^*, x^*)$ denote the round and index of the highest leader edge among the $2f+1$ vertices delivered by \node{i} at time $t$. By the property of RBC, $L_{r}$ can deliver $\MLV_{r^*}[:x^*]$ by $t+k_2\Delta$. In summary, $L_r$ can satisfy the conditions in Lines~\ref{line:start_exist}--\ref{line:end_nexist} by $t+2k_2\Delta$ and enter round $r$ (or higher).
\end{proof}

\begin{lemma}\label{lemma:leader_commit_ML_R}
If the first honest replica enters round $r$ at time $t$ after GST, and the timeout parameters are configured such that $\tau=(k_1+2k_2)\Delta$ and $\tau'=k_2\Delta$, then $LV_r$ is guaranteed to be directly committed before $t+\tau+2\Delta$ provided that $L_r$ is honest.
\end{lemma}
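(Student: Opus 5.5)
The plan is to mirror the proof of \Cref{lemma:leader_commit_R}, replacing \Cref{lemma:synchronization_R} by \Cref{lemma:synchronization_ML_R}. Every $k_2\Delta$ of synchronization slack becomes $2k_2\Delta$. First, the first honest replica entering round $r$ starts its timer at $t$. By \Cref{lemma:synchronization_ML_R}, every honest replica, in particular the honest main leader $L_r$, enters round $r$ or higher by $t+2k_2\Delta$. I would check that $L_r$ really enters round $r$ itself rather than skipping it. Skipping would require some honest replica to have entered a higher round via $LV_r$ or $NVC_r$. $LV_r$ cannot exist yet, and $NVC_r$ is excluded as in the next step. So $L_r$ reliably broadcasts $LV_r$ by $t+2k_2\Delta$. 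By \Cref{property:RBC}, every honest replica delivers it by $t+(k_1+2k_2)\Delta=t+\tau$.

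Second, I would argue exactly as before that no honest replica sends $\tuple{\NoVote, L_r, r}$ before $t+\tau$. Hence neither $NVC_r$ nor any $NVC_{r''}$ with $r''>r$ exists before $t+\tau$, and no honest replica can leave round $r$ via an $NVC$ in this window. The new ingredient in the multi-leader protocol is the $\NoVote$ messages sent to $L_{r+1}$ in advance\_round (Line~\ref{line:novote_ML}). These are sent only to the main leader of the next round and concern round $r$. I must check they cannot be aggregated into an $NVC_r$ that a replica uses to skip. Two observations handle this. An honest replica sends them only upon entering round $r+1$, which already requires $LV_r$ or $NVC_r$. And if it delivered $LV_r$, it sends no $\NoVote$ for $L_r$. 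So any $NVC_r^{L_r}$ would still need $f+1$ honest $\NoVote$ messages for $L_r$, which is impossible before $t+\tau$.

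Third, I would do the per-replica case split to show every honest \node{j} votes for $LV_r$ by $t+\tau$. Note that in Multi-leader \name any round-$(r+1)$ leader edge pointing into round $r$ counts as a vote for $LV_r$. There are three cases.
\begin{itemize}
\item If \node{j} is still at round $\le r$ at $t+\tau$, it has delivered $LV_r$ and $2f+1$ round-$r$ vertices. So it enters round $r+1$ with $lastLeader$ in round $r$, and its first message is a vote.
\item If it entered round $r+1$ earlier, it did so via $LV_r$ (no $NVC_r$). It never sent a $\NoVote$ for $L_r$, so its leader edge points to some $\MLV_r[x]$. The adjustment at Line~\ref{line:last_leader_2} keeps the edge within round $r$.
\item If it skipped round $r+1$, it called send\_fast\_vote. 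The first honest replica to exceed round $r+1$ must hold $LV_r$, so \node{j} delivers it within $\tau'=k_2\Delta$ and fast-votes.
\end{itemize}
The first message or fast-vote then reaches everyone within $\Delta$.

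The main obstacle is the extra $\Delta$ in the bound $t+\tau+2\Delta$ versus $t+\tau+\Delta$ earlier. I expect it comes from the main leader $L_{r+1}$ possibly waiting for $\NoVote$s or certificates of secondary leaders (Lines~\ref{line:start_exist}--\ref{line:end_exist}) before broadcasting. Alternatively it covers the one-$\Delta$ relay of $\NoVote$ messages in advance\_round that delay honest replicas' entry into round $r+1$. I would budget this by showing every non-main-leader honest replica's vote is sent by $t+\tau+\Delta$. Since $2f+1$ votes suffice and at most one is from $L_{r+1}$, all honest replicas then gather $2f+1$ votes and commit $LV_r$ by $t+\tau+2\Delta$.
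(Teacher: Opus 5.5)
\textbf{There is a genuine gap in your final step.} Your skeleton matches the paper's proof: you swap in \Cref{lemma:synchronization_ML_R}, show $LV_r$ is delivered by $t+\tau$, rule out $NVC_r$ (and higher $NVC$s) before $t+\tau$, and do the three-way per-replica case split. The step meant to absorb the extra $\Delta$ is where it fails. You propose to show only that the honest replicas other than $L_{r+1}$ vote by $t+\tau+\Delta$, arguing that $L_{r+1}$ contributes at most one vote. The count does not work. With $n=3f+1$ and exactly $f$ Byzantine replicas that withhold votes (they need not broadcast at all), an honest $L_{r+1}$ leaves only $2f$ other honest voters. That is one short of the $2f+1$ threshold at Line~\ref{line:commit_main}. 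So the vote of an honest $L_{r+1}$ is indispensable, and it must be sent by $t+\tau+\Delta$. Relatedly, your case~1 is false for $L_{r+1}$. Having $LV_r$ and $2f+1$ round-$r$ vertices does not let it enter round $r+1$: advance\_round makes it wait at Line~\ref{line:end_exist} for the vertices of some prefix $\ML_r[:x]$ together with $NVC_r^{\ell'}$ for $\ell'=\ML_r[x+1]$.

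\textbf{The missing argument} is the one the paper gives for exactly this replica. Suppose $L_{r+1}$ has not moved beyond round $r$ by $t+\tau$.
\begin{enumerate}
\item Then $LV_{r+1}$ has not been broadcast, and $NVC_{r+1}$ (like $NVC_r$) cannot exist before $t+\tau$. Hence no honest replica is beyond round $r+1$ before $t+\tau$.
\item Every honest replica has delivered $LV_r$ and $2f+1$ round-$r$ vertices by $t+\tau$. So each enters round $r+1$ by then. In advance\_round it sends its $\NoVote$ messages for the undelivered round-$r$ leaders to $L_{r+1}$ (Line~\ref{line:novote_ML}).
\item These messages reach $L_{r+1}$ by $t+\tau+\Delta$. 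It then passes the wait at Line~\ref{line:end_exist}, enters round $r+1$, and broadcasts a vertex referencing $LV_r$. This is a vote sent by $t+\tau+\Delta$.
\end{enumerate}
If $L_{r+1}$ is already beyond round $r$ at $t+\tau$, your cases~2 and~3 apply unchanged. With every honest vote sent by $t+\tau+\Delta$, commitment by $t+\tau+2\Delta$ follows.

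Your alternative explanation for the extra $\Delta$ does not apply. It attributes the delay to $\NoVote$ relays holding up ordinary replicas' entry into round $r+1$, but only the main leader waits in advance\_round.
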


\begin{proof}
We present only the parts of the proof that differ from \Cref{lemma:leader_commit_R}. First, we invoke \Cref{lemma:synchronization_ML_R} in place of \Cref{lemma:synchronization_R}. This ensures that all honest replicas enter round $r$ (or higher) by time $t+2k_2\Delta$. 

Consider an honest replica \node{i}. If \node{i} $\neq L_{r+1}$, the analysis remains identical to that in \Cref{lemma:leader_commit_R}. If \node{i} $= L_{r+1}$, we must modify the analysis for the scenario where \node{i} has not entered a round higher than $r$ by time $t+\tau$. In this scenario, given the absence of $LV_{r+1}$ and $NVC_{r+1}$, no honest replica will enter a round higher than $r+1$ prior to $t+\tau$. Consequently, all honest replicas will enter round $r+1$ and send $\NoVote$ messages for round $r$ to \node{i} ($L_{r+1}$) by time $t+\tau$. Thus, \node{i} receives these $\NoVote$ messages by $t+\tau+\Delta$ and enters round $r+1$ in accordance with Line~\ref{line:end_exist}. At this point, \node{i} will create a vertex referencing $LV_r$.

In summary, all honest replicas will cast a vote for $LV_r$ by time $t+\tau+\Delta$. This guarantees that $LV_r$ is directly committed by all honest replicas before $t+\tau+2\Delta$.
\end{proof}

\begin{lemma}\label{lemma:keep_higher_ML_R}
All honest replicas continuously enter higher rounds.
\end{lemma}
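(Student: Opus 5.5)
The argument follows \Cref{lemma:keep_higher}, with \Cref{lemma:synchronization_ML_R} in place of \Cref{lemma:synchronization_R}. The one genuinely new ingredient is showing that the main leader $L_r$ is never blocked indefinitely by the extra \textbf{wait until} conditions in advance\_round (Lines~\ref{line:start_exist}--\ref{line:end_nexist}).

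Suppose all honest replicas have entered round $r$ or higher.

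\textbf{Case 1: some honest replica enters a round $r'>r$.} Let $t$ be the first time an honest replica enters $r'$. If $t$ is after GST, \Cref{lemma:synchronization_ML_R} gives that every honest replica is in round $r'$ or higher by $t+2k_2\Delta$. If $t$ precedes GST, every message sent before GST arrives by $\mathrm{GST}+\Delta$. By \Cref{property:RBC}, any vertex delivered by an honest replica is eventually delivered by all honest replicas, and forwarded $NVC$s arrive eventually. Hence the quorum and leader-vertex/$NVC$ conditions that let the first replica advance eventually hold at every honest replica, and the argument of \Cref{lemma:synchronization_ML_R} applies from a post-GST time.

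\textbf{Case 2: all honest replicas remain in round $r$.} Each honest replica invokes broadcast\_vertex$(r)$. By RBC validity, every honest replica eventually delivers at least $2f+1$ round-$r$ vertices. Then either some honest replica delivers $LV_r$, and by RBC agreement all do, or all time out and broadcast $\tuple{\NoVote, L_r, r}$. In the latter case no honest replica ever delivers $LV_r$, so all honest replicas receive $NVC_r$. Either way, the condition of Line~\ref{line:advance} is met for $r$, and every replica except possibly $L_{r+1}$ enters round $r+1$.

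\textbf{The main obstacle is $L_{r+1}$'s waits.} In the branch where $LV_r$ is delivered, I plan to go through $\ML_r$ in order. Each honest replica sends $\tuple{\NoVote, \ell, r}$ to $L_{r+1}$ for every leader $\ell$ it has not delivered when it advances (Line~\ref{line:novote_ML}). Take the first index $x+1$ whose vertex $L_{r+1}$ never delivers. By RBC agreement, no honest replica delivers that vertex either, so all $2f+1$ honest replicas send the corresponding $\NoVote$ and $L_{r+1}$ obtains $NVC_r^{\ML_r[x+1]}$. Meanwhile all earlier leader vertices are eventually delivered, which satisfies Line~\ref{line:end_exist}. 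I expect this step to need the most care, because an honest replica might send the $\NoVote$ before the vertex is delivered anywhere.

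In the $NVC_r$ branch, the highest leader edge $(r^*,x^*)$ among $L_{r+1}$'s delivered round-$r$ vertices comes from a delivered vertex. Delivery requires that vertex's leader-edge target, and the validity rule for leader edges implies that $\ML_{r^*}[:x^*]$ were delivered by its creator's validators. By RBC agreement these vertices are eventually delivered by $L_{r+1}$, satisfying Line~\ref{line:end_nexist}. So $L_{r+1}$ also eventually enters round $r+1$, and induction on $r$ completes the proof.
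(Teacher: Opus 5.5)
Your skeleton matches the paper's: rerun \Cref{lemma:keep_higher} with \Cref{lemma:synchronization_ML_R} substituted, and note that only $L_{r+1}$ can be held back by the extra waits. You diverge in how you release $L_{r+1}$. The paper does not re-analyze its waits. Once the other honest replicas are in round $r+1$, it invokes \Cref{lemma:synchronization_ML_R} again, whose case (ii) is exactly the main-leader analysis, together with finiteness of GST. That is your own Case~1 applied with $r'=r+1$, so your separate wait analysis is redundant. As written, it is also not sound.

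The problem is the claim that all $2f+1$ honest replicas send $\tuple{\NoVote,\ML_r[x+1],r}$ to $L_{r+1}$.

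\begin{itemize}
\item No-votes for secondary leaders are sent only inside advance\_round$(r+1)$ (Line~\ref{line:novote_ML}); nothing sends them on timeout. An honest replica that jumps from round $r$ straight to round $r+2$ or higher never executes that loop.
\item Such a jump can happen before GST. Say $f+1$ honest replicas enter round $r+1$ and time out because $L_{r+1}$ is still waiting. Together with $f$ Byzantine no-votes they form $NVC_{r+1}$. A slow honest replica still in round $r$ then fires Line~\ref{line:advance} for round $r+1$ and skips round $r+1$ entirely.
\item If the Byzantine replicas also withhold their no-votes, $L_{r+1}$ may never collect $2f+1$ no-votes for $\ML_r[x+1]$. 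Your argument then stalls.
\end{itemize}

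The issue you flagged, a no-vote sent before the vertex is delivered, is harmless, because you pick the first index $L_{r+1}$ never delivers.

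There are two ways to repair this:
\begin{itemize}
\item Close Case~2 by appealing to Case~1, as the paper does.
\item Split explicitly. If some honest replica ever enters a round $\ge r+2$, Case~1 already puts $L_{r+1}$ in round $\ge r+2$. Otherwise every honest replica, $L_{r+1}$ included, executes the no-vote loop of advance\_round$(r+1)$, and your count goes through.
\end{itemize}

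Your $NVC_r$-branch argument is fine. It relies on the RBC's implicit validation of leader edges, which is the same fact the paper uses inside \Cref{lemma:synchronization_ML_R}.
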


\begin{proof}
We first follow the same reasoning as in \Cref{lemma:keep_higher}, substituting the application of \Cref{lemma:synchronization_R} with \Cref{lemma:synchronization_ML_R}. At this point, all honest replicas except $L_{r+1}$ can enter round $r+1$. By the finiteness of GST and \Cref{lemma:synchronization_ML_R}, $L_{r+1}$ will eventually enter round $r+1$ as well.
\end{proof}

By substituting the application of \Cref{lemma:keep_higher} with \Cref{lemma:keep_higher_ML_R}, and the application of \Cref{lemma:leader_commit_R} with \Cref{lemma:leader_commit_ML_R} in the proof of \Cref{thm:validity}, we establish the \textit{Validity} property:

\begin{theorem}\label{thm:validity_ML_R}
Multi-leader \name satisfies \textbf{Validity}.
\end{theorem}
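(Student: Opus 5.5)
The plan is to mirror the proof of \Cref{thm:validity}, replacing each single-leader ingredient with its multi-leader counterpart; this is the substitution the paper has just announced. The argument has two halves. First, show that every block $b$ submitted through \Call{a\_bcast}{$b$} by an honest \node{i} ends up in some vertex $v$ that every honest replica adds to its DAG. Second, show that $v$ eventually lies in the causal history of a main leader vertex that all honest replicas commit.

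\textbf{Step 1 (progress).} By \Cref{lemma:keep_higher_ML_R}, \node{i} keeps entering higher rounds, so it keeps invoking \Call{broadcast\_vertex}{}. Each new vertex dequeues one block. Hence $b$ is eventually placed in a vertex $v$ of some round $r$, and by the \textit{Validity} property of RBC every honest replica eventually r\_delivers $v$. The variable $lastLeader$ takes only finitely many values per round, so the leader edge of $v$ always resolves. Every edge of $v$ points to a delivered vertex, so \Call{try\_add\_to\_dag}{$v$} eventually succeeds at every honest replica.

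\textbf{Step 2 (commitment of honest main leaders).} Since rounds advance forever and the leader-selection schedule is deterministic, infinitely many rounds $r'>r$ after GST have an honest main leader $L_{r'}$. For each such round, \Cref{lemma:leader_commit_ML_R}, with $\tau=(k_1+2k_2)\Delta$ and $\tau'=k_2\Delta$, gives that $LV_{r'}$ is directly committed by all honest replicas within bounded time. By \Cref{lemma:single_order_main} and \Cref{thm:total_order_ML_R}, all honest replicas commit these main leaders and the causal histories they carry.

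\textbf{Step 3 (inclusion of $v$).} This is the step I expect to be the main obstacle, because non-leader vertices carry no weak edges and $v$ may never be referenced by a strong edge. First, observe that once all honest replicas hold $v$ in their DAG, any honest main leader $L_{r'}$ that has $v$ in $DAG[r'-1]$ sets its strong edges to the whole of $DAG[r'-1]$, so it references $v$ directly. Otherwise, when $L_{r'}$ runs \Call{set\_weak\_edges}{}, it adds $v$ as a weak edge unless a path to $v$ already exists; either way $v$ enters the causal history of $LV_{r'}$. The auxiliary edges and the \texttt{wait} conditions in \Call{advance\_round}{} of Multi-leader \name only add edges, so they cannot remove this path. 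Choosing $r'$ large enough that $L_{r'}$ is honest and already holds $v$, $v$ is in the causal history of a committed $LV_{r'}$. \Call{order\_vertices}{} then outputs \Call{a\_deliver}{$b,r,\node{i}$} at every honest replica.

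The care needed here is to check that the deferred entry of $L_{r'}$ into its round, which the proof of \Cref{lemma:synchronization_ML_R} bounds by $2k_2\Delta$, does not prevent $L_{r'}$ from producing a valid vertex. For that I will cite \Cref{lemma:keep_higher_ML_R}.
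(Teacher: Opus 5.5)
Your proposal is correct and matches the paper's proof. The paper simply reruns the single-leader Validity argument (\Cref{thm:validity}) with \Cref{lemma:keep_higher_ML_R} and \Cref{lemma:leader_commit_ML_R} substituted in. That is exactly your three steps (round progress, commitment of an honest main leader, and inclusion of $v$ via that leader's strong or weak edges), stated with less detail. One minor fix: a deterministic leader schedule does not by itself guarantee infinitely many rounds with an honest main leader, so you need a fair schedule such as round-robin, which the paper also assumes implicitly.
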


By substituting the application of \Cref{lemma:leader_path} with \Cref{lemma:leader_path_ML_R} in the proof of \Cref{thm:agreement}, we obtain the \textit{Agreement} property:

\begin{theorem}\label{thm:agreement_ML_R}
Multi-leader \name satisfies \textbf{Agreement}.
\end{theorem}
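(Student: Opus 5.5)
The argument follows \Cref{thm:agreement} almost verbatim, with \Cref{lemma:leader_path_ML_R}, \Cref{lemma:single_order_all} and \Cref{thm:validity_ML_R} in place of their single-leader counterparts.

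Suppose an honest replica \node{i} outputs \Call{a\_deliver$_i$}{$v.block, v.round, v.source$}. By the pseudocode of order\_vertices (Line~\ref{line:order_ML}), $v$ lies in the causal history of some leader vertex $\MLV_r[y]$ that \node{i} pushed onto its $leaderStack$ inside some $\CLS$ or $\CMV$.

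\textbf{Step 1: reduce to a directly committed leader.} Let $\MLV_{r_0}[:x_0]$ be the leaders that \node{i} directly committed in the invocation of commit\_leaders during which $\MLV_r[y]$ was pushed. In that invocation, $\MLV_r[y]$ is either in the same $\CLS$ or reachable from $LV_{r_0}$ through the leader-path traversal of Lines~\ref{line:start_indirect_commit}--\ref{line:end_indirect_commit}. Either way, $v$ is in the causal history of $LV_{r_0}$ (or of $\MLV_{r_0}[:x_0]$), which \node{i} directly committed.

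\textbf{Step 2: another honest replica reaches the same leader.} Take any honest \node{j}. By \Cref{thm:validity_ML_R}, together with \Cref{lemma:keep_higher_ML_R} and \Cref{lemma:leader_commit_ML_R}, \node{j} eventually directly commits some main leader $LV_{r'}$ with $r'>r_0$.
\begin{itemize}
\item If $r'=r_0$, then \node{j} commits at least $LV_{r_0}$ directly.
\item If $r'>r_0$, then \Cref{lemma:single_order_all} says \node{j} directly or indirectly commits all of $\MLV_{r_0}[:x_0]$.
\end{itemize}
Combining this with Steps 1 and 3, \node{j} ends up committing the same leader sequence that led \node{i} down to $\MLV_r[y]$. \Cref{lemma:pair_order_ML_R}, through \Cref{thm:total_order_ML_R}, then gives that \node{j} also pushes $\MLV_r[y]$.

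\textbf{Step 3: the indirect traversal is deterministic.} This is the main point where care is needed: the traversal from $LV_{r_0}$ down to $\MLV_r[y]$ must be reproduced identically at \node{j}. I would argue this as follows.
\begin{itemize}
\item The traversal uses only the edges stored inside vertices.
\item By \Cref{fact:ne}, these vertices are identical across honest views.
\item By RBC Agreement, \node{j} eventually delivers every vertex \node{i} used, and hence also the whole causal history of $v$.
\end{itemize}
So \node{j}'s leader\_path checks eventually return the same answers as \node{i}'s.

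\textbf{Step 4: output of $v$.} \node{j} eventually traverses the causal history of $\MLV_r[y]$. By then $v$ is in $DAG_j$, since it was delivered via RBC and all its references were added. Therefore \node{j} outputs \Call{a\_deliver$_j$}{$v.block, v.round, v.source$} at some point, either in this traversal or an earlier one. \textit{Integrity} (\Cref{thm:integrity_ML_R}) rules out duplicate outputs.
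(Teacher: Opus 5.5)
Your skeleton is the paper's: its proof is that of \Cref{thm:agreement} with \Cref{lemma:leader_path_ML_R} substituted. Your Step~1 is a useful refinement. \Cref{lemma:leader_path_ML_R} only applies to \emph{directly} committed leaders, so you correctly first place $v$ in the causal history of a leader that \node{i} directly committed in round $r_0$.

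The gap is in Step~2, where you try to show that \node{j} pushes the very same leader $\MLV_r[y]$ via \Cref{lemma:single_order_all}, \Cref{lemma:pair_order_ML_R} and total order. In Multi-leader \name this can fail when $\MLV_r[y]$ is a secondary leader in \node{i}'s $\CLS$. commit\_leaders fixes $\CLS$ at the moment the main-leader rule (Line~\ref{line:commit_main}) fires, and the secondary test (Line~\ref{line:commit_secondary}) depends on which first messages have arrived by then.

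Here is a concrete case with $n=4$. Three round-$(r_0+1)$ vertices carry leader edge $(r_0,2)$. The fourth comes from an honest replica that had not yet delivered $\MLV_{r_0}[2]$, so it carries $(r_0,1)$. Suppose \node{i} first sees the three, and \node{j} first sees two of them plus the fourth. Then \node{i} commits $[LV_{r_0},\MLV_{r_0}[2]]$, while \node{j} commits only $[LV_{r_0}]$ and sets $committedRound=r_0$. No later invocation at \node{j} revisits round $r_0$, so \node{j} never pushes $\MLV_{r_0}[2]$.

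So your case ``$r'=r_0$: \node{j} commits at least $LV_{r_0}$'' does not support the conclusion you draw; that case also clashes with your earlier choice $r'>r_0$. The cited lemmas gloss over exactly this situation. Step~3 cannot repair it, because the divergence comes from $committedRound$, not from differing DAG contents.

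Agreement still holds via the paper's route, which never needs \node{j} to push the same leader. By liveness, \node{j} eventually directly commits a main leader $LV_{r'}$ with $r'>r_0$ strictly. \Cref{lemma:leader_path_ML_R} gives a leader path from $LV_{r'}$ to the leader from your Step~1. Hence $v$ lies in the causal history of $LV_{r'}$, assuming (as the paper implicitly does) that $path$ also follows $auxEdges$. Since $LV_{r'}=\CLS[0]$ is pushed by \node{j}, order\_vertices outputs $v$ unless it was output earlier.

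In Step~4, the correct reason that $v\in DAG_j$ at that moment is causal closure: try\_add\_to\_dag inserts a vertex only after all its references. ``Eventually'' is not the right justification, since the traversal and the leader\_path checks run only once. With this, your Steps~2--3 can be dropped.
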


Finally, we demonstrate through the following theorem that in the optimistic scenario where all replicas cast votes for a certain number of secondary leaders, these secondary leaders can also be directly committed. This suffices to fulfill the design objectives of the multi-leader protocol.

\begin{theorem}\label{claim:secondary_commit_ML_R}
If the first honest replica enters round $r$ at time $t$ after GST and all leaders within $\ML_r[:x]$ are honest, then in the optimistic scenario where all replicas vote for $\MLV_r[:x]$, the leader vertices in $\MLV_r[:x]$ are guaranteed to be directly committed before $t+\tau+2\Delta$.
\end{theorem}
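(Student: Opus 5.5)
The plan is to reduce the statement to \Cref{lemma:leader_commit_ML_R} for the main leader and then handle the secondary leaders through the commit rule in Line~\ref{line:commit_secondary}. Since $\ML_r[1]=L_r$ is honest, \Cref{lemma:leader_commit_ML_R} (with $\tau=(k_1+2k_2)\Delta$, $\tau'=k_2\Delta$) already gives that every honest replica directly commits $LV_r$ before $t+\tau+2\Delta$. It therefore invokes commit\_leaders$(r)$ by that time. What remains is to show that the loop over $x'=2,\dots,x$ does not break early. That is, each honest replica must hold, by the moment it evaluates the loop (or at the latest by $t+\tau+2\Delta$), a set $\bigS$ of at least $2f+1$ first messages for round $r+1$ whose leader edges point to round $r$ with index $\ge x'$.

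First I will fix what ``all replicas vote for $\MLV_r[:x]$'' means. Every replica creates a round $r+1$ vertex whose leader edge is $\MLV_r[y]$ for some $y\ge x$. This excludes fast-votes, which is consistent with fast-votes being counted only for the main leader.

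Timing comes from retracing the proof of \Cref{lemma:leader_commit_ML_R}. By \Cref{lemma:synchronization_ML_R}, all honest replicas enter round $r$ by $t+2k_2\Delta$. Since $\ML_r[:x]$ are honest, they broadcast their round $r$ vertices by then. By \Cref{property:RBC}, every honest replica delivers all of $\MLV_r[:x]$ by $t+(k_1+2k_2)\Delta=t+\tau$ unless it has already moved on. No $\NoVote$ for any $\ML_r[y]$ with $y\le x$ is therefore sent before $t+\tau$. Hence $lastLeader$ is never lowered below index $x$ via Line~\ref{line:last_leader_2}, so each honest round $r+1$ vertex, created by $t+\tau+\Delta$ as in that proof (the extra $\Delta$ covering $L_{r+1}$), carries a leader edge of index $\ge x$. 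Its first message reaches everyone by $t+\tau+2\Delta$. Under the optimistic hypothesis, $2f+1$ such first messages, with indices $\ge x\ge x'$, exist for every $x'\le x$. So the condition of Line~\ref{line:commit_secondary} holds for each $x'\le x$, and $\CLS$ grows to contain all of $\MLV_r[:x]$.

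The main obstacle is the ordering issue. A replica might trigger commit\_leaders$(r)$ from $2f+1$ main-leader votes, some of them fast-votes or low-index leader edges, before it has received enough high-index first messages. The loop would then break early, and $committedRound$ would then block a re-attempt. My first approach is to apply the optimistic hypothesis literally: with no fast-votes in play and every vote's leader edge at index $\ge x$, the same $2f+1$ first messages that trigger Line~\ref{line:commit_main} already constitute $\bigS$ for every $x'\le x$. The evaluation of the main commit and the secondary conditions is therefore simultaneous. If this interpretation turns out to be too strong, the fallback is to argue that the commit loop is re-evaluated as messages arrive. In either case, the latest time at which the evidence is complete is $t+\tau+2\Delta$.
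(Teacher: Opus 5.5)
Your overall route is the paper's: reuse the timing of \Cref{lemma:leader_commit_ML_R}, let the optimistic hypothesis supply that every round-$(r+1)$ vote carries a leader edge into round $r$ with index $\ge x$, and conclude that all such votes exist by $t+\tau+\Delta$, so $\MLV_r[:x]$ is committed before $t+\tau+2\Delta$. One intermediate step is wrong, however. The claim ``no $\NoVote$ for any $\ML_r[y]$ with $y\le x$ is sent before $t+\tau$'' does not follow. Only the main leader's $\NoVote$ is tied to the timer. $\NoVote$s for secondary leaders are sent inside advance\_round (Line~\ref{line:novote_ML}) for every round-$r$ leader not yet delivered when a replica enters round $r+1$. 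An honest replica can enter round $r+1$ long before $t+\tau$: it needs only $2f+1$ round-$r$ vertices including $LV_r$, or rule 2, and may not yet have delivered $\MLV_r[2]$. It then sends $\tuple{\NoVote,\ML_r[2],r}$. Because a leader edge may only name a leader whose lower-indexed peers are all delivered, its round-$(r+1)$ leader edge has index $1$. Your clause ``unless it has already moved on'' is precisely this case. Timing alone gives index $\ge x$ only for replicas still in round $r$ at $t+\tau$, which is the one case the paper derives. For every other replica the bound must come from the hypothesis, as in the paper's ``under the assumption''. You adopted that reading at the start, so the step is dispensable, but drop it rather than present it as a derivation.

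Your discussion of the ordering issue is a genuine addition. commit\_leaders$(r)$ checks Line~\ref{line:commit_secondary} only once, and the paper's proof never addresses this; it concludes directly from the fact that all honest replicas have voted by $t+\tau+\Delta$. Its phrase ``first message or fast-vote'' also ignores that fast-votes never count for secondary leaders. Your primary resolution is the right one. Suppose every round-$(r+1)$ first message has a leader edge $(r,\ge x)$ and no round-$(r+1)$ fast-vote exists. Then the set $\M$ that fires Line~\ref{line:commit_main} is itself a witness $\bigS$ for every $x'\le x$. So $\MLV_r[:x]$ enters $\CLS$ at the moment $LV_r$ is committed, and \Cref{lemma:leader_commit_ML_R} gives the deadline without re-deriving any timing. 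State explicitly that the hypothesis excludes fast-votes from Byzantine replicas too. Requiring every replica to create a round-$(r+1)$ vertex does not do this, and a single extra fast-vote can fire the main rule with only $2f$ first messages in hand, making the loop break. Your fallback, by contrast, does not exist in the protocol. commit\_leaders$(r)$ runs only while $committedRound<r$ and sets $committedRound$ to $r$. Later invocations iterate only over rounds above $committedRound$, so nothing re-evaluates round $r$'s secondary leaders.
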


\begin{proof}
Obviously, no honest replica can enter a round higher than $r$ via $NVC_r$ prior to $t+\tau$. By \Cref{lemma:synchronization_ML_R}, all honest replicas will enter round $r$ (or higher) by time $t+2k_2\Delta$.

Consider an honest replica \node{i}. If \node{i} enters a round higher than $r+1$ before $t+\tau$, then some honest replica must have delivered $2f+1$ vertices from round $r+1$. By the inductive hypothesis, the leader edges of these vertices must have a round of $r$ and an index of at least $x$. If \node{i} enters round $r+1$ before $t+\tau$, then, under the assumption, its vertex for round $r+1$ must provide a leader edge with round $r$ and index $\ge x$. If \node{i} does not enter a round higher than $r$ before $t+\tau$, then according to the assumption, it is guaranteed to deliver $2f+1$ vertices from round $r$ (including $\MLV_r[:x]$) by time $t+\tau$. Combining this with the reasoning in \Cref{lemma:leader_commit_ML_R}, \node{i} will enter round $r+1$ by $t+\tau+\Delta$ and create a vertex referencing $\MLV_r[y]$ where $y \ge x$.

In summary, all honest replicas will cast a vote for $\MLV_r[:x]$ (via either a first message or a fast-vote) by time $t+\tau+\Delta$. This guarantees that $\MLV_r[:x]$ are directly committed by all honest replicas before $t+\tau+2\Delta$.
\end{proof}
\subsection{Correctness Proof of CBC-based Multi-leader \name}\label{a:correctness_ML_C}

The \textit{Integrity} property remains straightforward:

\begin{theorem}\label{thm:integrity_ML_C}
CBC-based Multi-leader \name satisfies \textbf{Integrity}.
\end{theorem}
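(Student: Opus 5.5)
The argument follows the one used for \Cref{thm:integrity_R} and \Cref{thm:integrity_C}. Integrity concerns only the output side. For every $r$ and replica $\node{k}$, an honest replica $\node{i}$ must output $a\_deliver_i(m,r,\node{k})$ at most once. So I would trace the single place where CBC-based Multi-leader \name emits $a\_deliver$, namely order\_vertices (Line~\ref{line:order_ML}), which the CBC-based multi-leader variant inherits unchanged. I would then argue in two steps.

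First, within a single run of $\node{i}$, every vertex $v'$ that is output is immediately inserted into $deliveredVertices$. The set $toDeliver$ is computed to exclude members of $deliveredVertices$, so no vertex object is output twice.

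Second, I would rule out two \emph{different} vertices with the same pair (round $r$, source $\node{k}$) both being output. A vertex is output only if it lies in $\bigcup_{r>0} DAG_i[r]$. In the CBC-based protocol a vertex enters $DAG_i[r]$ only through the c\_deliver handler (Line~\ref{line:c_deliver}), which now also checks validity of the leader-edge certificate (Lines~\ref{line:wait_leader_edge}--\ref{line:valid_leader_edge}). That extra check only restricts insertion further, so every vertex in $DAG_i$ still corresponds to a $c\_deliver_i(v,r,\node{k})$ event. By the \emph{Integrity} property of CBC, $\node{i}$ c\_delivers at most one message per $(r,\node{k})$, so $DAG_i$ holds at most one vertex for that position (equivalently \Cref{fact:ne}). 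Combining the two steps gives at most one $a\_deliver_i(\cdot,r,\node{k})$.

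The only step needing care is the DAG insertion. Under CBC a replica may insert a certificate before fetching the vertex content, and auxiliary edges, $nvc'$ fields and $\NewRound$ certificates add new ways for references to appear. I would check that none of these create a second DAG entry for an existing position: they are references to, or certificates of, vertices that themselves must pass c\_deliver before being placed in $DAG_i$. Once that is confirmed, the rest is identical to \Cref{thm:integrity_R}.
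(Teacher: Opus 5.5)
Your proposal is correct and takes the paper's route. The paper reduces Integrity to the Integrity of the underlying broadcast: $a\_deliver$ only outputs vertices already in $DAG_i$, and these got there through a delivery event. Your explicit deduplication step via $deliveredVertices$ fills in what the paper leaves implicit. One small correction: a replica's own vertex also enters $DAG_i$ through try\_add\_to\_dag in broadcast\_vertex, not only through the c\_deliver handler. Since it is the same vertex later certified at that position, this does not change the count per (round, source).
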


Next, we proceed to prove the \textit{Total order} property.

\begin{lemma}\label{lemma:leader_path_ML_C}
If an honest replica \node{i} directly commits a leader vertex $\MLV_r[x]$, then for any valid main leader vertex $LV_{r'}$ in round $r' > r$, there exists a leader path from $LV_{r'}$ to $\MLV_r[x]$.
\end{lemma}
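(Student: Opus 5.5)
The plan is to follow the proof of \Cref{lemma:leader_path_ML_R}: strong induction on $r'$, with an honest quorum $\mathcal{H}$ extracted from the direct commit. The new ingredients are the CBC-specific rules. Strong edges may point to rounds $\ge r'-1$. Auxiliary edges in CBC-based Multi-leader \name may be truncated at $\min(l^*,x^*)$ together with $v.nvc'$. A vertex is delivered only once its leader edge's certificate is valid.

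\textbf{Step 1: the set $\mathcal{H}$.} Since \node{i} directly commits $\MLV_r[x]$, there are $2f+1$ votes: first messages of round $r+1$ vertices whose leader edge is $\succeq (r,x)$, or fast-votes for $LV_r$. At least $f+1$ of them come from a set $\mathcal{H}$ of honest replicas. As in the RBC proof, the $lastLeader$ updates mean every $p\in\mathcal{H}$ has $(lastLeader.round,lastLeader.index)\succeq(r,x)$. No such $p$ ever sends $\tuple{\NoVote,\ML_r[y],r}$ for $y\le x$. This includes the additional $\NoVote$s of Lines~\ref{line:start_send_novote}--\ref{line:end_send_novote}, which are sent only for leaders after $lastLeader$. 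Hence every later leader edge of $p$ is $\succeq(r,x)$. By quorum intersection, $NVC_r^{\ell}$ cannot form for any $\ell\in\ML_r[:x]$.

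\textbf{Step 2: case $r'=r+1$.} No $NVC_r$ exists, so $LV_{r+1}$ takes the branch that references $LV_r$. It must strongly reference $\ML_r[:y]$ and supply $NVC_r^{\ML_r[y+1]}$. Since that certificate cannot exist for indices $\le x$, we get $y\ge x$.

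\textbf{Step 3: case $r'\ge r+2$.} $LV_{r'}$ strongly references $2f+1$ certificates from distinct sources in rounds $\ge r'-1$. One of them, $w$, comes from some $p_k\in\mathcal{H}$ and has round $\ge r'-1>r$, so its leader edge is $\succeq(r,x)$.

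If $LV_{r'}$ references $LV_{r'-1}$, the claim follows from the induction hypothesis and transitivity. If $LV_{r'}$ instead references a strong-edge vertex of round $>r'-1$ that is a main leader vertex, the hypothesis applies in the same way. Otherwise $LV_{r'}$ carries $NVC_{r'-1}$, and the maximal leader edge $(r^*,x^*)$ over its strong-edge vertices satisfies $(r^*,x^*)\succeq(r,x)$.

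If $r^*>r$: a valid leader edge into round $r^*$ requires $LC_{r^*}$ (this is the validity check in the CBC delivery rule). The truncated $auxEdges$ always include index $1$ unless $NVC_{r^*}^{L_{r^*}}$ exists. I will argue that if that NVC exists and no $LV_{r^*}$ edge is present, this contradicts the leader edge being valid; I may need to recheck this. So $LV_{r'}$ reaches $LV_{r^*}$, and induction gives the path to $\MLV_r[x]$.

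If $r^*=r$: the $auxEdges$ cover $\ML_r[:\min(l^*,x^*)]$, where $l^*+1$ is the first index whose $NVC_r$ was attached. Step 1 forbids $NVC_r^{\ell}$ for index $\le x$, so $l^*\ge x$ and $x^*\ge x$. Hence $\MLV_r[x]$ is referenced directly.

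\textbf{Main obstacle.} The hard part is the $r^*$ branch under CBC. I must justify that a leader edge $(r^*,x^*)$, possibly set by a Byzantine replica, still forces either an auxiliary edge to $LV_{r^*}$ or a contradiction. I will rely on the rule that a vertex is valid only after the main-leader certificate of its leader-edge round is received. I will also need to handle strong edges to rounds above $r'-1$, treating $LV_{r'}$'s "previous round" as the highest round among its references. I will check that the induction still applies there, since those rounds are still in $(r,r')$.
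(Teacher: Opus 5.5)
Your skeleton matches the paper's, whose proof simply transfers the induction of \Cref{lemma:leader_path_ML_R}, noting that strong edges now range over rounds $\ge r'-1$. Your set $\mathcal{H}$, the base case $r'=r+1$, and the case $r^*=r$ are correct. In the last case, no $NVC_r^{\ell}$ exists for $\ell\in\ML_r[:x]$, so $\min(l^*,x^*)\ge x$; the paper leaves this detail implicit.

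The gap is in the branch $r^*>r$, which you already flag as the crux. The contradiction you plan to derive does not exist. $LC_{r^*}$ and $NVC_{r^*}^{L_{r^*}}$ can coexist, for instance when a main leader vertex is certified only after $2f+1$ replicas have already timed out on it. So a valid leader edge into round $r^*$ is fully compatible with that $\NoVote$ certificate. Truncation is also not governed by whether an $NVC$ exists. The loop in Lines~\ref{line:start_aux}--\ref{line:end_aux} stops at the first leader absent from $L_{r'}$'s \emph{local} DAG.

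What you need is $l^*\ge 1$, and it must come from the local rules. A vertex is deemed valid only after the main-leader certificate of its leader edge's round has been received (Lines~\ref{line:wait_leader_edge}--\ref{line:valid_leader_edge} and the accompanying text). So $L_{r'}$, having accepted $v^*$, already holds $LC_{r^*}$. The loop therefore cannot stop at index $1$, and $LV_{r^*}\in auxEdges$. For a Byzantine $L_{r'}$, ``valid'' has to be read as enforcing this same constraint, as the paper implicitly does.

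Separately, your justification for strong edges above round $r'-1$ is wrong. Such vertices lie in rounds $\ge r'$, not in $(r,r')$. The extra case you build on it is unnecessary, because the protocol branches only on whether $LV_{r'-1}$ is in the main leader's DAG. However, the same phenomenon means $r^*$ need not be below $r'$, for example when a Byzantine $L_{r'}$ creates its vertex late and references high-round vertices. Induction on the round number then does not literally reach $LV_{r^*}$. Induct instead on the reference (causal) order of valid main leader vertices of round $>r$. Both $LV_{r'-1}$ and $LV_{r^*}$ are referenced by $LV_{r'}$, hence certified before it, and both lie in rounds $>r$, which is all the inductive step uses.
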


\begin{proof}
Observe that the commit rules and the format of leader vertices in CBC-based Multi-leader \name are basically identical to those in basic Multi-leader \name. The sole distinction lies in the fact that references take the form of certificates, and the strong edges of a main leader vertex may reference certificates from rounds higher than the previous round. This distinction does not impact the quorum intersection argument and the conditions satisfied by the highest leader edge as presented in \Cref{lemma:leader_path_ML_R}. Consequently, with the exception that describing the references of $LV_{r'}$ requires considering $2f+1$ vertices from distinct replicas in rounds $\geq r'-1$, the remainder of the logic is entirely consistent with \Cref{lemma:leader_path_ML_R}. 
\end{proof}

By applying \Cref{lemma:leader_path_ML_C} and following the same reasoning as in basic Multi-leader \name, we directly establish the \textit{Total order} property:

\begin{theorem}\label{thm:total_order_ML_C}
CBC-based Multi-leader \name satisfies \textbf{Total order}.
\end{theorem}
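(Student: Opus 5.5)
The plan follows the same template as \Cref{thm:total_order_R} and \Cref{thm:total_order_ML_R}. The only genuinely new ingredient is \Cref{lemma:leader_path_ML_C}, which plays the role of \Cref{lemma:leader_path_ML_R}. Everything downstream of that lemma uses only three things: the pseudocode of commit\_leaders, the leader\_path predicate, and \Cref{fact:ne}. None of these depends on whether vertices were delivered by RBC or CBC.

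\textbf{Step 1 (main-leader ordering).} First I re-establish the analogue of \Cref{lemma:single_order_main}. Suppose honest \node{i} directly commits $LV_r$ and honest \node{j} directly commits $LV_{r'}$ with $r'>r$. By \Cref{lemma:leader_path_ML_C}, there is a leader path from $LV_{r'}$ to $LV_r$. I then argue by induction on the last round that \node{j} directly committed below $r'$. Either the backward walk of commit\_leaders (Lines~\ref{line:start_indirect_commit}--\ref{line:end_indirect_commit}) reaches round $r$ and pushes $LV_r$, or an earlier direct commit of \node{j} at some round $r''\in[r,r')$ already committed it.

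\textbf{Step 2 (secondary leaders).} Next I re-establish the analogue of \Cref{lemma:single_order_all} for a directly committed prefix $\MLV_r[:x]$. Here I have to check that the indirect loop really collects all of $\MLV_r[:x]$, not just some vertex in that round. The loop appends $\MLV_r[1],\MLV_r[2],\dots$ and stops at the first index with no leader path. \Cref{lemma:leader_path_ML_C} applied to each $y\le x$ gives a leader path to every $\MLV_r[y]$, so the break cannot occur before index $x$.

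The CBC-specific worry is that an auxiliary edge might be replaced by $v.nvc'$ (Lines~\ref{line:start_aux}--\ref{line:end_aux}), so some lower-index leader might not be referenced. This is exactly the situation \Cref{lemma:leader_path_ML_C} rules out for directly committed leaders. I would note explicitly that $NVC^{\ell}_{r}$ cannot exist for $\ell\in\ML_r[:x]$, by the same quorum intersection with the $f+1$ honest voters. Hence $l^*\ge x$ whenever $r^*=r$.

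\textbf{Step 3 (pairwise consistency, then the theorem).} I then reproduce \Cref{lemma:pair_order_ML_R}. Take consecutive direct commits $(r_i,r_i')$ at \node{i} and $(r_j,r_j')$ at \node{j} with $r_i\le r_j<r_i'\le r_j'$. By Steps 1--2, both replicas commit the $\CLS$-prefix of round $r_i'$. From that vertex, both run the deterministic backward walk over leader paths. By \Cref{fact:ne}, both replicas see the same vertices at each position, so leader\_path evaluates identically and the same stack of $\CMV$ lists is pushed. I still need to address one subtlety: under CBC a replica may hold only a certificate, so it must eventually fetch the referenced contents before evaluating leader\_path. I would state that the walk waits for data fetching of the (finitely many) vertices on candidate paths. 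Since contents are fixed by certificates, the outcome is identical across replicas.

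Inducting over all pairs shows that all honest replicas commit the same leader sequence. order\_vertices then traverses causal histories deterministically, which gives \textbf{Total order}.

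\textbf{Main obstacle.} I expect the main obstacle to be the CBC content-availability point in Step 3, namely that leader\_path is evaluated on identical data despite possibly missing vertex contents. I would handle it by appealing to \Cref{fact:ne} together with the off-critical-path fetching guarantee.
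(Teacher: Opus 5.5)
\textbf{Verdict: there is a genuine gap, and it is inherited from the paper.} Your overall route is the paper's: Lemma~\ref{lemma:leader_path_ML_C} plugged into the Multi-leader ordering chain (Lemmas~\ref{lemma:single_order_main}, \ref{lemma:single_order_all} and~\ref{lemma:pair_order_ML_R}). Your remarks on $v.nvc'$ and on fetching contents before evaluating leader\_path are sound and more explicit than the paper. The step that does not go through is in Step~3, where you pass from ``both replicas commit the $\CLS$-prefix of round $r_i'$'' to ``the same stack of $\CMV$ lists is pushed.'' Steps~1--2 give only containment: \node{j} commits \emph{at least} $\MLV_{r_i'}[:x]$. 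Concluding that all honest replicas commit the same leader sequence needs the list pushed for round $r_i'$ to be \emph{identical} at both replicas. The same holds for round $r_j$ when $r_i<r_j$, where \node{j}'s list is a direct $\CLS$ and \node{i}'s comes from its backward walk. The two walks agree only strictly between these rounds, once both have $v'=LV_{r_i'}$.

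This equality is not just unproved; it fails for the rule as written. The test at Line~\ref{line:commit_secondary} depends on which first messages a replica holds at the moment commit\_leaders fires. The outcome is never revisited, because $committedRound$ is then set to $r$.

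Take $n=4$ and $\ML_r=[p_1,p_2,p_3]$. Let the round-$(r+1)$ vertices of $p_1,p_2,p_3$ carry leader edges to $\MLV_r[3]$. Let the vertex of $p_4$, which lacked $\MLV_r[3]$ when it advanced, point to $\MLV_r[2]$.
\begin{itemize}
\item A replica whose first three first messages come from $p_1,p_2,p_4$ directly commits $[\MLV_r[1],\MLV_r[2]]$.
\item A replica that hears $p_1,p_2,p_3$ first directly commits all three.
\item Likewise, a backward walk from an $LV_{r+1}$ whose strong edges reach $\MLV_r[1]$, $\MLV_r[2]$ and $\MLV_r[3]$ pushes a longer list than the first replica's $\CLS$.
\end{itemize}
The first replica then delivers $\MLV_r[3]$ only inside the causal history of a later main leader, at whatever position the traversal order assigns it. The second delivers it right after $\MLV_r[2]$'s history. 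Nothing in order\_vertices reconciles the two.

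Closing the gap therefore needs more than Lemma~\ref{lemma:leader_path_ML_C}: you need a mechanism that makes each round's committed list the same at every honest replica. One option is to treat leaders beyond a replica's directly committed prefix as undecided rather than skipped. They would then be settled by leader\_path from the next main leader on the committed chain, before anything later is delivered. The paper's Lemma~\ref{lemma:pair_order_ML_R} (``\node{j} commits at least $LV_{r_i'}$'') leaves exactly this point open, so you inherited the gap rather than introduced it.
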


Next, we rely on following lemmas to establish the \textit{Validity} property. The proof is notably more intricate, primarily due to the difficulties introduced by CBC's weaker property.

\begin{lemma}\label{lemma:synchronization_ML_C}
Let $t$ be a time after GST. If the first honest replica \node{i} enters round $r$ at time $t$, then all honest replicas will enter round $r$ or higher by time $t + 6\Delta$.
\end{lemma}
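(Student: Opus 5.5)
The plan mirrors \Cref{lemma:synchronization_C} and \Cref{lemma:synchronization_ML_R}: split the argument into replicas other than $L_r$ and the main leader $L_r$. The extra $2\Delta$ over \Cref{lemma:synchronization_C} is budgeted for $L_r$, which must gather certificates and $\NoVote$ messages delivered through the $\NewRound$ mechanism and the additional messages of Lines~\ref{line:send_cert}--\ref{line:end_send_novote}.

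\textbf{Replicas other than $L_r$.} Since \node{i} is the first honest replica to enter round $r$, it did not skip. It therefore holds $2f+1$ certificates from distinct sources in rounds $\ge r-1$, of which at least $f+1$ come from an honest set $\mathcal{H}$, together with $LC_{r-1}$ or $NVC_{r-1}$. Both $LC_{r-1}$ and $NVC_{r-1}$ are rebroadcast, by Line~\ref{line:LC_sync} and Line~\ref{line:NVC_sync} respectively, so every honest replica holds one of them by $t+\Delta$. I then reuse the argument of \Cref{lemma:synchronization_C}. First, all honest replicas reach round $r-1$ or higher via Line~\ref{line:jumping_C} within $\Delta$, using the first messages from $\mathcal{H}$. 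Next, the honest CBCs for rounds $\ge r-1$ complete within a further $3\Delta$ by \Cref{property:CBC}. By Line~\ref{line:advance_C}, every honest non-main-leader then enters round $r$ by $t+4\Delta$.

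One new point must be checked here. Delivery now also requires the certificate of each vertex's leader edge (Lines~\ref{line:wait_leader_edge}--\ref{line:valid_leader_edge}). For vertices from $\mathcal{H}$, an honest sender's leader edge is a certificate it actually received. That sender can forward it together with the vertex, or the recipient already obtained it through rebroadcast, so validity costs no extra round trip.

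\textbf{The main leader $L_r$.} This is the main obstacle. First, if some honest replica enters a round $>r$ before $t+6\Delta<t+\tau$, then, exactly as in \Cref{lemma:leader_commit_R}, it must have used $LC_r$, and hence $L_r$ has already entered round $r$. Otherwise $L_r$ must satisfy Line~\ref{line:leader_advance_ML_C}, plus Line~\ref{line:wait_NVC} or Line~\ref{line:wait_MLC}.

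\begin{itemize}
\item \emph{Line~\ref{line:leader_advance_ML_C}.} This follows from \node{i}'s $\NewRound$ message by $t+\Delta$.
\item \emph{Line~\ref{line:wait_NVC} (some $LC_{r-1}$ case).} By $t+4\Delta$, every honest replica has entered round $r$. Each one sends $L_r$ a $\NoVote$ for every round-$(r-1)$ leader it has not delivered. For each $\ell\in\ML_{r-1}$, either $2f+1$ such $\NoVote$s arrive by $t+5\Delta$, giving $NVC^{\ell}_{r-1}$, or some honest replica holds $\ell$'s certificate. In the latter case I would argue that the certificate reaches $L_r$ through the $\NewRound$ certificate sets and the Line~\ref{line:send_cert} transmissions by $t+5\Delta$. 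Either way, the minimal undelivered prefix index $x$ is determined.
\item \emph{Line~\ref{line:wait_MLC} (the $NVC_{r-1}$ case).} Let $(r^*,x^*)$ be the highest leader edge among the certificates $L_r$ holds; its certificate is included in the $\NewRound$ sets by Line~\ref{line:cert_ML_C}. The key step is to show that for every $\ell\in\ML_{r^*}[:x^*]$, $L_r$ obtains either its certificate or $NVC^{\ell}_{r^*}$. Every honest replica whose $lastLeader$ lies at or below $(r^*,\cdot)$ sends, upon entering round $r$, certificates for all leaders up to its $lastLeader$ index and $\NoVote$s for all later leaders, to every main leader $L_{r''}$ with $r''\in(r^*,r]$. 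Every honest replica does this by $t+4\Delta$, since it enters round $r$ by then. If no honest replica holds $\ell$'s certificate, then at least $2f+1$ honest $\NoVote$s for $\ell$ reach $L_r$ by $t+5\Delta$. Otherwise the first such $\ell$ has a certificate sent to $L_r$, and the lowest-index missing leader yields the required $NVC$ together with the existing prefix. This is exactly the disjunction in Line~\ref{line:wait_MLC}.
\end{itemize}

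The one-$\Delta$ slack up to $t+6\Delta$ covers the forwarding hop. I expect the delicate part to be the claim that an honest replica's $lastLeader$ index never exceeds the set of certificates it can forward. This should follow from the validity check and Line~\ref{line:last_leader_2}.
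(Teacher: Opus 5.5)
Your split into non-main-leaders and $L_r$ matches the paper's. However, the first case has a genuine gap, and closing that gap is the bulk of the paper's proof.

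\textbf{The main leader $L_{r-1}$ is not handled.} You let all honest replicas reach round $r-1$ via Line~\ref{line:jumping_C} within $\Delta$ and then finish their CBCs by $t+4\Delta$. This treats $L_{r-1}$ as in \Cref{lemma:synchronization_C}, where its only extra requirement was Line~\ref{line:leader_advance_C}. In the multi-leader variant, $\mathsf{advance\_round}(r-1)$ also makes $L_{r-1}$ wait at Line~\ref{line:wait_NVC} or Line~\ref{line:wait_MLC}. There it needs certificates or $NVC$s for leaders of round $r-2$ or of an older round $r^*$, and reaching round $r-1$ by a skip rule still goes through this wait. You cannot ignore $L_{r-1}$. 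Under CBC, the Byzantine certificates that \node{i} counted need not reach any other replica, since its $\NewRound$ goes only to $L_r$. So in the worst case the other honest replicas need certificates in rounds $\ge r-1$ from all $2f+1$ honest replicas, including an honest $L_{r-1}$. The paper shows that $L_{r-1}$ enters round $r-1$ (or higher) by $t+2\Delta$. Hence non-main-leaders enter round $r$ by $t+5\Delta$, not $t+4\Delta$. Then $L_r$, which depends on the messages those replicas send upon entering round $r$, enters by $t+6\Delta$. So the extra $2\Delta$ is split between $L_{r-1}$ and $L_r$, and no slack remains for the additional forwarding hop you reserve.

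\textbf{The wait argument fails as stated.} This is your Line~\ref{line:wait_MLC} bullet, which is also what $L_{r-1}$ needs. You conclude that if no honest replica holds $\ell$'s certificate, then $2f+1$ honest $\NoVote$s for $\ell\in\ML_{r^*}$ arrive. But Lines~\ref{line:send_cert}--\ref{line:end_send_novote} only cover rounds from the sender's own $lastLeader.round$ upward. An honest replica whose $lastLeader$ lies above $(r^*,x^*)$ therefore sends nothing at all about round $r^*$. Such replicas can exist: $(r^*,x^*)$ is only the highest leader edge among the vertices the leader references, and an honest replica's $lastLeader$ may have advanced after it created its vertex.

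The paper handles this with a second parameter $(\hat r,\hat x)$, the highest $lastLeader$ among honest senders, and a case split:
\begin{itemize}
\item If $r^*$ or $\hat r$ already reaches the leader's own round, the leader has finished or skips via Line~\ref{line:jumping_LC}.
\item If $(r^*,x^*)\succeq(\hat r,\hat x)$, every honest transmission covers $\ML_{r^*}[:x^*]$, so Line~\ref{line:wait_MLC} is met.
\item Otherwise the leader uses the certificates for $\ML_{\hat r}[:\hat x]$ sent via Line~\ref{line:send_cert}. This condition is stronger than Line~\ref{line:wait_MLC}, and the paper notes it is omitted from the pseudocode.
\end{itemize}

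\textbf{A smaller error in the $L_r$ step.} Your opening step for $L_r$ (``before $t+6\Delta<t+\tau$'') imports the RBC timeout argument. Here, however, $\tau=5\Delta$ (\Cref{lemma:leader_commit_ML_C}), so $NVC_r$ may already exist before $t+6\Delta$. The paper's proof of this lemma does not rely on the timer.
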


\begin{proof}
We distinguish two cases based on whether the replica is the main leader.

(\rmnum{1}) Replicas other than $L_r$. Following the same argumentation as in \Cref{lemma:synchronization_C}, \node{i} is guaranteed to have received at least $2f+1$ certificates for rounds $\ge r-1$, as well as either $LC_{r-1}$ or $NVC_{r-1}$. Among these, at least $f+1$ originate from honest replicas (let $\mathcal{H}$ denote this set). Within $\Delta$, all honest replicas (with the exception of $L_{r-1}$) will receive the first messages from $\mathcal{H}$ along with either $LC_{r-2}$ or $NVC_{r-2}$, subsequently entering round $r-1$ or higher. 

Since the first honest replica to enter round $r-1$ must have done so before time $t$ and sent a $\NewRound$ message to $L_{r-1}$, $L_{r-1}$ is also guaranteed to satisfy the conditions specified in Line~\ref{line:leader_advance_ML_C} within $\Delta$. We now demonstrate that $L_{r-1}$ will satisfy the conditions stipulated in Line~\ref{line:wait_NVC} or Line~\ref{line:wait_MLC} by time $t+2\Delta$, thereby entering round $r-1$ (or higher).

Based on the preceding argument, at least $2f+1$ honest replicas will initiate the message transmission described in Lines~\ref{line:start_send_novote}--\ref{line:end_send_novote} for rounds $\ge r-1$ by time $t+\Delta$. Consider the values of $lastLeader$ held by each honest replica at the exact moment it initiates the transmission. Let $(\hat{r}, \hat{x})$ denote the highest round and index among these values. Consider all vertices that can be referenced by $LV_{r-1}$ via strong edges by time $t+\Delta$. Let $(r^*, x^*)$ denote the round and index of the highest leader edge contained within these vertices (which may originate from a Byzantine replica). It suffices to consider the case where $r^* < r-1$ and $\hat{r}<r-1$, since otherwise $L_{r-1}$ would have already completed the CBC or could advance to a higher round via $LC_{r^*}$ or $LC_{\hat{r}}$ (by Line~\ref{line:jumping_LC}). In this scenario, all honest replicas will send all $\NoVote$ messages and certificates for round $r-2$ to $L_{r-1}$ before time $t+\Delta$, in accordance with Lines~\ref{line:start_send_novote}--\ref{line:end_send_novote}. Consequently, for any $\ell \in \ML_{r-2}$, $L_{r-1}$ is guaranteed to receive either the certificate or the $NVC_{r-2}^{\ell}$.

If $r^* = r-2$, based on the preceding argument, $L_{r-1}$ satisfies the condition in Line~\ref{line:wait_NVC} by time $t+2\Delta$. The same condition holds if $r^* < r-2$ and $\hat{r} = r-2$\footnote{In fact, when $\hat{r} > r^*$, the set of vertices referenced by strong edges might not include the honest replica holding the highest $lastLeader$. However, we emphasize that by referencing certificates for the higher $\ML_{\hat{r}}[:\hat{x}]$, $L_{r-1}$ satisfies a condition strictly stronger than that in Line~\ref{line:wait_MLC}. This is because $LV_{\hat{r}}$ is sufficient to provide the safety guarantee for establishing a leader path to the leader vertex in round $r^*$. We omitted this scenario—where certificates from a round higher than the highest leader edge are provided—from the pseudocode to maintain clarity.}. If $r^* < r-2$ and $\hat{r} < r-2$, we require further case analysis. When $(r^*, x^*) \succeq (\hat{r}, \hat{x})$, $L_{r-1}$ will collect all corresponding certificates or $NVC$s for $\ML_{r^*}[:x^*]$ by time $t+2\Delta$. Thus, $L_{r-1}$ satisfies the condition in Line~\ref{line:wait_MLC} by $t+2\Delta$. When $(\hat{r}, \hat{x}) \succeq (r^*, x^*)$, $L_{r-1}$ will collect all certificates for $\ML_{\hat{r}}[:\hat{x}]$ by time $t+2\Delta$, thereby satisfying the condition in Line~\ref{line:wait_MLC}.

In summary, all honest replicas will initiate the CBC for rounds $\ge r-1$ by $t+2\Delta$ and complete it by $t+5\Delta$. Consequently, all replicas other than $L_r$ will enter round $r$ (or higher) by time $t+5\Delta$.

(\rmnum{2}) The main leader $L_r$. The $\NewRound$ message sent by \node{i} to $L_r$ is guaranteed to arrive by time $t+\Delta$. According to Case (\rmnum{1}), all honest replicas will send all relevant $\NoVote$ messages or certificates to $L_r$ by time $t+5\Delta$. Similar to the reasoning presented regarding $L_{r-1}$, $L_r$ will satisfy the conditions in Line~\ref{line:wait_NVC} or Line~\ref{line:wait_MLC} and enter round $r$ (or higher) by time $t+6\Delta$.
\end{proof}

\begin{lemma}\label{lemma:leader_commit_ML_C}
If the first honest replica enters round $r$ at time $t$ after GST, and the timeout parameters are configured such that $\tau=5\Delta$ and $\tau'=\Delta$, then $LV_r$ is guaranteed to be directly committed before $t+14\Delta$ provided that $L_r$ is honest.
\end{lemma}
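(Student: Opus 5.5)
The proof follows the template of \Cref{lemma:leader_commit_C}, with \Cref{lemma:synchronization_C} replaced by \Cref{lemma:synchronization_ML_C}. The main leader's extra waiting conditions (Lines~\ref{line:wait_NVC} and \ref{line:wait_MLC}) are handled as in the proof of \Cref{lemma:leader_commit_ML_R}.

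\textbf{Bounding the timer start.} By \Cref{lemma:synchronization_ML_C}, all honest replicas enter round $r$ or higher by $t+6\Delta$. If no honest replica has advanced beyond round $r$ by $t+7\Delta$, every honest replica has received $2f+1$ first messages for round $\ge r$ and started its timer. Otherwise, the first honest replica to advance has collected $2f+1$ certificates for rounds $\ge r$, so it started its timer earlier. Let $t^*\le t+7\Delta$ be the earliest time an honest replica starts the timer for round $r$. Its $2f+1$ first messages include at least $f+1$ from honest replicas, so Line~\ref{line:jumping_C} puts every honest replica (other than possibly $L_r$, who is honest and has already entered) in round $\ge r$ by $t^*+\Delta$. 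As in \Cref{lemma:leader_commit_R}, no honest replica sends $\tuple{\NoVote,L_r,r}$ before $t^*+\tau$. Hence $NVC_r$ does not exist before then, and no honest replica enters a round above $r$ via an $NVC$.

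\textbf{Case analysis on an honest replica \node{j}, by the time $t^*+\tau$.}
\begin{enumerate*}[label=(\alph*)]
\item If \node{j} is still in round $r$, then by \Cref{property:CBC} it holds $2f+1$ certificates of rounds $\ge r$, including $LC_r$. It therefore enters round $r+1$ and its leader edge points into round $r$, which is a vote for $LV_r$.
\item If \node{j} entered round $r+1$ earlier, its leader edge references $LC_r$, since no $NVC_r$ exists.
\item If \node{j} skipped round $r+1$, it invoked send\_fast\_vote covering round $r+1$. The first honest replica in round $\ge r+2$ holds $LC_r$ and broadcast it (Line~\ref{line:LC_sync}), so \node{j} obtains $LC_r$ within $\tau'=\Delta$ and sends a fast-vote.
\end{enumerate*}

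\textbf{The replica $L_{r+1}$ (main obstacle).} This case needs separate treatment because it must additionally satisfy Lines~\ref{line:leader_advance_ML_C} and \ref{line:wait_NVC} before broadcasting its round $r+1$ vertex, and that vertex is its vote. Since $LC_r$ is available to all honest replicas, $L_{r+1}$ is on the branch of Line~\ref{line:wait_NVC}. The plan:
\begin{enumerate*}[label=(\roman*)]
\item Argue as in \Cref{lemma:leader_commit_ML_R} that, absent $NVC_r$ and $LV_{r+1}$, honest replicas cannot go beyond round $r+1$ before $t^*+\tau$.
\item Hence all honest replicas enter round $r+1$ by $t^*+\tau$ and send $L_{r+1}$ their $\NewRound$ certificates together with the $\NoVote$s for undelivered round-$r$ secondary leaders (Line~\ref{line:novote_ML}).
\item Also check that the additional certificate and $\NoVote$ transmissions of Lines~\ref{line:send_cert}--\ref{line:end_send_novote} supply any secondary-leader certificates $L_{r+1}$ lacks.
\end{enumerate*}
Consequently, for every $\ell\in\ML_r$, $L_{r+1}$ holds either the certificate or $NVC_r^{\ell}$ by $t^*+\tau+\Delta$. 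It then enters round $r+1$ and votes for $LV_r$ at $t^*+\tau+\Delta$.

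\textbf{Conclusion.} Every honest replica has voted for $LV_r$, via a first message or a fast-vote, by $t^*+\tau+\Delta$, so every honest replica commits $LV_r$ by $t^*+\tau+2\Delta\le t+7\Delta+5\Delta+2\Delta=t+14\Delta$. The main obstacle is the $L_{r+1}$ step: showing that the secondary-leader $NVC$s/certificates reach $L_{r+1}$ within one extra $\Delta$, despite CBC providing no agreement property.
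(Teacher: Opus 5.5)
\textbf{Gap: the main leader $L_r$'s entry into round $r$.} The missing step concerns $L_r$ itself, not $L_{r+1}$. In your timer paragraph you exempt $L_r$ from the skipping argument because it ``is honest and has already entered'' round $r$, but nothing justifies this.

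The $2f+1$ first messages that trigger the earliest timer at $t^*$ need not include one from $L_r$. Moreover, an honest $L_r$ cannot be pushed into round $r$ by Line~\ref{line:jumping_C} alone. Inside advance\_round it must still satisfy Line~\ref{line:leader_advance_ML_C}, and then Line~\ref{line:wait_NVC} or Line~\ref{line:wait_MLC}. That means gathering $2f+1$ certificates of rounds $\ge r-1$, the $NVC$s of undelivered round-$(r-1)$ leaders, and possibly certificates for $\ML_{r^*}[:x^*]$.

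\Cref{lemma:synchronization_ML_C} only puts $L_r$ in round $r$ by $t+6\Delta$, and $t^*$ can be much earlier. So your case (a) claim, that a replica still in round $r$ holds $LC_r$ by $t^*+\tau$, is unsupported. A symptom of this: under your assumption $\tau=4\Delta$ would already suffice, exactly as in \Cref{lemma:leader_commit_C}. The extra $\Delta$ in $\tau=5\Delta$ pays for $L_r$'s wait, and this is the step the paper's proof adds beyond \Cref{lemma:leader_commit_C}.

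\textbf{The repair.} As in the paper, split into two cases.
\begin{itemize}
\item Some honest replica enters a round above $r$ before $t^*+\tau$. Since $NVC_r$ cannot yet exist, it holds $LC_r$, so $L_r$ has already completed its CBC for round $r$.
\item Otherwise, all honest replicas other than $L_r$ are in round $r$ by $t^*+\Delta$. On entering, they have sent $L_r$ the round-$(r-1)$ $\NoVote$s and the extra certificates and $\NoVote$s of Lines~\ref{line:send_cert}--\ref{line:end_send_novote}. Also, the $\NewRound$ message of the first honest replica to enter round $r$ arrives by $t+\Delta\le t^*+\Delta$. These all reach $L_r$ by $t^*+2\Delta$. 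As in case (ii) of \Cref{lemma:synchronization_ML_C}, $L_r$ then enters round $r$ by $t^*+2\Delta$. By \Cref{property:CBC}, every honest replica obtains $LC_r$ by $t^*+5\Delta=t^*+\tau$.
\end{itemize}
With this step inserted, the rest of your plan goes through. Your $L_{r+1}$ argument matches the paper's appeal to \Cref{lemma:leader_commit_ML_R}, and the final bound $t^*+\tau+2\Delta\le t+14\Delta$ then holds.
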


\begin{proof}
We present only the parts of the proof that differ from \Cref{lemma:leader_commit_C}. First, we invoke \Cref{lemma:synchronization_ML_C} in place of \Cref{lemma:synchronization_C}. Based on the same argumentation as in \Cref{lemma:leader_commit_C}, the first honest replica \node{i} is guaranteed to initiate the timer for round $r$ by time $t^* \leq t+7\Delta$. Since the set of $2f+1$ first messages received by \node{i} for round $\ge r$ must contain at least $f+1$ messages from honest replicas, in accordance with the round-skipping rule, all honest replicas other than $L_r$ will enter round $r$ (or a higher round) by time $t^* + \Delta$. 
If any of these replicas enters a round higher than $r$, then, given that $NVC_r$ cannot exist prior to $t^*+\tau$, $L_r$ must have already completed the CBC for round $r$. Otherwise, $L_r$ will receive sufficient $\NoVote$ messages and certificates by time $t^*+2\Delta$, subsequently entering round $r$.

The subsequent reasoning remains consistent with \Cref{lemma:leader_commit_ML_R}, with the exception that $t$ is substituted by $t^*$.

In summary, all honest replicas will cast a vote for $LV_r$ (via either a first message or a fast-vote) by time $t^*+\tau+\Delta$. This guarantees that $LV_r$ is directly committed by all honest replicas before $t^*+\tau+2\Delta \leq t+14\Delta$.
\end{proof}

By substituting the application of \Cref{lemma:synchronization_ML_C} with \Cref{lemma:synchronization_ML_R} in the proof of \Cref{lemma:keep_higher_ML_R}, we obtain the following lemma.

\begin{lemma}\label{lemma:keep_higher_ML_C}
All honest replicas continuously enter higher rounds.
\end{lemma}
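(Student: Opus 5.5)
The plan is to follow the template of \Cref{lemma:keep_higher} and \Cref{lemma:keep_higher_ML_R}, replacing the round-synchronization lemma by \Cref{lemma:synchronization_ML_C}. The argument is by contradiction on a maximal round. Suppose that at some point every honest replica has entered round $r$ or higher, and that no honest replica ever enters a round above some fixed bound. Take $r$ to be the highest round entered by any honest replica.

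\textbf{Step 1: everyone reaches round $r$.} Some honest replica enters round $r$ at a finite time, and GST is finite. Hence we may consider a time $t\ge$ GST by which the first honest replica has entered $r$. By \Cref{lemma:synchronization_ML_C}, all honest replicas, including the main leader $L_r$, enter round $r$ (or higher) by $t+6\Delta$. By maximality, they are all exactly in round $r$.

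\textbf{Step 2: all honest replicas broadcast for round $r$ and start their timers.} Once in round $r$, each honest replica c\_broadcasts its round-$r$ vertex. By \Cref{property:CBC}, every honest replica receives at least $2f+1$ certificates for round $r$ from distinct sources within $3\Delta$. It also receives at least $2f+1$ first messages, so the delayed timer of Line~\ref{line:timer} starts.

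\textbf{Step 3: a case split on the main leader's vertex.} If some honest replica receives $LC_r$, it rebroadcasts it (Line~\ref{line:LC_sync}). Every honest replica then holds $LC_r$ together with $2f+1$ certificates, so the condition of Line~\ref{line:advance_C} holds for round $r$. If instead no honest replica has $LC_r$ at its timeout, all $2f+1$ honest replicas send $\tuple{\NoVote,L_r,r}$. They then form $NVC_r$ and forward it, which again satisfies Line~\ref{line:advance_C}. Either way, every replica other than $L_{r+1}$ advances to round $r+1$, contradicting the maximality of $r$.

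The main obstacle is the main leader $L_{r+1}$, which has additional waiting conditions (Lines~\ref{line:leader_advance_ML_C}, \ref{line:wait_NVC}, \ref{line:wait_MLC}). By Step 3, the other honest replicas do enter round $r+1$, so some honest replica has entered round $r+1$. Applying \Cref{lemma:synchronization_ML_C} again, this time to round $r+1$, shows that $L_{r+1}$ also enters round $r+1$ within $6\Delta$. Case (ii) of that lemma relies on the $\NewRound$ messages and on the extra certificates and $\NoVote$s of Lines~\ref{line:send_cert}--\ref{line:end_send_novote}. These guarantee that $L_{r+1}$ eventually obtains the certificates or $NVC$s needed to satisfy Line~\ref{line:wait_NVC} or Line~\ref{line:wait_MLC}.

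Iterating this argument, honest replicas enter arbitrarily high rounds, which proves that all honest replicas continuously enter higher rounds.
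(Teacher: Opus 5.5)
Your proposal is correct and follows essentially the same route as the paper. The paper proves this lemma by rerunning the argument of \Cref{lemma:keep_higher_ML_R}, with \Cref{lemma:synchronization_ML_C} in place of \Cref{lemma:synchronization_ML_R}: all honest replicas reach round $r$, then either $LC_r$ or $NVC_r$ lets every replica other than $L_{r+1}$ advance, and $L_{r+1}$ catches up by a second application of the synchronization lemma. Your version is also more explicit than the paper about two CBC-specific points it leaves implicit: you use the $LC_r$ rebroadcast of Line~\ref{line:LC_sync} where the RBC argument used agreement, and you check that the delayed timer of Line~\ref{line:timer} actually starts.
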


By substituting the application of \Cref{lemma:keep_higher} with \Cref{lemma:keep_higher_ML_C}, and the application of \Cref{lemma:leader_commit_R} with \Cref{lemma:leader_commit_ML_C} in the proof of \Cref{thm:validity}, we establish the \textit{Validity} property:

\begin{theorem}\label{thm:validity_ML_C}
CBC-based Multi-leader \name satisfies \textbf{Validity}.
\end{theorem}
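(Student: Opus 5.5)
The plan follows the template already used for \Cref{thm:validity}, \Cref{thm:validity_C} and \Cref{thm:validity_ML_R}, with the CBC-specific liveness lemmas substituted in.

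\textbf{Step 1: the block lands in a vertex that is added to every honest DAG.} Suppose an honest replica \node{i} calls \Call{a\_bcast}{$b$}, so $b$ enters its $blocksToPropose$ queue. By \Cref{lemma:keep_higher_ML_C}, \node{i} keeps entering higher rounds and invoking \Call{broadcast\_vertex}{}. Hence it eventually creates, in some round $r$, a vertex $v$ whose block is $b$.

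Since \node{i} is honest, \Cref{property:CBC} gives that every honest replica receives the certificate of $v$ within $3\Delta$ after GST. One extra check is needed here, because the CBC-based multi-leader delivery rule also requires the certificate of $v.leaderEdge$ (Lines~\ref{line:wait_leader_edge}--\ref{line:valid_leader_edge}):
\begin{itemize}
\item an honest replica only sets $lastLeader$ to a certificate it has itself received;
\item under the leader-certificate broadcast rule, that certificate reaches every honest replica within $\Delta$.
\end{itemize}
So $v$ is eventually added to every honest replica's DAG. Its content is obtained by data fetching, which eventually succeeds because at least $f+1$ honest replicas hold it.

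\textbf{Step 2: infinitely many honest main leaders are directly committed.} By \Cref{lemma:keep_higher_ML_C}, rounds keep increasing after GST. Under the deterministic leader rotation, infinitely many of these rounds have an honest main leader. For each such round, \Cref{lemma:leader_commit_ML_C} gives that its main leader vertex is directly committed by every honest replica within $14\Delta$ of the round starting.

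\textbf{Step 3: some committed leader vertex has a path to $v$.} I would pick an honest main leader $L_{r'}$ whose vertex is created after every honest replica has $v$ in its DAG. \Call{set\_weak\_edges}{} then makes $LV_{r'}$ reference $v$ (for $r'\ge r+2$) unless a path to $v$ already exists. The self edges of \node{i} give an alternative route. Either way, path$(LV_{r'},v)$ holds.

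\textbf{Step 4: output.} \Call{order\_vertices}{} traverses the causal history of every committed leader. So each honest replica outputs \Call{a\_deliver}{$b,r,\node{i}$} once $v$'s content is available, and \Cref{lemma:leader_path_ML_C} ensures the traversal is consistent.

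\textbf{Main obstacle.} The hard part is Step 3: making sure the chosen leader $L_{r'}$ actually holds $v$ in its DAG when it creates $LV_{r'}$. In the CBC setting, a certificate may arrive without its content or its leader-edge certificate. I would handle this by taking $r'$ large enough that its start time exceeds GST plus the $3\Delta+\Delta$ delivery bound from Step 1. Then I would argue that set\_weak\_edges operates on certificates rather than on contents, so only the certificate of $v$ is needed at that point.
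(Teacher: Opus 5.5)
Your proposal is correct and follows the paper's route. The paper proves this theorem by substituting \Cref{lemma:keep_higher_ML_C} and \Cref{lemma:leader_commit_ML_C} into the proof of \Cref{thm:validity}: the vertex carrying $b$ is eventually added to every honest DAG, an honest main leader vertex is committed and reaches that vertex via weak edges, and order\_vertices outputs it. That is exactly your Steps 1--4, and your check on $v.leaderEdge$ goes beyond what the paper states. One caveat on that check: the forwarding rule of Line~\ref{line:LC_sync} only re-broadcasts main-leader certificates $LC_r$, and only on rule-1 advancement, while $lastLeader$ may point to a secondary leader. The cleaner justification is that references are certificates, so the leader-edge certificate travels inside the honest sender's own first message for $v$.
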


By substituting the application of \Cref{lemma:leader_path} with \Cref{lemma:leader_path_ML_C} in the proof of \Cref{thm:agreement}, we obtain the \textit{Agreement} property:

\begin{theorem}\label{thm:agreement_ML_C}
CBC-based Multi-leader \name satisfies \textbf{Agreement}.
\end{theorem}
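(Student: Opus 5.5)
The plan is to follow the template of \Cref{thm:agreement} and \Cref{thm:agreement_ML_R}, with \Cref{lemma:leader_path} replaced by \Cref{lemma:leader_path_ML_C} and the liveness ingredients replaced by their CBC multi-leader counterparts. Suppose an honest replica \node{i} outputs \Call{a\_deliver$_i$}{$v.block, v.round, v.source$}. By the pseudocode of order\_vertices, $v$ lies in the causal history of some leader vertex $\MLV_r[y]$ that \node{i} committed, directly or indirectly. The target is to show that every honest replica \node{j} eventually commits that same leader vertex and traverses the same causal history, so it also outputs the delivery of $v$.

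First, I argue that \node{j} eventually directly commits some main leader vertex $LV_{r'}$ with $r' > r$. \Cref{lemma:keep_higher_ML_C} gives that all honest replicas keep entering higher rounds. Leaders are assigned deterministically and rotate, so after GST there are infinitely many rounds with an honest main leader. \Cref{lemma:leader_commit_ML_C} then says each such $LV_{r'}$ is directly committed by every honest replica. Pick such an $r'$ larger than $r$.

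Next, I transfer the commitment of $\MLV_r[y]$ to \node{j}. If \node{i} committed it indirectly, it was reached via leader paths from some leader vertex $\MLV_{r_0}[x_0]$ that \node{i} directly committed, with $r_0 \ge r$. \Cref{lemma:single_order_main} and \Cref{lemma:single_order_all}, whose CBC versions follow from \Cref{lemma:leader_path_ML_C} exactly as in the total-order argument behind \Cref{thm:total_order_ML_C}, show that \node{j} commits $\MLV_{r_0}[:x_0]$ when it commits $LV_{r'}$. \Cref{lemma:pair_order_ML_R}, which carries over to the CBC setting, then shows that \node{j}'s recursive indirect commit produces the same leader sequence below round $r_0$. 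In particular, \node{j} also commits $\MLV_r[y]$.

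The step I expect to be the main obstacle is the last one, which is specific to CBC. In the RBC setting, \node{j} eventually holds the full causal history of $\MLV_r[y]$ by \Cref{property:RBC}. Under CBC, \node{j} may hold only certificates. I would argue that every vertex in the causal history carries a valid certificate, so at least $f+1$ honest replicas stored its content. The random pulling mechanism then lets \node{j} eventually fetch every vertex content, recursively along the edges. Once the history is complete, \node{j} traverses it in order\_vertices, and \Cref{fact:ne} guarantees it sees the identical vertex $v$. Hence \node{j} outputs \Call{a\_deliver$_j$}{$v.block, v.round, v.source$}.
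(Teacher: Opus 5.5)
Your overall skeleton is the paper's: it substitutes \Cref{lemma:leader_path_ML_C} into the proof of \Cref{thm:agreement}, and the CBC multi-leader liveness lemmas supply a later committed main leader. Your data-fetching paragraph usefully makes explicit a CBC-specific point the paper leaves implicit.

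The gap is in your second step. You route through \Cref{lemma:single_order_all} and \Cref{lemma:pair_order_ML_R} to conclude that \node{j} \emph{commits} the very same leader vertex $\MLV_r[y]$. That claim is stronger than Agreement needs, and it is not actually guaranteed. The proofs of those lemmas only handle direct commits by \node{j} in rounds strictly above the round $r_0$ of \node{i}'s direct commit. They do not handle the case where \node{j} directly commits $LV_{r_0}$ itself with a shorter prefix of secondary leaders. This case is realizable under asynchrony:
\begin{itemize}
\item $f+1$ honest replicas and the $f$ Byzantine ones create round-$(r_0+1)$ vertices whose leader edges point to $\MLV_{r_0}[2]$. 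The remaining $f$ honest replicas have not yet delivered $\MLV_{r_0}[2]$, so their leader edges point to $LV_{r_0}$.
\item Then \node{i} sees $2f+1$ first messages with index $\ge 2$ and directly commits $\MLV_{r_0}[:2]$.
\item \node{j} can collect its $2f+1$ votes for $LV_{r_0}$ having seen only $f+1$ messages with index $\ge 2$. It therefore pushes $\CLS=[LV_{r_0}]$ and sets $committedRound=r_0$.
\item Later calls to commit\_leaders only walk rounds above $committedRound$, so \node{j} never commits $\MLV_{r_0}[2]$.
\end{itemize}

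Agreement still holds, but via the paper's route, which never asks \node{j} to commit the same leader vertex. Every vertex $v$ that \node{i} outputs lies in the causal history of some vertex $u$ that \node{i} directly committed in some round $r_0$. This is because indirectly committed leaders are reached from $LV_{r_0}$ by leader paths, and paths compose. By \Cref{lemma:leader_path_ML_C}, every valid main leader vertex $LV_{r'}$ with $r'>r_0$ has a leader path to $u$, hence a path to $v$. By \Cref{lemma:keep_higher_ML_C} and \Cref{lemma:leader_commit_ML_C}, \node{j} eventually directly commits such an $LV_{r'}$. Note that $r'$ must exceed $r_0$, not merely $r$ as in your first step. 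Order\_vertices then outputs every not-yet-delivered vertex in the causal history of $LV_{r'}$, including $v$, once your data-fetching argument has completed \node{j}'s view of that history.
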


The final theorem corresponds to the optimistic scenario described in \Cref{claim:secondary_commit_ML_R}. Under the given assumptions, certificates from honest leaders are guaranteed to be received by all honest replicas within $\Delta$. As the modifications to the remainder of the proof relative to \Cref{lemma:leader_commit_ML_C} are analogous to those in the RBC case, we omit the proof.

\begin{theorem}\label{claim:secondary_commit_ML_C}
If the first honest replica enters round $r$ at time $t$ after GST and all leaders within $\ML_r[:x]$ are honest, then in the optimistic scenario where all replicas vote for $\MLV_r[:x]$, the leader vertices in $\MLV_r[:x]$ are guaranteed to be directly committed before $t+14\Delta$.
\end{theorem}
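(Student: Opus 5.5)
The proof adapts the argument of \Cref{claim:secondary_commit_ML_R} to the CBC setting, using the timing bounds already obtained in \Cref{lemma:leader_commit_ML_C}.

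\textbf{Step 1: fix the anchor time $t^*$.} As in the proof of \Cref{lemma:leader_commit_ML_C}, let $t^* \le t+7\Delta$ be the time at which the first honest replica starts the timer for round $r$. This bound follows from \Cref{lemma:synchronization_ML_C} together with the delayed-timer rule (Line~\ref{line:timer}). That replica has seen $2f+1$ first messages for rounds $\ge r$, at least $f+1$ of them from honest replicas. By the skipping rule (Line~\ref{line:jumping_C}), every honest replica other than $L_r$ therefore enters round $r$ or higher by $t^*+\Delta$.

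\textbf{Step 2: no $\NoVote$ certificate for the target leaders before $t^*+\tau$.} No honest replica broadcasts $\tuple{\NoVote,\ell,r}$ for any $\ell\in\ML_r[:x]$ before $t^*+\tau$. The reason is that all leaders in $\ML_r[:x]$ are honest, so each of them has consistently broadcast its vertex by $t^*+\Delta$. By \Cref{property:CBC}, every honest replica then receives the certificates of $\MLV_r[:x]$ by $t^*+4\Delta < t^*+\tau$. Here I rely on the remark preceding the statement that certificates from honest leaders reach all honest replicas within $\Delta$ of being formed. Consequently no $NVC_r^{\ell}$ exists for $\ell\in\ML_r[:x]$. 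It also follows that no honest $lastLeader$ is ever demoted below index $x$ in round $r$, because the demotion at Line~\ref{line:last_leader_2} only occurs for undelivered leaders.

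\textbf{Step 3: every honest replica votes for $\MLV_r[:x]$ by $t^*+\tau+\Delta$.} Fix an honest replica \node{j} and split into the same three cases as in \Cref{lemma:leader_commit_C} and \Cref{claim:secondary_commit_ML_R}.
\begin{enumerate*}[label=(\roman*)]
\item \node{j} is still in round $r$ at $t^*+\tau$. It then holds $2f+1$ certificates for rounds $\ge r$, including those of $\MLV_r[:x]$, so it enters round $r+1$ with a leader edge of round $r$ and index $\ge x$. The special case $\node{j}=L_{r+1}$ needs one extra $\Delta$ to collect the $\NoVote$s and certificates, exactly as in \Cref{lemma:leader_commit_ML_R}.
\item \node{j} entered round $r+1$ earlier. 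By the optimistic assumption, its leader edge already has round $r$ and index $\ge x$.
\item \node{j} skipped round $r+1$. It then sends a fast-vote for $LV_r$, which covers the main leader; for the secondary leaders I fall back on the optimistic assumption, as the RBC proof does.
\end{enumerate*}

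\textbf{Step 4: conclude.} With $\tau=5\Delta$, all these votes are sent by $t^*+\tau+\Delta$ and arrive by $t^*+\tau+2\Delta$. Since $t^*\le t+7\Delta$, this gives $t^*+\tau+2\Delta \le t+14\Delta$. Once a replica has $2f+1$ such first messages, it commits $LV_r$ via Line~\ref{line:commit_main}, and commit\_leaders then adds every $\MLV_r[y]$ with $y\le x$ to $\CLS$ via Line~\ref{line:commit_secondary}.

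\textbf{Main obstacle.} The hard part is case (iii), together with the CBC-specific validity check in which a vertex is added only after its leader-edge certificate is received (Lines~\ref{line:wait_leader_edge}--\ref{line:valid_leader_edge}). Fast-votes count only for the main leader, so the secondary leaders cannot rely on them. I would handle this by leaning on the optimistic assumption that all replicas vote for $\MLV_r[:x]$ through round-$(r+1)$ first messages. I would also check that these first messages are valid in time: the certificates of the honest leaders $\MLV_r[:x]$ are available to every honest replica well before $t^*+\tau$, so the leader-edge validation does not delay the votes.
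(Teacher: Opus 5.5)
Your proposal follows the paper's intended (omitted) argument: it carries the proof of \Cref{claim:secondary_commit_ML_R} over to the timing of \Cref{lemma:leader_commit_ML_C} ($t^*\le t+7\Delta$, $\tau=5\Delta$, all votes cast by $t^*+\tau+\Delta$, hence direct commitment by $t^*+\tau+2\Delta\le t+14\Delta$), and, like the paper, it lets the optimistic-voting hypothesis carry the secondary leaders, including when a replica skips round $r+1$. One harmless inaccuracy is in Step~2, where timing alone does not rule out $\NoVote$s or $lastLeader$ demotion for the secondary leaders in $\ML_r[:x]$: those $\NoVote$s are sent to $L_{r+1}$ on entering round $r+1$ (not on timeout), an honest replica can enter round $r+1$ with only $LC_r$ and $2f+1$ certificates before a secondary leader's certificate arrives, and $L_r$ itself is only guaranteed to start its CBC by $t^*+2\Delta$ rather than $t^*+\Delta$. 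Those facts therefore come from the hypothesis, but your case analysis never actually relies on them, so the proof stands.
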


\end{document}